\numberwithin{equation}{section} 
\crefname{lemma}{lemma}{lemmas}
\crefname{proposition}{proposition}{propositions}
\crefname{definition}{definition}{definitions}
\crefname{theorem}{theorem}{theorems}
\crefname{conjecture}{conjecture}{conjectures}
\crefname{corollary}{corollary}{corollaries}
\crefname{example}{example}{examples}
\crefname{section}{section}{sections}
\crefname{appendix}{appendix}{appendices}
\crefname{figure}{fig.}{figs.}
\crefname{equation}{eq.}{eqs.}
\crefname{table}{table}{tables}
\crefname{item}{property}{properties}
\crefname{remark}{remark}{remarks}
\crefname{problem}{}{}
\crefname{lemma}{lemma}{lemmas}
\crefname{proposition}{proposition}{propositions}
\crefname{definition}{definition}{definitions}
\crefname{theorem}{theorem}{theorems}
\crefname{conjecture}{conjecture}{conjectures}
\crefname{corollary}{corollary}{corollaries}
\crefname{example}{example}{examples}
\crefname{section}{section}{sections}
\crefname{appendix}{appendix}{appendices}
\crefname{figure}{fig.}{figs.}
\crefname{equation}{eq.}{eqs.}
\crefname{table}{table}{tables}
\crefname{item}{property}{properties}
\crefname{remark}{remark}{remarks}
\crefname{problem}{}{}
\newtheorem{theorem}{Theorem}
\newtheorem{definition}[theorem]{Definition}
\newtheorem{corollary}[theorem]{Corollary}
\newtheorem{proposition}[theorem]{Proposition}
\newtheorem{lemma}[theorem]{Lemma}
\newtheorem{remark}[theorem]{Remark}
\pgfplotsset{compat=newest}
\definecolor{structure}{rgb}{0.23,0.4,0.7}
\newcommand{\mysymbol}[1]{{\mbox{\raisebox{-0.3em}{\epsfysize=1.2em\epsfbox{#1}}}}}
\newcommand{\leftend}{\mysymbol{leftend.eps}}
\newcommand{\rightend}{\mysymbol{rightend.eps}}
\def\Z{{\mathds{Z}}}
\def\N{{\mathds{N}}}
\def\C{\mathds{C}}
\def\1{{\mathds{1}}}
\def\B{{\cal{B}}}
\def\T{{\cal{T}}}
\def\r{{R}}
\def\cB{{\cal{B}}}
\def\HS{{\cal{H}}}
\def\calS{{\cal{S}}}
\def\ox{{\otimes}}
\DeclareMathOperator\spann{span}
\DeclareMathOperator\spec{spec}
\DeclareMathOperator{\poly}{poly}
\newcommand{\frk}[1]{\mathfrak{#1}}
\newcommand{\Rk}{R^{(k)}}
\newcommand{\sRk}{\mathcal{R}^{(k)}}
\newcommand{\sR}{\mathcal{R}}
\newtheorem*{lemma*}{Lemma}
\crefname{section}{Section}{Sections}
\crefname{subsection}{Subsection}{Subsections}
\crefname{theorem}{Theorem}{Theorems}
\crefname{corollary}{Corollary}{Corollaries}
\crefname{lemma}{Lemma}{Lemmas}
\crefname{appendix}{Appendix}{Appendices}
\crefname{definition}{Definition}{Definitions}
\newcommand{\twocellsvert}[2]{\begin{array}{|@{}c@{}|} \hline  #1 \\ \hline  #2 \\ \hline \end{array}}
\newcommand{\fourcells}[4]{\begin{array}{|@{}c@{}|@{}c@{}|} \hline  #1 & #3 \\ \hline #2 & #4 \\ \hline   \end{array}}
\definecolor{mylightgrey}{RGB}{215,215,215}
\definecolor{mydarkgrey}{RGB}{169,169,169}
\definecolor{myorange}{RGB}{255,172,0}
\definecolor{citegreen}{RGB}{0,165,0}
\begin{document}

\title{Uncomputably Complex Renormalisation Group Flows}
\author{James D. Watson\footnote{05watson.j@gmail.com}}
\author{Emilio Onorati\footnote{e.onorati@ucl.ac.uk}}
\author{Toby S. Cubitt\footnote{t.cubitt@ucl.ac.uk}}
\affil{Department of Computer Science, University College London, UK}

\date{}

\maketitle
\begin{abstract}
  Renormalisation group (RG) methods provide one of the most important techniques for analysing the physics of many-body systems, both analytically and numerically~\cite{Wilson71,Wilson74}.
  By iterating an RG map, which ``course-grains'' the description of a many-body system and generates a flow in the parameter space,
  physical properties of interest can often be extracted even for complex many-body models.
  RG analysis also provides an explanation of physical phenomena such as universality.
  Many systems exhibit simple RG flows, but
  more complicated --- even chaotic --- behaviour is also known~\cite{McKay_Berker_Kirkpatrick_1982, Svrakic_1982, Derrida_Eckmann_Erzan_1999, Damgaard_Thorleifsson_1991, Morozov_Niemi_2003}.
  Nonetheless, the general structure of such RG flows can still be analysed, elucidating the physics of the system, even if specific trajectories may be highly sensitive to the initial point.
  In contrast, recent work~\cite{Cubitt_PG_Wolf_Undecidability,Cubitt_PG_Wolf_Nature, Bausch_Cubitt_Watson} has shown that important physical properties of quantum many-body systems, such as its spectral gap and phase diagram, can be uncomputable, and thus impossible to determine even in principle.

  In this work, we show that such undecidable systems exhibit a novel type of RG flow, revealing a qualitatively different and more extreme form of unpredictability than chaotic RG flows.
  In contrast to chaotic RG flows in which initially close points can diverge exponentially according to some Lyapunov exponent, trajectories under these novel uncomputable RG flows can remain arbitrarily close together for an uncomputably large number of iterations, before abruptly diverging to different fixed points that are in separate phases.
  The structure of such uncomputable RG flows --- e.g.\ the basins of attraction of its fixed points --- is so complex that it cannot be computed or approximated, even in principle.
  To substantiate these claims, we give a mathematically rigorous construction of the block-renormalisation-group (BRG) map for the original undecidable many-body system that appeared in the literature~\cite{Cubitt_PG_Wolf_Undecidability,Cubitt_PG_Wolf_Nature}.
  We prove that each step of this RG map is efficiently computable, and that it converges to the correct fixed points, yet the resulting RG flow is uncomputable.
\end{abstract}

\tableofcontents

\section{Introduction}

Understanding collective properties and phases of many-body systems from an underlying model of the interactions between their constituent parts remains one of the major research areas in physics, from high-energy physics to condensed matter.
Many powerful techniques have been developed to tackle this problem.
One of the most far-reaching was the development by Wilson~\cite{Wilson71,Wilson74} of \emph{renormalisation group} (RG) techniques, building on early work by others~\cite{SBPM53,GellMannLow}.
At a conceptual level, an RG analysis involves constructing an RG map that takes as input a description of the many-body system (e.g.\ a Hamiltonian, or an action, or a partition function, etc.), and outputs a description of a new many-body system (a new Hamiltonian, or action, or partition function, etc.), that can be understood as a ``coarse-grained'' version of the original system, in such a way that physical properties of interest are preserved but irrelevant details are discarded.

For example, the RG map may ``integrate out'' the microscopic details of the interactions between the constituent particles described by the full Hamiltonian of the system. This procedure produces a coarse-grained Hamiltonian that still retains the same physics at larger length-scales~\cite{Kadanoff66}.
By repeatedly applying the RG map, the original Hamiltonian is transformed into successively simpler Hamiltonians, where the physics may be far easier to extract.
The RG map therefore describes a dynamical map on Hamiltonians, and consecutive applications of this map generates a ``flow'' in the space of Hamiltonians.
Often, the form of the Hamiltonian is preserved, and the RG flow can be characterised as a trajectory for its parameters.

The development of RG methods not only allowed sophisticated theoretical and numerical analysis of a broad range of many-body systems. It also explained phenomena such as \emph{universality}, whereby many physical systems, apparently very different, exhibit the same macroscopic behaviour, even at a quantitative level.
This is explained by the fact that these systems ``flow'' to the same fixed point under the RG dynamics.

For many condensed matter systems -- even complex strongly interacting ones -- the RG dynamics are relatively simple, exhibiting a finite number of fixed points to which the RG flow converges. Hamiltonians that converge to the same fixed point correspond to the same phase, so that the basins of attraction of the fixed points map out the phase diagram of the system.
However, more complicated RG dynamics is also possible, including chaotic RG flows with highly complex structure~\cite{McKay_Berker_Kirkpatrick_1982, Svrakic_1982, Derrida_Eckmann_Erzan_1999, Damgaard_Thorleifsson_1991, Morozov_Niemi_2003}.
Nonetheless, as with chaotic dynamics more generally, the structure and attractors of such chaotic RG flows can still be analysed, even if specific trajectories of the dynamics may be highly sensitive to the precise starting point.
This structure elucidates much of the physics of the system \cite{Grassberger_Procaccia_1983, Eckmann_Ruelle_1985, Shenker_Stanford_2014}.
RG techniques have become one of the most important technique in modern physics for understanding the properties of complex many-body systems.

On the other hand, recent work has shown that determining the macroscopic properties of many-body systems, even given a complete underlying microscopic description, can be even more intractable than previously anticipated.
In fact, \cite{Cubitt_PG_Wolf_Undecidability,Cubitt_PG_Wolf_Nature, Bausch_Cubitt_Watson} showed that this goal is unobtainable in general: they proved that the spectral gap of a quantum many-body system, as well as phase diagrams and any macroscopic property characterising a phase, can be uncomputable.

In this work, we show that the RG flow of such undecidable systems exhibits a novel type of behaviour, displaying a qualitatively new and more extreme form of unpredictability than chaotic RG flows.
Specifically, trajectories under the RG flow can remain arbitrarily close together for an uncomputable number of iterations
before abruptly diverging to different fixed points that correspond to separate phases (see \cref{fig:uncomputable_RG_flow}).
Thus, the structure of the RG flow --- e.g.\ the basins of attraction of the fixed points --- is so complex that it cannot be computed or approximated, even in principle.
A similar form of unpredictability has previously been seen in classical single-particle dynamics, in seminal work by Moore~\cite{Moore90,Moore_Long_1991,Bennett_1990}.
Our results show for the first time that this extreme form of unpredictability can occur in RG flows of many-body systems.
\begin{figure*}[t!]
	\centering
	\begin{subfigure}[t]{0.45\textwidth}
		\centering
		\includegraphics[width=\textwidth]{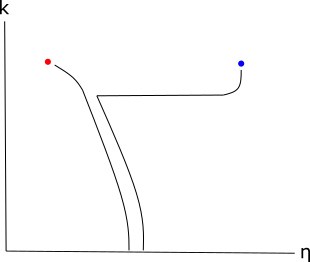}
		\caption{Uncomputable RG flow.}
		\label{fig:uncomputable_RG_flow}
	\end{subfigure}%
	~\quad
	\begin{subfigure}[t]{0.45\textwidth}
		\centering
		\includegraphics[width=\textwidth]{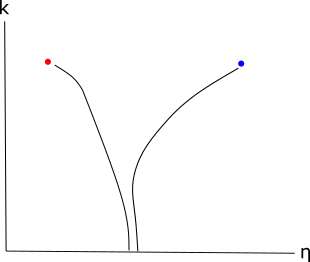}
		\caption{Chaotic RG flow.}
		\label{fig:chaotic_RG_flow}
	\end{subfigure}
	\caption{In both diagrams, $k$ represents the number of RG iterations and $\eta$ represents some parameter characterising the Hamiltonian; the blue and red dots are fixed points corresponding to different phases.
		We see that in the chaotic case, the Hamiltonians diverge exponentially in $k$, according to some Lyapunov exponent.
		In the undecidable case, the Hamiltonians remain arbitrarily close for some uncomputably large number of iterations, whereupon they suddenly diverge to different fixed points.}
\end{figure*}

The unpredictability of chaotic systems arises from the fact that even a tiny difference in the initial system parameters --- which in practice may not known exactly --- can eventually lead to exponentially diverging trajectories (see \cref{fig:chaotic_RG_flow}).
However, if the system parameters are perfectly known, it is in principle possible to determine the long-time behaviour of the RG flow.
And the more precisely the initial parameters are known, the longer it is possible to accurately predict it.

The RG flow behaviour exhibited in this work is more intractable still. Even if we know the \emph{exact} initial values of all system parameters, its RG trajectory and the fixed point it ultimately ends up at is provably impossible to predict. Moreover, no matter how close are two sets of initial parameters, it is impossible to predict how long their trajectories will remain close together.

To substantiate these claims, we give a fully rigorous mathematically proof and analysis of this qualitatively new RG behaviour, for the original undecidable many-body model in the literature~\cite{Cubitt_PG_Wolf_Undecidability, Cubitt_PG_Wolf_Nature}.
We note that our techniques can also be adapted to establish a rigorous proof of chaotic RG dynamics (see discussion in \cref{Sec:Conclusions}).
We give a rigorous construction of the block renormalisation group~\cite{Jullien_Pfeuty_Fields_1978, Jullien_Pfeuty_1979, Penson_Jullien_Pfeuty_1982, Bhattacharyya_Sil_1999} (BRG) map for this model.
We prove that the resulting RG flow converges to the correct fixed points, and preserves the order parameters and phases of the model.
Moreover, each step of the RG flow is computable (in fact, efficiently computable).
Nonetheless, the RG flow itself is uncomputable:

\begin{theorem}[Uncomputability of RG Flows -- informal statement of \cref{Theorem:Undecidability_of_RG_Flows_Formal,Theorem:tau_2_Divergence}]\label{Theorem:Informal_Undecidability_of_RG_Flows}
  We construct an RG map for the Hamiltonian of~\cite{Cubitt_PG_Wolf_Undecidability} which has the following properties:
  \begin{enumerate}
  \item The RG map is computable at each renormalisation step.
  \item The RG map preserves whether the Hamiltonian is gapped or gapless. 
  \item The Hamiltonian is guaranteed to converge to one of two fixed points under the RG
  flow: one gapped, with low energy properties similar to those of an Ising model with field; the other gapless, with low energy properties similar to the critical XY-model.
  \item The behaviour of the Hamiltonian under the RG mapping, and which fixed point it converges to, are uncomputable.
  \end{enumerate}
\end{theorem}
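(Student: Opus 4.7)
The plan is to construct the block renormalisation group map $\mathcal{R}$ explicitly for the Cubitt--Perez-Garcia--Wolf Hamiltonian and track its flow. Recall that the CPW Hamiltonian $H(\varphi)$ decomposes into two sectors: a ``trivial'' sector whose ground space is a product state (playing the role of a classical Ising-like model with a field), and a ``non-trivial'' sector encoding a universal Turing machine computation on input $\varphi$ together with a critical XY-like interaction that is active exactly when the encoded computation has not yet halted. Which sector wins at low energies depends on whether the TM halts on input $\varphi$, which is uncomputable. The overall strategy is to show that this dichotomy is inherited by the RG flow, so that the undecidability of the spectral gap translates into undecidability of the BRG trajectory.

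The first step is to define $\mathcal{R}$ concretely: group the lattice into blocks of some fixed size $b$ (chosen large enough that each block already resolves the relevant internal structure of the CPW construction), diagonalise the block Hamiltonian $H_B$, and project the inter-block terms onto the low-energy subspace of $H_B$. This produces a renormalised Hamiltonian $H' = \mathcal{R}(H)$ on a coarser lattice. Computability of a single step is then straightforward: $H$ is specified by a finite number of rational couplings together with the integer $\varphi$, the diagonalisation and projection are finite linear-algebraic operations, and so each step can be carried out by a Turing machine to any desired precision. The essential structural requirement -- harder than mere computability -- is that the \emph{form} of $H$ is preserved by $\mathcal{R}$: the image must again decompose into a trivial sector plus a TM+XY sector, with the encoded TM tape advanced by a computable number of simulated steps and all couplings rescaled in a controlled way.

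Next I would establish that the flow has exactly two fixed points and determine their basins. If the encoded TM halts at some step $T(\varphi)$, then after sufficiently many iterations the halting signal propagates into the block Hamiltonian, the XY-like term is suppressed by the projection onto the trivial sector, and further applications of $\mathcal{R}$ shrink all remaining non-trivial couplings so that $H$ converges to a gapped Ising-like fixed point $H^*_{\mathrm{gap}}$. If the TM never halts, the XY-like interaction survives every renormalisation step and the flow converges to a critical XY-like fixed point $H^*_{\mathrm{XY}}$. Convergence in each case can be argued by a contraction estimate for the effective couplings, and an order parameter preserved by $\mathcal{R}$ (e.g.\ the magnetisation for the Ising fixed point, or the decay of an appropriate two-point function for the critical fixed point) verifies that the two fixed points are genuinely distinct phases. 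The gapped/gapless preservation statement reduces to standard perturbative estimates on the BRG: if the block gap is bounded below and the inter-block couplings are sufficiently small, the low-energy spectrum of $H$ and of $H'$ agree up to additive errors that can be made arbitrarily small by increasing $b$.

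The uncomputability statements then follow by reduction from the halting problem. Given any instance $\varphi$, the Hamiltonian $H(\varphi)$ flows to $H^*_{\mathrm{gap}}$ iff the TM halts on $\varphi$, so any algorithm computing the limiting fixed point would decide halting. Moreover, two inputs $\varphi, \varphi'$ with very different halting times produce trajectories that coincide up to the smaller halting time and then diverge; since the halting time is not bounded by any computable function of the input size (a busy-beaver-type argument), the step at which the divergence occurs is uncomputable, yielding the $\tau_2$-divergence statement. The main obstacle, I expect, is not the reductions to undecidability -- which follow cleanly once the setup is in place -- but the quantitative control of the BRG iteration: showing that the very intricate form of the CPW Hamiltonian is genuinely invariant (up to the advertised rescaling and TM advancement) under $\mathcal{R}$, and that the perturbative errors made at each step do not accumulate enough to destroy convergence to the two fixed points across an \emph{unbounded}, and possibly uncomputably large, number of iterations. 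This is where the detailed internal structure of the CPW construction -- the tiling, the history state, and the quantum phase estimation piece -- must be used most delicately.
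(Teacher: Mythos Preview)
Your high-level outline is reasonable, and you correctly identify that the hard part is controlling the BRG across unboundedly many iterations and preserving the form of the CPW Hamiltonian. But there are two substantive gaps between your plan and what actually makes the argument go through.

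First, your mental model of how the RG interacts with the encoded computation is off. You describe the RG as ``advancing the TM tape by a computable number of simulated steps'' and the halting signal as ``propagating into the block Hamiltonian'' after enough iterations. That is not the mechanism. The TM computation is already fully encoded in the ground-state energy of the Gottesman--Irani Hamiltonian $H_q(\ell)$ on each segment of length $\ell$; what the RG does is \emph{spatial} coarse-graining that integrates out the smallest Robinson squares. After $k$ steps the $g(k)$ term in the 1-local part collects the contributions $\lambda_0(H_q(4^n))$ from all squares with $4^n+1<2^k$; the halting/non-halting dichotomy shows up as whether this sum eventually acquires positive contributions or stays zero. The parameter $\tau_2(k)$ tracks exactly this, and its divergence point is the uncomputable $k_h(\varphi)$. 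None of this is ``the block Hamiltonian simulating more TM steps''; getting this picture right is essential for even formulating the correct RG map.

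Second, and more seriously, your generic BRG --- diagonalise the block, project to low energy, contract via perturbative estimates --- will not preserve the form of the CPW Hamiltonian. The paper's map is highly tailored: on the classical layer it exploits a nontrivial combinatorial fact, proved by explicit enumeration, that a specific subset of $2\times 2$ Robinson supertiles is isomorphic (as a tiling system) to the original Robinson tiles, so the aperiodic square pattern is exactly reproduced at the coarse scale. On the quantum layer the truncation is \emph{not} purely energetic: the paper explicitly retains states based on overlap with the known history-state structure, because a pure energy cutoff would discard the halting ground state. Without these two ingredients there is no reason to expect the renormalised Hamiltonian to decompose again as tiling $+$ Gottesman--Irani $+$ dense, and your ``contraction estimate for effective couplings'' has nothing to bite on. Your closing paragraph concedes that this is where the work lies; the point is that the work is not perturbative at all, but structural --- the Robinson self-similarity and the explicit isometries $V^C$, $V^{GI}$, $\Pi_{gs}$ are the content of the proof, not details to be filled in later.
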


\medskip

The paper is structured as follows: in \cref{Sec:Prelims+Previous} we introduce the necessary notation and formalism, give a brief overview of real-space RG flow procedures, and review the undecidable model of~\cite{Cubitt_PG_Wolf_Nature,Cubitt_PG_Wolf_Undecidability}.
In \cref{Sec:Main_Results} we state our main results and give a high-level overview of their proofs.
The full proof of the main results is given in sections \ref{Sec:Robinson_RG}, \ref{Sec:Quantum_RG}, and \ref{sec:all_together}. \Cref{Sec:Robinson_RG,Sec:Quantum_RG} analyse the structure of real-space RG procedures applied to the undecidable model of \cite{Cubitt_PG_Wolf_Nature, Cubitt_PG_Wolf_Undecidability}; \cref{sec:all_together} proves that this RG procedure exhibits the properties and behaviour claimed in the main results.
In \cref{Sec:Fixed_Points} we discuss the properties of the fixed points of this resulting RG flow, before \cref{Sec:Conclusions} concludes.

\section{Preliminaries and Previous Work}\label{Sec:Prelims+Previous}

\subsection{Notation}

Throughout we will denote the $L\times H$ square lattice by $\Lambda(L\times H)$. 
If $L=H$ we will sometimes denote the lattice as $\Lambda(L)$.
For points $i,j\in \Lambda(L\times H)$, we will sometimes use $\langle i,j\rangle $ to denote that they are nearest neighbours.
For a Hilbert space $\mathcal{H}$, $\B(\mathcal{H})$ denotes the set of bounded linear operators on $\mathcal{H}$.
$\lambda_0(A)$ will denote the minimum eigenvalue of an operator $A\in\B(\mathcal{H})$, and more generally $\lambda_k(A)$ will denote the $(k+1)^{th}$ smallest eigenvalue.
Furthermore, we denote the spectral gap of an operator $A$ as $\Delta(A)=\lambda_1(A)-\lambda_0(A)$.

Consider local interaction terms $h^{row}, h^{col}\in \mathcal{B}(\C^d\ox \C^d)$ and $h^{(1)}\in\B({\C^d})$ which define a translationally invariant Hamiltonian on an $L\times L$ lattice, $H^{\Lambda(L)}=\sum_{j=1}^L\sum_{i=1}^{L-1} h^{row}_{i,i+1} + \sum_{i=1}^L\sum_{j=1}^{L-1} h^{col}_{j,j+1} + \sum_{i,j=1}^L h^{(1)}_{i,j}$, where the sums over $i$ and $j$ are over rows and columns respectively.

We denote the renormalisation group map by $\sR$, and the $k$-fold iteration of this map by $\sRk$.
We will denote renormalised quantities and operators with $R$ or $R^{(k)}$ prefix for the renormalised and $k$-times renormalised cases respectively.
For example, denote the renormalised Hamiltonians terms as $\r(h^{row})^{i,i+1}$ and $\r(h^{col})^{j,j+1}$, and the local terms after $k$-fold iterations as $\Rk(h^{row})^{i,i+1}$ and $\Rk(h^{col})^{j,j+1}$.
We then denote the Hamiltonian defined over the lattice by the renormalised interactions as $\r(H)^{\Lambda(L)}$, and for the $k$-times iteration as $\Rk(H)^{\Lambda(L)}$.
We note that in general $\sR(h^{row}_{i,i+1})\neq \r(h^{row})^{i,i+1}$, and similarly for the other terms.

If the initial local Hilbert space is $\HS$, then the local Hilbert space after $k$ iterations of the RG map is denoted $R^{(k)}(\HS)$.
Throughout, we will denote a canonical set of local basis states by $\frk{B}$, and after the renormalisation mapping has been applied $k$ times it becomes $\frk{B}^{(k)}$, so that $\r^{(k)}(\HS)=\spann\{\ket{x}\in \frk{B}^{(k)}\}$.

It will occasionally be useful to distinguish $h^{row}$ acting on given row $j$.
When this is important, we write $h^{row}_{i,i+1}(j)$ to denote the interaction between columns $i$ and $i+1$ in the $j^{th}$ row.
Similarly $h^{col}_{j,j+1}(i)$ denotes the interaction between rows $j$ and $j+1$ in the $i^{th}$ column.

Finally, following \cite{Cubitt_PG_Wolf_Undecidability}, we adopt the following precise definitions of gapped and gapless:
\begin{definition}[Gapped, from \cite{Cubitt_PG_Wolf_Undecidability}]\label{Def:gapped}
  We say that $H^{\Lambda(L)}$ of Hamiltonians is gapped if there is a constant $\gamma>0$ and a system size $L_0\in\mathbb{ N}$ such that for all $L>L_0$, $\lambda_0(H^{\Lambda(L)})$ is non-degenerate and $\Delta(H^{\Lambda(L)})\geq\gamma$. In this case, we say that \emph{the spectral gap is at least $\gamma$}.
\end{definition}
\begin{definition}[Gapless, from \cite{Cubitt_PG_Wolf_Undecidability}]\label{Def:gapless}
  We say that $H^{\Lambda(L)}$ is gapless if there is a constant $c>0$ such that for all $\epsilon>0$ there is an $L_0\in\mathbb{N}$ so that for all $L>L_0$ any point in $[\lambda_0(H^{\Lambda(L)}),\lambda_0(H^{\Lambda(L)})+c]$ is within distance $\epsilon$ from $\spec H^{\Lambda(L)}$.
\end{definition}
We note that these definitions of gapped and gapless do not characterise all Hamiltonians; there are Hamiltonian which fit into neither definition, such as systems with closing gap or degenerate ground states.
However, \cite{Cubitt_PG_Wolf_Undecidability} showed that the particular Hamiltonians they construct always fall into one of these clear-cut cases, allowing sharp spectral gap undecidability results to be proven.

\subsection{Real Space Renormalisation Group Maps}

The notion of what exactly constitutes a renormalisation group scheme is somewhat imprecise, and there is no universally agreed upon definition in the literature.
We therefore start from a minimal set of conditions that we would like a mapping on Hamiltonians to satisfy, if it is to be considered a reasonable RG map.
The RG scheme we define for the Hamiltonian from \cite{Cubitt_PG_Wolf_Undecidability} will satisfy all these conditions as well as additional desirable properties.
\begin{definition}[Renormalisation Group (RG) Map] \label{Def:RG_Mapping}
  Let $\{h_i\}_i$ be an arbitrary set of $r$-local interactions $h_i\in \B((\C^d)^{\ox r})$, for $r=O(1)$ and $d\in\N$.
  A renormalisation group (RG) map
  \begin{align}
    \sR( \{h_i\} ) = \{ h_i' \}
  \end{align}
  is a mapping from one set of $r$-local interactions to a new set of $r'$-local interactions $ h_i' \in \B((\C^{d'})^{\prime \ox r'})$, with $r'\leq r$ and $d'\in\Z$, satisfying the following properties:
  \begin{enumerate}
  \item \label{RG_Condition_1} $\sR(\{h_i\})$ is a computable map.
  \item \label{RG_Condition_2} Let $H$ and $\Rk(H)$ be the Hamiltonian defined by the original local terms and the $k$-times renormalised local terms respectively.
    If $H$ is gapless, then $\r^{(k)}(H)$ is gapless, as per \cref{Def:gapless}.
    If $H$ is gapped, then $\r^{(k)}(H)$ is gapped, as per \cref{Def:gapped}.
  \item \label{RG_Condition_3}
  If the order parameter for the system has a non-analyticity between two phases of $H$, then there is a renormalised order parameter which also has a non-analyticity between the two phases for $\Rk(H)$.
  \item \label{RG_Condition_4} If the initial local Hamiltonian terms can decomposed into as
    \begin{align}
      h_i = \sum_j \alpha_j O_j,
    \end{align}
    for some operator $\{O_j\}_j$,
    then $k$-times renormalised local Hamiltonian terms are of the form
    \begin{align}
      \Rk(h)_i = \sum_j \alpha_j^{(k)} \Rk(O)_j,
    \end{align}
    where $\alpha_i^{(k)}=f(\{\alpha_i^{(k-1)}\}_i)$ for some function $f$.
  \end{enumerate}
\end{definition}

\noindent The motivation for points~\ref{RG_Condition_2} and \ref{RG_Condition_3} of \cref{Def:RG_Mapping} is that we want to preserve the quantum phase diagram of the system.
Point \ref{RG_Condition_3} of \cref{Def:RG_Mapping} requires that if we start in phase A, the system should remain in phase A under the RG flow: a key property of any RG scheme.
Furthermore, any indicators of a phase change still occur (e.g. non-analyticity of the order parameter).
Point \ref{RG_Condition_4} asks that the ``form'' of the Hamiltonian is preserved.

\medskip
Hamiltonians under RG flows have ``fixed points'' which occur where the Hamiltonian is left invariant by the action of the RG procedure. 
If $H^*$ is the fixed point a particular Hamiltonian is converging to under the RG flow, and $h^*$ is the corresponding local term, then the local terms away from the fixed point can be rewritten in terms of their deviation from the fixed point as:
\begin{align}
  h = h^* + \sum_{i} \beta_iO_i
\end{align}
and after renormalisation
\begin{align}
  \Rk(h) = h^* + \sum_{i} \beta_i^{(k)}O_i',
\end{align}
where if $\beta_i^{(k)}\rightarrow 0$ as $k\rightarrow\infty$ then $O_i$ is said to be an \emph{irrelevant operator}; if $\beta_i^{(k)}\rightarrow \infty$, then $O_i$ is a \emph{relevant operator}; and if $\beta_i^{(k)}\rightarrow c$ for a constant $c$, then $O_i$ a \emph{marginal operator}.

We note that many well-known renormalisation group schemes fit the criteria given in \cref{Def:RG_Mapping} when applied to the appropriate Hamiltonians.
In the following subsections, we review a number of these.
However, in general, a given RG scheme may satisfy the conditions for the family of Hamiltonians it was designed for, but will not necessarily satisfy all the desired conditions when applied to an arbitrary Hamiltonian.

\subsubsection{The Block Spin Renormalisation Group Map} \label{Sec:Block_RG}
We base our RG map on a blocking technique widely used in the literature to study spin systems, often called the Block Spin Renormalisation Group (BRG)\footnote{This is also sometimes called the ``quantum renormalisation group''.}~\cite{Jullien_Pfeuty_Fields_1978, Jullien_Pfeuty_1979, Penson_Jullien_Pfeuty_1982, Bhattacharyya_Sil_1999}.
Modifications and variations of this RG scheme have also been extensively studied~\cite{Martin-Delgado_et_al_1996, Wang_Kais_Levine_2002}.

The BRG is among the simplest RG schemes.
The procedure works by grouping nearby spins together in a block, and then determining the  associated energy levels and eigenstates of this block by diagonalisation.
Having done this, high energy (or otherwise unwanted) states are removed resulting in a new Hamiltonian.

As an explicit example, suppose there exists a Hamiltonian on a 1D chain
\begin{align}
  H = \sum_{i=1}^{N-1} K^{(0)}h^{(0)}_{i,i+1} + C^{(0)}\sum_{i=1}^N \1_i.
\end{align}
The BRG first groups the lattice points into pairs
\begin{align}
  H = K^{(0)}\sum_{i \ odd }^{N-1} h^{(0)}_{i,i+1} + K^{(0)}\sum_{i \ even}^{N-1} h^{(0)}_{i,i+1} + C^{(0)}\sum_{i=1}^N \1_i.
\end{align}
We then diagonalise the operators for odd $i$.
(In higher dimensional geometries we group the terms into blocks of neighbouring qudits.)
Having done this, remove all ``high energy states'' within each block, either by introducing an energy cut-off or just keeping a chosen subset of the lowest energy states.
The produces a renormalised Hamiltonian
\begin{align}
  \r^{(1)}(H) = K^{(1)}\sum_{i=1}^{N/2-1} h^{(1)}_{i,i+1}+b^{(1)} \sum_{i=1}^{N/2}h^{(1)}_i + C^{(1)}\sum_{i=1}^{N/2} \1_i.
\end{align}
For each further RG iteration the same process is repeated: the terms $h_{i,i+1}$ for odd $i$ are diagonalised and the high energy states are removed.

After $k$ iterations, the RG procedure returns a Hamiltonian of the same form, but now with different coupling constants:
\begin{align}
  \r^{(n)}(H) = K^{(n)}\sum_{i=1}^{N/2-1} h^{(n)}_{i,i+1}+ b^{(n)}\sum_{i=1}^{N/2}h^{(n)}_i + C^{(n)}\sum_{i=1}^{N/2} \1_i.
\end{align}

\paragraph{Form of the RG Mapping} ~\newline
This BRG mapping can be reformulated in terms of a series of isometries (or unitaries and subspace restrictions).
Given the local terms of some Hamiltonian, $h_{i,i+1}\in \B(\C^d\ox\C^d)$, we will consider renormalisation mappings of the form
\begin{align}
\sR: h_{i,i+1} \rightarrow V^\dagger h_{i,i+1} V
\end{align}
where $V:\C^d\rightarrow \C^{d'}$ is an isometry which will take a states in the initial set of basis states to a restricted new set of renormalised basis states.

Equivalently we can formulate this in terms of a unitary $U$ and a subspace 	$\Gamma$, as:
\begin{align}
\sR: h_{i,i+1} \rightarrow U^\dagger h_{i,i+1} U	|_\Gamma.
\end{align}
The unitary $U$ maps the original basis states to the new set (called \textit{blocking}).
This is followed by a restriction to the subspace $\Gamma$ which is the ``low-energy'' subspace: that is, all basis states which locally pick up too much energy are removed.
This subspace restriction is called \textit{truncating}.
In our particular variation of the BRG, the truncation step is not done entirely based on energy truncation, but also on overlap with a particular state.

\subsubsection{Comparison to Well Known RG Schemes}
\textit{Classical 1D Ising Model} ~\newline
A particularly famous RG scheme which satisfies \cref{Def:RG_Mapping} is the decimation scheme for the classical 1D Ising model \cite{Cardy_1996}.
Here the ground states are trivially either all $\sigma_i=1$ or $-1$.
Under the decimation RG procedure, half the spins are removed by ``averaging out'' the others.
The order parameter for the phase is the magnetisation: $M = \sum_{i=1}^{N} \sigma_i$ and it can be seen to undergo a non-analytic change between phases.
This is true even after renormalisation, thus satisfying point \ref{RG_Condition_3} of \cref{Def:RG_Mapping}.

The decimation mapping further gives a transformation of the form
\begin{align}
  \sR: J\sum \sigma_i\sigma_{i+1} + h\sum_i\sigma_i +CN \rightarrow J'\sum \sigma_i\sigma_{i+1} + h'\sum_i\sigma_i + C'N,
\end{align}
thus satisfying condition \ref{RG_Condition_4}.
It can also be shown~\cite{Cardy_1996} that the RG procedure preserves the phase of the Ising model, and hence satisfies condition \ref{RG_Condition_3}.

~\newline
\textit{MERA} ~\newline
A more recent and widely studied RG flow scheme in the quantum information literature is the multiscale entanglement renormalisation ansatz (MERA) developed in \cite{Vidal_2008}.
This is implemented by iteratively applying isometries to the local terms to produce new local Hamiltonian terms and density matrices.
This (approximately) preserves expectation values and hence can often be made to satisfy \ref{RG_Condition_3}.
Whether conditions \ref{RG_Condition_2} and \ref{RG_Condition_4} are satisfied is dependent on the Hamiltonian and isometries in question.

\subsection{Properties of the Spectral Gap Undecidability Construction} \label{Section:Properties_Of_Spec_Gap}
Constructing a mathematically rigorous RG flow for the undecidable Hamiltonian exhibited in~\cite{Cubitt_PG_Wolf_Nature, Cubitt_PG_Wolf_Undecidability} presents particular challenges, since its properties are uncomputable.
Nonetheless, we are able to by carefully analysing the local structure and properties of this Hamiltonian, which we review here.


We start by stating the main result in~\cite{Cubitt_PG_Wolf_Undecidability}, where the authors construct a Hamiltonian depending on one external parameter, which is gapped iff a universal Turing Machine halts on an input related to the Hamiltonian parameter. The spectral gap problem for this Hamiltonian is therefore equivalent to the Halting Problem, hence undecidable.
\begin{definition}[From theorem 3 of \cite{Cubitt_PG_Wolf_Undecidability}]\label{Definition:CPW15_Hamiltonian}
  For any given universal Turing Machine (UTM), we can construct explicitly a dimension $d$, $d^2\times d^2$ matrices $A,A',B,C,D,D',\Pi$ and a rational number $\beta$ which can be as small as desired, with the following properties:
  \begin{enumerate}
  \item $A$ is diagonal with entries in $\Z$.
  \item $A'$ is Hermitian with entries in $ \Z+ \frac{1}{\sqrt{2}}\Z$,
  \item $B,C$ have integer entries,
  \item $D$ is diagonal with entries in $\Z$,
  \item $D'$ is Hermitian with entries in $\Z$.
  \item $\Pi$ is a diagonal projector.
  \end{enumerate}
  For each natural number $n$, define:
  \begin{align*}
    &\begin{aligned}
      &h_1(n)=\alpha(n)\Pi, &\qquad&  \\
      &h_{\text{col}}(n)=D +\beta D', &\qquad& \text{independent of $n$}
    \end{aligned}\\
    &h_{\text{row}}(n)=A + \beta\left(A'+e^{i\pi\varphi} B + e^{-i\pi\varphi} B^\dagger + e^{i\pi2^{-\abs{\varphi}}} C + e^{-i\pi2^{-\abs{\varphi}}} C^\dagger\right),
  \end{align*}
  where $\alpha(n)\le \beta$ is an algebraic number computable from $n$ and $\abs{\varphi}$ denotes the length of the binary representation of $\varphi$. Then:
  \begin{enumerate}
  \item  The local interaction strength is bounded by~1, i.e.\ \linebreak $\max(\norm{h_1(n)}, \norm{h_{\text{row}}(n)}, \norm{h_{\text{col}}(n)}) \leq 1$.
  \item If UTM halts on input $n$, then the associated family of Hamiltonians $\{H^{\Lambda(L)}(n)\}$ is gapped with gap $\gamma\ge 1$.
  \item If UTM does not halt on input $n$, then the associated family of Hamiltonians $\{H^{\Lambda(L)}(n)\}$ is gapless.
  \end{enumerate}
\end{definition}
\noindent We first explain the overall form of the Hamiltonian and the Hilbert space structure, and later how the individual parts fit together.

\subsubsection{Local Interaction Terms and Local Hilbert Space Structure}

The Hamiltonian $H_u(\varphi)$ is constructed such that its ground state is composed of two components: a classical ``tiling layer'' and a highly entangled ``quantum layer''.
The local Hilbert space decomposes as:
\begin{align}
  \HS_u &=  \HS_c \otimes (\HS_q\oplus\ket{e}),
\end{align}
where $\HS_c$ is the Hilbert space corresponding ot the classical tiling layer and $\HS_q\oplus \ket{e}$ is the ``quantum'' layer.
The local terms $h_u$ are constructed as
\begin{align}
  h_u = h_T^{(i,i+1)} \ox \1_{eq}^{(i)}\ox\1_{eq}^{(i+1)} + \1_{c}^{(i)}\ox\1_c^{(i+1)} \ox h_q^{(i,i+1)} + \text{ ``coupling terms''}.
\end{align}

Let $h_u^{(i,j)}\in \B(\C^d\ox\C^d)$ be the local terms of the Hamiltonian $H_u$, 
$h_d^{(i,j)}\in \B(\C^2\ox\C^2)$ be the local interactions of the 1D critical XY model, and let $H_d$ be the Hamiltonian composed of XY interactions along the rows of the lattice.
This has a dense spectrum in the thermodynamic limit \cite{Lieb_Schultz_Mattis_1961}.
$h_u^{(i,j)}=h_u^{(i,j)}(\varphi)$ is designed so that $H_u(\varphi)=\sum h_u(\varphi)$ has a ground state energy which depends on whether a universal Turing Machine (UTM) halts when given on input $\varphi$ supplied in binary.
In particular, on a lattice of size $L\times L$, the ground state energy is
\begin{align}
  \lambda_0(H_u^{\Lambda(L)}) =
  \begin{cases}
    -\Omega(L) & \text{if UTM does not halt on input $\varphi$,} \\
    +\Omega(L^2) & \text{if UTM does halt on input $\varphi$.}
  \end{cases}
\end{align}
Since the halting problem is undecidable, determining which of the two ground state energies of $H_u(\varphi)$ occurs is undecidable.

The local Hilbert space of the overall Hamiltonian can be decomposed as:
\begin{align}
  \HS &= \ket{0}\oplus \HS_u \otimes \HS_d.
\end{align}
Here $\ket{0}$ is a zero-energy filler state, $\HS_d$ is the Hilbert space associated with the \textbf{d}ense spectrum Hamiltonian $h_d$, and $\HS_u$ is the Hilbert space associated with the Hamiltonian with \textbf{u}ndecidable ground state energy $h_u$.

The local interactions along the edges and on the sites of the lattice are act on this local Hilbert space as:
\begin{align}
  h(\varphi)^{(i,j)} &=\ket{0}\bra{0}^{(i)}\otimes (\1 -\ket{0}\bra{0} )^{(j)} + h_u^{(i,j)}(\varphi) \otimes \1_d^{(i,j)} + \1_u^{(i,j)} \otimes h_d^{(i,j)} \\
  h(\varphi)^{(1)}&= -(1+\alpha_2)\Pi_{ud},
\end{align}
where $\Pi_{ud}$ is a projector onto $\HS_{u}\ox \HS_{d}$, and $\alpha_2=\alpha_2(|\varphi|)$ is a constant depending only on $|\varphi|$.
Importantly, the spectrum of the overall lattice Hamiltonian composed of these local interactions is
\begin{align} \label{Eq:Spectrum_of_H}
  \spec{H(\varphi)} = \{0\}\cup \left\{ \spec(H_u(\varphi)) + \spec(H_d)   \right\}\cup S,
\end{align}
for a set $S$ with all elements $>1$. This means that if $\lambda_0(H_u^{\Lambda(L)})\rightarrow -\infty$ then the overall Hamiltonian has a dense spectrum, while if $\lambda_0(H_u^{\Lambda(L)})\rightarrow +\infty$ the overall Hamiltonian has a spectral gap $>1$.

In the $\lambda_0(H_u^{\Lambda(L)}(\varphi))=+\Omega(L^2)$ case, the ground state of the entire Hamiltonian is $\ket{0}^{\Lambda}$. In the $\lambda_0(H_u^{\Lambda(L)}(\varphi))=-\Omega(L)$ case, the overall ground state is $\ket{\psi_u}\ox \ket{\psi_d}$ where $\ket{\psi_u}$  and $\ket{\psi_d}$ are the ground states of $H_u(\varphi)$ and $H_d=\sum_{i \in \Lambda}h_d^{i,i+1}$ respectively.


\bigskip

We now explain the terms $h_T$ and $h_q$ as well as the cumulative effects of the coupling terms.

\paragraph{The Tiling Hamiltonian}
Wang tiles are square tiles of unit length with markings on each side, together with rules stipulating that a pair of tiles can only be placed next to each other if the markings on their adjacent sides match.
In \cite{Cubitt_PG_Wolf_Undecidability} the tile set is chosen to be a slightly modified version the Robinson tiles from \cite{robinson1971undecidability}, shown in \cref{Fig:Modified_Robinson_Tiles}.
When placed on a 2D grid such that the tiling rules are satisfied, the markings on the tiles form an aperiodic tiling consisting of a series of nested squares of sizes $4^n+1$, for all $n\in \N$, as shown in \cref{Fig:Robinson_Tiling_Pattern}.

\begin{figure}
  \hskip8pt \subfloat[The modified Robinson tiles used in \cite{Cubitt_PG_Wolf_Undecidability}.]{\includegraphics[width=0.43\textwidth]{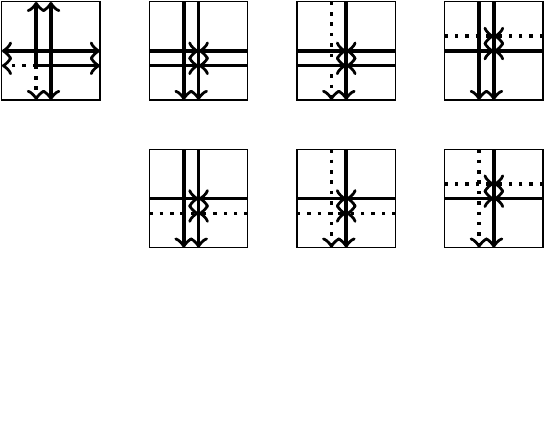}\label{Fig:Modified_Robinson_Tiles}}
  \hfill
  \subfloat[Ground state $\ket{T}_c$ of the classical Hamiltonian.]{\includegraphics[width=0.47\textwidth]{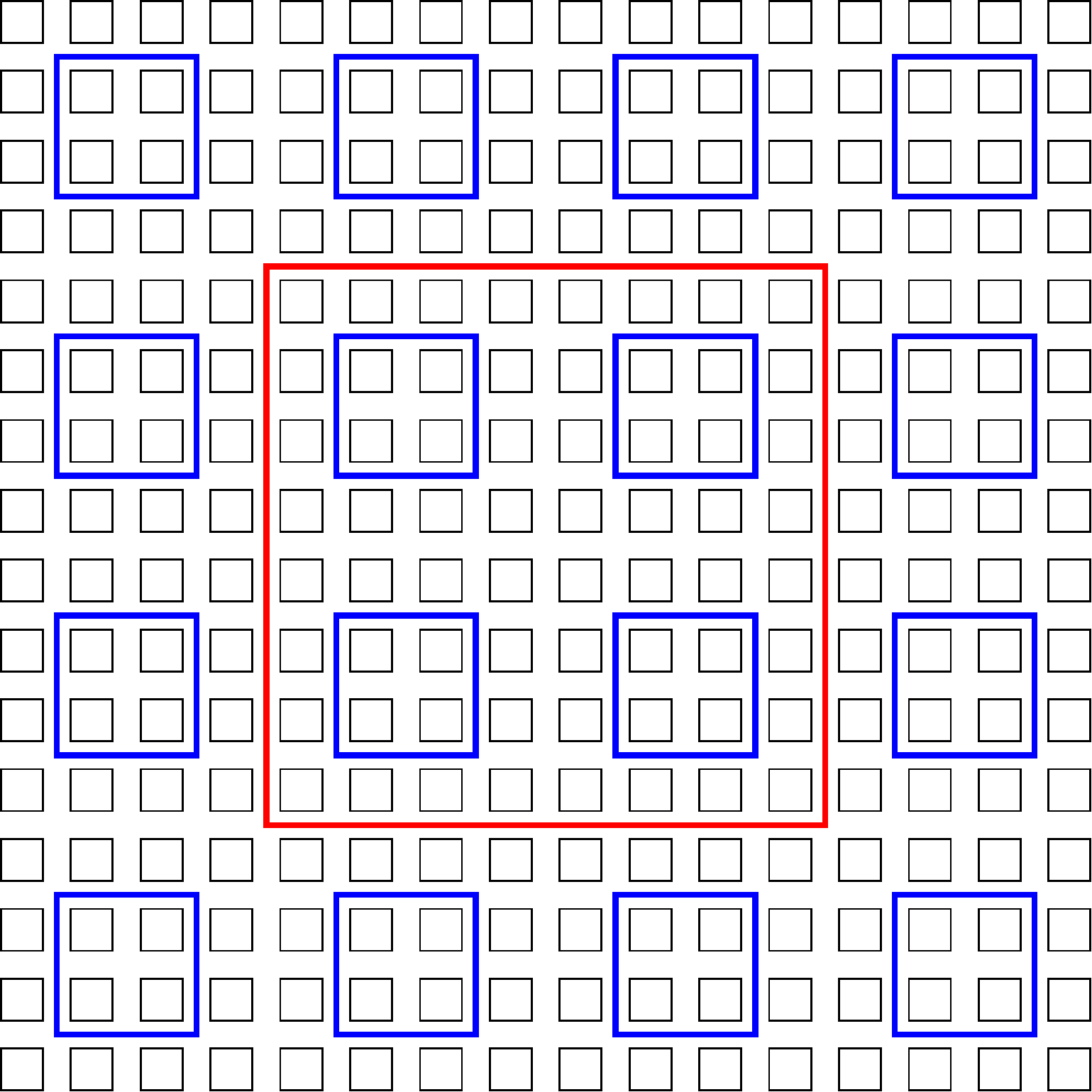}\label{Fig:Robinson_Tiling_Pattern}} \hskip10pt
  \caption{}
\end{figure}

This set of tiles can then be mapped to a 2D, translationally invariant, nearest neighbour, classical Hamiltonian by simply mapping each tile type to a state in the local Hilbert space and introducing local interactions that apply an energy penalty to neighbouring pairs which do not satisfy the tiling rules.
That is, the local terms are defined as $(h_T)_{i,i+1}:=\sum_{(t_\alpha,t_\beta)\not \in A} \ketbra{t_\alpha,t_\beta}_{i,i+1}$ where $A$ is the set of allowed neighbouring tiles.

Then, the ground state of the entire 2D lattice, $\ket{T}_c$, corresponds to the Robinson tiling pattern as shown in \cref{Fig:Robinson_Tiling_Pattern}.
Any other configuration must violate a tiling rule and thus receives an energy penalty.

\paragraph{The Quantum Hamiltonian}
$H_u(\varphi)$ is constructed so that its ground state energy encodes the halting or non-halting of a computation.
The fundamental ingredient required is the ``QTM-to-Hamiltonian'' mapping~\cite{Gottesman-Irani,Cubitt_PG_Wolf_Undecidability}.
This takes a given quantum QTM and creates a corresponding Hamiltonian which has a ground state which encodes its evolution.
This quantum state is called a \emph{history state}.
Let $\ket{\psi_t}$ be the state describing the configuration of the QTM after $t$ steps.
Then the history state takes the general form
\begin{equation}\label{eq:history_state}
  \ket{\Psi_{hist}} = \frac{1}{\sqrt{T}}\sum_{t=1}^T \ket{\psi_t}\ket{t},
\end{equation}
where $\ket{t}$ is a state labelling which step of the computation $\ket{\psi}$ corresponds to.

It is then possible to add a local projector term to the Hamiltonian which gives an additional energy penalty to certain outcomes of the computation.
In particular, \cite{Cubitt_PG_Wolf_Undecidability} penalise the halting state, so that if the QTM halts at some point, the Hamiltonian defined by $h_q$ picks up an additional energy contribution.
As a result, the energy of the ground state differs depending on whether or not the QTM halts within time $T$.

In particular,
\cite{Cubitt_PG_Wolf_Undecidability} adapt the QTM-to-Hamiltonian mapping originally developed by Gottesman and Irani~\cite{Gottesman-Irani}, which takes a QTM and maps its evolution to 1D, translationally invariant, nearest neighbour, Hamiltonian.
By $H_q$ we denote this modified version of the \emph{Gottesman-Irani Hamiltonian} (cf.~\cref{sec:Gottesman_Irani_Hamiltonian}).

The length of the computation encoded on a chain of length $L$ is $T(L)\sim \poly(L)2^L$, and the associated ground state energy is
\begin{align}
  \lambda_0(H_{q}(L))=
  \begin{cases}
    0 & \text{if QTM is non-halting within time $T(L)$,} \\
    \theta(1/T^2) & \text{if QTM halts within time $T(L)$.}
  \end{cases}
\end{align}
We give a more detailed analysis of the construction at the beginning of \cref{Sec:Quantum_RG}.

\paragraph{Combining $h_T$, $h_q$ and the Coupling Terms}
The terms $h_u$ are designed so that all eigenstates of $H_u^{\Lambda(L)}$ are product states $\ket{T}_c\ox \ket{\psi}_{eq}$ where $\ket{T}\in \HS_c^{\ox (L\times L)}$ and $\ket{\psi}\in \HS_{eq}^{\ox (L\times L)}$~\cite[Lemma~51]{Cubitt_PG_Wolf_Undecidability}.

Furthermore, the coupling terms are chosen such that the ground state has the following properties:
\begin{enumerate}
	\item the classical part of the ground state $\ket{T}_c$ corresponds to a perfect Robinson tiling.
	  The pattern created has a series of nested red Robinson squares as per \cref{Fig:Robinson_Tiling_Pattern}.
	\item the quantum part of the ground state $\ket{\psi}_{eq}$ has the following structure: along the top of every red Robinson square there is a history state (as defined in \cref{eq:history_state}); everywhere which is not along the top of a square is in the zero energy filler state $\ket{e}_e$.
\end{enumerate}

The consequence of this is that ground states of $H_q(\ell)$ of all lengths appear with a constant density across the lattice.
If, for any length, the encoded computation halts, then the ground state picks up a constant energy density, so that the energy scales as $\Omega(L^2)$.
However, if the encoded computation never halts, then for all lengths the ground state of the Gottesman-Irani Hamiltonian has zero energy, and (due to boundary effects), the ground state has energy $-\Omega(L)$~\cite{Cubitt_PG_Wolf_Undecidability}.

\subsection{The Gottesman-Irani Hamiltonian}\label{sec:Gottesman_Irani_Hamiltonian}
The particular circuit-to-Hamitonian mapping used in the previous section will be important when it comes to renormalising the overall Hamiltonian.
The overall structure used in \cite{Cubitt_PG_Wolf_Undecidability} is a modification of the one used in \cite{Gottesman-Irani}.

We start by defining one of the core concepts behind the construction an behaviour of the Hamiltonian of~\cite{Cubitt_PG_Wolf_Undecidability}: the Quantum Turing Machine (QTM).
\begin{definition}[Quantum Turing Machine~\cite{Bernstein1997}]
	A \emph{quantum Turing Machine} (QTM) is defined by a triplet $(\Sigma; \mathcal Q; \delta)$ where $\Sigma$ is a finite alphabet with an identified blank symbol $\#$, $\mathcal Q$ is a finite set of states with an identified initial state $q_0$ and final state $q_f\neq q_0$ , and $\delta$  is the \emph{quantum transition function}
	\begin{equation}
	\delta : \mathcal Q \times \Sigma \rightarrow \mathds{C}^{\Sigma \times \mathcal Q \times [L,R]}
	\end{equation}
	The QTM has a two-way infinite tape of cells indexed by $\mathds{Z}$ and a single read/write tape head that moves along the tape.
	A \emph{configuration} of the QTM is a complete description of the contents of the tape, the location of the tape head and the state $q \in \mathcal Q$ of the finite control.
	At any time, only a finite number of the tape cells may contain non-blank symbols.
	The initial configuration satisfies the following conditions: the head is in cell $0$, called the starting cell, and the machine is in state $q_0$.

	We say that an initial configuration has input $x \in (\Sigma \backslash \{\#\})^\ast$ if $x$ is written on the tape in positions $0, 1, 2, \dots$ and all other tape cells are blank.
	The QTM halts on input $x$ if it eventually enters the final state $q_f$.
	The number of steps a QTM takes to halt on input $x$ is its \emph{running time on input $x$}.\newline %
	Let $\mathcal S$ be the inner-product space of finite complex linear combinations of configurations of the QTM $M$ with the Euclidean norm.
	We call each element $\phi \in \mathcal S$ a \emph{superposition} of $M$.\newline %
	The QTM $M$ defines a linear operator $U_M : \mathcal S \rightarrow \mathcal S$, called the \emph{time evolution operator of $M$}, as follows: if $M$ starts in configuration $c$ with current state $p$ and scanned symbol $\sigma$, then after one step $M$ will be in superposition of configurations $\psi = \sum_j \alpha_j c_j$, where each non-zero $\alpha_j$ corresponds to the amplitude $\delta(p; \sigma; \tau; q; d)$ of $\ket{\tau}\ket{q}\ket{d}$ in the transition $\delta(p; \sigma)$ and $c_j$ is the new configuration obtained by writing $\tau$, changing the internal state to $q$ and moving the head in the direction of $d$.
	Extending this map to the entire $\mathcal S$ through linearity gives the linear time evolution operator $U_M$.
\end{definition}

\medskip

Following~\cite{Gottesman-Irani}, the QTM can be encoded into a 1D, translationally-invariant, nearest-neighbour Hamiltonian, which we refer to as \emph{a Gottesman-Irani Hamiltonian}, denoted by $H_q(L)\in \B((\C^d)^{\ox L})$. 
This is summarised by theorem 32 of~\cite{Cubitt_PG_Wolf_Undecidability}; we write out a slightly simpler version here as the specific details are not important for our purposes.
These constructions will be needed in order to formulate~\cref{Lemma:RG_GI_Properties} for the block-renormalisation of the quantum Hamiltonian.
\begin{theorem}[Informal Version of Theorem 32 of \cite{Cubitt_PG_Wolf_Undecidability}]
	\label{QTM_in_local_Hamiltonian} \hfill\\
	Let $\C^d = \C^C\otimes\C^Q$ be the local Hilbert space of a 1\nobreakdash-dimensional chain of length $L$, with special marker states $\ket{\leftend},\ket{\rightend}$.
	Denote the orthogonal complement of $\spann(\ket{\leftend}, \ket{\rightend})$ in $\C^d$ by $\C^{d-2}$.
	Let $d, Q$ and $C$ all be fixed.

	For any well-formed unidirectional Quantum Turing Machine $M = (\Sigma,Q,\delta)$ and any constant $K>0$, we can construct a two-body interaction $h \in \cB(\C^d\ox\C^d)$ such that the 1\nobreakdash-dimensional, translationally-invariant, nearest-neighbour Hamiltonian $H(L)=\sum_{i=1}^{L-1} h^{(i,i+1)} \in \cB(\HS(L))$ on the chain of length $L$ has the following properties:
	\begin{enumerate}
		\item \label[part]{QTM_in_local_Hamiltonian:local_dim}%
		$d$ depends only on the alphabet size and number of internal states of $M$.

		\item \label[part]{QTM_in_local_Hamiltonian:FF}%
		$h \geq 0$, and the overall Hamiltonian $H(L)$ is frustration-free for all $L$.

		\item \label[part]{QTM_in_local_Hamiltonian:gs}%
		Denote $\HS(L-2) := (\C^{d-2})^{\ox L-2}$ and define $\mathcal{S}_{br}=\spann(\ket{\leftend}) \ox \HS(L-2) \ox \spann(\ket{\rightend})\subset\HS$.
		Then the unique ground state of $H(L)|_{\mathcal{S}_{br}}$ is a computational history state (cf. \cref{eq:history_state} for a definition) encoding the evolution of $M$.

	\end{enumerate}
	Moreover, the action of $M$ satisfies:
	\begin{enumerate}
		\item \label[part]{QTM_in_local_Hamiltonian:halt}%
		The computational history state always encodes $\Omega(2^L)$ time-steps. If $M$ halts in fewer than the number of encoded time steps, exactly one $\ket{\psi_t}$ has support on a state $\ket{\top}$ that encodes a halting state of the QTM. The remaining time steps of the evolution encoded in the history state leave $M$'s tape unaltered, and have zero overlap with $\ket{\top}$.

		\item \label[part]{QTM_in_local_Hamiltonian:out-of-tape}%
		If $M$ runs out of tape within a time $T$ less than the number of encoded time steps, the computational history state only encodes the evolution of $M$ up to time $T$. The remaining steps of the evolution encoded in the computational history state leave $M$'s tape unaltered.
	\end{enumerate}
\end{theorem}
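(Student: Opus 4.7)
}

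The plan is to import the Gottesman--Irani construction almost verbatim and track how the modifications of~\cite{Cubitt_PG_Wolf_Undecidability} imply the stated properties; since the theorem is quoted as the informal version of their Theorem~32, the proof is a bookkeeping exercise on top of an existing construction rather than a new combinatorial argument. I would set $\C^d=\C^C\ox\C^Q$, where $\C^Q$ carries the QTM control (internal state, head flag, direction arrows, clock tokens and the special boundary symbols $\ket{\leftend},\ket{\rightend}$) and $\C^C$ carries the tape alphabet $\Sigma$. The local interaction $h=h_{\mathrm{prop}}+h_{\mathrm{legal}}+h_{\mathrm{bdry}}+h_{\mathrm{init}}$ is assembled from: (i) propagation terms $h_{\mathrm{prop}}$ of Feynman--Kitaev type that implement one transition of the well-formed unidirectional QTM $M$ and simultaneously advance a clock/head token along the chain; (ii) legality terms $h_{\mathrm{legal}}$ that project out statically forbidden pairs of control symbols (two heads on adjacent sites, malformed arrows, etc.); (iii) boundary terms $h_{\mathrm{bdry}}$ that penalise $\ket{\leftend},\ket{\rightend}$ except at the ends of the chain; and (iv) an initialisation term $h_{\mathrm{init}}$ that forces the first encoded configuration to be the standard input configuration of $M$. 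All summands are PSD projectors or rank-one positive operators, giving~\cref{QTM_in_local_Hamiltonian:FF} once frustration-freeness is established.

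Next I would verify the structural properties. For~\cref{QTM_in_local_Hamiltonian:local_dim}, the local dimension $d=C\cdot Q$ is read off directly from the construction and depends only on $|\Sigma|$ and $|Q|$. For~\cref{QTM_in_local_Hamiltonian:gs}, I would restrict to the bracketed subspace $\mathcal{S}_{br}$ and argue in three stages: first, $h_{\mathrm{legal}}+h_{\mathrm{bdry}}$ forces every zero-energy state to be supported on configurations with a single well-formed head token that evolves within the interior; second, a standard unitary conjugation (the Feynman rotation) maps the propagation Hamiltonian on $\mathcal{S}_{br}$ to a path-graph Laplacian on the sequence of syntactically valid QTM configurations, whose unique zero-energy vector is the uniform superposition over that sequence; third, $h_{\mathrm{init}}$ pins the first vertex of this path to the standard input configuration, yielding the history state of~\cref{eq:history_state} as the unique ground state on $\mathcal{S}_{br}$.

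For the dynamical properties \ref{QTM_in_local_Hamiltonian:halt}--\ref{QTM_in_local_Hamiltonian:out-of-tape}, the key is to have an outer ``clock QTM'' running in parallel with $M$ that counts up to $\Omega(2^L)$ steps on a chain of length $L$: this is the standard trick of using a binary counter on the tape, which fits in length $L$ because $L$ bits address $2^L$ time-steps, and its running time is guaranteed to be $\Omega(2^L)$. Once $M$ enters $q_f$, a dedicated halt-detection term in $h_{\mathrm{prop}}$ writes the flag $\ket{\top}$ into one cell of the control register at that single time-slice and freezes $M$'s tape contents for the remainder of the clock's countdown, giving~\cref{QTM_in_local_Hamiltonian:halt}. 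If $M$ instead runs off the end of its usable workspace before the clock finishes, the out-of-tape detection built into $h_{\mathrm{prop}}$ (triggered when the head meets a $\ket{\rightend}$ or $\ket{\leftend}$ during a non-terminal transition) similarly freezes the tape and lets the clock continue idly, giving~\cref{QTM_in_local_Hamiltonian:out-of-tape} with no $\ket{\top}$ ever being written.

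The main obstacle is~\cref{QTM_in_local_Hamiltonian:gs}: showing that on $\mathcal{S}_{br}$ the unique zero-energy state is exactly the history state. This requires a careful invariant-subspace decomposition of the propagation Hamiltonian, verifying that the Feynman path graph is connected and that no spurious zero modes are introduced by the legality and boundary terms, and then using frustration-freeness to promote local zero-energy to a global ground state. I would handle this by lifting the corresponding lemmas from~\cite{Gottesman-Irani} and~\cite{Cubitt_PG_Wolf_Undecidability} with only the minimal changes needed to accommodate the added halt-flag and clock layers; the remaining items then follow by direct inspection of the local terms.
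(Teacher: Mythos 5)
The paper does not actually supply a proof of this theorem: it is quoted as an ``informal version'' of Theorem~32 of~\cite{Cubitt_PG_Wolf_Undecidability}, and the only thing the authors provide afterwards is a descriptive sketch of the modified Gottesman--Irani construction (the six-track picture with clock oscillator, counter TM head/state, counter TM tape, QTM head/state, QTM tape, and time-wasting tape). So your proposal can only be compared against that sketch, not a proof.

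Your overall decomposition $h=h_{\mathrm{prop}}+h_{\mathrm{legal}}+h_{\mathrm{bdry}}+h_{\mathrm{init}}$, the binary-counter argument for $\Omega(2^L)$ time-steps, the single halt-flag written into one time-slice, and the tape-freezing after halting or running out of tape all align with what the paper describes. But two points do not match. First, you identify $\C^C$ with the tape alphabet and $\C^Q$ with the QTM control; in the construction the paper is quoting, $\C^C$ is the \emph{classical} sector (the clock oscillator, the counter TM's head/state, and the counter TM's tape, i.e.\ tracks~1--3) and $\C^Q$ is the \emph{quantum} sector (the QTM's head/state, its tape, and the time-wasting tape, tracks~4--6) --- each sector contains both head and tape data, split by whether the sub-computation is classical or quantum, which is the decomposition needed for the ``standard-form Hamiltonian'' structure used later in~\cref{Def:Standard-form_H} and~\cref{Lemma:RG_GI_Properties}. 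Second, you describe ``a clock/head token that advances along the chain'' and a binary counter for timing, but you omit the track-1 clock oscillator that bounces back and forth along the chain (\cref{Fig:Clock-Oscillator}); in Gottesman--Irani this oscillator is what allows a \emph{translationally invariant} two-local Hamiltonian to sequence the TM steps without a position-dependent clock register, and the counter TM you describe is a separate ingredient (tracks~2--3) used only to make the total number of encoded time-steps $\Omega(2^L)$. Both devices appear in the paper's sketch and both are load-bearing; conflating them or dropping the oscillator would leave~\cref{QTM_in_local_Hamiltonian:gs} unproven, since the path-graph argument you sketch relies on a well-defined sequence of legal configurations generated by the oscillator's deterministic sweep. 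Since the paper is content to defer the actual proof to~\cite{Cubitt_PG_Wolf_Undecidability,Gottesman-Irani}, these are sketch-level discrepancies rather than errors in a proof you would have had to reproduce, but they are worth correcting because the later renormalisation lemmas (\cref{Lemma:Renormalisation_Isometry}, \cref{Lemma:RG_GI_Properties}) depend on exactly the six-track $\C^C\ox\C^Q$ factorisation.
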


We provide in the following a more detailed sketch of how the modified Gottesman-Irani construction works, and refer the reader to \cite{Cubitt_PG_Wolf_Undecidability,Gottesman-Irani} for a detailed overview.
We begin by considering the general setup.
Our basis states for $(\C^{ d})^{\ox L}$ (i.e. the chain of length $L$) will have the following structure:
\begin{center}
	\begin{tabular}{|l|clc|r|}
		\hline
		$\leftend$ & $\cdots$ & Track 1: Clock oscillator & $\cdots$ & $\rightend$ \\
		\hline
		$\leftend$ & $\cdots$ & Track 2: Counter TM head and state & $\cdots$ & $\rightend$\\
		\hline
		$\leftend$ & $\cdots$ & Track 3: Counter TM tape & $\cdots$ & $\rightend$\\
		\hline
		$\leftend$ & $\cdots$ & Track 4: QTM head and state & $\cdots$ & $\rightend$\\
		\hline
		$\leftend$ & $\cdots$ & Track 5: QTM tape & $\cdots$ & $\rightend$\\
		\hline
		$\leftend$ & $\cdots$ & Track 6: Time-wasting tape & $\cdots$ & $\rightend$\\
		\hline
	\end{tabular}
\end{center}
The local Hilbert space at each site is the tensor product of the local Hilbert space of each of the six tracks $\HS = \bigotimes_{i=1}^6\HS_{i}$.

The outline of the construction is the following: tracks 1 encodes the evolution of an oscillator which goes back and forth along its track as per \cref{Fig:Clock-Oscillator}
\begin{figure}[b]
	\centering
	\includegraphics[width=0.2\textwidth]{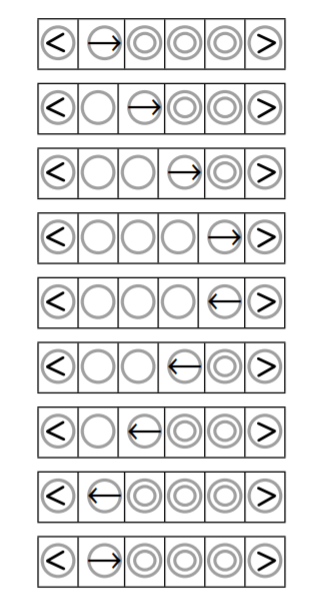}
	\caption{Evolution of the Track~1 clock oscillator.}
	\label{Fig:Clock-Oscillator}
\end{figure}
Tracks 2 and 4 contain the heads of a classical and quantum TM respectively.
These heads are only able to move when the oscillator on track 1 passes by their heads -- in this way their evolution can be encoded with only local Hamiltonian terms.
Tracks 3 and 5 are the read/write tapes for the respective TMs.

The classical TM encoded by the track 2 head will be a simple counter: it will write out binary number on its tape (on track 3) and then increment it by one to the next binary number.
This continues until the tape is filled, at which point it halts along with the clock oscillator.

The QTM on tracks 4 and 5 will be a generic QTM.
The QTM evolves as per its transition rules until either: (a) the counter TM runs out of space and hence the oscillator stops, or (b) the QTM finishes its computation and halts.
If the QTM halts before the counter TM runs out of steps, it places a halting marker on track 5.
The head then moves to track 6 where it performs some arbitrary time wasting computation which is guaranteed not to halt before the counter TM.

We also note that tracks 1-3 evolved entirely classically whereas tracks 4-6 will contain quantum states.
As such, we decompose the local Hilbert space into a classical and quantum part $\C^C\ox \C^Q$.

\subsection{Order Parameters}\label{Sec:Order_Parameter}

Back to claim~\ref{RG_Condition_3} of \cref{Def:RG_Mapping}, we now discuss order parameters in more detail.
As noted in \cite{Bausch_Cubitt_Watson}, the two phases\footnote{Phase in this context refers to the state of matter, not a quantum mechanical phase factor (of the form $e^{i\theta}$).} of the Hamiltonian (which we label A and B for convenience) can be distinguished by an order parameter $O_{A/B}$, defined as
\begin{align}
  O_{A/B} = \frac{1}{|\Lambda|}\sum_{i\in \Lambda}\ketbra{0}^{(i)}.
\end{align}
In particular, upon moving from one phase to another, the expectation value of the order parameter is expected to undergo a non-analytic change.
In the case $\lambda_0(H_u^{\Lambda(L)}(\varphi))=+\Omega(L^2)$ the ground state of the entire Hamiltonian is then $\ket{0}^{\Lambda}$ and hence $\langle O_{A/B} \rangle = 1$, and otherwise $\langle O_{A/B} \rangle =0$.
This is true even if we restrict $O_{A/B}$ to subsections of the lattice, hence $O_{A/B}$ is a local order parameter (as opposed to the global order parameters required to distinguish topological phases).
Thus $O_{A/B}$ undergoes a non-analytic change between phases, which itself demonstrates a phase transition.
More generally for a ball $B(r)$ of radius $r$, and for a state $\ket{\nu}\in \HS^{\ox \Lambda}$ we can define a local observable
\begin{align}\label{Eq:Order_Parameter_Definition}
O_{A/B}(r) = \frac{1}{|B(r)|}\sum_{i\in B(r)} \ketbra{0}^{(i)},
\end{align}
which acts as a local order parameter.

\section{Main Results and Overview of RG Procedure} \label{Sec:Main_Results}

\begin{theorem}[Exact RG flow for Undecidable Hamiltonian]\label{Theorem:Undecidability_of_RG_Flows_Main_Results}
  Let $H(\varphi)$ be the Hamiltonian defined in \cite{Cubitt_PG_Wolf_Undecidability}.
  We construct a renormalisation group procedure for the Hamiltonian which has the following properties:
  \begin{enumerate}
  \item \label{RG_Condition_1_3} $\sR$ is computable.
  \item \label{RG_Condition_2_3} If $H(\varphi)$ is gapless, then $\Rk(H(\varphi))$ is gapless, and if $H(\varphi)$ is gapped, then $\Rk(H(\varphi))$ is gapped (where gapped and gapless are defined in \cref{Def:gapped} and \cref{Def:gapless}).
  \item \label{RG_Condition_3_3} For the order parameter $O_{A/B}(r)$ (as defined in \cref{Eq:Order_Parameter_Definition}) which distinguishes the phases of $H^{\Lambda(L)}$ and is non-analytic at phase transitions, there exists a renormalised observable $\Rk(O_{A/B}(r))$ which distinguishes the phases of $\Rk(H)^{\Lambda(L)}$ and is non-analytic at phase transitions.
  \item \label{RG_Condition_4_3} Under an arbitrary number of iterations, the renormalised local interactions belong to a family $\mathcal{F}(\varphi, \tau_1, \tau_2, \{\alpha_i\}_i,  \{\beta_i\}_i)$, and for any finite $k$ all of the parameters are computable.
  \item \label{RG_Condition_5_3} If $H(\varphi)$ initially has algebraically decaying correlations, then $\Rk(H(\varphi))$ also has algebraically decaying correlations.
    If $H(\varphi)$ initially has zero correlations, then $\Rk(H(\varphi))$ also has zero correlations.
  \end{enumerate}
\end{theorem}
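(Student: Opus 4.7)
The plan is to build a block-renormalisation-group map that respects the layered Hilbert-space decomposition $\HS = \ket{0} \oplus \HS_u \ox \HS_d$, with $\HS_u = \HS_c \ox (\HS_q \oplus \ket{e})$, of the undecidable model of~\cite{Cubitt_PG_Wolf_Undecidability}. Each RG step will block a $4 \times 4$ patch of sites into a single renormalised site, a choice essentially forced by the $4^n+1$ self-similarity of the Robinson tiles, and the overall blocking isometry will factorise as $V = V_c \ox V_{eq} \ox V_d$, extended by the rule $\ket{0}^{\otimes 16} \mapsto \ket{0}$ on the filler state. My strategy is to define and analyse the three factor maps independently, then lift their joint action to the full lattice Hamiltonian by linearity, and read off the five claimed properties from the spectrum formula $\spec H(\varphi) = \{0\} \cup \{\spec H_u(\varphi) + \spec H_d\} \cup S$ of \cref{Eq:Spectrum_of_H}.

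For the classical tiling layer, the self-similarity of the modified Robinson tiles supplies a natural isometry $V_c$ onto the span of legal $4 \times 4$ patches; the renormalised nearest-neighbour terms penalise pairs of patches that cannot be joined in any global tiling, and a straightforward induction then shows that the renormalised ground state is still a Robinson tiling with the smallest red square peeled off. For the quantum layer I use the fact that each Gottesman--Irani Hamiltonian $H_q(\ell)$ lives on the top row of a red square of side $\ell$, so that $4 \times 4$ blocking reduces to length-4 blocking along the chain; exploiting the six-track structure reviewed in~\cref{sec:Gottesman_Irani_Hamiltonian} and the frustration-freeness of $H_q$, I define $V_{eq}$ to retain exactly those length-4 configurations that appear inside some genuine history state, which preserves the history-state ground state and its halting-dependent energy (either $0$ or $\Theta(1/T^2)$ up to rescaling). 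For the XY dense-spectrum layer I use the standard critical-XY BRG, which is known to flow to a dense-spectrum, algebraically correlated fixed point.

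With these ingredients, property~\ref{RG_Condition_1_3} is immediate since each step is a finite diagonalisation-and-truncation on a bounded-dimensional block. Property~\ref{RG_Condition_4_3} follows from bookkeeping: the family $\F(\varphi,\tau_1,\tau_2,\{\alpha_i\},\{\beta_i\})$ can be taken to consist of the four operator species (tiling, GI-quantum, coupling, XY), with $\varphi$ the Turing-machine input, $\tau_1$ and $\tau_2$ counters recording the current block size and the length of the shortest encoded computation still present, and $\{\alpha_i\}, \{\beta_i\}$ the scalar coefficients within each species, all evolving under explicit finite recurrences. Property~\ref{RG_Condition_2_3} follows from the spectral decomposition: gaplessness is driven by the dense spectrum of $H_d$ together with $\lambda_0(H_u) = -\Omega(L)$, and gappedness by $\lambda_0(H_u) = +\Omega(L^2)$ dominating, both preserved by the three factor maps. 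Property~\ref{RG_Condition_3_3} is obtained by defining $\Rk(O_{A/B}(r))$ as the averaged projector onto the renormalised $\ket{0}$ state in a ball of radius $r$, whose expectation continues to distinguish the phases and is non-analytic across the transition. Property~\ref{RG_Condition_5_3} then follows because in the gapless case the long-range correlations are inherited from the XY layer, which the BRG preserves as algebraic, while in the gapped case the ground state is the product state $\ket{0}^\Lambda$ and all correlations vanish.

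The main obstacle is the construction of $V_{eq}$: it must be chosen so that after truncation the renormalised Hamiltonian remains frustration-free, still admits a unique history-state ground state, and still carries the halting-marker projector at the correct weight, despite now encoding a computation on a shortened chain. This is what forces the explicit counters $\tau_1, \tau_2$ into $\F$ --- the renormalised terms must track how many genuine QTM steps have been packaged into each renormalised site and how much room is left on the renormalised clock and counter tapes --- and it is also the reason the per-step computability of $\sR$ does not imply computability of the flow: the long-time values of $\tau_1, \tau_2$ are exactly what the underlying QTM computes, so determining which fixed point the flow ultimately selects is equivalent to deciding the Halting Problem.
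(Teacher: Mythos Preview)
Your high-level strategy matches the paper's: factor the RG map along the Hilbert-space decomposition $\ket{0}\oplus(\HS_c\ox\HS_{eq})\ox\HS_d$, exploit Robinson self-similarity for the classical layer, preserve the history-state structure for the quantum layer, and read off properties~\ref{RG_Condition_1_3}--\ref{RG_Condition_5_3} from the spectral formula~\eqref{Eq:Spectrum_of_H}. A few technical choices differ. The paper blocks $2\times 2$, not $4\times 4$: \cref{thm:tiles_graph_isomorphism} shows the adjacency graph of $2\times 2$ Robinson supertiles is already isomorphic to the original tile set, so $4\times 4$ is not forced (though it would also work, being two $2\times 2$ steps). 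For the dense-spectrum layer the paper does not run a ``standard critical-XY BRG'' but simply defines $V^d$ so that $h_d$ maps exactly to itself; this sidesteps any approximation issues.

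The one substantive gap is in your description of $V_{eq}$. Blocking and retaining ``configurations that appear inside some genuine history state'' is only half of what the paper does. After the $2\times 2$ block is mapped to a single site carrying a $q_1\ox q_2$ tensor structure (one factor per row), the paper applies an explicit projector $\Pi_{gs}(k)$ onto the \emph{actual} Gottesman--Irani ground state of the appropriate length in the $q_2$ factor (\cref{Def:T_u_Mapping}). This projection is computable at each step because it only requires diagonalising $H_q$ on a finite chain, and it is what generates the identity-term energy shifts $\kappa^{(k)},\gamma^{(k)}$ of \cref{Remark:Local_Projectors}. Those shifts accumulate into $g(k)=4^k\sum_{4^n+1<2^k}4^{-2n-1}\lambda_0(H_q(4^n))$, and $\tau_2(k)=g(k)-4^k\alpha_2(\varphi)-2^k$ is precisely the coefficient of $\Pi_{ud}$ in the renormalised one-local term (\cref{Lemma:Fully_Renormalised_Hamiltonian}, \cref{Corollary:Family}). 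Your interpretation of $\tau_1,\tau_2$ as ``counters for block size and shortest remaining computation'' is not what carries the argument: $\tau_2$ is an explicit, step-by-step computable energy bookkeeping quantity, and it is its \emph{asymptotic} behaviour ($-2^k$ forever versus eventual $+\Omega(4^k)$ growth) that is uncomputable, not its value at any finite $k$. Without the $\Pi_{gs}$ mechanism you cannot establish the precise ground-state-energy accounting of \cref{Lemma:put-promise-together} and \cref{Corollary:GS_in_Limit}, which is what drives properties~\ref{RG_Condition_2_3} and~\ref{RG_Condition_4_3}.
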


\begin{theorem}[Uncomputability of RG flows]\label{Theorem:tau_2_Divergence_Main_Results}
  Let $h(\varphi)$, $\varphi\in \mathbb{Q}$, be the full local interaction of the Hamiltonian from \cite{Cubitt_PG_Wolf_Undecidability}.
  $H(\varphi):=\sum h(\varphi)^{(i,j)}$ is gapped if the UTM corresponding to $h{(\varphi)}$ halts on input $\varphi$, and gapless if the UTM never halts, where gapped and gapless are defined in \cref{Def:gapped} and \cref{Def:gapless}.
  Consider $k$ iterations of the RG scheme (defined later in \cref{Def:Full_RG_Mapping}) acting on $H(\varphi)$, such that the renormalised local terms are given by  $\Rk(h(\varphi))$, which can be parameterised as part of the family $\mathcal{F}(\varphi, \tau_1, \tau_2, \{\alpha_i\}_i,  \{\beta_i\}_i)$ (as per \cref{Corollary:Uncomputable_Parameters}).
  Then, if the UTM is non-halting on input $\varphi$, for all $k>k_0(\varphi)$,  $\tau_2(k)=-2^{k}$, for some computable $k_0(\varphi)$.
  If the UTM is halting on input $\varphi$, then there exists an uncomputable $k_h(\varphi)$ such that for $k_0(\varphi)<k< k_h(\varphi)$, $\tau_2(k)=-2^{k}$, and for all $k>k_h(\varphi)$ then $\tau_2(k)=-2^{k}+\Omega(4^{k-k_h(\varphi)})$.
\end{theorem}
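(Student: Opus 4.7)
The plan is to invoke the inductive parameterisation of the renormalised local terms provided by \cref{Corollary:Uncomputable_Parameters}, which writes $\Rk(h(\varphi))$ as a member of the family $\mathcal{F}(\varphi, \tau_1, \tau_2, \{\alpha_i\}_i, \{\beta_i\}_i)$, and then to track only the parameter $\tau_2$ across iterations. Physically, $\tau_2(k)$ should be identified with the cumulative contribution of the quantum-layer Hamiltonian $H_u$ aggregated over a scale-$k$ renormalised block: it records the combined boundary energy from the filler state $\ket{e}$ together with the summed ground-state energies of the Gottesman-Irani Hamiltonians sitting along the tops of the nested Robinson squares enclosed by the block.

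First I would establish the baseline $\tau_2(k) = -2^k$. By the block-renormalisation analysis of $H_u$ developed in \cref{Sec:Quantum_RG}, the negative contributions arising from the $-(1+\alpha_2)\Pi_{ud}$ coupling and from the quantum-layer filler sum telescopically through successive blocking steps, and a direct recursion shows that this sum doubles at each iteration, producing exactly $-2^k$. The smallest iteration $k_0(\varphi)$ at which this exact scaling sets in is the first $k$ for which the scale-$k$ block fully contains at least one complete Robinson square; this threshold depends only on $|\varphi|$ and the initial block discretisation and is therefore computable. Crucially, this baseline holds only while \emph{every} enclosed Gottesman-Irani instance remains in its non-halting, zero-penalty ground-state branch. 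If the UTM is non-halting on input $\varphi$, this is automatic at every scale, so $\tau_2(k) = -2^k$ for all $k > k_0(\varphi)$.

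For the halting case, each Robinson square of side $L$ supports a Gottesman-Irani Hamiltonian encoding $T(L) = \poly(L)\cdot 2^L$ time steps, with halting ground-state penalty $\Theta(1/T(L)^2)$ which is activated precisely when $T(L) \geq T_{\text{halt}}(\varphi)$ and vanishes otherwise. I define $k_h(\varphi)$ as the smallest $k$ for which a Robinson square fully contained in the scale-$k$ block reaches the critical size. Uncomputability of $k_h(\varphi)$ then follows by a direct reduction to the Halting Problem: any computable upper bound on $k_h(\varphi)$ would translate via $T(L) \geq T_{\text{halt}}(\varphi)$ into a computable upper bound on $T_{\text{halt}}(\varphi)$, and halting could be decided by bounded simulation, a contradiction. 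For $k_0(\varphi) < k < k_h(\varphi)$ no enclosed Gottesman-Irani Hamiltonian has yet crossed the halting threshold, so the baseline argument applies unchanged and $\tau_2(k) = -2^k$. For $k \geq k_h(\varphi)$, the nested Robinson structure (\cref{Fig:Robinson_Tiling_Pattern}) guarantees $\Theta(4^{k-k_h(\varphi)})$ halting-penalty-bearing squares of side $\geq 4^{k_h(\varphi)}$ inside the scale-$k$ block, because the number of nested squares at each successive scale grows by a factor of four per linear doubling. Each such square contributes a positive penalty of fixed order, so summing yields the claimed correction $\tau_2(k) = -2^k + \Omega(4^{k-k_h(\varphi)})$.

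The main obstacle I expect is the fine bookkeeping of how the Gottesman-Irani ground-state energy survives the blocking isometry and the truncation to the low-energy subspace at each step: one must verify that the truncation neither drops the halting-branch contribution nor perturbs the baseline scaling, and that the geometric count of halting squares at scale $k$ matches $4^{k-k_h(\varphi)}$ up to the constants hidden in the $\Omega$. The first task is handled by the block-RG analysis of $H_u$ in \cref{Sec:Quantum_RG} (which shows that the low-energy subspace on each block retains the relevant history-state branches of all enclosed Gottesman-Irani Hamiltonians); the second reduces to a direct counting argument on the Robinson tiling. Once these are assembled, uncomputability of the full RG trajectory of $\tau_2$ is an immediate consequence of the reduction of $k_h(\varphi)$ to the Halting Problem.
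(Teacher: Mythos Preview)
Your proposal is broadly correct in spirit but takes a more roundabout route than the paper, and has one minor mischaracterisation.

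The paper's proof is purely algebraic and very short: \cref{Lemma:Fully_Renormalised_Hamiltonian} has already produced the closed-form expression
\[
\tau_2(k) = g(k) - 4^k\alpha_2(\varphi) - 2^k, \qquad g(k)=4^k\sum_{4^n+1<2^k}4^{-2n-1}\lambda_0(H_q(4^n)),
\]
and the proof simply reads off the behaviour of this formula. The constant $\alpha_2(\varphi)$ is defined so that $g(k_0)=4^{k_0}\alpha_2(\varphi)$ once $k$ exceeds a threshold depending only on $|\varphi|$; for $k>k_0$ one has $\tau_2(k)=-2^k + 4^k\sum_{2^{k_0}<4^n+1<2^k}4^{-2n-1}\lambda_0(H_q(4^n))$. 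The remaining sum vanishes identically iff every $\lambda_0(H_q(4^n))=0$ beyond the threshold (non-halting), and otherwise the first nonzero term at $n_0$ contributes $\Theta(4^{k-2n_0})=\Omega(4^{k-k_h})$. Uncomputability of $k_h$ is immediate since it is determined by the halting time.

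Your geometric approach---counting halting Robinson squares enclosed in a scale-$k$ block and summing their penalties---is a valid alternative and leads to the same asymptotics, but it re-derives by hand the content already packaged in $g(k)$. Your identification of $k_0(\varphi)$ as ``the first $k$ for which the scale-$k$ block fully contains at least one complete Robinson square'' is not quite what the paper uses: $k_0$ is the scale at which the finite offset $\alpha_2(\varphi)$ (which corrects for short chains where the phase-estimation encoding is incomplete) is fully absorbed into $g(k)$; it is determined by $|\varphi|$, not by Robinson-square containment. This does not break your argument, but the telescoping you describe is not a simple doubling recursion---it is the cancellation $g(k)-4^k\alpha_2(\varphi)$ that produces the clean $-2^k$, and that cancellation is what you would need to establish.
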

\noindent A direct consequence of this is:
\begin{corollary}\label{Corollary:Fixed_Point_Undecidability}
  Determining which fixed point the Hamiltonian flows to under this RG scheme is undecidable.
\end{corollary}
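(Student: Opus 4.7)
The plan is to reduce the Halting Problem for the universal Turing Machine hard-coded into the construction of~\cite{Cubitt_PG_Wolf_Undecidability} directly to the problem of identifying the fixed point of the RG flow, using \cref{Theorem:tau_2_Divergence_Main_Results} as the key technical ingredient. First I would note that the theorem pins down the long-time behaviour of the parameter $\tau_2(k)$ in two qualitatively distinct ways depending on whether the encoded UTM halts on input $\varphi$: in the non-halting case $\tau_2(k)=-2^{k}$ for every $k>k_0(\varphi)$, so $\tau_2(k)/2^{k}\to-1$, whereas in the halting case there is an (uncomputable) threshold $k_h(\varphi)$ beyond which $\tau_2(k)=-2^{k}+\Omega(4^{k-k_h(\varphi)})$, so $\tau_2(k)/4^{k}\to+\infty$. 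These two asymptotics are incompatible, so the two trajectories cannot accumulate at a common limit in the parameter space $\mathcal{F}(\varphi,\tau_1,\tau_2,\{\alpha_i\},\{\beta_i\})$; they are attracted to the two distinct fixed points identified in the main results (the gapped Ising-with-field-like point and the gapless critical-XY-like point respectively).

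Next I would argue the reduction formally. Suppose, for contradiction, that there is a Turing machine $M$ which, given $\varphi$, outputs the label of the fixed point of the RG flow of $H(\varphi)$. By combining with the RG construction, which is computable step-by-step by part~\ref{RG_Condition_1_3} of \cref{Theorem:Undecidability_of_RG_Flows_Main_Results}, and with the correspondence between fixed points and the two asymptotic regimes of $\tau_2(k)$ above, $M$ would decide whether the UTM halts on input $\varphi$: output ``halting fixed point'' iff the UTM halts. This contradicts the undecidability of the Halting Problem, establishing that no such $M$ exists.

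The main thing that needs to be made explicit is that the two asymptotic regimes of $\tau_2(k)$ genuinely correspond to \emph{different} fixed points of the RG map, rather than two approach trajectories to the same point. This should follow from the discussion of fixed points in \cref{Sec:Fixed_Points} and from part~\ref{RG_Condition_2_3} of \cref{Theorem:Undecidability_of_RG_Flows_Main_Results}, which shows that the RG preserves the gapped/gapless dichotomy: since the halting case gives a gapped Hamiltonian and the non-halting case a gapless one, the two trajectories cannot share a limit point. The only subtlety I anticipate is bookkeeping about what ``the fixed point'' means as a decision problem; I would phrase the corollary as deciding any binary labelling of the two fixed points that is computable from a finite local description, which is clearly enough to inherit undecidability from the Halting Problem.
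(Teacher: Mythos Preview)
Your proposal is correct and matches the paper's reasoning. The paper does not give an explicit proof of this corollary; it simply states it as ``a direct consequence'' of \cref{Theorem:tau_2_Divergence_Main_Results}, and separately remarks that undecidability of the fixed point also follows from undecidability of the spectral gap since the fixed point depends on gappedness. Your write-up spells out exactly this chain of reasoning --- the dichotomy in the asymptotics of $\tau_2(k)$, the fact that gap preservation forces the two trajectories to land at distinct fixed points, and the reduction from Halting --- more carefully than the paper itself does, but the underlying argument is the same.
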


The overall RG scheme is explicitly given in \cref{Def:Full_RG_Mapping}, and the family \newline $\mathcal{F}(\varphi, \tau_1, \tau_2, \{\beta_i\})$ which the renormalised Hamiltonians belong to is given in \cref{Corollary:Family}.
One of the consequences of \cref{Theorem:tau_2_Divergence_Main_Results} is that the Hamiltonian is guaranteed to flow towards one of two fixed points.
However, determining which fixed point it flows to for a given value of $\varphi$ is undeciable.

The undecidability of the fixed point follows implicitly from undecidability of the spectral gap \cite{Cubitt_PG_Wolf_Undecidability, Cubitt_PG_Wolf_Nature}, since the fixed point depends on the gappedness of the unrenormalised Hamiltonian.
However, \cref{Theorem:tau_2_Divergence_Main_Results} shows precisely how the trajectory of the Hamiltonian in parameter space diverges in an uncomputable manner under RG flow.

\subsection{Overview of the proof of the main results}

The renormalisation group scheme we will employ will be a variant of the BRG described in \cref{Sec:Block_RG},
where we block $2\times 2$ groups of spins to a single ``super-spin'' which preserves some of the properties of the original set.
Due to the complexity of the Hamiltonian in consideration, we will first renormalise the different parts $h_u, h_d, \ket{0}$ of the Hamiltonian separately, then combine these RG maps into the complete map.
For a finite size lattice, $h_u$ has a ground state which is product between $\HS_C$ and $\HS_q\oplus \ket{e}$.
This key property allows us to essentially renormalise the tiling Hamiltonian and the Gottesman-Irani Hamiltonian separately.

\paragraph{Renormalising the Tiling Hamiltonian 
}~\newline
\Cref{Fig:Robinson_Tiling_Pattern} shows that the ground state of the tiling Hamiltonian corresponds to a particular pattern; notably the Robinson tiling creates a self-similar pattern for across all sizes of squares, where smaller squares are nested within larger ones.
We design a blocking procedure which takes a set of $2\times 2$ Robinson tiles, then maps them onto a single new tile which has the same markings and tiling rules as one in the original set of Robinson tiles.
Doing this we recover a set of tiles which recreate the Robinson tiling pattern, but now with the smallest squares ``integrated out''.
Repeated iterations of this process still preserve the Robinson tiling pattern.
The details are give in \cref{Sec:Robinson_RG}.

\paragraph{Renormalising the Gottesman-Irani Hamiltonian 
}~\newline
The Gottesman-Irani Hamiltonian $h_q$ is a 1D Hamiltonian which serves as a QTM-to-Hamiltonian map.
As noted in section \cref{Section:Properties_Of_Spec_Gap}, in the ground state of $\sum h_u$, ground states of Gottesman-Irani Hamiltonians appear along the top edge of the Robinson tiles.
We aim to design an RG scheme such that the energy of the Gottesman-Irani ground state attached to a square remains the same even when the square size is halved.
To do this, we map pairs of spins to a new ``combined spin'' which now has local Hilbert space dimension $d^2$ if the original dimension is $d$.
As with the BRG, we consider the new 1-local terms and diagonalise them. Since we know the form of the ground state explicitly, it is possible to identify states which pick up too much energy to have overlap with the ground state.
We can truncate the local Hilbert space by removing these states and hence reduce the dimension of the combined spin to something $<d^2$ (but still $>d$).
This blocking procedure will preserve whether the Hamiltonian has a zero energy ground state or a ground state with energy $>0$.

In mathematical terms, the procedure is implemented by a series of isometries
which are used to map the original states to the new blocked states, and then subspace restrictions which remove the high energy states.
This is summarised in \cref{Lemma:RG_GI_Properties}.
We refer the reader to \cref{Sec:Quantum_RG} for full details.

\paragraph{Renormalising $h_u$}

Since $h_u = h_T^{(i,i+1)} \ox \1_{eq}^{(i)}\ox\1_{eq}^{(i+1)} + \1_{c}^{(i)}\ox\1_c^{(i+1)} \ox h_q^{(i,i+1)} + \text{ coupling terms} $, to renormalise it, we do the following:
\begin{itemize}
\item Choose a $2\times 2$ block of spins.
\item Renormalise the classical tiling part of the Hamiltonian as above.
\item To renormalise the quantum part of the Hamiltonian, break the $2\times 2 $ block into two $2\times 1$ blocks.
  Renormalise these two sections as the above renormalisation for the Gottesman-Irani Hamiltonian.
  The $2\times 2$ block is now a $2\times 1 $ block.
\item Trace out part of the Hilbert space such the $2\times 1 $ block is now a single site in the renormalised Hilbert space such that we are left with 1-local and 2-local projector terms which introduce an energy shift.
This energy shift exactly compensates for any energy lost in the integrating out operation.
\end{itemize}
The above can be shown to preserve the ground state energy in the desired way.
See \cref{Def:h_u_RG_Mapping} in \cref{sec:renormlaising_3_H} for the complete description.

\paragraph{Renormalising the Entire Hamiltonian }

We renormalise $h_d$ and $\ketbra{0}$ in a trivial way such that their properties are preserved.
Thus the overall renormalisation scheme acts on $h_u$ as above, and essentially leaves $h_d$ and $\ketbra{0}$ unchanged.

Since $h_u$, $h_d$ and $\ketbra{0}$ have their respective ground state energies preserved (approximately), whether the ground state is $\ket{0}^{\Lambda}$ or the more complex ground state of the tiling+quantum Hamiltonian, is preserved.
Importantly it can be shown the spectral gap of both cases is preserved.
The RG process can then be iterated arbitrarily many times: we show the relevant properties are preserved throughout.
Determining the properties of the ground state and spectral gap are undecidable for the unrenormalised Hamiltonian, and since these properties are preserved by the RG mapping, it is also undecidable for the renormalised Hamiltonians.

The renormalisation of the entire Hamiltonian is given in detail in \cref{sec:all_together}.

\section{Renormalisation of the Robinson tiling Hamiltonian} \label{Sec:Robinson_RG}

In the following we will construct an RG map under which the two graphs representing respectively the \emph{adjacency relations} (roughly speaking, the rules telling us what tiles can stay above / below / left / right of a given tile) for the Robinson tiles and for a specific subset of $2 \times 2$ supertiles are isomorphic.
This implies that the pattern produced by the tiling of the 2D plane using Robinson tiles is scale-invariant.
This property is crucial in order to ensure that the density of the Gottesman-Irani ground states (corresponding to the top edges of the squares appearing in the pattern) which encode the QTM is preserved under the renormalisation procedure.

\medskip

More formally, we have that

\begin{theorem}(Adjacency Rules Isomorphism)\label{thm:tiles_graph_isomorphism}
  Let $T_1$ be the set of Robinson tiles and $A_1$ be the corresponding adjacency rules.
  Let $T_2$ be the set of $2 \times 2$ supertiles, obtained from all combinations allowed by $A_1$ of four Robinson tiles placed in a $2 \times 2$ square, and $A_2$ be the adjacency rules of $T_2$, derived from the principle that two supertiles can be placed next to each other only if the Robinson tiles on the edges that are put adjacent respect $A_1$.
  Then there exists a subset $T'_2\subset T_2$, $|T'_2|=|T_1|=56$, with tiling rules $A_2'=A_2|_{T_2'}$, and a bijection $T_2'\rightarrow T_1$ under which $A_1$ and $A_2'$ are equivalent.
\end{theorem}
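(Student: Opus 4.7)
The plan is to exploit the hierarchical self-similarity that is the defining property of the Robinson tiling. First I would recall the explicit list of the 56 tiles of $T_1$ from \cite{Cubitt_PG_Wolf_Undecidability,robinson1971undecidability}, partitioning them by role (``cross'' tiles with their rotations/reflections, ``arm'' tiles, and background tiles carrying the bracket/parity markings). The observation driving the proof is that in every valid Robinson tiling, the positions $(2i,2j)$ are forced to carry a cross tile, and the surrounding three cells of each $2\times 2$ block are determined up to a finite number of choices by the cross and by the local adjacency rules $A_1$. The set of $2\times 2$ patterns arising this way is my candidate $T'_2$.

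Second, I would define the bijection $\phi: T'_2 \to T_1$ by extracting from each supertile $S$ its \emph{coarse-grained signature}: the tuple of markings living on the four outer edges of $S$. By the Robinson construction these four outer-edge markings are precisely the markings that a single Robinson tile at the next larger scale carries on its edges (this is exactly the scale-invariance mechanism that produces the nested red squares in \cref{Fig:Robinson_Tiling_Pattern}). I would verify $|T'_2| = 56$ by explicit case analysis on the central cross and its three forced neighbours, checking that each Robinson tile appears once and only once as a signature. This gives $\phi$ as a bijection by inspection.

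Third, I would show that $\phi$ transports $A'_2$ onto $A_1$. By construction, two supertiles $S,S' \in T'_2$ placed horizontally adjacent are compatible under $A'_2$ iff the Robinson tiles along their shared vertical edge (two cells on each side) are pairwise compatible under $A_1$. Because the shared edge markings of $S$ and $S'$ are by definition the same as the edge markings of $\phi(S)$ and $\phi(S')$ at the corresponding edge, the equivalence $(S,S')\in A'_2 \iff (\phi(S),\phi(S'))\in A_1$ reduces to checking that the two internal-to-$S$ cells adjacent to the boundary are determined by the edge signature alone — which is precisely the ``determinism'' property of the forced $2\times 2$ patterns exploited in step one. The same argument applies vertically.

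The main obstacle I expect is the combinatorial verification in steps two and three. To keep it tractable I would not enumerate all allowed $2\times 2$ placements in $T_2$ (which is much larger than $56$) but instead work top-down: start from the forced cross in the centre of each block, show that the three remaining cells are determined by at most $56$ consistent completions, and then read off the signature. The delicate point is handling the non-cross markings (parity bits, bracket markings) introduced in the modification of \cite{Cubitt_PG_Wolf_Undecidability}, since these must also match the signature of a tile at the next scale; I would handle them by tabulating their values on the boundary of $S$ and matching them, type by type, to the corresponding markings on $\phi(S)$. Once these families are checked, the isomorphism claim follows immediately.
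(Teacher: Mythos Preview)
Your overall strategy---fix the parity cross at one corner of the $2\times2$ grid, identify the resulting supertiles, and match them bijectively to Robinson tiles---is the paper's strategy too. But two concrete points separate your plan from a working proof, and a third point is a genuine methodological difference.

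First, the count. With the parity cross fixed at (say) the bottom-left cell, the number of $2\times2$ configurations allowed by $A_1$ is $68$, not $56$. Twelve of these are locally consistent but cannot occur in any tiling of the plane: each would force a parity-cross supertile as a neighbour in a direction that the arrow rules forbid. The paper identifies these twelve explicitly and excludes them to obtain $T_2'$. Your top-down claim that ``the three remaining cells are determined by at most 56 consistent completions'' would over-count here; you need a separate global argument to rule out the extra twelve.

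Second, your bijection $\phi$ is not well-defined as stated. Each outer side of a $2\times2$ block carries \emph{two} Robinson edge-markings, so ``the tuple of markings living on the four outer edges'' is not a single Robinson edge-marking per side without an additional collapsing rule you have not supplied. The paper defines the map differently and concretely: the arrow marking of $\phi(S)$ is read from the top-right (free-parity) cell of $S$, and the parity of $\phi(S)$ is read from the orientation of the bottom-left parity cross via a fixed lookup (\cref{def:renormalization_map}, \cref{fig:crosses_parity}). This sidesteps the ambiguity in your signature map. Relatedly, your step three asserts that compatibility of $S,S'$ under $A_2'$ reduces to compatibility of their outer-edge signatures under $A_1$; but that reduction \emph{is} the non-obvious combinatorial statement, not something that holds ``by construction''.

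Third, the paper does not carry out the adjacency-isomorphism verification by hand: it is done by exhaustive enumeration in a Mathematica notebook. Your proposal to do it by structured case analysis is in principle possible, but you should expect the case work to be substantial (56 tiles, four neighbour directions), and the paper does not supply a shortcut.

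Minor point: you refer to ``the forced cross in the centre of each block'', but a $2\times2$ block has no centre cell; the parity cross occupies one corner (bottom-left in the paper's chosen convention), consistent with your earlier ``positions $(2i,2j)$'' remark.
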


From this result it follows that (cf.\ \cref{appendix:pattern})

\begin{corollary}(Scale Invariance of the Robinson Tiling)\label{cor:Robinson_pattern}
  Under the bijection in \cref{thm:tiles_graph_isomorphism}, the Robinson tiling pattern is preserved under the $2\times 2\rightarrow1\times 1$ renormalisation of the grid.
\end{corollary}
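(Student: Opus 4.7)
The plan is to treat \cref{thm:tiles_graph_isomorphism} as a black box and concentrate on identifying precisely which tiling is produced after blocking. Since the bijection $\phi: T_2' \to T_1$ preserves adjacency rules, the image of any valid $T_2'$-tiling of the coarsened lattice is automatically a valid Robinson tiling. The content of the corollary is therefore to show that the specific coarsened tiling obtained when starting from the characteristic nested-squares Robinson pattern is again the characteristic nested-squares pattern, but with the smallest-scale squares ``integrated out''.

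First I would fix an alignment of the $2 \times 2$ blocking compatible with the natural hierarchy of the Robinson pattern: i.e.\ choose the block grid so that every innermost red square lies inside a single block, and more generally so that the centres of squares of side $4^{n}+1$ in the original pattern coincide with centres of blocks at the corresponding coarsened scale. With this alignment fixed, each $2 \times 2$ patch of Robinson tiles appearing anywhere in the pattern is, by \cref{thm:tiles_graph_isomorphism}, one of the $56$ supertile types of $T_2'$. Applying $\phi$ site-by-site produces a valid $T_1$-tiling on a lattice half the size in each direction, and by the rigidity of the Robinson rules (which force the nested-square pattern wherever they tile) the image tiling must again realise the Robinson pattern on that coarser grid.

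The remaining --- and main --- step is to identify \emph{which} Robinson pattern is produced: concretely, to verify that $\phi$ sends the level-$n$ marker tiles of the original pattern to the level-$(n-1)$ marker tiles of the image, so that the hierarchy is shifted down by exactly one level and the smallest-scale squares disappear. I expect this to be the principal obstacle, but it reduces to a finite case analysis: enumerate representative $2 \times 2$ blocks at each characteristic position in the pattern (interior of a level-$n$ square, corners, edges, and tiles straddling the boundaries between different levels), check that each lies in $T_2'$, and verify that $\phi$ returns the Robinson tile predicted by shifting the hierarchy one level up. An induction on $n$ then propagates the identification through all scales, since the tiling inside each level-$n$ square is itself a smaller Robinson pattern of the same form. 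The calculation is routine but tedious, and I would relegate the full check to \cref{appendix:pattern}.
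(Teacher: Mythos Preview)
Your proposal is correct and follows essentially the same route as the paper. The paper's proof is relegated to \cref{appendix:pattern}, where it performs exactly the finite case analysis you describe: it explicitly reconstructs the Robinson $3$-squares from the $2\times 2$ supertiles (identifying corner, edge, and central tiles), then the strings of supertiles interleaving those $3$-squares, verifying in each case that the bijection returns the expected Robinson tile. Your framing is slightly more top-down---you first invoke the adjacency isomorphism to conclude the image is \emph{some} valid Robinson tiling, then check markers to pin down which one---whereas the appendix builds the pattern up from the $3$-squares directly; but the substantive verification is identical, and your induction on $n$ is implicit in the paper's observation that the $3$-square reconstruction propagates through the hierarchy.
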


We can then translate this scale invariance into a statement about the properties of the Hamiltonian which describes the Robinson tiling, i.e.,
\begin{theorem}[(Informal) Robinson Tiling Hamiltonian Renormalisation]
  Let $h_T\in \C^T \otimes \C^T$ be the local interactions which describe the Robinson tiling Hamiltonian.
  Then there exists a renormalisation group mapping $\sR_T$ satisfying
  $R_T(h_T)=h'_T$, where $h_T'\in \C^T \otimes \C^T$, such that $R_T(h_T)$ preserves both the ground state energy and the tiling pattern.
\end{theorem}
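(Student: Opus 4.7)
The plan is to realise the renormalisation map as an isometry implementing the bijection from \cref{thm:tiles_graph_isomorphism}, and then verify that conjugation by this isometry sends the tiling-rule-violation Hamiltonian to another tiling-rule-violation Hamiltonian on the renormalised lattice, with ground state energy preserved exactly.

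First, I would set up the blocking map. Let $T$ denote the Robinson tile set, so $(h_T)_{i,j} = \sum_{(t_\alpha,t_\beta)\notin A_1}\ketbra{t_\alpha t_\beta}{t_\alpha t_\beta}$ acts on $\C^T\otimes\C^T$. By \cref{thm:tiles_graph_isomorphism}, there is a bijection $\phi:T_2'\to T_1=T$ between a $56$-element subset of valid $2\times 2$ supertiles and the original tile set, which identifies the induced adjacency rules $A_2'$ with $A_1$. I would then define an isometry
\begin{align}
V:\C^T\to (\C^T)^{\otimes 4},\qquad V\ket{t} = \ket{\phi^{-1}(t)},
\end{align}
where $\ket{\phi^{-1}(t)}$ is the $2\times 2$ tile configuration in $T_2'$ mapped to $t$, arranged in the block. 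The renormalisation map is then $\sR_T(\,\cdot\,)=V^\dagger(\,\cdot\,)V$, applied to the blocked interactions described below.

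Next I would compute $\sR_T$ on the Hamiltonian. Group the lattice $\Lambda(L)$ into $2\times 2$ blocks; the terms of $H_T^{\Lambda(L)}$ split into intra-block terms (two row- and two column-edges inside each block) and inter-block terms (edges across neighbouring blocks). The intra-block terms vanish when restricted to $\mathrm{range}(V^{\otimes N/4})$, because every configuration in $T_2'$ is by construction a valid $2\times 2$ arrangement, so no intra-block pair lies in the complement of $A_1$. For the inter-block terms, consider two horizontally adjacent blocks mapped to sites $s,s'$: the original Hamiltonian contains a sum of two edge-projectors between the right column of the left block and the left column of the right block. Under $V^\dagger\otimes V^\dagger$, this operator becomes a projector onto the pairs $(t,t')\in T\times T$ whose preimage supertiles violate $A_2$ along the shared edge. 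By \cref{thm:tiles_graph_isomorphism} these are precisely the pairs with $(t,t')\notin A_1$. Thus
\begin{align}
\sR_T(h_T)_{s,s'} = \sum_{(t,t')\notin A_1}\ketbra{t t'}{t t'},
\end{align}
i.e.\ the renormalised interaction has exactly the same form $h_T'=h_T$ as the original, acting on $\C^T\otimes\C^T$.

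Finally I would check ground state preservation. The original ground state $\ket{T}_c$ is the Robinson pattern with energy zero. Its image under $V^{\otimes N/4}$ remains a valid tiling of the coarse lattice by $T_2'$ supertiles thanks to \cref{cor:Robinson_pattern}, so applying $V^\dagger$ yields the Robinson pattern on the $L/2\times L/2$ lattice, which has zero energy with respect to $\sR_T(h_T)$. Conversely, any zero-energy state of $\sR_T(h_T)$ lifts through $V$ to a zero-energy state of $H_T$, so ground state energies agree. The main obstacle is checking the edge-by-edge bookkeeping of the previous paragraph carefully (showing no extra projector terms appear, and that the intra-block terms truly vanish on $\mathrm{range}(V)$); this amounts to a direct, if tedious, combinatorial verification using the explicit supertile set $T_2'$ and the adjacency isomorphism from \cref{thm:tiles_graph_isomorphism}.
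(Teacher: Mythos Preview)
Your approach is essentially the same as the paper's. The paper defines an isometry $V_{(i,i+1),(j,j+1)}:T_1^{\ox 2\times 2}\to T_2'$ (the adjoint of your $V$), restricts the intra-block terms to the allowed supertile subspace where they vanish, and then shows the inter-block row/column terms pull back under the bijection of \cref{thm:tiles_graph_isomorphism} to exactly $h_T^{row}$ and $h_T^{col}$ (\cref{Lemma:Classical_RG_Hamiltonian}); ground-state preservation then follows as an immediate corollary. The only cosmetic difference is the direction in which you write the isometry, and that you give a separate ground-state argument via \cref{cor:Robinson_pattern} rather than reading it off from $h_T'=h_T$.
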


\vspace{0.3cm}

Before the explicit construction of the re-scaling transformation, we shall recall the Robinson tiles and their adjacency rules.

\subsection{Robinson Tiling}

\begin{figure}
	\begin{center}
		\includegraphics[width=0.9\textwidth]{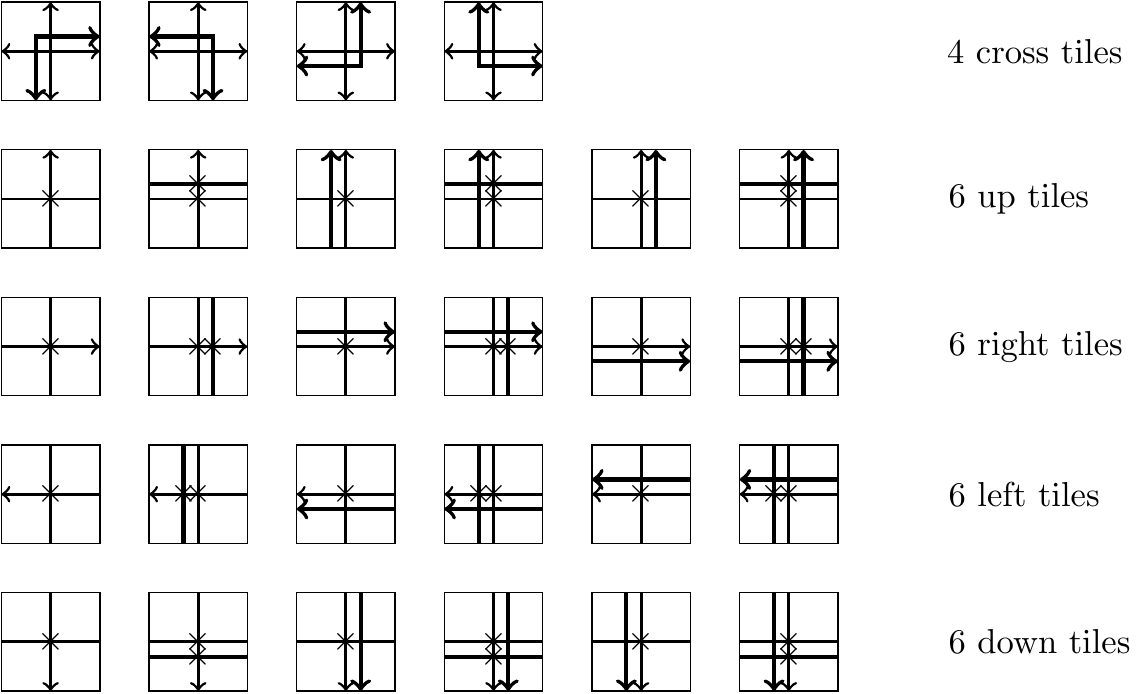}
	\end{center}
	\caption{}
	\label{fig:Robinson_tiles}
\end{figure}

Two tiles can be placed adjacent to each other only if their arrows are compatible.
That is, the head(s) of the arrow(s) in one tile and the tail(s) of the arrow(s) in the other tile must match exactly on the edges put into contact.
We refer to~\cite{robinson1971undecidability} for a complete description.

\smallskip

Recall that there are 28 different arrow markings in the Robinson tiles set, which we list in \cref{fig:Robinson_tiles}.
Following Robinson, these arrow markings are augmented with 4 parity tiles in a way that gives rise to 56 total different tiles.
More precisely, we consider the coloured tiles given in \cref{Fig:Parity_Tiles}, which following Robinson we call the \emph{parity tiles}, satisfying the tiling rules stating that only borders with the same colour can be placed next to each other.
\begin{figure}[b]
  \begin{center}
    \includegraphics[width=1\textwidth]{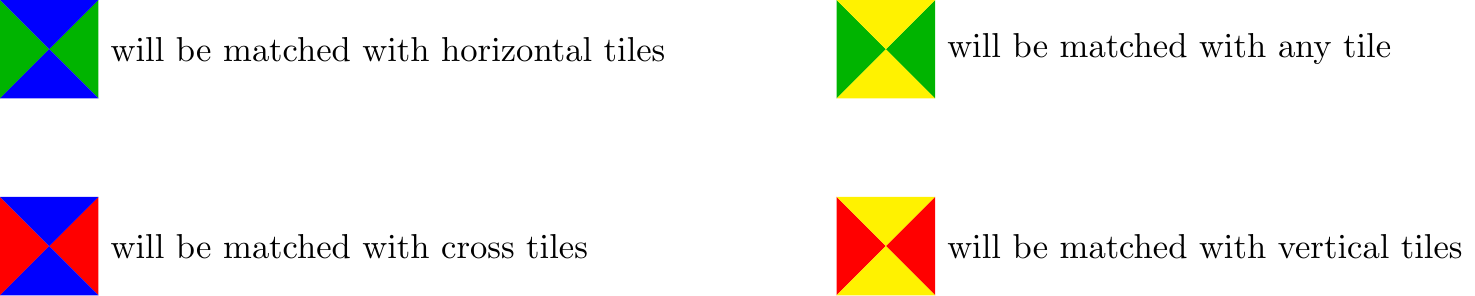}
  \end{center}
  \caption{}
  \label{Fig:Parity_Tiles}
\end{figure}
Each parity tile can be thought of as being attached to a Robinson tile in another layer.
Thus, tiles are only allowed to be placed next to each other if both their Robinson markings and parity markings match along the edge in contact.

\medskip

We will use the following terminology from~\cite{robinson1971undecidability}: cross tiles matched with the red/blue parity tile will be called ``parity crosses'';
horizontal tiles coupled with green/blue tiles will be denoted as ``parity horizontal'' and analogously vertical tiles linked to red/yellow parity will be called ``parity vertical''.
Conversely, any tile associated to the green/yellow tile will be called ``free'', so we will have ``free crosses / horizontal / vertical'' tiles.\newline
When building adjacency rules, both arrow and parity rules must be obeyed.

\medskip

Parity tiles will force the following structure.
Considering the plane as a grid of cells where the tiles are to be placed, then parity cross tiles will appear in alternating rows and alternating columns.
The same applies for parity horizontal and parity vertical tiles.
Thus, if we consider a grid of $2\times 2$ blocks over the plane, each $2\times 2$ supertile will have the same inner parity structure.
Depending on where we place the grid, we will obtain one of the configurations illustrated in~\cref{fig:inner_parity}, repeated over the whole plane.


\begin{figure}
	\hskip8pt \subfloat[The four parity structures of a $2 \times 2$ cell]{\includegraphics [width=0.4\textwidth]{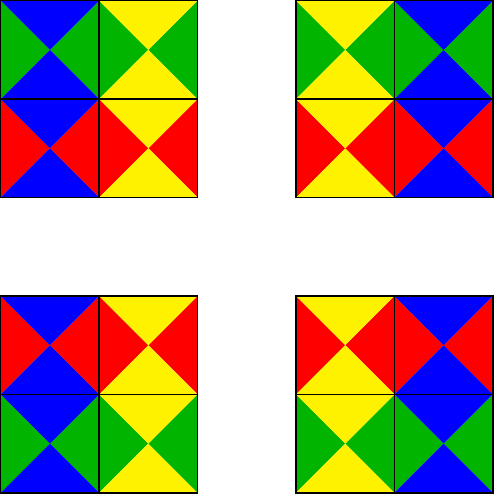}\label{fig:inner_parity}}
	\hfill
	\subfloat[pattern of the parity structure  of the plane]{\includegraphics[width=0.47\textwidth]{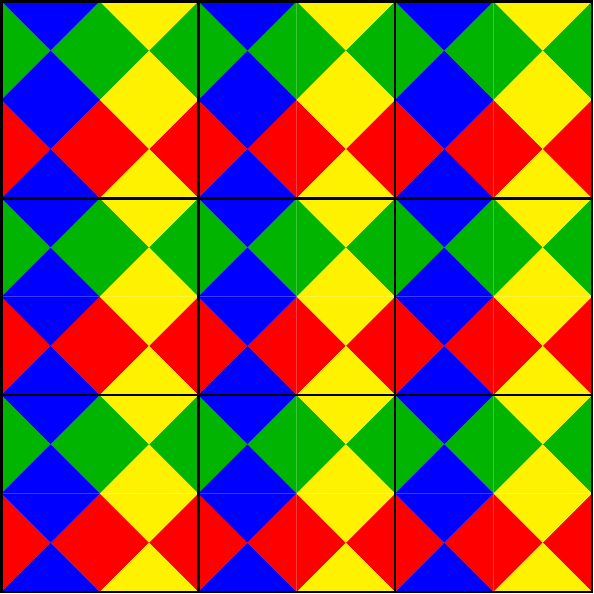}\label{Fig:Parity_Pattern}} \hskip10pt
	\caption{}\label{supertiles_parity}
\end{figure}

\subsection{Tiling Renormalisation} \label{Sec:Tiling_Renormalization}

In this section we will provide the proof of \cref{thm:tiles_graph_isomorphism}.
When changing the grid size, we go from $1\times 1$ Robinson tiles to $2\times 2$ supertiles. As we noted above, depending on the positioning of the grid, we will obtain one of the inner parity structures given in~\cref{fig:inner_parity} .
From this point we will consider  the first supertile on the top-left in~\cref{fig:inner_parity}, that is, the one with the parity cross on the bottom-left.
The parity structure of the plane will then look as shown in \cref{Fig:Parity_Pattern}.


With this parity structure in mind, we generate all $2\times 2$ supertiles permitted by the arrows rules.
There are a total of 68 such supertiles, that we will call \emph{allowed supertiles}. 
Our aim is to identify a bijection between a subset of these 56 supertiles and the Robinson tiles that leads to \cref{thm:tiles_graph_isomorphism}.
In other words, we consider the adjacency relations of the $2\times 2$ tiles: they will generate a \emph{directed graph}. We want to prove that this graph is isomorphic to the one describing the relations of the original Robinson tiles.

\bigskip

Interestingly, from the approach that aims to replicate the Robinson pattern with supertiles described in \cref{appendix:pattern}, we observe that we can formulate the projection from $2\times 2$ to $1\times 1$ tiles by looking at the two tiles that occupy the bottom-left and the top-right position of the supertile.
Indeed, once the tiles on the bottom-left and top right position are placed, there is only one possible choice for the two remaining tiles, which must also obey the inner parity structure of the supertile (see~\cref{fig:inner_parity}).
This fact leads to the following definition of the renormalisation map.

\begin{definition}[Renormalisation Map]\label{def:renormalization_map}
  Given an allowed $2\times 2$ supertile, we consider its top-right tile with free parity, that we denote by $T$, and the parity cross on the bottom-left position, that we call $C$.
  The associated Robinson tile under the renormalisation map has the same marking as $T$ and parity characterised by $C$ according to the correspondence given in~\cref{fig:crosses_parity}.
\end{definition}

\begin{figure}[h]
	\begin{center}
	\includegraphics[width=0.9\textwidth]{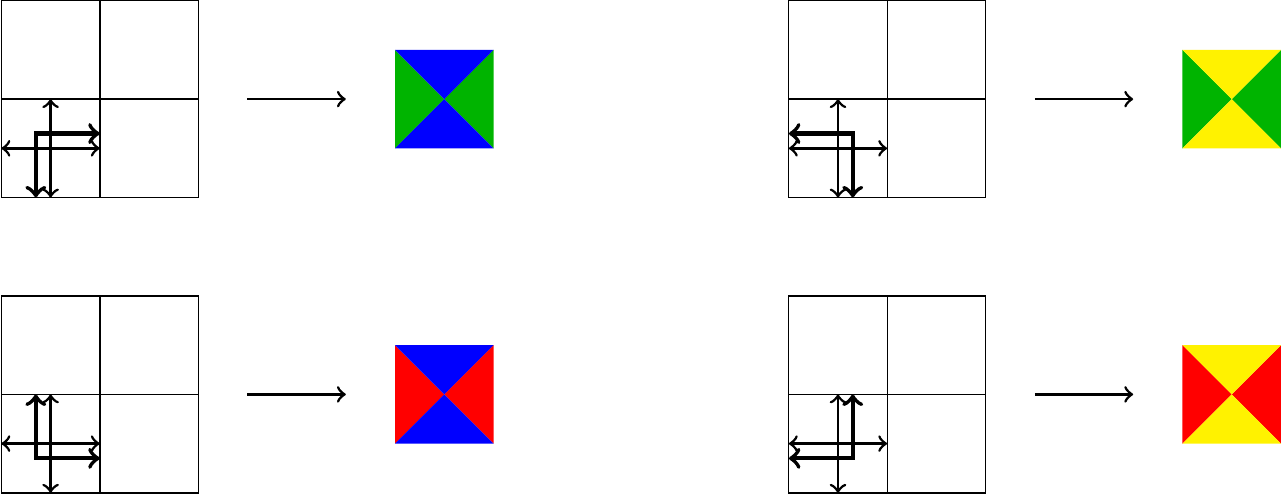}
	\end{center}
	\caption{}\label{fig:crosses_parity}
\end{figure}

We have verified in a \textit{Mathematica notebook}\footnote{The notebook is included in the supplementary material of the arXiv submission} that, under the map in \cref{def:renormalization_map}, the adjacency relations of the Robinson tiles and the ones of a subset of 56 allowed supertiles are equivalent, which proves \cref{thm:tiles_graph_isomorphism}. Refer to \cref{appendix:mathematica} for more details.

\medskip

Under this projection the supertiles that do not appear in the tiling of the plane, illustrated in~\cref{sec:not_appearing}, are not mapped to any Robinson tile. The reason for this is that there does not exist a Robinson tile with matching of arrows and parity: the supertiles of type 1 in \cref{Fig:Two_Structures} would be mapped to vertical arms with horizontal parity, and conversely type 2 supertiles would correspond to horizontal arms with vertical parity.

\subsection{Allowed but not appearing supertiles}\label{sec:not_appearing}

Consider the subset of supertiles allowed by the adjacency rules which have a $1\times 1$ parity cross in the bottom-left.
There are 68 such tiles, however, there are only 56 Robinson tiles.
The result of this is that 12 tiles cannot be mapped under the renormalisation procedure.
These have two distinct structures, as shown in \cref{Fig:Two_Structures}, where we have used the abbreviated notation used in \cite{robinson1971undecidability}, indicating only the direction of the arms.

\begin{figure}[h]
  \begin{center}
    \includegraphics[width=0.8\textwidth]{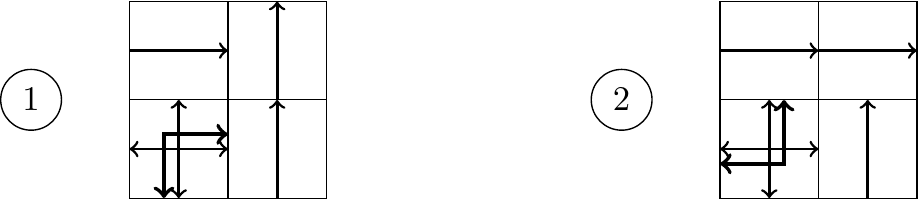}
  \end{center}
  \caption{}
  \label{Fig:Two_Structures}
\end{figure}

For each of these two structures, we have 6 possible combinations.
Those supertiles cannot appear in any tiling of the plane.
Consider the structure 1.
By imposing the parity rules for supertiles that we described previously, a supertile of type~1 there must have one of the parity cross supertiles above it, which has the  (abbreviated) form shown in \cref{Fig:Single_Tile}.
Clearly, no supertile with the structure 1 can be placed below a parity cross supertile because of the arrow rules.
Analogously, by parity rules, on the right of a supertile with structure 2 must lie a parity cross supertile.
Again, it is clear that this is not allowed by the arrow rules.

\begin{figure}[h!]
  \begin{center}
    \includegraphics[width=0.2\textwidth]{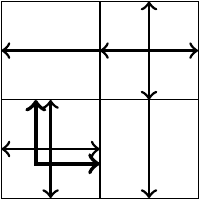}
  \end{center}
  \caption{}
  \label{Fig:Single_Tile}
\end{figure}

\subsection{Shifting the Grid}

The previous analysis was done by placing a $2\times2$ grid over the Robinson pattern on the plane with the parity cross lying on the bottom-left of each $2\times2$ cell.
Naturally, there are 4 possible ways that we could place our $2\times 2$ grid.
The same investigation has been performed for all other three cases when shifting the $2\times2$ grid one cell right, upwards, and diagonally, so that the parity cross will occupy the bottom-right, top-left and top-right position of the supertiles, respectively.
The analysis for these other settings is completely equivalent to the case we have discussed.
For any of the four positioning of the parity cross in the $2\times2$ grid, there exist 68 allowed supertiles, and a subset of 56 will tile the plane and build adjacency relations isomorphic to the Robinson tiles.

\begin{remark}
  The four possible placements of the $2\times2$ grid give rise to four completely disjoint sets of 68 supertiles each.
  This is a direct consequence of the different inner parity structure of the supertiles illustrated in~\cref{fig:inner_parity}.
\end{remark}
Let the set of possible renormalised supertiles, but restricted to those which actually appear, be denoted:
\begin{align}
  \mathcal{T}_1 \oplus \mathcal{T}_2 \oplus \mathcal{T}_3 \oplus \mathcal{T}_4.
\end{align}
which each $\mathcal{T}_i$ corresponding to a different supertile parity structure.
Then the set of supertiles that occurs for our choice of basis, $T_2'$, is equal to one of these four disjoint sets.
Which set occurs depends on where the $2\times 2 $ grid is placed.

\subsection{Renormalising the Classical Hamiltonian}
We are now in a position to show that there is an RG transformation on the tiling Hamiltonian which preserves the ground state.
In terms of the Hamiltonian, the RG scheme takes the form of restricting to sets of allowed $2\times 2$ blocks, and then applying an isometry mapping these $2\times 2$ blocks to new supertiles belonging to the set $T_2'$.

\paragraph{The Initial Tiling Hamiltonian} ~\newline
Let $h_T^{row}=\sum_{(t_i,t_j)\not\in A_1^H}\ket{t_it_j}\bra{t_it_j}$ and $h_T^{row}=\sum_{(t_i,t_j)\not\in A_1^V}\ket{t_it_j}\bra{t_it_j}$ be the local interaction terms of the tiling Hamiltonian, where $A_1^H$ and $A_1^V$ are, respectively, the horizontal and vertical adjacency rules for tiles in $T_1$.
Then the ground state of $H=\sum_{i\in\Lambda( L)}h_{T,i,i+1}^{row}+ \sum_{j\in \Lambda( L)}h_{T,j,j+1}^{col}$ has a zero energy ground state which is given by the tiling of the plane according to the Robinson pattern.

We now consider the RG scheme for the Hamiltonian:
\begin{definition}[Tiling Renormalisation Isometry]
  Let $T_2'$ be one of the disjoint subsets of $2\times 2 $ Robinson tiles which appear in the previously described renormalisation scheme.
  Let $V_{(i,i+1),(j,j+1)}: T_1^{\otimes 2\times 2} \rightarrow T_2'$ be the isometry mapping these $2 \times 2 $ blocks to allowed supertiles which appear in the Robinson pattern,
  \begin{align}
    V_{(i,i+1),(j,j+1)} = \sum_{\ket{T_\alpha} \in T_2'} \ket{T_\alpha}_{I,J}\bra{t_a}_{i,j}\bra{t_b}_{i+1,j}\bra{t_k}_{i,j+1} \bra{t_l}_{i+1,j+1},
  \end{align}
  where $\ket{t_m}\in T_1$, with the set of $T_2'$ tiles that maps to the Robinson tiles as described in \cref{def:renormalization_map}.
\end{definition}

\begin{definition}[Tiling Hamiltonian Renormalisation] \label{Def:Tiling_Hamiltonian_RG}
  Let $h_T^{col}, h_T^{row}\in \B(\C^T\otimes \C^T)$ be the local interactions describing the tiling Hamiltonian.
  Let $h_{i,i+1}^{row}(j)$ denote the row interaction between sites $(i,j),(i+1,j)$ and similarly let $h_{j,j+1}^{col}(i)$ be the interaction between $(i,j),(i,j+1)$.
  Let the $2\times 2$ supertiles be assigned at $(i,j), (i+1,j), (i,j+1), (i+1,j+1)$ and sites consistent with it.
  Then the renormalised Hamiltonian has local terms $\r(h_T^{col}),\r(h_T^{col})\in \mathcal{B}(\C^T \otimes \C^T)$.
  \begin{adjustwidth}{-2cm}{-2cm}
    \begin{align}
      \r(h_T^{col})_{\lceil j/2-1\rceil,\lceil j/2-1\rceil+1} &= V_{(i,i+1),(j+2,j+3)}V_{(i,i+1),(j,j+1)}\left( h_{T,j+1,j+2}^{col}(i) + h_{T,j+1,j+2}^{col}(i+1) \right) \big|_{T_2'} \\
                                                              &\times V_{(i,i+1),(j,j+1)}^\dagger V_{(i,i+1),(j+2,j+3)}^\dagger \\
      \r(h_T^{row})_{\lceil i/2-1\rceil,\lceil i/2-1\rceil+1} &= V_{(i+2,i+3),(j,j+1)}V_{(i,i+1),(j,j+1)}\left(h_{T,i+1, i+2}^{row}(j)+h_{T,i+1, i+2}^{row}(j+1)\right) \big|_{T_2'} \\
      &\times V_{(i,i+1),(j,j+1)}^\dagger V_{(i+2,i+3),(j,j+1)}^\dagger
    \end{align}
  \end{adjustwidth}
  In the above we have used the standard abbreviation that each local term is implicitly tensored with the appropriate identity terms, e.g.  $h_{T,j+1,j+2}^{col}(i)$ is actually  $\1_{i,j}\otimes  \1_{i+1,j}\otimes h_{T,j+1,j+2}^{col}(i)  \otimes \1_{i,j+3} \otimes \1_{i+1,j+3}$.
\end{definition}

Note that this renormalisation map is computable, as each $V$ simply describes the mapping of tiles in the initial set to those in the new set in the way illustrated previously.

\medskip

We now prove that the local Hamiltonian terms are mapped back onto themselves when this RG transformation is applied.

\begin{lemma}\label{Lemma:Classical_RG_Hamiltonian}
  The matrix form of the initial and renormalised Hamiltonian are the same, i.e.,
  \begin{equation}
    \r(h_T^{row})_{i,i+1} = h_{T,i,i+1}^{row}
    \qquad \text{and} \qquad
    \r(h_T^{col})_{j,j+1} = h_{T,j,j+1}^{col}.
  \end{equation}
\end{lemma}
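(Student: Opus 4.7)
The plan is to unfold the definition of $\sR(h_T^{row})_{i,i+1}$ through the blocking isometries and identify it, via the bijection of \cref{thm:tiles_graph_isomorphism}, with the projector that penalises disallowed pairs of Robinson tiles. The calculation is essentially combinatorial: track how $V_{(i,i+1),(j,j+1)}$ and $V_{(i+2,i+3),(j,j+1)}$ block the four $2\times 2$ Robinson sub-blocks of a $2\times 4$ region into a pair of adjacent supertiles in $T_2'$, and then read off the matrix entries.

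First I would expand $\sR(h_T^{row})_{\lceil i/2-1\rceil,\lceil i/2-1\rceil+1}$ using \cref{Def:Tiling_Hamiltonian_RG}. Because each $V$ is supported on the subspace spanned by $T_2'$ within $T_1^{\otimes 4}$, the conjugation $V(\cdot)V^\dagger$ together with the restriction $\big|_{T_2'}$ produces an operator on $\C^{T_2'}\otimes \C^{T_2'}$ whose $(T_\alpha,T_\beta)$--entry is determined entirely by the two Robinson-horizontal interactions acting on the shared vertical edge of the supertile pair, namely $h^{row}_{T,i+1,i+2}(j)$ and $h^{row}_{T,i+1,i+2}(j+1)$. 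Each such interaction is a projector onto pairs of Robinson tiles violating $A_1^H$, so the resulting matrix entry counts the number of rows on the shared edge whose Robinson markings disagree.

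Next I would invoke \cref{thm:tiles_graph_isomorphism}. By construction of $A_2$, a pair $(T_\alpha,T_\beta)\in T_2'\times T_2'$ is disallowed horizontally iff at least one of its two shared-edge Robinson rows violates $A_1^H$, so the support of the operator above coincides exactly with the set of pairs forbidden by $A_2'^H$. The verified bijection $T_2'\to T_1$ sends $A_2'^H$ to $A_1^H$; therefore, under this identification of $\C^{T_2'}$ with $\C^{T_1}$, the support of $\sR(h_T^{row})_{i,i+1}$ coincides with the support of $h_{T,i,i+1}^{row}$, and both are diagonal in the tile basis. The column direction is handled symmetrically, using vertical adjacency $A_1^V\leftrightarrow A_2'^V$.

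The main obstacle — and the step where care is required — is showing that the numerical values of the diagonal entries are $\{0,1\}$ rather than $\{0,1,2\}$, so that the renormalised operator is literally the Robinson projector rather than just a rescaled version of it. This must be argued from the rigid inner parity structure of supertiles in $T_2'$ fixed in \cref{fig:inner_parity}: the two Robinson cells on the shared vertical edge of adjacent supertiles carry complementary parity markings (one parity-horizontal/vertical and one free), which together with the specific arrow content of the Robinson tile set forces that at most one of the two Robinson adjacency constraints on that edge can be violated for any pair in $T_2'\times T_2'$. This is a finite, tile-by-tile check of the same flavour as the Mathematica verification underlying \cref{thm:tiles_graph_isomorphism}, and I would either reduce it to that verification or argue it directly from the parity correspondence in \cref{fig:crosses_parity}. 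Once this is established, the matrix identifications $\sR(h_T^{row})_{i,i+1}=h_{T,i,i+1}^{row}$ and $\sR(h_T^{col})_{j,j+1}=h_{T,j,j+1}^{col}$ follow immediately.
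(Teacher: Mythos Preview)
Your approach is essentially the same as the paper's: both arguments conjugate the cross-block interactions by the blocking isometries, land on a diagonal operator on $\C^{T_2'}\otimes\C^{T_2'}$, and then invoke the adjacency bijection of \cref{thm:tiles_graph_isomorphism} to identify its support with that of $h_T^{row}$ (and symmetrically for columns). You are in fact more careful than the paper on one point: the paper's proof only asserts that ``there is an energy assigned to a particular term in $\r(h^{row})_{i,i+1}(j)$ iff there is a corresponding term in $h^{row}_{i,i+1}$'', i.e.\ it matches supports without explicitly checking that the diagonal entries are $\{0,1\}$ rather than $\{0,1,2\}$. Your proposal to close this via the rigid inner parity structure (or by appeal to the same finite Mathematica verification underlying \cref{thm:tiles_graph_isomorphism}) is the right move, and is a genuine refinement of the paper's argument rather than a deviation from it.
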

\begin{proof}
  We consider two neighbouring $2\times 2$ blocks $(i,j), (i+1,j), (i,j+1), (i+1,j+1)$ and $(i+2,j), (i+3,j), (i+2,j+1), (i+3,j+1)$, and determine how the row and column interactions transform under this renormalisation process.
  We can then write
  \begin{align}
    h_{i,i+1}^{row}(j)=\sum_{(t_k,t_l\in H)}^{}\ket{t_k}_{i,j}\ket{t_l}_{i+1,j} \bra{t_k}_{i,j}\bra{t_l}_{i+1,j}
  \end{align}
  and, with $\ket{T_\alpha}\in T_2'$, then
  \begin{align}
    V_{(i,i+1),(j,j+1)} = \sum_{\ket{T_\alpha} \in T_2'} \ket{T_\alpha}\bra{t_a}_{i,j}\bra{t_b}_{i+1,j}\bra{t_k}_{i,j+1} \bra{t_l}_{i+1,j+1},
  \end{align}
  where $\ket{t_a}_{i,j}\ket{t_b}_{i+1,j}\ket{t_k}_{i,j+1}\ket{t_l}_{i+1,j+1}$ is an allowed $2\times 2$ supertile, and we sum over all such allowed $2\times 2 $ blocks.

  Now consider the two blocks: there are 6 relevant row interactions:
  \begin{align}
    &h_{i,i+1}^{row}(j)+ h_{i,i+1}^{row}(j+1) \\
    + &h_{i+1,i+2}^{row}(j)+ h_{i+1,i+2}^{row}(j+1) \\
    + &h_{i+2,i+3}^{row}(j) + h_{i+2,i+3}^{row}(j+1).
  \end{align}
  Restrict to the set of appearing $2\times 2 $ supertiles, $T_2'$, which are centred on the  $2\times 2$ blocks $(i,j), (i+1,j), (i,j+1), (i+1,j+1)$ and $(i+2,j), (i+3,j), (i+2,j+1), (i+3,j+1)$.
  In this case we see that by enforcing only allowed supertiles, then  $(h_{i,i+1}^{row}(j)+ h_{i,i+1}^{row}(j+1))|_{T_2'}=0$ and $ (h_{i+2,i+3}^{row}(j) + h_{i+2,i+3}^{row}(j+1))|_{T_2'}=0$.

  Finally we need to consider the terms
  \begin{align}
    &V_{(i+2,i+3),(j,j+1)}V_{(i,i+1),(j,j+1)}(h_{i+1,i+2}^{row}(j)+ h_{i+1,i+2}^{row}(j+1))|_{T_2'} \\
    &\times V_{(i,i+1),(j,j+1)}^\dagger V_{(i+2,i+3),(j,j+1)}^\dagger.
  \end{align}
  The application of the isometries maps the tiles to supertiles.
  Hence we can write
  \begin{adjustwidth}{-2cm}{-2cm}
    \begin{align}
      \r(h^{row})_{i/2,i/2+1}(j) = &V_{(i,i+1),(j,j+1)} V_{(i+2,i+3),(j,j+1)}\left(h_{i+1,i+2}^{row}(j)+ h_{i+1,i+2}^{row}(j+1)\right)|_{T_2'} \times \\
                                  &\times V_{(i,i+1),(j,j+1)}^\dagger V_{(i+2,i+3),(j,j+1)}^\dagger.
    \end{align}
  \end{adjustwidth}
  Note that $\r(h^{row})_{i,i+1}(j)$ acts on $T_2'$ and we see that there is an energy assigned to a particular term in $\r(h^{row})_{i,i+1}(j)$ iff there is a corresponding term in $h^{row}_{i,i+1}$.
  Furthermore $\r(h^{row})_{i,i+1}(j)$ is the same for all $j$, hence $\r(h^{row})_{i,i+1}=h_{i,i+1}^{row}$.

\end{proof}

\begin{corollary}
  The Hamiltonian with local terms $\Rk(h_T^{row}),\Rk(h_T^{col})\in \mathcal{B}(\C^T \otimes \C^T)$, has the same ground state energy and excited state energies as the unrenormalised Hamiltonian, for any $k\geq 0$.
\end{corollary}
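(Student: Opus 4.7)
The plan is to proceed by induction on $k$, using \cref{Lemma:Classical_RG_Hamiltonian} as both the base case and the inductive step. First I would note that the lemma shows that a single application of $\sR$ leaves each local interaction term unchanged as a matrix in $\B(\C^T\otimes\C^T)$. The base case $k=0$ is trivial, since $\sR^{(0)}$ is the identity. For the inductive step I would assume $\Rk(h_T^{row}) = h_T^{row}$ and $\Rk(h_T^{col}) = h_T^{col}$, and then apply \cref{Lemma:Classical_RG_Hamiltonian} once more to these local terms, which are by hypothesis of exactly the original Robinson form; the lemma returns the same matrices, closing the induction and giving $\Rk(h_T^{row/col}) = h_T^{row/col}$ for all $k \geq 0$.

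Having established this invariance of the local terms, the next step is to translate it into equality of the full spectra. For any fixed lattice $\Lambda(L)$, the Hamiltonian $\Rk(H_T)^{\Lambda(L)} = \sum_{\langle i,j\rangle} \Rk(h_T^{row/col})$ is then literally the same operator as $H_T^{\Lambda(L)} = \sum_{\langle i,j\rangle} h_T^{row/col}$, since both are sums of identical matrices over identical edges of the same lattice. This immediately yields equality of every eigenvalue, including the ground state energy and all excited state energies, as claimed.

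I do not anticipate a serious obstacle. The one subtlety worth addressing is that $\sR$ halves the lattice in each direction per application, so the natural comparison is between the $k$-times renormalised Hamiltonian on $\Lambda(L/2^k)$ and the original Robinson Hamiltonian on a lattice of the same shrunken size. The matching is exact because \cref{Lemma:Classical_RG_Hamiltonian} invokes the bijection $T_2' \leftrightarrow T_1$ from \cref{thm:tiles_graph_isomorphism} to identify the supertile adjacency rules with the original Robinson adjacency rules on the nose, so the tile alphabet and the matrix form of the interactions are reproduced literally at every RG step, and all spectral information is preserved throughout the iteration.
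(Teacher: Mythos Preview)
Your proposal is correct and matches the paper's intended argument: the corollary is stated without proof in the paper precisely because it follows immediately from \cref{Lemma:Classical_RG_Hamiltonian} by the induction you describe, and since the renormalised local terms are literally equal as matrices to the original ones, the full Hamiltonian on any fixed lattice is the same operator and hence has identical spectrum.
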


\section{Renormalisation of the Quantum Hilbert Space} \label{Sec:Quantum_RG}

In this section we will deal with the renormalisation of the quantum Hamiltonian. For this, we will need a number of definitions from~\cite{Cubitt_PG_Wolf_Undecidability}.

\begin{definition}[Standard Basis States]
  Let the single site Hilbert space be $\mathcal{H}=\otimes_i \mathcal{H}_i$ and fix some orthonormal basis for the single site Hilbert space.
  Label the set of single site basis states for site $i$ as $\frk{B}^{(i)}_q$.
  Then a \emph{standard basis state} for $\mathcal{H}^{\otimes L}$ are product states over the single site basis.
\end{definition}

\begin{definition}[Penalty Terms and Transition Rules]
  The two-local quantum Hamiltonian will contain two types of terms: \emph{penalty terms} and \emph{transition rule} terms. Penalty terms have the form $\dyad{ab}{ab}$ where $\ket{a}$ and $\ket{b}$ are standard basis states. This adds a positive energy contribution to any configuration containing the state $\ket{ab}$, which we call an \emph{illegal pair}.
  Transition rule terms take the form $\frac{1}{2} (\ket{ab}-\ket{cd})(\bra{ab}-\bra{cd})$ with $\ket{ab} \neq \ket{cd}$, where $\ket{ab}$ and $\ket{cd}$ act on the same pair of adjacent sites.
\end{definition}

\begin{definition}[Legal and Illegal States]
  We call a standard basis state \emph{legal} if it
  does not contain any illegal pairs, and \emph{illegal} otherwise
\end{definition}

We then define a standard form Hamiltonian on the joint system
\begin{equation}
  \mathcal{H}_C\otimes\mathcal{H}_Q \coloneqq (\mathds{C}^C\otimes\mathds{C}^Q)^{\otimes L} = (\mathds{C}^C)^{\otimes L}\otimes(\mathds{C}^Q)^{\otimes L}.
\end{equation}

\begin{definition}[Standard-Form Hamiltonian~\cite{Cubitt_PG_Wolf_Undecidability,Watson_Hamiltonian_Analysis}]
  \label{Def:Standard-form_H}
  We say that a Hamiltonian $H = H_{trans} + H_{pen} + H_{in} + H_{out}$ acting on $\mathcal{H}_C\otimes\mathcal{H}_Q$ is of \emph{standard form} if it takes the form
  \begin{equation}
    H_{trans,pen, in, out} = \sum_{i=1}^{L-1} h_{trans,pen, in, out}^{(i,i+1)}
  \end{equation}
  where the local interactions $h_{trans,pen, in, out}$ satisfy the following conditions:
  \begin{enumerate}
  \item $h_{trans} \in \mathcal{B}\left((\mathds{C}^C\otimes\mathds{C}^Q)^{\otimes 2}\right)$ is a sum of transition rule terms, where all the transition rules act diagonally on $\mathds{C}^C\otimes\mathds{C}^C$ in the following sense. Given standard basis states $a,b,c,d\in\mathds{C}^C$, exactly one of the following holds:
    \begin{itemize}
    \item there is no transition from $ab$ to $cd$ at all; or
    \item $a,b,c,d\in\mathds{C}^C$ and there exists a unitary $U_{abcd}$ acting on $\mathds{C}^Q\otimes\mathds{C}^Q$ together with an orthonormal basis $\{\ket{\psi_{abcd}^i}\}_i$ for $\mathds{C}^Q\otimes\mathds{C}^Q$, both depending only on $a,b,c,d$, such that the transition rules from $ab$ to $cd$ appearing in $h_{trans}$ are exactly $\ket{ab}\ket{\psi^i_{abcd}}\rightarrow \ket{cd}U_{abcd}\ket{\psi^i_{abcd}}$ for all $i$. There is then a corresponding term in the Hamiltonian of the form
      $(\ket{cd}\otimes U_{abcd} - \ket{ab})(\bra{cd}\otimes U_{abcd}^\dagger - \bra{ab})$.
    \end{itemize}
    \label{standard-form_H:transition_terms}
  \item $h_{pen} \in \mathcal{B}\left((\mathds{C}^C\otimes\mathds{C}^Q)^{\otimes 2}\right)$ is a sum of penalty terms which act non-trivially only on $(\mathds{C}^C)^{\otimes 2}$ and are diagonal in the standard basis, such that $h_{pen} = \sum_{ab \ \mathrm{illegal}  } \ket{ab}\bra{ab}_C \otimes \mathds{1}_{Q}$, where $\ket{ab}$ are members of a disallowed/illegal subspace.
    \label{standard-form_H:penalty_terms}
  \item $h_{in}=\sum_{ab} \ket{ab}\bra{ab}_C \otimes \Pi_{ab}$, where $\ket{ab}\bra{ab}_C \in (\mathds{C}^C)^{\otimes2}$ is a projector onto $(\mathds{C}^C)^{\otimes 2} $ basis states, and $\Pi_{ab}^{(in)} \in (\mathds{C}^{Q})^{\otimes 2}$ are orthogonal projectors onto $(\mathds{C}^{Q})^{\otimes 2}$ basis states.
  \item $h_{out}= \ket{xy}\bra{xy}_C \otimes \Pi_{xy}$, where $\ket{xy}\bra{xy}_C \in (\mathds{C}^C)^{\otimes2}$ is a projector onto $(\mathds{C}^C)^{\otimes 2} $ basis states, and $\Pi_{xy}^{(in)} \in (\mathds{C}^{Q})^{\otimes 2}$ are orthogonal projectors onto $(\mathds{C}^{Q})^{\otimes 2}$ basis states.
  \end{enumerate}
\end{definition}

Importantly the Gottesman-Irani Hamiltonian we will be considering will be of standard form.

\bigskip

The 1D Gottesman-Irani Hamiltonian$H_q(L)\in \B(\C^d)^{\otimes L}$ is a standard-form Hamiltonian according to the above definition, and is given by
\begin{align}
  H_q = H_{trans} + H_{in} + H_{pen}+H_{halt},
\end{align}
where $H_{trans}$ contains transition rule terms, $H_{pen}$ is a set of penalty terms which penalise states that should not appear in correct history states, $H_{in}$ penalises states which are incorrectly initialised, and $H_{halt}$ penalises states which encode a halting computation. Moreover, it has a six-fold tensor product form
\begin{equation}\label{6_tracks_TP_Hamiltonian}
  \HS_q = \bigotimes_{j=1}^6 (\HS_q)_{j}.
\end{equation}
where each $(\HS_q)_{j}$ is identified with a different track.\\

Lemma 43 of \cite{Cubitt_PG_Wolf_Undecidability} identifies three subspaces of states, which are closed under the action of $H_q$.
\begin{enumerate}
\item \textbf{Illegal Subspace}, $\calS_1$: All $\ket{x}\in \calS_1\subset \frk{B}^{\ox L}$ are in the support of $H_{pen}$ and hence $\bra{x}H\ket{x}\geq 1$.
 By \cite{Cubitt_PG_Wolf_Undecidability} Lemma 43, the minimum eigenvalue of these subspaces is
  \begin{align}
    \lambda_0(H|_{\calS_1})\geq 1.
  \end{align}
\item \textbf{Evolve-to-Illegal Subspace}, $\calS_2$: All standard basis states $\ket{x}\in \calS_2\subset \frk{B}^{\ox L}$ will evolve either forwards or backwards in time to an illegal state in $O(L^2)$ steps under the transition rules.
  As per lemma 5.8 of \cite{Watson_Hamiltonian_Analysis}, the minimum eigenvalue of these subspaces is
  \begin{align}
    \lambda_0(H|_{\calS_2})= \Omega(L^{-2}).
  \end{align}
\item \textbf{Legal Subspace}, $\calS_3$: all standard basis states in $\calS_3$ are legal and \emph{do not} evolve to illegal states.
By \cite{Cubitt_PG_Wolf_Undecidability} lemma 43, they have zero support on $ H_{pen}$ or $ H_{in}$.
\end{enumerate}

In our renormalisation procedure we seek to preserve only the low energy subspace, hence at any point where we can locally identify states as being in subspace $\calS_1$ or $\calS_2$, we will remove them from the state space in the renormalisation step.

However, we note that in the general case we cannot locally identify all such states in $\calS_2$.
That is, determining the whether a state evolves to an illegal under the action of the transitions may be impossible if we only look at what the state looks like on a $O(1)$-subset of the sites.

\subsubsection{The Ground States}
From \cite{Cubitt_PG_Wolf_Undecidability} we know that there are two cases we need to consider: the QTM encoded in $H_q(L)$ halts or does not halt.
\begin{lemma} \label{Lemma:Ground_State_GI_Form}
  Let a given UTM be encoded in the Gottesman-Irani Hamiltonian $H_q(L)$.
  Then $H_q(L)$ has a ground state energy that is either $0$ if the UTM does not halt within time $T(L)$ or $1- \cos(\frac{\pi}{2T})$ if the UTM does halt within $T(L)$.
  $T(L)$ is a fixed, predetermined function.
  In the non-halting case, the ground state is
  \begin{align}
    \ket{\Psi_{hist}(L)}=\frac{1}{\sqrt{T}}\sum_{t=1}^{T(L)}\ket{t}\ket{\psi_t},
  \end{align}
  and in the halting case it is
  \begin{align}
    \ket{\Psi_{halt}(L)}=\sum_{t=1}^{T(L)}2\cos(\frac{(2t+1)\pi t }{4T})\sin(\frac{\pi }{4T})\ket{t}\ket{\psi_t},
  \end{align}
  where $\ket{t}$ is the state of the clock register and $\ket{\psi_t}=\prod_{j=1}^t U_j\ket{\psi_0}$ and $\ket{\psi_0}$ is the initial state of the computational register and the $\{U_t\}$ represent the action of the QTM at time step $t$.
\end{lemma}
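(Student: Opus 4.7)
The plan is to follow the standard Feynman--Kitaev clockwork analysis, adapted to the six-track Gottesman-Irani setting. The strategy is to first reduce the ground-state problem to the legal subspace $\calS_3$, then apply a history-state isometry that intertwines $H_q$ with a one-dimensional tight-binding Hamiltonian on the clock register, and finally diagonalise that one-dimensional problem explicitly in the halting and non-halting cases.

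First, I would use the subspace bounds listed just above the lemma: $\lambda_0(H_q|_{\calS_1}) \geq 1$ and $\lambda_0(H_q|_{\calS_2}) = \Omega(L^{-2})$. Since $T(L)$ grows at least polynomially in $L$, both candidate energies $0$ and $1-\cos(\pi/(2T))$ lie strictly below the $\calS_1 \cup \calS_2$ floor for all sufficiently large $L$. Any ground state must therefore lie in $\calS_3$, on which $H_{pen}$ vanishes and $H_{in}$ simply pins the computational register to the initial configuration $\ket{\psi_0}$. Together with the bracket boundary conditions from \cref{QTM_in_local_Hamiltonian:gs}, this restricts the search to a well-formed clock register indexed by $t \in \{1,\dots,T(L)\}$, tensored with the reachable computational states.

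Next, on $\calS_3$ I would apply the Feynman--Kitaev isometry $W : \ket{t}\otimes\ket{\phi} \mapsto \ket{t} \otimes U_t U_{t-1}\cdots U_1\ket{\phi}$, where each $U_j$ denotes the composite unitary step combining the QTM transition with the simultaneous classical clock and counter updates. By the unidirectional standard-form property~\ref{standard-form_H:transition_terms} of \cref{Def:Standard-form_H}, conjugation by $W$ turns $H_{trans}$ into the clock-only tridiagonal operator
\begin{equation}
E_T \;=\; \sum_{t=1}^{T-1} \tfrac{1}{2}\bigl(\ket{t}-\ket{t+1}\bigr)\bigl(\bra{t}-\bra{t+1}\bigr).
\end{equation}
In the non-halting case, \cref{QTM_in_local_Hamiltonian:halt} ensures $H_{halt}$ has no support on $\calS_3$, so the problem reduces to diagonalising $E_T$; its ground state is the uniform vector $\tfrac{1}{\sqrt{T}}\sum_t\ket{t}$ with eigenvalue $0$, which pulls back through $W$ to $\ket{\Psi_{hist}(L)}$. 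In the halting case, the halting penalty becomes (after conjugation by $W$) a projector supported on the clock time step(s) at which the encoded QTM sits in its halting state, yielding a 1D discrete Laplacian with an effective Dirichlet boundary. A direct finite-dimensional Fourier calculation gives the lowest eigenmode $\propto \cos\bigl((2t+1)\pi/(4T)\bigr)$ with eigenvalue $1-\cos(\pi/(2T))$, whose $W^{-1}$-image is $\ket{\Psi_{halt}(L)}$.

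The main obstacle will be ensuring that the isometry $W$ is globally well-defined on the full six-track Hilbert space. The transition terms couple motion of the quantum head on tracks~4--5 with the advance of the clock oscillator on track~1 and the classical counter on tracks~2--3, so $W$ must unambiguously compose these coupled updates along the full $T(L)$-step trajectory. This is precisely what the well-formed unidirectional structure of the encoded QTM guarantees, via \cref{standard-form_H:transition_terms}: for each relevant pair $(ab,cd)$ of classical symbols there is a \emph{unique} unitary $U_{abcd}$ to be applied, so the composite $U_t = U_{a_t b_t c_t d_t}$ is well-defined step by step. Once this structural point is established, the remaining claims reduce to a routine finite-dimensional eigenvalue calculation of a tight-binding model with at most one Dirichlet endpoint.
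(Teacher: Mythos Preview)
Your proposal is correct and essentially unpacks what the paper's proof cites as a black box: the paper's own proof is a single line invoking the standard-form property of $H_q$ from \cite{Cubitt_PG_Wolf_Undecidability} together with Lemma~5.10 of \cite{Watson_Hamiltonian_Analysis}, and your Feynman--Kitaev clockwork analysis on the legal subspace $\calS_3$ is exactly the argument underlying that cited lemma. One point worth tightening is the passage from ``halting penalty at time $t_h$'' to ``effective Dirichlet boundary at $T$'': this relies on the specific placement of the halting marker in the construction (cf.\ \cref{QTM_in_local_Hamiltonian:halt}), which is precisely what the cited external lemma handles, so your identification of that as the main structural input is accurate.
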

\begin{proof}
  Combine the standard form property of $H_q$ from \cite{Cubitt_PG_Wolf_Undecidability} with Lemma 5.10 of \cite{Watson_Hamiltonian_Analysis}.
\end{proof}

\subsection{Block Renormalisation of the Gottesman-Irani Hamiltonian} \label{Sec:GI_Blocking}
In this section we will construct a renormalisation scheme for the Gottesman-Irani Hamiltonian.
For a given spin at site $i$, we write each possible conventional basis state (i.e. basis state before the RG procedure has started) as
$\ket{\twocellsvert{a}{\alpha}}_{(i)}\in \C^C\otimes \C^Q $, where the top cell indicates the classical tracks of the construction encoded in \cite{Cubitt_PG_Wolf_Undecidability}, while the bottom cell indicates the quantum tracks (see~\cref{sec:Gottesman_Irani_Hamiltonian}).

We then define a pair of operations: the blocking operation $\B_q$ and the truncation operation $\T_q$.
Given a line of qudits $\B_q$ will essentially combine two lattice sites into a single site with a larger local Hilbert space dimension, while $\T_q$ will remove any of the new single site states which can be locally detected to have non-zero overlap with the ground state.
Thus $\T_q$ reduces the local Hilbert space dimension.

We note that we do not truncate all high energy states since in the halting case this would remove the ground state of the Gottesman-Irani Hamiltonian.
Instead, we removed states based on a combination of high energy and a priori knowledge of the ground state.

\paragraph{Blocking $\B_q$} ~\newline
The blocking part of the renormalisation procedure is defined as follows.
\begin{definition}[Gottesman-Irani Blocking, $\B_q$] \label{Def:Gottesman-Irani_Blocking}
  Let $\ket{\psi}\in \HS^{(i)}_q\ox \HS^{(i+1)}_q$, $i\in \N$.
  The blocking operation, $\B_q : \HS_q^{(i)} \times \HS_q^{(i+1)} \rightarrow \HS_q^{\prime(i/2)}$, is given by the action of the unitary $U_{i,i+1}: \HS_q^{(i)} \times \HS_q^{(i+1)} \rightarrow \r(\HS_q)^{\prime}$ as
  \begin{align}
    \B_q^{(i,i+1)}: \ket{\psi} \mapsto U_{i,i+1}\ket{\psi}
  \end{align}
  where
  \begin{align}
    U_{i,i+1}= \sum_{\ket{x},\ket{y}\in\frk{B} } \ket{xy}_{i/2}\bra{x}_i\bra{y}_{i+1}.
  \end{align}
  We extend this to $\ket{\chi}\in \HS_q^{\ox L}$ as
  \begin{align}
    \B_q: \ket{\chi} \mapsto U\ket{\chi},
  \end{align}
  where $U=\bigotimes_{i\in 2\N}^{i\leq L/2}U_{i,i+1}$.
\end{definition}

~\newline
This can be expressed more intuitively in terms of basis states
\begin{align}
  \B_q^{(i,i+1)}: \ket{\twocellsvert{a}{\alpha}}_{(i)} \otimes \ket{\twocellsvert{b}{\beta}}_{(i+1)}
  \longrightarrow \ket{\fourcells{  a }{ \alpha }{b}{\beta}}_{(i/2)}.
\end{align}
Note that $\B_q$ is just a relabelling of the space, so the local Hilbert space dimension is now $\C^{d^2}$ and part of the tensor product structure is lost.
We denote by  $\HS_q'$ this new local Hilbert space spanned by the basis $\frk{B}^{\prime(1)}$.\\

\paragraph{Truncation $\T_q$} ~\newline
The truncation part of the RG map truncates the local Hilbert space to discard those states which locally have support on the penalty terms.

\begin{definition}[Gottesman-Irani Truncation Mapping, $\T_q$] \label{Def:Gottesman-Irani_Truncation}
  Let $\frk{B}^{(1)}$ be the set of basis states defined by $\B_q$ such states with a preimage $\ket{a}\ket{b}$, such that $\ket{a},\ket{b}\in \frk{B}$ cannot be locally identified as being in subspace $\calS_1$ or $\calS_2$.
  That is
  \begin{align}
    \bra{a}\bra{b}h_{pen}^{i,i+1}\ket{a}\ket{b} = \bra{a}\bra{b}h_{in}^{i,i+1}\ket{a}\ket{b} &=0, \\
    \bra{a}\bra{b}h_{trans}^{(i,i+1)}h_{pen}^{(i,i+1)}h_{trans}^{(i,i+1)}\ket{a}\ket{b}&= 0.
  \end{align}
  The truncation mapping is then $\T_q^{(i,i+1)}: \r(\HS_q)^{\prime} \rightarrow \r(\HS_q)$ for $\r(\HS_q)=\spann\{\frk{B}_q^{(1)}\}\subset \r(\HS)_q^{\prime}$.
  Then the full restriction is $\T_q: \HS_q^{\prime\ox L/2} \rightarrow \r(\HS_q)^{\ox L/2}$.

\end{definition}


We now combine the unitary and subspace restriction to give an isometry which implements $\T_q\circ \B_q$.

\begin{lemma}[Renormalisation Unitary Structure] \label{Lemma:Renormalisation_Isometry}
  Let the renormalisation isometry $V^{GI}_{i,i+1}$ be the unitary map follow by subspace restriction previously described.
  Define $V^{GI}: \HS_q^{\otimes L}\rightarrow \r(\HS_q)^{\otimes L/2} $ to implement the mapping $\T_q \circ \B_q$ on a state in $\HS_q^{\ox L}$, as
  \begin{align}
  \T_q \circ \B_q: \ket{\chi} \mapsto  U\ket{\chi}|_{\r(\HS_q)^{\ox L/2}} =: V^{GI}\ket{\chi}.
  \end{align}
  where $U$ is defined in \cref{Def:Gottesman-Irani_Blocking} and $\r(\HS_q)$ is defined in \cref{Def:Gottesman-Irani_Truncation}.
  Then $V^{GI}$ can be defined as and decomposed as
  \begin{equation}\label{eq:Isometry}
    V^{GI} :=\bigotimes_{i\in 2\N}^{i\leq \lfloor L/2\rfloor}V^{GI}_{i,i+1}= \bigotimes_{i\in 2\N}^{i\leq \lfloor L/2\rfloor}\left( \bigotimes_{j=1}^6 V_{i,i+1}^{GI \ (j)}  \right),
  \end{equation}
  with
  \begin{equation}
    V^{GI}_{i,i+1}: \HS_q^{\otimes 2} \rightarrow \r(\HS_q)
  \end{equation}
  and where each part of the decomposition acts on one of the six different tracks,
  \begin{equation}
    V_{i,i+1}^{GI \ (j)}: \HS_{q,j}^{\otimes 2} \rightarrow \r(\HS_{q})_j.
  \end{equation}
\end{lemma}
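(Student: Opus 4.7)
The plan is to verify the two claimed factorisations in turn, first across adjacent pairs and then across the six tracks, by tracing the definitions of $\B_q$ and $\T_q$ carefully.

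First I would establish the pairwise decomposition $V^{GI}=\bigotimes_{i\in2\N}^{i\le\lfloor L/2\rfloor}V^{GI}_{i,i+1}$. By \cref{Def:Gottesman-Irani_Blocking} the blocking unitary already factorises as $U=\bigotimes_{i}U_{i,i+1}$, so it suffices to show that the truncation $\T_q$ corresponds to restriction to a subspace of $\HS_q^{\prime\,\otimes L/2}$ which itself factorises as $\bigotimes_i\r(\HS_q)^{(i/2)}$. This follows directly from \cref{Def:Gottesman-Irani_Truncation}: membership of a blocked basis state $\ket{xy}_{i/2}$ in $\frk{B}^{(1)}$ is determined purely by conditions on the single preimage pair $(\ket{x}_i,\ket{y}_{i+1})$, namely that the two-local terms $h_{pen}^{(i,i+1)}$, $h_{in}^{(i,i+1)}$ and $h_{trans}^{(i,i+1)}h_{pen}^{(i,i+1)}h_{trans}^{(i,i+1)}$ annihilate $\ket{x}\ket{y}$. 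Thus the kept set at site $i/2$ depends only on the pair it came from, so the restricted subspace is a tensor product of per-pair subspaces and I can define $V^{GI}_{i,i+1}:=U_{i,i+1}\big|_{\text{allowed pair subspace}}$, which is an isometry $\HS_q^{\otimes 2}\to\r(\HS_q)$.

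Next I would establish the six-track decomposition $V^{GI}_{i,i+1}=\bigotimes_{j=1}^6 V^{GI\,(j)}_{i,i+1}$. The blocking unitary $U_{i,i+1}$ is simply a relabelling of the product basis $\ket{\twocellsvert{a}{\alpha}}_i\otimes\ket{\twocellsvert{b}{\beta}}_{i+1}\mapsto\ket{\fourcells{a}{\alpha}{b}{\beta}}_{i/2}$, and since the single-site Hilbert space $\HS_q$ already has the six-fold tensor decomposition of \cref{6_tracks_TP_Hamiltonian}, the blocking factorises over tracks: $U_{i,i+1}=\bigotimes_{j=1}^6 U^{(j)}_{i,i+1}$ where $U^{(j)}_{i,i+1}$ simply concatenates the two single-site track registers. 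The remaining task is to check that the truncation criterion also factorises across tracks, i.e.\ that the allowed pair subspace is itself a tensor product of per-track allowed subspaces $\r(\HS_q)_j\subset(\HS_{q,j})^{\otimes 2}$. This is where I would spend most of the effort: I would use the standard-form structure of $H_q$ from \cref{Def:Standard-form_H} together with the explicit Gottesman--Irani construction recalled in \cref{sec:Gottesman_Irani_Hamiltonian}, where each penalty term and each transition rule is supported on a specific subset of tracks and acts as either a product of single-track projectors (for $h_{pen}$, $h_{in}$) or a block-diagonal unitary controlled by classical tracks (for $h_{trans}$). Since illegal pairs are exactly those violating at least one single-track or track-pair constraint, the set of locally allowed pair states is a product of per-track allowed sets, and the restriction decomposes as claimed.

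The main obstacle is the second step: verifying that the combined criterion including $h_{trans}h_{pen}h_{trans}$ still factorises across the six tracks. The transition rules on track 1 (the clock oscillator) can trigger updates on tracks 2--5, so a priori one might worry that two-step evolve-to-illegal conditions mix tracks. I would handle this by appealing to the diagonal-in-classical-tracks structure imposed by part~\ref{standard-form_H:transition_terms} of \cref{Def:Standard-form_H}, which ensures that $h_{trans}$ is, on each classical configuration, a unitary on the quantum tracks. Consequently the composition $h_{trans}h_{pen}h_{trans}$ annihilates $\ket{x}\ket{y}$ iff each track-restricted component does, giving the desired factorisation. Having established both factorisations, \cref{eq:Isometry} follows immediately.
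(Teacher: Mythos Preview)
Your approach matches the paper's: verify the pairwise factorisation from the tensor-product form of $U$ and the per-pair nature of the truncation criterion, then the six-track factorisation from the product-across-tracks structure of the basis states together with the track structure of the penalty and transition terms. The paper's actual proof is far terser---two sentences asserting both factorisations, the second justified only by ``the procedure keeps each basis state as a product across the different tracks''---so your treatment, including the explicit worry about $h_{trans}h_{pen}h_{trans}$ mixing tracks, is more careful than the original, but the underlying argument is the same.
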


\begin{proof}
  The decomposition $V^{GI}=\bigotimes_{i\in 2\N}^{i\leq \lfloor L/2\rfloor}V^{GI}_{i,i+1}$ is evident from the block procedure.
  The decomposition $V^{GI}_{i,i+1}=  \bigotimes_{j=1}^6 V_{i,i+1}^{GI \ (j)}$ arises from the fact that the procedure keeps each basis state as a product across the different tracks and hence the different $\HS_{q,j}$.
\end{proof}

We now need to define how the Hamiltonian acts with respect to the RG procedure.
We want to break down the Hamiltonian into different subspaces and renormalise them separately while preserving the ground state (in both the halting and non-halting cases) and its energy.

\begin{lemma}[Renormalised Gottesman-Irani Hamiltonian] \label{Lemma:RG_GI_Properties}
  Let $h_{q}$ be the local terms of a nearest neighbour, translationally invariant Hamiltonian
  \begin{equation}
    H_q(L)= \sum_{i=1}^{L}h_q^{(i,i+1)}=H_{trans}+H_{pen}+H_{in}+H_{out},
  \end{equation}
  such that $H(L)$ is standard form.
  Let $V:\C^d\otimes \C^d \rightarrow \C^{f(d)}$, be the isometry from~\cref{Lemma:Renormalisation_Isometry}.
  Then the renormalised Hamiltonian, defined as
  \begin{equation}\label{eq:renormalized_GI_Hamiltonian}
    \sR(H_q(L))= V^{GI}H_q(L) V^{GI \dagger} = \sum_{i=1}^{L/2}V^{GI}h_q^{(i,i+1)} V^{GI\dagger}=\r(H_q)(L),
  \end{equation}
  is a translationally invariant, nearest-neighbour Hamiltonian with local interactions $\r(h_q)^{(i/2,i/2+1)}=V^{GI}(h_q^{(i-1,i)}+h_q^{(i+1,i+2)})V^{GI\dagger}$ and $\r(h_q)^{i/2}=V^{GI}h_q^{(i,i+1)}V^{GI\dagger}$.
  Furthermore, $\r(H_q)(L)$ has the following properties:
  \begin{enumerate}
  \item $\r(H_q)(L)$ is a standard form Hamiltonian. \label{RG-GI:Standard-Form}
  \item $\r(H_{trans})$ encodes a transition $V^{GI}(\ket{ab}\ket{\psi_{abcd}})\rightarrow V^{GI}(\ket{cd}U_{abcd}\ket{\psi_{abcd}})$ iff $H_{trans}$ encodes the transition $\ket{ab}\ket{\psi_{abcd}}\rightarrow \ket{cd}U_{abcd}\ket{\psi_{abcd}}$. \label{RG-GI:Transition_Rules}
  \item $\r(H_{pen}),\r(H_{in}),\r(H_{out})$ have support on a renormalised basis state $V^{GI}(\ket{ab}\ket{\psi})$ iff $H_{pen},H_{in},H_{out}$ respectively have non-zero support on $\ket{ab}\ket{\psi}$.\label{RG-GI:Penalty_Terms}
  \item $\lambda_0(H_q(L))=\lambda_0(\r(H_q)(L/2))$ (the ground state energy is preserved). \label{RG-GI:GS_Energy}
  \item $\r(H_q)$ maintains the six-fold tensor product structure of the original Hamiltonian $H_q$ in~\cref{6_tracks_TP_Hamiltonian}, that is,
    $\r(\HS_q) = \bigotimes_{j=1}^6 \r(\HS_q)_j$. \label{RG-GI:Ren_H_6_structure}
  \end{enumerate}
\end{lemma}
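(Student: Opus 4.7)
The plan is to prove the five properties roughly in the order $5 \to 1,2,3 \to 4$, handling the easy structural claims first and then focusing technical effort on the ground-state energy preservation.

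First I would dispatch property~\ref{RG-GI:Ren_H_6_structure} directly from \cref{Lemma:Renormalisation_Isometry}: the decomposition $V^{GI}_{i,i+1} = \bigotimes_{j=1}^6 V^{GI\,(j)}_{i,i+1}$ shows that the isometry respects the six-track product structure site by site, so the renormalised local Hilbert space factorises as $\r(\HS_q) = \bigotimes_{j=1}^6 \r(\HS_q)_j$. Next I would unpack the renormalised Hamiltonian term by term. Because $V^{GI}$ is a tensor product over blocks $(2k,2k+1)$, a local term $h_q^{(i,i+1)}$ contained inside a single block contributes only to the one-local renormalised interaction $\r(h_q)^{(i/2)}$, while a term spanning two consecutive blocks is dressed by $V^{GI}_{2k,2k+1} \otimes V^{GI}_{2k+2,2k+3}$ and contributes to the two-local renormalised interaction $\r(h_q)^{(k,k+1)}$. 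This yields the claimed decomposition $\r(H_q) = \sum_{k} \r(h_q)^{(k,k+1)} + \sum_k \r(h_q)^{(k)}$ with the stated formulas.

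For properties \ref{RG-GI:Standard-Form}, \ref{RG-GI:Transition_Rules}, and \ref{RG-GI:Penalty_Terms} I would exploit the fact that the blocking map $\B_q$ acts as a relabelling on standard basis states and that $\T_q$ is a projection onto a subset of basis states. Hence $V^{GI}$ sends the classical-track factor $(\C^C)^{\otimes 2}$ to a set of classical-track renormalised basis states by a bijection on its non-truncated domain, while acting trivially (in the sense of carrying basis vectors to basis vectors) on the quantum factor. Consequently conjugation by $V^{GI}$ maps each penalty or input/output term $\ket{ab}\bra{ab}_C \otimes \Pi_{ab}$ to a sum of terms of exactly the same form on the renormalised classical basis, establishing properties \ref{RG-GI:Standard-Form} and \ref{RG-GI:Penalty_Terms}. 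For transition terms I would check that a term
\begin{equation*}
(\ket{cd}\otimes U_{abcd} - \ket{ab})(\bra{cd}\otimes U_{abcd}^\dagger - \bra{ab})
\end{equation*}
is sent, under conjugation by $V^{GI}$, either to an analogous transition between renormalised classical basis states (if neither endpoint is truncated) or to zero (otherwise), giving property~\ref{RG-GI:Transition_Rules}. The standard-form conditions on $\r(H_{trans})$ then follow because the unitary $U_{abcd}$ acts only on the quantum part, which is preserved intact by the tensor-product structure of $V^{GI\,(j)}$.

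The genuinely technical piece is property~\ref{RG-GI:GS_Energy}. By the variational principle and the fact that $V^{GI\dagger}V^{GI}$ is a projector $\Pi_{\rm pres}$ onto the preserved subspace of $\HS_q^{\otimes L}$, one has
\begin{equation*}
\lambda_0(\r(H_q)(L/2)) = \min_{\ket{\psi}\in \Pi_{\rm pres}\HS_q^{\otimes L},\,\|\psi\|=1}\bra{\psi} H_q(L) \ket{\psi} \geq \lambda_0(H_q(L)).
\end{equation*}
For the reverse inequality I must exhibit a ground state of $H_q(L)$ lying in $\Pi_{\rm pres}\HS_q^{\otimes L}$. I would invoke \cref{Lemma:Ground_State_GI_Form} to write the ground state as a superposition $\sum_t c_t \ket{t}\ket{\psi_t}$ of legal computational-history configurations, i.e.\ standard basis states in the legal subspace $\calS_3$. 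By construction $\T_q$ only discards basis states whose local data identify them as belonging to $\calS_1$ (illegal) or $\calS_2$ (evolve-to-illegal); neither can occur for states in $\calS_3$, so each $\ket{t}\ket{\psi_t}$ maps under $\B_q$ to a basis state kept by $\T_q$, and hence the ground state lies in $\Pi_{\rm pres}\HS_q^{\otimes L}$. The delicate point here, which I expect to be the main obstacle, is the halting case: one has to verify that the additional halting-penalty contributions sitting inside $\ket{\Psi_{\rm halt}}$ are still carried by basis states whose \emph{local} data are compatible with the truncation criterion in \cref{Def:Gottesman-Irani_Truncation}. I would resolve this by noting that the halt penalty acts through $H_{halt}$, not through $H_{pen}$ or $H_{in}$ or through an evolve-to-illegal transition, so the individual basis states in the support of $\ket{\Psi_{\rm halt}}$ remain undetectable as $\calS_1$ or $\calS_2$ by any two-local criterion, and are thus preserved.
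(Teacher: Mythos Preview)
Your proposal is correct and follows essentially the same approach as the paper's own proof: both use the tensor-product decomposition of $V^{GI}$ from \cref{Lemma:Renormalisation_Isometry} for the structural claims, and both prove the ground-state energy preservation by combining the variational inequality $\lambda_0(\r(H_q)(L/2)) \geq \lambda_0(H_q(L))$ with the observation that the history-state ground states lie in $\calS_3$ and are therefore untouched by the truncation. Your explicit treatment of why the halting ground state $\ket{\Psi_{\rm halt}}$ survives truncation (because the halt penalty enters via $H_{halt}$ rather than $H_{pen}$ or $H_{in}$) is slightly more detailed than the paper's, which simply invokes \cref{Lemma:Ground_State_GI_Form} to assert that both ground states are of the preserved form $\ket{\Psi\{a_t\}}$.
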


\begin{proof}

  First note that for all $i\in 2\N$, $V^{GI}_{i,i+1}h^{(i,i+1)} V_{i,i+1}^{GI\dagger} \in \mathcal{B}(\C^{f(d)})$ is now a $1$-local term in the new renormalised Hamiltonian.
  However, $V^{GI}_{i+2,i+3}V^{GI}_{i,i+1}h^{(i+1,i+2)} V_{i,i+1}^{GI\dagger} V_{i+2,i+3}^{GI\dagger}\in \mathcal{B}(\C^{f(d)}\otimes \C^{f(d)})$

  \paragraph{Claims~\ref{RG-GI:Standard-Form} and \ref{RG-GI:Transition_Rules}} ~\newline

  From the linearity of $V^{GI}$, we see that $\sR(H_q(L))=\r(H_{trans})+\r(H_{pen})+\r(H_{in})+\r(H_{out})$.
  It is trivial to see that $\r(H_{trans})=V^{GI}H_{trans}V^{GI\dagger}=\sum_{ab\rightarrow cd}(V^{GI}\ket{cd}\otimes U_{abcd} - V^{GI}\ket{ab})(\bra{cd}\otimes U_{abcd}^\dagger V^{GI\dagger} - \bra{ab}V^{GI\dagger})$, and hence encodes transitions between the renormalised states.
  This also shows $\r(H_{trans})$ satisfies Claim~\ref{RG-GI:Transition_Rules}.
  Due to the decompositional properties of $V^{GI}$, as shown in \cref{Lemma:Renormalisation_Isometry}, we preserve that $H_{trans}$ acts diagonally on the states in $\C^C$.
  Likewise, it preserves the form of $H_{pen},H_{in}, H_{out}$ as projectors onto a subset of states.

  \paragraph{Claim \ref{RG-GI:Penalty_Terms}:}
  Consider the penalty terms: given a renormalised state
  $V^{GI}\ket{\psi}$, it is clear that $$(\bra{\psi}V^{GI\dagger})V^{GI}H_{pen}V^{GI\dagger}  (V^{GI}\ket{\psi}) = \bra{\psi}H_{pen} \ket{\psi}=1,$$
  hence $V^{GI}\ket{\psi}$ is penalised by the renormalised Hamiltonian iff $\ket{\psi}$ is penalised by the unrenormalised Hamiltonian.
  The same applied to $H_{in}$ and $H_{out}$.

  \paragraph{Claim \ref{RG-GI:GS_Energy}:}
  First note that any state
  \begin{equation}
    \ket{\Psi\{a_t\}} = \sum_{t=1}^{\tau} a_t(\ket{t}\ket{\psi_t}).
  \end{equation}
  which encodes a valid evolution is in the kernel of $H_{in}, H_{pen}$, and is contained in subspace $\mathcal{S}_3$.
  Thus, $V^{GI}\ket{\Psi\{a_t\}}\in \r(\HS)^{\ox L/2}$, and after the RG procedure $\T_q \circ \B_q$ the corresponding renormalised state is
  \begin{equation}
    \ket{\Psi'\{a_t\}} = \sum_{t=1}^{\tau} a_tV^{GI}(\ket{t}\ket{\psi_t}).
  \end{equation}

  To see the energy of such states is preserved note
  \begin{equation}
    \bra{\Psi'\{a_t\}}V^{GI}H_q(L)V^{GI\dagger} \ket{\Psi'\{a_t\}} = \bra{\Psi\{a_t\}}H_q(L)\ket{\Psi\{a_t\}}.
  \end{equation}
  From \cref{Lemma:Ground_State_GI_Form} the ground states are of the form $\ket{\Psi\{a_t\}}$.
  We know that the state $V^{GI}\ket{\Psi\{a_t\}}$ has the same energy.
  Since the minimum eigenvalue is given by
  \begin{align}
  \lambda_0(H_q(L)) &= \min_{x\in \HS_q^{\ox L} } \frac{\bra{x}H_q(L)\ket{x}}{\braket{x}{x}} \\
   &= \min_{x\in \HS_q^{\ox L} } \frac{\bra{x}U U^\dagger  H_q(L) U^\dagger U \ket{x}}{\bra{x}U^\dagger U\ket{x}} \label{Eq:GS_Line_2} \\
   &\leq \min_{\substack{x\in \HS_q^{\ox L}\\ V^{GI} \ket{x}\neq 0  } } \frac{\bra{x}V^{GI} V^{GI\dagger}  H_q(L) V^{GI\dagger} V^{GI} \ket{x}}{\bra{x}V^{GI\dagger} V^{GI}\ket{x}} \label{Eq:GS_Line_3} \\
   &= \lambda_0(\r(H_q)(L/2)),
  \end{align}
  where going from \cref{Eq:GS_Line_2} to \cref{Eq:GS_Line_3} we have used the fact that we have restricted the subspace to remove the states that are integrated out by $V^{GI}$.
  Since $\lambda_0(\r(H_q)(L/2))=\lambda_0(H_q(L/2))$, then we can confirm $V^{GI}\ket{\psi_{halt}}$ and $V^{GI}\ket{\psi_{hist}}$ are the appropriate ground states after the renormalisation procedure.


  \paragraph{Claim~\ref{RG-GI:Ren_H_6_structure}:}
  The preservation of the structure in~\cref{6_tracks_TP_Hamiltonian} follows directly from the tensor product form of the isometry given in~\cref{eq:Isometry} applied according to the renormalisation method described by~\cref{eq:renormalized_GI_Hamiltonian}.

\end{proof}

\subsection{Multiple Iterations}
Consecutive steps of the RG procedure can be derived straightforwardly.
The Hilbert space obtained after $k$-th RG steps of can be constructed by induction
\begin{equation}
  (\T_q\circ \B_q)^{\circ (k)} =\T_q\circ \B_q  \circ (\T_q \circ \B_q)^{\circ (k-1)}
\end{equation}
We first combine two basis elements in the space $\frk{B}^{(k-1)}$ into a new state, i.e.,
\begin{equation*}
  \ket{
    \twocellsvert{a_1}{\alpha_1} \twocellsvert{a_2}{\alpha_2}
    \twocellsvert{\cdots}{\cdots} \twocellsvert{a_{2^{(k-1)}}}{\alpha_{2^{(k-1)}}}
  }
  \otimes
  \ket{
    \twocellsvert{b_1}{\beta_1} \twocellsvert{b_2}{\beta_2}
    \twocellsvert{\cdots}{\cdots} \twocellsvert{b_{2^{(k-1)}}}{\beta_{2^{(k-1)}}}
  }
  =
  \ket{
    \twocellsvert{a_1}{\alpha_1} \twocellsvert{a_2}{\alpha_2}
    \twocellsvert{\cdots}{\cdots} \twocellsvert{a_{2^{(k-1)}}}{\alpha_{2^{(k-1)}}}
    \twocellsvert{b_1}{\beta_1} \twocellsvert{b_2}{\beta_2}
    \twocellsvert{\cdots}{\cdots} \twocellsvert{b_{2^{(k-1)}}}{\beta_{2^{(k-1)}}}
  }
\end{equation*}
We then truncate the basis set according to the criteria described in the previous section.
This will generate the set of renormalised local basis states $\frk{B}^{(k)}$.
The local Hilbert space after $k$ RG iterations is denoted by $\Rk(\HS)$.
We note that this can still be decomposed it as $\Rk(\HS)=\bigotimes_{i=1}^6\Rk(\HS)_i$ corresponding to the 6 tracks of the original construction.\\

We can thus concatenate multiple renormalisations of the Gottesman-Irani Hamiltonian in one isometry, $V^{GI}(k):\r^{(k-1)}(\HS_q)^{\ox 2L}\rightarrow\Rk(\HS_q)^{\ox L}$, given by
\begin{equation}
  V^{GI}[k] = \Pi_{j=1}^k V^{GI}_{L/2^j}
\end{equation}
where $V^{GI}_{L/2^j}$ is the isometry outlined in~\cref{Lemma:Renormalisation_Isometry}, but now acting on the appropriate local Hilbert space, and the subscript $L/2^j$ indicates that the operator is acting on a 1D chain of $L/2^j$ sites.
We note the use of square brackets $[]$ is to distinguish the isometry from $V^{GI}(j)$ which will denote the isometry $V^{GI}$ acting on the $j^{th}$ row of a $2\times 2$ lattice.

Accordingly, the renormalised Hamiltonian is then
\begin{equation}
  \Rk(H_q(L))
  =
  V^{GI}[k] H_q(L) V^{GI\dagger}[k].
\end{equation}

It follows immediately from \cref{Lemma:RG_GI_Properties} is that this RG mapping takes standard form Hamiltonians to standard form Hamiltonians while preserving the energy of the ground state.
Thus:

\begin{corollary}
  Multiple iterations of the RG map applied to $H_q(L)$ preserve the properties (1-5) in~\cref{Lemma:RG_GI_Properties}.
\end{corollary}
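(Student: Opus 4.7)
The plan is to proceed by induction on the number of RG iterations $k$, with the base case $k=1$ being exactly \cref{Lemma:RG_GI_Properties}. For the inductive step, I would argue that $\sR^{(k-1)}(H_q(L))$ satisfies the hypotheses of \cref{Lemma:RG_GI_Properties}, so applying one further iteration preserves all five properties. Concretely, the composition $V^{GI}[k] = V^{GI}_{L/2^k}\circ V^{GI}[k-1]$ in the definition of the $k$-fold renormalised Hamiltonian means that $\sR^{(k)}(H_q(L)) = V^{GI}_{L/2^k}\,\bigl(\sR^{(k-1)}(H_q(L))\bigr)\, V^{GI\dagger}_{L/2^k}$, so the induction step is literally a single application of the base lemma to the $(k-1)$-times renormalised Hamiltonian, viewed as a standard-form Hamiltonian on a chain of length $L/2^{k-1}$.

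For the inductive step I would check each property in turn. Property~\ref{RG-GI:Standard-Form} gives the crucial hypothesis needed to reapply the lemma: the induction hypothesis tells us that $\sR^{(k-1)}(H_q(L))$ decomposes as $\r^{(k-1)}(H_{trans})+\r^{(k-1)}(H_{pen})+\r^{(k-1)}(H_{in})+\r^{(k-1)}(H_{out})$ of standard form, and the base lemma then produces the same decomposition at level $k$. For properties~\ref{RG-GI:Transition_Rules} and~\ref{RG-GI:Penalty_Terms}, the correspondence between original and renormalised transitions/penalties at step $k$ chains with the correspondence inherited from step $k-1$: a renormalised basis vector $V^{GI}[k]\ket{a_1\dots a_{2^k}}\ket{\psi}$ lies in the support of $\r^{(k)}(H_{pen})$ iff $V^{GI}[k-1]\ket{a_1\dots a_{2^k}}\ket{\psi}$ lies in the support of $\r^{(k-1)}(H_{pen})$, which by the IH happens iff $\ket{a_1\dots a_{2^k}}\ket{\psi}$ lies in the support of $H_{pen}$; the same composition works for transitions and for $H_{in}, H_{out}$. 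Property~\ref{RG-GI:GS_Energy} follows by iterating the equality $\lambda_0(\r^{(j)}(H_q)(L/2^j)) = \lambda_0(\r^{(j-1)}(H_q)(L/2^{j-1}))$ for $j=1,\dots,k$, so $\lambda_0(\Rk(H_q)(L/2^k))=\lambda_0(H_q(L))$. Property~\ref{RG-GI:Ren_H_6_structure} is immediate from the tensor decomposition $V^{GI}_{L/2^j}=\bigotimes_{\ell=1}^{6}V^{GI(\ell)}_{L/2^j}$ applied iteratively, giving $\r^{(k)}(\HS_q)=\bigotimes_{\ell=1}^6 \r^{(k)}(\HS_q)_\ell$.

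The main subtlety to verify — and what I expect to be the only nontrivial point — is that the truncation step $\T_q$ remains well-defined at each iteration. Specifically, \cref{Def:Gottesman-Irani_Truncation} requires that illegal preimages can be detected locally through the penalty, input, and once-transitioned penalty operators of the current Hamiltonian; for the induction to go through we need the renormalised Hamiltonian at each step to still be a standard-form Gottesman-Irani-type Hamiltonian whose subspace decomposition $\calS_1\cup\calS_2\cup\calS_3$ (closed under $H$) survives. This is exactly what property~\ref{RG-GI:Standard-Form} together with properties~\ref{RG-GI:Transition_Rules} and~\ref{RG-GI:Penalty_Terms} guarantees: after each step the transition rules still act diagonally on the classical tracks with unitaries on the quantum tracks, and the penalty/input/output terms are still diagonal projectors onto classical-track basis subsets. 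Consequently \cite[Lemma~43]{Cubitt_PG_Wolf_Undecidability} and \cite[Lemma~5.8]{Watson_Hamiltonian_Analysis} continue to apply at each level, so the definition of $\T_q$ is unambiguous and the base lemma can indeed be invoked at every step.

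Putting these pieces together by induction yields all five properties simultaneously at arbitrary depth $k$, completing the corollary.
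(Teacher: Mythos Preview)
Your proposal is correct and takes essentially the same approach as the paper: the paper simply notes that since \cref{Lemma:RG_GI_Properties} sends standard-form Hamiltonians to standard-form Hamiltonians, the lemma can be reapplied at each step, so the corollary ``follows immediately.'' Your write-up is a more careful, explicit induction spelling out exactly this reasoning (and your observation that property~\ref{RG-GI:Standard-Form} is what licenses the inductive step is the key point the paper leaves implicit).
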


\section{Putting it all Together}\label{sec:all_together}
In this section we combine the renormalisation group schemes for the separate parts of the Hamiltonian.
First recall Lemma 51 of \cite{Cubitt_PG_Wolf_Undecidability} which characterises the ground state of the Hamiltonian defined by the local terms $h_u$:

\begin{lemma}[Tiling + quantum layers, Lemma 51 of \cite{Cubitt_PG_Wolf_Undecidability}]\label{Lemma:Tiling+Quantum}
  Let $h_c^{\mathrm{row}},h_c^{\mathrm{col}}\in\B(\C^C\ox\C^C)$ be the local interactions of a 2D tiling Hamiltonian $H_c$, with two distinguished states (tiles) $\ket{L},\ket{R}\in\C^C$.
  Let $h_q\in\B(\C^Q\ox\C^Q)$ be the local interaction of a Gottesman-Irani Hamiltonian $H_q(r)$, as in \cref{Sec:Quantum_RG}.
  Then there is a Hamiltonian on a 2D square lattice with nearest-neighbour interactions $h_u^\mathrm{row},h_u^\mathrm{col}\in\B(\C^{C+Q+1}\ox\C^{C+Q+1})$ with the following properties: For any region of the lattice, the restriction of the Hamiltonian to that region has an eigenbasis of the form $\ket{T}_c\ox\ket{\psi}_q$, where $\ket{T}_c$ is a \emph{product} state representing a classical configuration of tiles. Furthermore, for any given $\ket{T}_c$, the lowest energy choice for $\ket{\psi}_q$ consists of ground states of $H_q(r)$ on segments between sites in which $\ket{T}_q$ contains an $\ket{L}$ and an $\ket{R}$, a 0-energy eigenstate on segments between an $\ket{L}$ or $\ket{R}$ and the boundary of the region, and $\ket{e}$'s everywhere else.
\end{lemma}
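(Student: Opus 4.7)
The plan is to construct $h_u$ as a sum of three types of nearest-neighbour terms on $\C^{C+Q+1}$, each of which respects the classical-quantum structure. First, I would lift the tiling interactions $h_c^{\mathrm{row/col}}$ into $\C^{C+Q+1}$ by letting them act only on the $\C^C$ subspace and adding large diagonal penalties whenever a neighbouring pair mixes the classical tile sector with the $\C^Q\oplus\spann\{\ket{e}\}$ sector. Second, I would lift $h_q$ so that it acts as the Gottesman-Irani interaction on the $\C^Q$ layer whenever both neighbouring sites lie in that layer, and as zero on any pair containing the filler $\ket{e}$. Third, and most delicately, I would add a family of \emph{coupling terms}, each diagonal in the classical basis, which for every bond penalise configurations where the quantum-layer content is inconsistent with the tile configuration: an $\ket{L}$-tile forces its quantum register to the left boundary marker $\ket{\leftend}$ used in \cref{QTM_in_local_Hamiltonian}, an $\ket{R}$-tile forces it to $\ket{\rightend}$, and every other tile allows either an interior GI state or $\ket{e}$, with penalties tuned so that $\ket{e}$ is forbidden precisely when the neighbour already carries a GI interior state.

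The product-state eigenbasis claim then follows because every term in the resulting $h_u$ decomposes as $P_c\ox O_q$, where $P_c$ is diagonal in the classical tile basis and $O_q$ acts on the quantum layer alone. Summing over the region gives a Hamiltonian that is block-diagonal in $\{\ket{T}_c\}$: for each fixed classical configuration $T$ the restricted operator $H_q^T:=\bra{T}H_u\ket{T}_c$ acts purely on the quantum subsystem, and simultaneous diagonalisation of the blocks produces the desired eigenbasis of the form $\ket{T}_c\ox\ket{\psi}_q$.

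For the ground-state characterisation, fix $\ket{T}_c$ and examine $H_q^T$. The coupling terms force the quantum layer at each site into one of three sectors depending on the tile: the boundary marker $\ket{\leftend}$ at an $\ket{L}$-site, $\ket{\rightend}$ at an $\ket{R}$-site, and either a GI interior state or $\ket{e}$ elsewhere. Consequently $H_q^T$ splits as a direct sum of independent 1D operators on the row-segments delimited by consecutive $\ket{L},\ket{R}$ markers or by the boundary of the region. On a segment of length $r$ running from a matching $\ket{L}$ to $\ket{R}$, this operator is exactly $H_q(r)$, whose ground state is the computational history state of \cref{Lemma:Ground_State_GI_Form}; on a segment between an $\ket{L}$ or $\ket{R}$ marker and the boundary of the restricted region, the opposite boundary marker is missing and, invoking frustration-freeness (\cref{QTM_in_local_Hamiltonian:FF}), one can exhibit an explicit 0-energy eigenstate in which the GI clock and tape carry their initial configuration and no transition rule fires; everywhere else $\ket{e}^{\ox}$ is a 0-energy eigenstate. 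Minimising over these choices produces exactly the ground state described in the lemma.

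The main obstacle is the third step: designing the coupling terms so that they are strictly two-local and diagonal in the classical basis while simultaneously (a) supplying the correct GI boundary markers at $\ket{L},\ket{R}$ sites, (b) preventing $\ket{e}$ from ``invading'' the interior of an $L$-to-$R$ interval, and (c) preserving the frustration-freeness required for the 0-energy eigenstate on partial segments adjacent to the boundary. A naive symmetric choice of penalties easily introduces frustration between adjacent coupling terms or destroys the 0-energy filler eigenstate. The correct route, mirroring the CPW construction, is to impose each classical-quantum compatibility condition one-sidedly on each bond, so that $\ket{L}$ constrains only its right-hand neighbour, $\ket{R}$ only its left-hand neighbour, and $\ket{e}$ is merely required to be compatible with a non-$L,R$ classical neighbour on at least one side; verifying that this asymmetric choice simultaneously satisfies (a)-(c) and yields the claimed spectrum is where the bulk of the technical work lies.
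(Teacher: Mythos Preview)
Your outline matches the construction the paper records (without reproving) from CPW: the explicit local terms are written out in \cref{Lemma:Full_Renormalization} as \cref{TQ:cols}--\cref{TQ:1-Local-Constant}, and the block-diagonality in the classical tile basis is exactly the mechanism giving the product eigenbasis.

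Two corrections on the coupling details. First, in the actual terms $\ket{L}$ is tied to $\ket{\leftend}$ at the \emph{same} site (\cref{TQ:<L,TQ:L<}), and likewise $\ket{R}$ to $\ket{\rightend}$ (\cref{TQ:>R,TQ:R>}); the one-sidedness you describe lives instead in the $\ket{e}$-propagation terms \cref{TQ:10-a,TQ:10-b}, which allow an $\ket{e}\to$non-$\ket{e}$ jump only when the right site carries an $\ket{L}$ tile and a non-$\ket{e}\to\ket{e}$ jump only when the left site carries an $\ket{R}$ tile. That asymmetry is what lets the segment on the \emph{outside} of a boundary-adjacent marker be filled with $\ket{e}$'s at zero cost. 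Second, on the \emph{inside} of such a marker (say from an $\ket{L}$ rightwards to the region edge with no $\ket{R}$), the propagation forces the whole segment into the GI layer, and your proposed 0-energy state --- a static initial configuration where ``no transition rule fires'' --- does not work: each applicable transition term $\tfrac12(\ket{ab}-\ket{cd})(\bra{ab}-\bra{cd})$ contributes energy $\tfrac12$ on a basis state $\ket{ab}$. The 0-energy eigenstate there is again a history-state superposition, and its existence depends on the specific clock design in CPW (which arranges that a segment with only one bracket still supports a legal, penalty-free evolution), not merely on abstract frustration-freeness.
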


The $\ket{L}$ and $\ket{R}$ tiles are identified in \cite{Cubitt_PG_Wolf_Undecidability} with the right-down and left-down red cross in the Robinson tiles respectively (see \cref{Sec:Robinson_RG}).
The ground state can then be shown to be the ground state of the Robinson tiling Hamiltonian plus a ``quantum layer'' in which the Gottesman-Irani ground states appear only over the tops of the Robinson squares.
Everywhere else in the quantum layer is a filler state $\ket{e}$.

A key point is that the eigenstates are all product states across $\HS_c$ and $\HS_{eq}$.
We wish for the RG mapping to preserve this property.
This restricts the type of renormalisation isometries we use, as detailed in the following lemma.
\begin{lemma}[Separable Eigenstates]\label{Lemma:Separable_Eigenstates}
  Let $H_u^{\Lambda(2L)}$ denote the Hamiltonian in \cref{Lemma:Tiling+Quantum}.
  Then for an isometry $Z=Z_c\ox Z_{eq}$ where $Z_c: \HS_c^{\ox 2\times 2} \rightarrow \r(\HS_{c})$ and $Z_{eq}: \HS_{eq}^{\ox 2\times 2} \rightarrow \r(\HS_{eq})$, the operator $ZH_u^{\Lambda(2L)}Z^\dagger$ also has eigenstates of the form $\ket{T'}_{c}\ox \ket{\psi}_{eq}$ for $\ket{T'}_{c}\in \r(\HS_c)^{\ox \Lambda(L)}$ and $\ket{\psi}_{eq}\in \r(\HS_{eq})^{\ox \Lambda(L)}$.
\end{lemma}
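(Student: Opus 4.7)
The plan is to transport the product eigenbasis of $H_u^{\Lambda(2L)}$ guaranteed by \cref{Lemma:Tiling+Quantum} through the isometry $Z$, exploiting the fact that $Z$ factorises across the classical and quantum tensor factors.

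First I would invoke \cref{Lemma:Tiling+Quantum} to fix an orthonormal eigenbasis $\{\ket{T_i}_c \ox \ket{\psi_i}_{eq}\}_i$ of $H_u^{\Lambda(2L)}$ with corresponding eigenvalues $\lambda_i$, so that
\begin{equation}
H_u^{\Lambda(2L)} = \sum_i \lambda_i \bigl(\ketbra{T_i}_c\bigr) \ox \bigl(\ketbra{\psi_i}_{eq}\bigr).
\end{equation}
Conjugating by $Z = Z_c \ox Z_{eq}$ and using $(A\ox B)(C\ox D)(A\ox B)^\dagger = (ACA^\dagger)\ox(BDB^\dagger)$, one obtains
\begin{equation}
Z H_u^{\Lambda(2L)} Z^\dagger = \sum_i \lambda_i \bigl(Z_c \ketbra{T_i}_c Z_c^\dagger\bigr) \ox \bigl(Z_{eq} \ketbra{\psi_i}_{eq} Z_{eq}^\dagger\bigr).
\end{equation}
Since $Z_c, Z_{eq}$ are isometries, the vectors $\ket{T_i'}_c := Z_c\ket{T_i}_c$ and $\ket{\psi_i'}_{eq} := Z_{eq}\ket{\psi_i}_{eq}$ form orthonormal sets in $\r(\HS_c)^{\ox\Lambda(L)}$ and $\r(\HS_{eq})^{\ox\Lambda(L)}$ respectively, and their tensor products $\ket{T_i'}_c\ox\ket{\psi_i'}_{eq}$ are manifestly of the required product form. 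A direct computation verifies
\begin{equation}
Z H_u^{\Lambda(2L)} Z^\dagger \bigl(\ket{T_i'}_c\ox\ket{\psi_i'}_{eq}\bigr) = \lambda_i \bigl(\ket{T_i'}_c\ox\ket{\psi_i'}_{eq}\bigr),
\end{equation}
so these vectors constitute an orthonormal eigenbasis of $Z H_u^{\Lambda(2L)} Z^\dagger$ restricted to the range of $Z$.

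To complete the eigenbasis on the full space $\r(\HS_c)^{\ox\Lambda(L)} \ox \r(\HS_{eq})^{\ox\Lambda(L)}$, I would handle the orthogonal complement of the range of $Z$ separately. Extend $\{\ket{T_i'}_c\}$ to an orthonormal basis of $\r(\HS_c)^{\ox\Lambda(L)}$ by adjoining vectors $\{\ket{T_j^\perp}_c\}$ orthogonal to $\mathrm{range}(Z_c)$, and analogously extend $\{\ket{\psi_i'}_{eq}\}$ by vectors $\{\ket{\psi_k^\perp}_{eq}\}$. Any product state with at least one $\perp$ factor lies in $\ker(Z^\dagger)$, hence is a zero eigenvector of $Z H_u^{\Lambda(2L)} Z^\dagger$. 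Together with the eigenstates constructed above, these yield a full product-state eigenbasis of the renormalised operator, establishing the claim.

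I do not expect any serious obstacle here: the result is essentially a structural observation that an isometry of the form $Z_c\ox Z_{eq}$ commutes with the bipartite tensor-product decomposition, so product states are mapped to product states and the zero-eigenspace on the cokernel of $Z$ can be filled out by a product basis at will. The only point requiring mild care is checking that the global $2\times 2$-block renormalisation on the whole lattice still factors globally as $\bigl(\bigotimes_{\text{blocks}} Z_c\bigr) \ox \bigl(\bigotimes_{\text{blocks}} Z_{eq}\bigr)$; this is immediate because the classical and quantum layers live on disjoint tensor factors at each site, so the block-local tensor product commutes with the layer-wise tensor product.
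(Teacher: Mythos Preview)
Your proposal is correct and follows essentially the same approach as the paper: write $H_u^{\Lambda(2L)}$ in its product eigenbasis from \cref{Lemma:Tiling+Quantum}, then push each rank-one projector through the factorised isometry $Z_c\ox Z_{eq}$. The paper's proof stops at the observation that $ZH_uZ^\dagger = \sum_i \lambda_i (Z_c\ketbra{T_i}Z_c^\dagger)\ox(Z_{eq}\ketbra{\psi_i}Z_{eq}^\dagger)$ preserves the product structure; your additional treatment of the complement of $\mathrm{range}(Z)$ and the remark on global factorisation across blocks are sound elaborations that the paper omits.
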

\begin{proof}
  As per \cref{Lemma:Tiling+Quantum}, the eigenstates of $H_u^{\Lambda(2L)}$ decompose as product states $\ket{T_c}\ox\ket{\psi_i}_{eq}$, hence we can write
  \begin{align}
    H_u^{\Lambda(2L)} = \sum_i \lambda_i \ketbra{T_i}\ox\ketbra{\psi_i}_{eq}.
  \end{align}
  Applying the renormalisation isometry $Z$ gives
  \begin{align}
    ZH_u^{\Lambda(2L)}Z^{\dagger} &= \sum_i \lambda_i Z_C\ketbra{T_i}_cZ_C^{\dagger}\ox Z_{eq}  \ketbra{\psi_i}_{eq}Z_{eq}^{\dagger}\\
                                 &=: \sum_i \lambda_i \ketbra{T'_i}_{c'}\ox \ketbra{\psi'_i}_{eq'}.
  \end{align}
  Thus the product structure across the two subspaces is preserved.
\end{proof}

In \cref{Sec:Robinson_RG} we showed that the ground state of the renormalised tiling Hamiltonian preserves the tiling pattern of the unrenormalised Hamiltonian.
Here we show that renormalising the full Hamiltonian preserves this Robinson tiling plus Gottesman-Irani ground state structure.

We start by considering how to renormalise the Gottesman-Irani Hamiltonian in the presence of filler states on a 2D lattice (as opposed to the 1D chain considered previously).
After this we show the ground state energy Hamiltonian is preserved under the RG map.

\subsection{Renormalising $\HS_T\otimes (\HS_e \oplus \HS_q)$}\label{sec:renormlaising_3_H}

From \cref{Lemma:Separable_Eigenstates}, we know the eigenstates of the Hamiltonian defined by $h_u$ are product states across the classical-quantum Hilbert space partition and this structure is preserved under a tensor product of isometries on the two subspace separately.
Thus we can consider the basis states of $\HS_T$ and $\HS_{eq}$ separately and then later show this preserves the desired properties.

\paragraph{Blocking Operation $\B_u$}

We know that $V^C$ from \cref{Lemma:Classical_RG_Hamiltonian} will renormalise the classical state space by mapping sets of $2\times 2$ tiles to new tiles which recreate the tiling pattern at all but the lowest level.
We use this isometry unchanged, acting on the classical part of the Hilbert space.

Consider the quantum Hilbert space $\HS_{eq}$.
First note that the Gottesman-Irani Hamiltonian to be renormalised is a standard form Hamiltonian, and so can be renormalised as per \cref{Sec:GI_Blocking}.
However, the blocking procedure from \cref{Sec:GI_Blocking} is not sufficient for our purposes as it (a) takes a set of $2\times 1 $ lattice sites to a single lattice site and so is not appropriate for a 2D lattice, and (b) does not include the filler state $\ket{e}_e$.
To remedy this we need an isometry which acts as:
\begin{align}
  V^{eq}_{(i,i+1)(j,j+1)}: \HS_{eq}^{(i,j)} \ox \HS_{eq}^{(i+1,j)} \ox \HS_{eq}^{(i,j+1)} \ox \HS_{eq}^{(i+1,j+1)} \rightarrow  (\HS_{eq}'\ox\HS_{eq}')^{(i/2,j/2)}.
\end{align}
We will find it useful to define the following notation:
 \begin{definition}[$k$-times Blocked Basis States]
	Let $\ket{x_1},\ket{x_2},\dots,\ket{x_{2^{k}}}\in \frk{B}\cup \ket{e}_e$, then we denote the corresponding renormalised basis state after $k$ applications of the RG mapping as $\ket{x_1x_2\dots x_{2^k}}$.
\end{definition}
Now define $V^q_{(i,i+1)}(j)$ as follows, where $V^{GI}_{i,i+1}$ is the isometry used in \cref{Lemma:RG_GI_Properties}:
\begin{align}
  V^q_{(i,i+1)}(j) = V^{GI}_{i,i+1} &+ \ket{ee}_{i/2,j/2}\bra{e}_{i,j}\bra{e}_{i+1,j} \\
                                    &+ \ket{xe}_{i/2,j/2}\bra{x}_{i,j}\bra{e}_{i+1,j} + \ket{ex}_{i/2,j/2}\bra{e}_{i,j}\bra{x}_{i+1,j}.
\end{align}
This defines a new set of quantum basis states which now reflect the fact $\ket{e}_e$ is part of the Hilbert space.
Denote this
\begin{align}
  \frk{C}^{(1)}:=\frk{B}^{(1)}\cup \ket{ee}\bigcup_{x\in \frk{B}} \ket{ex} \bigcup_{x\in \frk{B}} \ket{xe}.
\end{align}
These isometries essentially apply the same mapping as $V^{GI}$, but now account for the additional $\ket{e}_e$ state we have present.
However, $V^{q}$ only maps $2\times 1$ spins to a single spin.
We need an operator which maps a $2\times 2$ spin to a single spin.
Define $W:\HS^{\prime (i/2,j)}_{eq}\ox \HS^{\prime (i/2,j+1)}_{eq} \rightarrow (\HS_{eq}^\prime\ox\HS_{eq}^\prime)^{(i/2,j/2)}$, as simply
\begin{align}
  W_{(i,i+1)(j,j+1)} = \sum (\ket{x}_{q_1}\otimes\ket{y}_{q_2})_{i/2,j/2} \bra{x}_{i/2,j}\ox\bra{y}_{i/2,j+1}.
\end{align}
This unitary acts to map the $1\times 2$ set of sites to a single lattice site in the renormalised lattice.

The isometry:
\begin{align} \label{Eq:V^eq_Definition}
  V^{eq}_{(i,i+1)(j,j+1)} := W_{(i,i+1)(j,j+1)}\left(V^q_{(i,i+1)}(j)\ox V^q_{(i,i+1)}(j+1)\right),
\end{align}
then maps $2\times 2$ spins to a single spin.

The overall blocking map $\B_u$ is then given by:
\begin{definition}[Blocking Isometry, $V^b$, $\B_u$] \label{Def:V^b_Isometry}
  Let $V^C$ and $V^{eq}$ be the isometries from \cref{Def:Tiling_Hamiltonian_RG}  and \cref{Eq:V^eq_Definition} respectively.
  Then the blocking isometry for $H_u$ is given by
  \begin{align}
    V^b_{(i,i+1)(j,j+1)} = V_{(i,i+1)(j,j+1)}^C\otimes V^{eq}_{(i,i+1)(j,j+1)}.
  \end{align}

\end{definition}

We now need to consider the full renormalisation process: the isometry defined above will map a certain subset of states to states on the renormalised lattice.
However, some parts of the Hilbert space will be ``integrated out''.
For convenience we will sometimes use indices $I,J$ to indicate row and column indices on the new lattice after the RG transformation.

Let $h_q^{(i,i+1)}(j), h_q^{(i,i+1)}(j+1)$ be the local terms of the quantum Hamiltonian before renormalisation, then we see that
\begin{align}
  &V^{eq}_{(i,i+1)(j,j+1)} \left(h_q^{(i,i+1)}(j+1) +  h_q^{(i,i+1)}(j)\right) V^{eq\dagger}_{(i,i+1)(j,j+1)} \nonumber \\
  &= h_q^{(1)\prime(I,J)}\ox \1_{q_2}
   + \1_{q_1} \ox h_q^{(1)\prime(I,J)}
\end{align}
and
\begin{align}
  &V^{eq}_{(i+2,i+3)(j,j+1)}V^{eq}_{(i,i+1)(j,j+1)} \left(h_q^{(i+1,i+2)}(j) +  h_q^{(i+1,i+2)}(j)\right)V^{eq\dagger}_{(i+2,i+3)(j,j+1)} \nonumber  \\  &\times  V^{eq\dagger}_{(i,i+1)(j,j+1)}
                                                                                             = h_{q_1}^{\prime(I,I+1)}\ox \1^{(I,J)}_{q_2} \ox \1^{(I+1,J)}_{q_2} + \1^{(I,J)}_{q_1} \ox \1^{(I+1,J)}_{q_1} \ox h_{q_2}^{\prime(I,I+1)}.
\end{align}

\paragraph{Truncation Operation $\T_u$ }
The operator $W$ has essentially merged two sites into a single site.
We now wish to integrate out one of these sites and restrict to the set of ``allowed states'' in the other.
We will implement this using the $1$-local projector $\Pi_{gs}(k)$
\begin{definition}[Truncation Operation $\T_u$] \label{Def:T_u_Mapping}
  Let $\ket{\psi}\in \HS_c\ox\HS_{eq}$, then
  \begin{align}
    \T_u: \ket{\psi} \mapsto (\1_c \ox \1_{q_1} \ox \Pi_{gs}(k))\ket{\psi},
  \end{align}
  where
  \begin{adjustwidth}{-2cm}{-2cm}
  	\begin{align} \label{Eq:Pi_gs_Definition}
  	\Pi_{gs}(k)= \begin{cases}
  	\ketbra{e^{\times 2^k}} 	\quad  &  \ \ k \ \text{even}  \\
  	\ketbra{\psi_{hist}(4^n+1)e^{\times 2^k-4^n-1}}  \quad & \ \ \text{if $k$ odd,  $2^{k-1} <4^n+1 <2^k$,} \\
  	& \ \ \text{  and non-halting}\\
  	\ketbra{\psi_{halt}(4^n+1)e^{\times 2^k-4^n-1}} \quad &  \ \  \text{if $k$ odd, $2^{k-1} <4^n+1 <2^k$,} \\
  	& \ \  \text{ and halting},
  	\end{cases} 
  	\end{align}
  \end{adjustwidth}
   and where $\ket{\psi_{hist}(L)}$ and $\ket{\psi_{halt}(L)}$ are defined in \cref{Lemma:Ground_State_GI_Form}.
  This extends to states $\ket{\chi}\in (\HS_c\ox\HS_{eq})^{\ox \Lambda(L)}$, as
  \begin{align}
    \T_u: \ket{\chi} \mapsto \bigotimes_{(I,J)\in \Lambda(L)}(\1_c^{(I,J)}\ox \1_{q_1}^{(I,J)} \ox \Pi_{gs}^{(I,J)}(k))\ket{\chi}.
  \end{align}
\end{definition}

\begin{definition}[Renormalisation Isometry, $V^u$] \label{Def:V^u_Isometry}
  Let $V^b_{(i,i+1)(j,j+1)}$ and $\Pi_{gs}$ be as defined in \cref{Def:V^b_Isometry} and \cref{Eq:Pi_gs_Definition} respectively.
  We define the isometry implementing the entire renormalisation scheme as
  \begin{align}
    V^u_{(i,i+1)(j,j+1)} :=(\1_c \ox \Pi_{gs}) V^b_{(i,i+1)(j,j+1)}.
  \end{align}
\end{definition}

To see why this is appropriate note that the Hamiltonian after the application of the blocking isometries has two sets of local terms: a 1-local term and a 2-local term (see \cref{Def:h_u_RG_Mapping} and the discussion following).
First consider the 1-local term $h_q^{(1)\prime(I,J)}\ox \1_{q_2} + \1_{q_1} \ox h_q^{(1)\prime(I,J)}$ and examine how it transforms under $\T_u$ and $\Pi_{gs}$.
The idea is that $\Pi_{gs}$ will ``integrate out'' the $q_2$ subspace by removing all states which are not the ground state while maintaining the energy contribution from this subspace.
If the site is large enough to contain a full history state of length $4^n+1$, for some $n\in \N$, then we keep only that state and the relevant renormalised $\ket{e}$ states.
Otherwise we keep only the renormalised $\ket{e}$ states.
Hence
\begin{align}
  \Pi_{gs}^{(I,J)}(k)(&h_{q_1}^{(1)\prime(I,J)}\ox \1_{q_2}^{(I,J)} + \1_{q_1}^{(I,J)} \ox h_{q_2}^{(1)\prime(I,J)} )\Pi_{gs}^{(I,J)}(k) \\
  = &h_{q_1}^{(1)\prime(I,J)} \ox \Pi_{gs}^{(I,J)}(k) + \tr\left(\Pi_{gs}^{(I,J)}(k) h_{q_2}^{\prime(I,J)} \right) \1_{q_1}^{(I,J)} \ox \Pi_{gs}^{(I,J)}(k).
\end{align}
Since $\Pi_{gs}$ is a projector onto a 1-dimensional subspace, we will often omit it when writing the Hamiltonian.
Thus obtain the term
\begin{equation}
	h_q^{(1)\prime(I,J)} + \Tr\left(\Pi_{gs}(k)h_{q_2}^{\prime(I,J)}\right) \1_{q}.
\end{equation}

Now examine how the 2-local terms transform:
{\small
  \begin{align}
    &\Pi_{gs}(k)^{(I,J)}\ox \Pi_{gs}(k)^{(I+1,J)}\big(h_q^{\prime(I,I+1)}\ox \1^{(I,J)}_{q_2} \ox \1^{(I+1,J)}_{q_2} \\
    &+ \1^{(I,J)}_{q_1} \ox \1^{(I+1,J)}_{q_1} \ox h_q^{\prime(I,I+1)} \big) \Pi_{gs}(k)^{(I,J)}\ox \Pi_{gs}(k)^{(I+1,J)}\\
    &= h_q^{\prime(I,I+1)}\ox \Pi_{gs}(k)^{(I)}\ox \Pi_{gs}(k)^{(I+1)} \\
    &+ \tr\left(h_q^{\prime(I,I+1)} \Pi_{gs}(k)^{(I)}\ox \Pi_{gs}(k)^{(I+1)}\right) \1^{(I,J)}_{q_1}\ox  \Pi_{gs}(k)^{(I)}\ox \Pi_{gs}(k)^{(I+1)}.
  \end{align}
}
Importantly $\tr\left(h_q^{\prime(I,I+1)} \Pi_{gs}(k)^{(I)}\ox \Pi_{gs}(k)^{(I+1)}\right)$ only picks up a non-zero contribution from the terms proportional to $\1^{(I)} \ox \1^{(I+1)}$ (we also note that this latter term is zero for interactions going along columns).
Again the subspace spanned by $\Pi_{gs}(k)^{(I)}\ox \Pi_{gs}(k)^{(I+1)}$ is a 1-dimensional subspace and hence we will often omit writing it explicitly.
Thus the 2-local terms effectively become
$h_q^{\prime(I,I+1)} + \tr\left(h_q^{\prime(I,I+1)} \Pi_{gs}(k)^{(I)}\ox \Pi_{gs}(k)^{(I+1)}\right)\1^{(I,J)}_{q}\ox \1^{(I+1,J)}_{q}$.

\paragraph{Multiple Iterations}
The above is the RG transformation for a single iteration; in the following we construct the further iterations of the RG mapping analogously to the above.

\medskip

First define the set of local basis states in the quantum part of the Hilbert space,
\begin{align}
  \frk{C}^{\prime(k)}:=\frk{B}^{(k)}\bigcup_{j=0^{2^K}}\bigcup_{\ket{x_i}\in \frk{B}\cup \ket{e}} \ket{x_1...x_{2^k}},
\end{align}
such that $\ket{x_1\dots x_{2^{k-1}}}\in \frk{C}^{(k-1)}$.
From this we can define $\HS_{eq}^{\prime(k)}=\spann \left\{\ket{x} \big| \ket{x}\in \frk{C}^{\prime(k)} \right\}$.

Then  $\B_u:(\r^{(k-1)}\HS_{eq})^{\ox 2\times 2} \rightarrow \HS_{eq}^{\prime(k)}$.
Finally we truncate the basis states which are either bracketed or can immediately be identified as being illegal or evolving to an illegal state using $\T_u$.
This leaves us with the basis $\frk{C}^{(k)}$ as the set of basis states and the renormalised local quantum Hilbert space as $\Rk( \HS_{eq})=\spann\{\ket{x} | \ket{x}\in \frk{C}^{(k)} \}$.

The $\T_u \circ\B_u$ operation can be implemented analogously to the previously described transformation: we apply $V^b$ --- now defined on $\r^{(k-1)}(\HS_u)$ --- across the lattice which blocks and truncates part of the Hilbert space.
We then apply $\Pi_{gs}(k)$, as defined in \cref{Eq:Pi_gs_Definition}, to project out the local ground state (which may pick up energy).

We formalise the overall RG mapping in the following definition:
\begin{definition}[$h_u$ Renormalisation Mapping] \label{Def:h_u_RG_Mapping}
  Let $h_u^{col(i,i+1)}, h_u^{row(j,j+1)}\in \mathcal{B}(\C^d\otimes \C^d)$ and $V^u_{(i,i+1)(j,j+1)}$ be as in \cref{Def:V^u_Isometry}.
  Then the renormalised local terms are given by
  \begin{adjustwidth}{-2cm}{-2cm}
  \begin{align}
    \sR: & \ h_u^{row(i+1,i+2)}(j)+h_u^{row(i+1,i+2)}(j+1)
          \rightarrow
          V^u_{(i+2,i+3)(j,j+1)} V^u_{(i,i+1)(j,j+1)} \times \\
        &\left( h_u^{row(i+1,i+2)}(j)+h_u^{row(i+1,i+2)}(j+1) \right)
          \, V_{(i,i+1)(j,j+1)}^{u\dagger}
          V_{(i+2,i+3)(j,j+1)}^{u\dagger} \\
        &=: \r(h_u^{row})^{(i,i+1)} \\
    \sR:& \ h_u^{col(j+1,j+2)}(i) +h_u^{col(j+1,j+2)}(i+1)
         \rightarrow
         V^u_{(i+2,i+3)(j,j+1)}V^u_{(i,i+1)(j,j+1)} \times \\ &\bigg(h_u^{col(j+1,j+2)}(i) \ +h_u^{col(j+1,j+2)}(i+1) \bigg) V_{(i,i+1)(j,j+1)}^{u\dagger}V_{(i+2,i+3)(j,j+1)}^{u\dagger}, \\
        &=:\r(h_u^{col})^{(i,i+1)}
  \end{align}
  \begin{align}
    \sR:& \ h_u^{row(i,i+1)}(j)+h_u^{row(i+1,i+2)}(j+1) + \sum_{\substack{k=0,1 \\ \ell=1,2}}\left(h_u^{(1)(i+k,j+\ell )}\right)  \rightarrow \\
       V^u_{(i,i+1)(j,j+1)} &\bigg( h_u^{row(i,i+1)}(j) +h_u^{row(i+1,i+2)}(j+1) + \sum_{\substack{k=0,1 \\ \ell=1,2}} \left(h_u^{(1)(i+k,j+\ell )} \right) \bigg)  V_{(i,i+1)(j,j+1)}^{u\dagger} \\
       &=:\r(h_u^{(1)})^{(i)}.
  \end{align}
  \end{adjustwidth}
  $\Rk(h_u^{row}), \Rk(h_u^{col})^{(i,i+1)}, \Rk(h_u^{(1)})^{(i)}$ are defined in the same way but with the appropriate isometries for the $k^{th}$ iteration of the RG mapping.
\end{definition}
\begin{remark} \label{Remark:Local_Projectors}
  $\Rk(h_u^{(1)})^{(i)}$ and $\Rk(h_u^{row})^{(i,i+1)}$ have local projector terms of the form $\sum_{m=1}^{k}4^m\kappa^{(m)}\1^{(i)}$ and $\sum_{m=1}^{k}2^m\gamma^{(m)}\1^{(i)}\ox \1^{(i+1)}$, where $\gamma^{(k)}$  and  $\kappa^{(k)}$ are given by
  \begin{align}
    \kappa^{(k)} &:= \Tr\left(\Pi_{gs}(k)h_{q_2}^{\prime(I,J)}\right) 
    \\
    \gamma^{(k)} &:= \tr\left( h_q^{\prime(I,I+1)} \Pi_{gs}(k)^{(I)} \ox \Pi_{gs}(k)^{(I+1)} \label{Eq:gamma_defintion} \right).
  \end{align}
\end{remark}

We now examine the properties of the full Hamiltonian under this mapping, and show that its ground state energy and other properties are preserved.

\begin{lemma}[$H_u$ Renormalisation] \label{Lemma:Full_Renormalization}
  Let $H_u(L)=\sum h_u^{row(j,j+1)} + \sum h_u^{col(i,i+1)}$, where
  \begin{subequations}\label{TQ:overall_H}
    \begin{align}
      h^{\mathrm{col}}_{j,j+1} =
      \label[term]{TQ:cols} &h_c^{\mathrm{col}}\ox \1_{eq}^{(j)} \ox \1_{eq}^{(j+1)}\\
      h^{\mathrm{row}}_{i,i+1} =
      \label[term]{TQ:Hc}   &h_c^{\mathrm{row}}\ox\1_{eq}^{(i)}\ox\1_{eq}^{(i+1)}\\
      \label[term]{TQ:Hq}   &+\1_c^{(i)}\ox\1_c^{(i+1)}\ox h_q\\
      \label[term]{TQ:<L}   &+\ketbra{L}^{(i)}_c \ox (\1_{eq}-\ketbra{\leftend})^{(i)}
                              \ox \1_{ceq}^{(i+1)}\\
      \label[term]{TQ:L<}   &+(\1_c- \ketbra{L}_c)^{(i)} \ox \ketbra{\leftend}^{(i)}
                              \ox \1_{ceq}^{(i+1)}\\
      \label[term]{TQ:>R}   &+\1_{ceq}^{(i)} \ox \ketbra{R}^{(i+1)}_c
                              \ox (\1_{eq} - \ketbra{\rightend})^{(i+1)}\\
      \label[term]{TQ:R>}   &+\1_{ceq}^{(i)}
                              \ox (\1_c-\ketbra{R})^{(i+1)}_c
                              \ox\ketbra{\rightend}^{(i+1)}\\
      \label[term]{TQ:1R}   &+\1_c^{(i)} \ox \ketbra{0}^{(i)}_e
                              \ox \ketbra{R}^{(i+1)}_c \ox \1_{eq}^{(i+1)}\\
      \label[term]{TQ:L1}   &+\ketbra{L}^{(i)}_c \ox \1_{eq}^{(i)}
                              \ox\1_c^{(i+1)} \ox \ketbra{0}^{(i+1)}_e\\
      \label[term]{TQ:10-a} &+\1_c^{(i)} \ox \ketbra{0}^{(i)}_e
                              \ox (\1_c-\ketbra{L})^{(i+1)}_c
                              \ox (\1_{eq}-\ketbra{0})^{(i+1)}_e\\
      \label{TQ:10-b} &+(\1_c-\ketbra{R})^{(i)}_c
                        \ox (\1_{eq}-\ketbra{0})^{(i)}_e
                        \ox\1_c^{(i+1)} \ox \ketbra{0}^{(i+1)}_e, \\
                            &+ \1_{ceq}^{(i)} \ox \1_{ceq}^{(i+1)} \label{TQ:2-Local-Constant}\\
      h^{(1)}_i = &-(1 + \alpha_2(\varphi))\1_{ceq}^{(i)}, \label{TQ:1-Local-Constant}
    \end{align}
  \end{subequations}
  where
  \begin{align}
    \alpha_2(\varphi) := \sum_{4^n+7>|\varphi| }4^{-2n-1}\lambda_0(H_q(4^n)),
  \end{align}
  as defined in Proposition 53 of \cite{Cubitt_PG_Wolf_Undecidability}.
  Then the $k$ times renormalised Hamiltonian $\Rk(H_u)^{\Lambda(L\times H)}$ has the following properties:
  \begin{enumerate}
  \item For any finite region of the lattice, the restriction of the
    Hamiltonian to that region has an eigenbasis of the form  $\ket{T}_c\otimes\ket{\psi_i}$ where $\ket{T}_c$ is a classical tiling state (cf. Lemma 51 of \cite{Cubitt_PG_Wolf_Undecidability}). \label{H_e_Renorm:Separable_Eigenstates}
  \item Furthermore, for any given $\ket{T}_c$, the lowest energy choice for $\ket{\psi}_q$ consists of ground states of $\Rk(H_q)(r)$ on segments between sites in which $\ket{T}_c$ contains an $\ket{\Rk(L)}$ and an $\ket{\Rk(R)}$, a 0-energy eigenstate on segments between an $\ket{\Rk(L)}$ or $\ket{\Rk(R)}$ and the boundary of the region, and $\ket{e}$’s everywhere else.
    Any eigenstate which is not an eigenstate of $\Rk(H_q)(r)$ on segments between sites in which $\ket{T}_c$ contains an $\ket{\Rk(L)}$ and an $\ket{\Rk(R)}$ has an energy $>1$ (cf. Lemma 51 of \cite{Cubitt_PG_Wolf_Undecidability}). \label{H_e_Renorm:GS_Structure}
  \item The ground state energy is contained in the interval
    \begin{align}
      \bigg[ &(g(k) - 4^k\alpha_2(\varphi)) LH - 2^{-k}H
               + \sum_{n=1}^{\lfloor\log_4(L/2)\rfloor}
               \bigg(
               \left\lfloor\frac{H}{2^{2n+1 (k \ mod 2)}}\right\rfloor \\
             &\times\left(\left\lfloor\frac{L}{2^{2n+1 - (k \ mod 2)}}\right\rfloor -1\right)
               \bigg)
               \lambda_0(\Rk(H_{q})(4^{n- \lfloor (k \ mod 2)/2\rfloor})) , \\
             &(g(k) -  4^k\alpha_2(\varphi)) LH - 2^{-k}H
               +  \sum_{n=1}^{\lfloor\log_4(L/2)\rfloor}
               \bigg(\left(
               \left\lfloor\frac{H}{2^{2n+1- (k \ mod 2)}}\right\rfloor +1\right) \\
             &\times \left\lfloor\frac{L}{2^{2n+1 - (k \ mod 2)}}\right\rfloor
               \bigg)
               \lambda_0(\Rk(H_{q})(4^{n - \lfloor (k \ mod 2)/2\rfloor})) \bigg]
    \end{align}
    where
    \begin{align}
      g(k) = 4^k\sum_{4^n+1<2^k}4^{-2n-1}\lambda_0(H_q(4^n)),
    \end{align} \label{H_e_Renorm:GS_Energy}
    (cf. Lemma 52 of \cite{Cubitt_PG_Wolf_Undecidability}).
  \end{enumerate}
\end{lemma}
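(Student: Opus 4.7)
The plan is to proceed by induction on $k$, with the base case $k=0$ given by Lemma 51 of \cite{Cubitt_PG_Wolf_Undecidability} as recalled in \cref{Lemma:Tiling+Quantum}. The inductive step relies on the factorisation of the renormalisation isometry $V^u_{(i,i+1)(j,j+1)}$ of \cref{Def:V^u_Isometry}: stripping off the truncation projector $\Pi_{gs}$ (which acts only on the $q_2$ register, hence within $\HS_{eq}$), the blocking step is $V^C \otimes V^{eq}$, and each of these decomposes further along classical tracks and the six Gottesman--Irani tracks as in \cref{Lemma:Renormalisation_Isometry}. I would handle the three claims one at a time.

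For claim \ref{H_e_Renorm:Separable_Eigenstates}, I would invoke \cref{Lemma:Separable_Eigenstates} directly at each step: since the full isometry $V^u$ is of product form across $\HS_c$ and $\HS_{eq}$, a $\ket{T}_c \otimes \ket{\psi}_{eq}$ eigenbasis of the previous-step Hamiltonian is mapped to a product eigenbasis of the renormalised one, and iterating $k$ times preserves this structure.

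For claim \ref{H_e_Renorm:GS_Structure}, I would combine three ingredients. First, by \cref{Lemma:Classical_RG_Hamiltonian} and \cref{cor:Robinson_pattern}, the classical part of the ground state is forced to be a valid Robinson tiling at every scale; this persists because the classical RG map is \emph{exactly} scale-invariant. Second, the coupling projectors \cref{TQ:<L,TQ:L<,TQ:>R,TQ:R>,TQ:1R,TQ:L1,TQ:10-a,TQ:10-b} are mapped by $V^u$ to analogous projectors on the renormalised $\ket{\Rk(L)}, \ket{\Rk(R)}, \ket{e}$ states (using the same argument as in \cref{Lemma:Classical_RG_Hamiltonian} for the classical factor, and \cref{Lemma:RG_GI_Properties}, point \ref{RG-GI:Penalty_Terms} for the quantum factor), so they continue to enforce Gottesman--Irani chains between $\ket{\Rk(L)}$ and $\ket{\Rk(R)}$ and filler $\ket{e}$ elsewhere. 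Third, the particular choice of $\Pi_{gs}(k)$ in \cref{Def:T_u_Mapping} together with \cref{Lemma:RG_GI_Properties}, point \ref{RG-GI:GS_Energy}, guarantees that the truncation preserves the halting or non-halting Gottesman--Irani ground state of each size, while any locally orthogonal configuration still pays energy $>1$ by the (preserved) GI gap and the bracketing penalties.

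For claim \ref{H_e_Renorm:GS_Energy}, I would sum contributions from each term of \cref{TQ:overall_H}. The $-(1+\alpha_2(\varphi))\1_{ceq}$ one-local shift from \cref{TQ:1-Local-Constant}, combined with the $\1_{ceq}\otimes\1_{ceq}$ constant in \cref{TQ:2-Local-Constant}, together account for the $-4^k\alpha_2(\varphi)LH$ bulk term and the $-2^{-k}H$ boundary correction, once one notes that each renormalised site encodes $4^k$ original sites and the edge columns are truncated by the blocking. The accumulated shifts $\kappa^{(m)}, \gamma^{(m)}$ identified in \cref{Remark:Local_Projectors}, summed over $m=1,\dots,k$, yield the $g(k)$ term: at step $m$, the projector $\Pi_{gs}(m)$ selects the ground state of the GI Hamiltonian on chains of length $4^n+1 < 2^m$, and the combinatorial density $4^{-2n-1}$ of such squares in the Robinson pattern produces the prefactor. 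The remaining surviving squares, of sizes $4^n+1 \ge 2^k$, each contribute $\lambda_0(\Rk(H_q)(4^{n-\lfloor (k\bmod 2)/2\rfloor}))$ by claim \ref{H_e_Renorm:GS_Structure} together with Lemma 52 of \cite{Cubitt_PG_Wolf_Undecidability}; counting them with both a lower and an upper bound on how many are cut off by the boundary of the $L\times H$ lattice yields the stated interval. The $(k\bmod 2)$ corrections account for the fact that an odd number of $2\times 2 \to 1\times 1$ RG iterations leaves the intermediate $2\times 1$ blocking of the quantum part asymmetric between rows and columns. The main obstacle I anticipate is precisely this bookkeeping: reconciling the density of surviving squares at each size with the accumulated truncation shifts in $g(k)$, and ensuring that the $\alpha_2(\varphi)$ term (originally designed to cancel the energy of the smallest nested squares) matches the $4^k$-scaled contribution exactly, modulo the boundary terms.
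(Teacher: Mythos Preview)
Your overall architecture matches the paper's: Claim~\ref{H_e_Renorm:Separable_Eigenstates} via \cref{Lemma:Separable_Eigenstates}, Claim~\ref{H_e_Renorm:GS_Structure} by tracking how each term of \cref{TQ:overall_H} transforms under $V^u$ (the paper carries this out explicitly, writing down the $k$-fold image of every coupling projector \cref{TQ:<L}--\cref{TQ:10-b} rather than citing \cref{Lemma:RG_GI_Properties} wholesale, but the content is the same), and Claim~\ref{H_e_Renorm:GS_Energy} by combining the segment counts from \cite{Cubitt_PG_Wolf_Undecidability} with the accumulated identity shifts that produce $g(k)$.

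There is one substantive omission. In your treatment of Claim~\ref{H_e_Renorm:GS_Structure} you assert that ``the classical part of the ground state is forced to be a valid Robinson tiling at every scale'' and then carry this into Claim~\ref{H_e_Renorm:GS_Energy}. But \cref{Lemma:Classical_RG_Hamiltonian} and \cref{cor:Robinson_pattern} only tell you that the \emph{classical} Hamiltonian by itself has this ground state; once the quantum layer is coupled in, a tiling defect could in principle be energetically favourable if it destroys enough Gottesman--Irani segments to save more quantum energy than the $+1$ classical penalty it incurs. The paper rules this out explicitly (following Lemma~49 and Definition~50 of \cite{Cubitt_PG_Wolf_Undecidability}): with $d$ defects one loses at most $2d$ segments of each size, and since $\sum_n \lambda_0(H_q(4^n+1)) < 1/2$ the total quantum saving is strictly less than $d$. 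Only after this step is one entitled to take the defect-free Robinson configuration as the minimiser and apply the segment-counting bounds that give the interval in Claim~\ref{H_e_Renorm:GS_Energy}. You should insert this argument between your Claims~\ref{H_e_Renorm:GS_Structure} and~\ref{H_e_Renorm:GS_Energy}.
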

\begin{proof}
  We prove this in \cref{Appendix:Total_RG_Proof}.
\end{proof}

\begin{lemma}\label{Lemma:Renormalised_GI_Energies}
  Let $S_{br}(k)$ be the subspace spanned by states for which the left-most site is of the form $\ket{e^{\times p}\leftend \{x\}^{\times 2^k-p-1}}$ for a fixed integer $1 \leq p\leq 2^k$ and the right-most site is of the form $\ket{\{y\}^{\times 2^k-q-1 }\rightend e^{\times q}}$ for fixed integer $1 \leq q\leq 2^k$.
  Then
  \begin{align}
    \lambda_0(\Rk(H_q)(L)|_{S_{br}(k)}) = \min_{2^{k-1}L+1 \leq x\leq 2^kL} \lambda_0(H_q(x))
  \end{align}
\end{lemma}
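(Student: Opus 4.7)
The strategy is to push the restriction through the renormalisation isometry and reduce the problem to a statement about the unrenormalised Gottesman-Irani Hamiltonian, for which the answer is already known. By construction one has
\begin{equation}
\Rk(H_q)(L) \;=\; V^{GI}[k]\, H_q(2^k L)\, V^{GI\dagger}[k],
\end{equation}
and by \cref{Lemma:RG_GI_Properties} the operator $V^{GI}[k]$ is an isometry onto its image which preserves eigenvalues on its support. Hence
\begin{equation}
\lambda_0\bigl(\Rk(H_q)(L)\big|_{S_{br}(k)}\bigr) \;=\; \lambda_0\bigl(H_q(2^k L)\big|_{W}\bigr), \qquad W := V^{GI\dagger}[k]\, S_{br}(k).
\end{equation}

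First I would identify $W$ explicitly. Unwinding the blocking rule of \cref{Def:Gottesman-Irani_Blocking} $k$ times, the renormalised super-site basis vector $\ket{e^{\times p}\leftend\{x\}^{\times 2^k-p-1}}$ corresponds to the $2^k$-site string whose first $p$ positions carry the filler $\ket{e}$, whose $(p+1)$-th position carries $\ket{\leftend}$, and whose remaining $2^k-p-1$ positions run over the surviving legal standard-basis content; symmetrically for the right boundary. Because the truncation $\T_q$ of \cref{Def:Gottesman-Irani_Truncation} only discards locally illegal or evolve-to-illegal states, every bracketed configuration appearing in the ground state of some $H_q(x)$ remains in the image of $V^{GI}[k]$, and $W$ is exactly the bracketed subspace of $H_q(2^k L)$ with markers in the admissible positions.

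Next I would decompose $W$ block-diagonally according to the pair of marker positions. The standard-form transition rules of \cref{Def:Standard-form_H} act diagonally in the classical register and, by \cref{QTM_in_local_Hamiltonian}, never create or move a $\leftend$ or $\rightend$ symbol, while the penalty terms forbid any stray marker in the interior. Consequently $H_q(2^k L)|_W$ splits as a direct sum of sub-blocks labelled by the position of $\leftend$ and $\rightend$; each sub-block is unitarily equivalent to $H_q(x)|_{\mathcal S_{br}}$, where $x$ is the distance between the two markers. The filler sites outside the markers contribute zero energy and are decoupled from the interior by the penalty structure, so they do not perturb this decomposition. By \cref{QTM_in_local_Hamiltonian}\,\cref{QTM_in_local_Hamiltonian:gs} (equivalently \cref{Lemma:Ground_State_GI_Form}) the ground state of each summand is the computational history state of energy $\lambda_0(H_q(x))$, so taking the minimum over the direct-sum blocks yields $\min_x \lambda_0(H_q(x))$.

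The main obstacle is the combinatorial bookkeeping that identifies the attainable range of $x$ with $[2^{k-1}L+1,\,2^k L]$. This amounts to tracking, iteration by iteration, how the blocking map of \cref{Def:Gottesman-Irani_Blocking} and the truncation of \cref{Def:Gottesman-Irani_Truncation} restrict the positions that $\leftend$ and $\rightend$ may occupy inside the boundary super-sites after $k$ rounds, and verifying that no admissible configuration in $S_{br}(k)$ can yield an effective chain shorter than $2^{k-1}L+1$ or longer than $2^k L$. The upper endpoint is immediate (take $p=q$ minimal so that the markers sit flush against the outer edge of the lattice), while the lower endpoint requires a careful induction on $k$ showing that the truncation step at each iteration preserves exactly the set of marker placements parametrised by $1\le p,q\le 2^k$, and no more.
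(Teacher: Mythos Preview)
Your proposal is correct and follows essentially the same approach as the paper: observe that the renormalised Hamiltonian is block-diagonal with respect to the subspaces indexed by the marker positions $(p,q)$, identify each block with an unrenormalised bracketed Gottesman--Irani Hamiltonian of the appropriate length, and take the minimum. The paper's proof is far terser and, like yours, does not spell out the bookkeeping for the range $[2^{k-1}L+1,\,2^kL]$, so your explicit flagging of this as the residual obstacle is fair.
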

\begin{proof}

  $\Rk(h_q)$ is block-diagonal with respect to the subspaces of $\Rk (\HS_{eq})^{\ox 2}$ spanned by products of $\ket{e^{\times p}\leftend \{x\}^{\times 2^k-p-1}}$ and~\newline
   $\ket{\{y\}^{\times 2^k-q-1 }\rightend e^{\times q}}$ for fixed $p,q$, together with the orthogonal complement thereof, while acting as identity on $\Rk(\HS_c)^{\ox 2}$.

  Thus the ground state energy is equal to $\min_{2^{k-1}L+1 \leq x\leq 2^kL}\lambda_0(H_q(x))$.
\end{proof}

\begin{corollary} \label{Corollary:GS_in_Limit}
  If $\lim\limits_{L\rightarrow\infty}\lambda_0(H_u^{\Lambda(L)})=+\infty$, then $\lim\limits_{L\rightarrow\infty}\lambda_0(\Rk(H_u)^{\Lambda(L)})=+\infty$ for all $k\geq k_0(|\varphi|)$, and $k_0(|\varphi|)$ is the smallest integer such that $2^{k_0}>|\varphi|+7$.
  If $\lim\limits_{L\rightarrow\infty}\lambda_0(H_u^{\Lambda(L)})=-\infty$, then $\lim\limits_{L\rightarrow\infty}\lambda_0(\Rk(H_u)^{\Lambda(L)})=-\infty$ for all $k\geq k_0(\varphi)$.
\end{corollary}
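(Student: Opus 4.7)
The plan is to apply property~\ref{H_e_Renorm:GS_Energy} of \cref{Lemma:Full_Renormalization} to the square lattice $\Lambda(L\times L)$ and to evaluate the resulting upper and lower bounds in each of the two cases. \cref{Lemma:Ground_State_GI_Form} fixes the values of $\lambda_0(H_q(\ell))$ (either $0$ or $1-\cos(\pi/(2T(\ell)))>0$, according to whether the UTM is non-halting or halting within the time allotted by a tape of length $\ell$), and \cref{Lemma:Renormalised_GI_Energies} rewrites the renormalised energies $\lambda_0(\Rk(H_q)(r))$ that appear in the bound as minima of unrenormalised Gottesman--Irani energies over a range of lengths. The role of the threshold $k_0(|\varphi|)$ is precisely to guarantee $2^{k_0}>|\varphi|+7$, so that the summation windows defining $g(k)$ and $4^k\alpha_2(\varphi)$ are nested in the right way.

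For the case $\lambda_0(H_u^{\Lambda(L)})\to -\infty$ (non-halting UTM), \cref{Lemma:Ground_State_GI_Form} forces $\lambda_0(H_q(\ell))=0$ for every $\ell$. This in turn gives $g(k)=0$, $\alpha_2(\varphi)=0$, and via \cref{Lemma:Renormalised_GI_Energies} also $\lambda_0(\Rk(H_q)(4^{n-\lfloor (k\bmod 2)/2\rfloor}))=0$ for every $n$. Both endpoints of the interval in property~\ref{H_e_Renorm:GS_Energy} then collapse to $-2^{-k}L$, and so $\lambda_0(\Rk(H_u)^{\Lambda(L)})=-2^{-k}L\to -\infty$.

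For the case $\lambda_0(H_u^{\Lambda(L)})\to +\infty$ (halting UTM), I would decompose the energy-density coefficient as
\begin{equation}
g(k)-4^k\alpha_2(\varphi)=4^k\sum_n\bigl(\mathbb{1}[4^n+1<2^k]-\mathbb{1}[4^n+7>|\varphi|]\bigr)\,4^{-2n-1}\lambda_0(H_q(4^n)),
\end{equation}
and use $k\geq k_0$ to show that the only index values producing a negative contribution are those with $4^n+1\geq 2^k$. Since each $\lambda_0(H_q(4^n))$ is bounded by $\|H_q\|$, this negative tail is $4^k\cdot O(4^{-k})=O(1)$, hence of order $O(L^2)$ at worst. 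On the other hand, for every sufficiently large $n$ the computation encoded in $H_q(4^n)$ has enough tape to halt, so by \cref{Lemma:Renormalised_GI_Energies} there is at least one term $\lambda_0(\Rk(H_q)(4^{n-\lfloor (k\bmod 2)/2\rfloor}))>0$ entering the sum in property~\ref{H_e_Renorm:GS_Energy}. This produces a contribution of order $cL^2$ with $c>0$, which dominates both the possibly negative $O(L^2)$ shift from $g(k)-4^k\alpha_2(\varphi)$ and the linear boundary term $-2^{-k}L$, so the lower bound diverges to $+\infty$.

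The main obstacle is the halting case: one must verify carefully that for every $k\geq k_0(|\varphi|)$ the combined quadratic-in-$L$ contribution from $g(k)-4^k\alpha_2(\varphi)$ and from the sum of $\lambda_0(\Rk(H_q)(\cdot))$ terms is strictly positive, i.e.\ that the positive halting contributions strictly dominate the negative tail from $4^n\geq 2^k-1$. This reduces to picking any one Gottesman--Irani block in the sum that is large enough to complete the halting computation, and checking via \cref{Lemma:Renormalised_GI_Energies} that its renormalised ground-state energy is strictly positive. The non-halting case, by contrast, is essentially a one-line substitution into the bound.
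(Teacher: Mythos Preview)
Your overall plan is right, but there is a gap in both cases that the paper closes with one identity you have not exploited.

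In the non-halting case you assert, via \cref{Lemma:Ground_State_GI_Form}, that $\lambda_0(H_q(\ell))=0$ for \emph{every} $\ell$, and hence $g(k)=\alpha_2(\varphi)=0$. This is not correct: for short chains $\ell\le |\varphi|+7$ the phase-estimation stage cannot fully extract $\varphi$, so the machine simulated by $H_q(\ell)$ need not be the UTM on input $\varphi$ and $\lambda_0(H_q(\ell))$ may well be positive (this is precisely why $\alpha_2(\varphi)$ is introduced and is in general nonzero, cf.\ the remark just after the Corollary). So neither $g(k)$ nor $\alpha_2(\varphi)$ vanishes individually.

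What does vanish is their difference. The point of the threshold $k_0$ is that for $k\ge k_0$ the summation window of $g(k)$ contains that of $4^k\alpha_2(\varphi)$, and one obtains exactly
\[
g(k)-4^k\alpha_2(\varphi)=4^k\!\!\sum_{2^{k_0}<4^n+1<2^k}\!\!4^{-2n-1}\lambda_0(H_q(4^n))\;=:\;\eta(k)\;\ge 0.
\]
Each term in $\eta(k)$ has $4^n+1>2^{k_0}>|\varphi|+7$, so the encoded computation is the genuine UTM on input $\varphi$; in the non-halting case every such $\lambda_0(H_q(4^n))=0$ and hence $\eta(k)=0$. Combined with \cref{Lemma:Renormalised_GI_Energies} (which makes the $\Rk(H_q)$ summands vanish as well), the interval in \cref{Lemma:Full_Renormalization} collapses to $-2^{-k}L$ as you wanted.

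The same identity also removes the obstacle you flag in the halting case. Since $\eta(k)\ge 0$, the quadratic-in-$L$ coefficient $g(k)-4^k\alpha_2(\varphi)$ is already nonnegative: there is no ``negative tail'' to be dominated. Once the UTM halts, some $\lambda_0(H_q(4^{n_0}+1))>0$, so by \cref{Lemma:Renormalised_GI_Energies} the $\Rk(H_q)$ summand is strictly positive for all sufficiently large $n$, and the lower bound in \cref{Lemma:Full_Renormalization} diverges to $+\infty$ without any competition. Your attempt to bound a negative $O(L^2)$ contribution and then out-compete it with a positive $cL^2$ is therefore unnecessary (and, as you note yourself, you did not actually verify that $c$ beats the constant in your $O(1)$ bound).
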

\begin{proof}
  Consider applying the RG mapping $k>k_0(\varphi)$ times, then we see that
  \begin{align}
    g(k) &= 4^k\sum_{4^n+1<2^k}4^{-2n-1}\lambda_0(H_q(4^n)) \\
         &= 4^k\sum_{4^n+1<2^{k_0}}4^{-2n-1}\lambda_0(H_q(4^n)) + 4^k\sum_{2^{k_0}<4^n+1<2^k}4^{-2n-1}\lambda_0(H_q(4^n)) \\
         &= 4^k\alpha_2(\varphi) +   4^k\sum_{2^{k_0}<4^n+1<2^k}4^{-2n-1}\lambda_0(H_q(4^n)).
  \end{align}
  From \cref{Lemma:Full_Renormalization}, the interval the ground state energy is contained in is
  \begin{adjustwidth}{-2cm}{-2cm}
  \begin{align}
    \bigg[ & LH\sum_{2^{k_0}<4^n+1<2^k}4^{-2n-1}4^k\lambda_0(H_q(4^n)) - 2^{-k}H  \nonumber  \\ &+\sum_{n=1}^{\lfloor\log_4(L/2)\rfloor} \left(
    \left\lfloor\frac{H}{2^{2n+1-(k \ mod  2)}}\right\rfloor
 \left(\left\lfloor\frac{L}{2^{2n+1-(k \ mod  2)}}\right\rfloor -1\right)
      \right)
      \lambda_0(\Rk(H_{q})(4^{n- \lfloor (k \ mod  2)/2\rfloor})), \nonumber \\
      & LH\sum_{2^{k_0}<4^n+1<2^k}4^{-2n-1}4^k\lambda_0(H_q(4^{n})) - 2^{-k}H \nonumber \\
      &+  \sum_{n=1}^{\lfloor\log_4(L/2)\rfloor}
             \left(\left(
             \left\lfloor\frac{H}{2^{2n+1-(k \ mod  2)}}\right\rfloor +1\right)
             \left\lfloor\frac{L}{2^{2n+1-(k \ mod  2)}}\right\rfloor
             \right)
             \lambda_0(\Rk(H_{q})(4^{n-\lfloor (k \ mod  2)/2\rfloor})) \bigg].
  \end{align}
  \end{adjustwidth}
  From \cref{Lemma:Renormalised_GI_Energies}, if $\lambda_0(H_q(4^n+1))=0$ for all $n$, then $\lambda_0(\Rk(H_q)(4^n+1))=0$ for all $n$.
  In this case the ground state energy becomes $\lambda_0(\Rk(H)^{\Lambda(L)})=-2^{-k}L \xrightarrow{L\rightarrow\infty}-\infty$.

  We see that if for any $n_0$, $\lambda_0(H_q(4^{n_0}+1))>0$, then $\lambda_0(\Rk(H_q)(4^n+1))>0$ $\forall n\geq n'_0$ ($n'_0$ not necessarily equal to $n_0$).
  Define $g(k)=\eta(k)+4^k\alpha_2(\varphi)$ then $\eta(k)\geq0$, and we see that the lower bound of the ground state is
  \begin{align}
    L^2\eta(k) - 2^{-k}L  + &\sum_{n=1}^{\lfloor\log_4(L/2)\rfloor}
    \left(
    \left\lfloor\frac{L}{2^{2n+1-(k \ mod  2)}}\right\rfloor
    \left(\left\lfloor\frac{L}{2^{2n+1-(k \ mod  2)}}\right\rfloor -1\right)
    \right) \times \nonumber \\
   &\lambda_0(\Rk(H_{q})(4^{n-\lfloor (k \ mod  2)/2\rfloor})) \xrightarrow{L\rightarrow \infty} +\infty.
  \end{align}

\end{proof}
For $2^{k_0}\leq |\varphi|+7$ the above relationship is not necessarily preserved.
To see why, note that for lengths $\ell \leq |\varphi|+7$ the Gottesman-Irani Hamilonian will not encode the correct computation and hence will pick up some energy.
Since $\lambda_0(\Rk(H_q)(L)|_{S_{br}}) = \min_{2^{k-1}L+1 \leq x\leq 2^kL} \lambda_0(H_q(x))$ rather than \newline
$\lambda_0(\Rk(H_q)(L)|_{S_{br}}) = \lambda_0(H_q(x))$, the energies in the summation term and the $\alpha_2$ term will not exactly cancel out until we reach higher order steps of the RG flow.
This is only rectified once we reach $2^{k_0}> |\varphi|+7$ as the energy integrated out by the projector $\Pi_{gs}$, as given in \cref{Def:T_u_Mapping}, is exactly $\lambda_0(H_q(x))$, not $\lambda_0(\Rk(H_q)(L)|_{S_{br}}) $.

\subsection{Renormalising $H_d$}
The only part of the Hamiltonian acting on $\HS_d$ is $H_d$; there is no coupling to other parts of the Hilbert space and so we can renormalise this part independently.
For concreteness, following \cite{Cubitt_PG_Wolf_Undecidability}, we will let $H_d$ be the critical XY-model with local terms $X_i\ox X_{i+1} + Y_i\ox Y_{i+1} + Z_i \ox \1^{(i+1)} + \1^{(i)}\ox Z_{i+1}$, which can be written as:
\begin{align}
  h_d^{row(i,i+1)} &= X_i\ox X_{i+1} + Y_i\ox Y_{i+1}, \\
  h_d^{col(i,i+1)} &= 0, \\
  h_d^{(1)(i)}     &= 2Z_i.
\end{align}
However any Hamiltonian with a dense spectrum in the thermodynamic limit could be substituted.
Since the critical XY model is critical, it forms a fixed point in any reasonable RG scheme.
Thus we expect any reasonable RG procedure to map the model to itself.
\begin{definition}[Renormalisation Unitary for $h_d$, $V^d$]
We define the isometry implementing the renormalisation operation as
	\begin{adjustwidth}{-2cm}{-2cm}
	\begin{align}\label{Eq:V^d_Definition}
		&V^d_{(i,i+1),(j,j+1)}\left(h_d^{row(i+1,i+2)}(j)+h_d^{row(i+1,i+2)}(j+1)\right)V^{d\dagger}_{(i,i+1),(j,j+1)}  = h_d^{row(i/2,i/2+1)}(j/2), \\
		&V^d_{(i,i+1),(j,j+1)}\left(h_d^{row(i,i+1)}(j)+h_d^{row(i,i+1)}(j+1)\right)V^{d\dagger}_{(i,i+1),(j,j+1)}  = 2Z_{i/2}.
	\end{align}
	\end{adjustwidth}
\end{definition}

%

\subsection{Renormalising $\ket{0}$}
If we wish to preserve the form of the possible ground states depending, it is straightforward to see that this can be done if the states $\ket{0}$ simply get mapped to themselves $\ket{0}^{\ox (2\times 2)} \rightarrow \ket{0}$ under the RG operation.
This can be implemented using the isometry
\begin{align} \label{Eq:V^0_Definition}
  V^0_{(i,i+1),(j,j+1)} := \ket{0}_{(i/2,j/2)}\bra{0}_{(i,j)} \bra{0}_{(i+1,j)}\bra{0}_{(i,j+1)}\bra{0}_{(i+1,j+1)}.
\end{align}

\subsection{The Overall Renormalised Hamiltonian}
Accounting for the renormalisation of all the different parts of the Hamiltonian, we can now define renormalisation group mapping for the entire Hamiltonian.
Recall that the original local terms are
\begin{align} \label{Eq:Original_Full_Hamiltonian}
  h(\varphi)^{(i,j)} =&\ket{0}\bra{0}^{(i)}\otimes (\1 -\ket{0}\bra{0} )^{(j)} + (\1 -\ket{0}\bra{0} )^{(i)}\otimes \ket{0}\bra{0}^{(j)} \\
  &+h_u^{(i,j)}(\varphi) \otimes \1_d^{(i,j)} + \1_u^{(i,j)} \otimes h_d^{(i,j)} \\
  h(\varphi)^{(1)} =& -(1+\alpha_2(\varphi) )\Pi_{ud},
\end{align}
where $\alpha_2(\varphi)$ is defined in \cref{Lemma:Full_Renormalization}.

\begin{definition}[Full Renormalisation Group Mapping] \label{Def:Full_RG_Mapping}
  Let $V^u$, $V^0$, $V^d$ be the isometries defined in \cref{Def:V^u_Isometry}, \cref{Eq:V^0_Definition}, and \cref{Eq:V^d_Definition} respectively.
  Define
  \begin{align}
  V^{r}_{(i,i+1),(j,j+1)} \coloneqq  V^{0}_{(i,i+1),(j,j+1)}\oplus \Big(V^u_{(i,i+1),(j,j+1)}\ox V^d_{(i,i+1),(j,j+1)}\Big).
  \end{align}
  Then the overall RG mapping of local Hamiltonian terms is given by
  \begin{align}
    \sR:& h(\varphi)^{(i,i+1)} \mapsto V^{r\dagger}_{(i,i+1),(j,j+1)} h(\varphi)^{(i,i+1)}  V^{r}_{(i,i+1),(j,j+1)}\\
    \sR:& h(\varphi)^{(i+1,i+2)} \mapsto V^{r\dagger}_{(i+2,i+3),(j,j+1)}  h(\varphi)^{(i+1,i+2)} V^{r}_{(i,i+1),(j,j+1)} V^{r}_{(i+2,i+3),(j,j+1)}
  \end{align}
\end{definition}

\begin{lemma}\label{Lemma:Fully_Renormalised_Hamiltonian}
  Applying the RG mapping from \cref{Def:Full_RG_Mapping} to the terms in \cref{Eq:Original_Full_Hamiltonian} we see that the renormalised 1- and 2-local terms become
  \begin{align}
    \Rk (h(\varphi))^{(i,j)} =&2^k(\ketbra{0}^{(i)}\otimes \Pi_{ud}^{(j)}+ \Pi_{ud}^{(i)}\otimes \ketbra{0}^{(j)})\\
                                   &+ \Rk(h_u(\varphi))^{(i,j)} \otimes \1_d^{(i,j)} + \1_u^{(i,j)} \otimes h_d^{(i,j)} \label{Eq:Renormalised_Full_Hamiltonian} \\
    \Rk(h(\varphi))^{(1)} =&(g(k)  -4^k\alpha_2(\varphi)  - 2^k)\Pi_{ud}^{(i)} + \Rk(h_u^{(1)})^{(i)}
  \end{align}
  where $g(k)$ is defined in \cref{Lemma:Full_Renormalization}.
  All the terms are computable.
\end{lemma}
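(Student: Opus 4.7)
The plan is to apply the isometry $V^r = V^0 \oplus (V^u \otimes V^d)$ from \cref{Def:Full_RG_Mapping} term-by-term to the original local terms in \cref{Eq:Original_Full_Hamiltonian}, leveraging the previously established subcomponent RG results. Because $V^r$ is a direct sum with respect to the decomposition $\ket{0}\oplus(\HS_u\otimes\HS_d)$, every term that is block-diagonal in this decomposition renormalises within its own block. I would therefore treat four groups separately: $h_u\otimes\1_d$, $\1_u\otimes h_d$, the cross-coupling $\ketbra{0}\otimes\Pi_{ud}+\Pi_{ud}\otimes\ketbra{0}$, and the 1-local $-(1+\alpha_2(\varphi))\Pi_{ud}$.

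For $h_u\otimes\1_d$ and $\1_u\otimes h_d$: inductive application of \cref{Lemma:Full_Renormalization} gives $V^u h_u V^{u\dagger}=\Rk(h_u)$, while $V^d$ is designed to leave $h_d$ invariant since the critical XY model is a fixed point of the BRG. This yields the second and third summands of the claim directly. For the cross terms: within a single $2\times 2$ block the image of $V^r$ consists only of states that are entirely $\ket{0}^{\otimes 4}$ or entirely in $(\HS_u\otimes\HS_d)^{\otimes 4}$, so one of the two projectors $\ketbra{0}^{(i)}$ or $\Pi_{ud}^{(j)}$ always annihilates the state and the intra-block cross terms vanish. Across two adjacent renormalised blocks the two parallel boundary edges each contribute a copy of the cross term, which the 2-local RG sums together before renormalising, picking up a factor of $2$ per step and hence $2^k$ after $k$ steps.

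For the 1-local term I would carry out an inductive bookkeeping of constants. The bare $-(1+\alpha_2(\varphi))\Pi_{ud}$ collapses four sites into one per step, scaling to $-4^k(1+\alpha_2(\varphi))\Pi_{ud}$, while additional projector constants $\sum_{m=1}^k 4^m\kappa^{(m)}\1$ and $\sum_{m=1}^k 2^m\gamma^{(m)}\1\otimes\1$ accumulate inside $\Rk(h_u^{(1)})$ and $\Rk(h_u^{row})$ respectively (\cref{Remark:Local_Projectors}). Evaluating the traces defining $\kappa^{(m)}$ and $\gamma^{(m)}$ via \cref{Lemma:Renormalised_GI_Energies} in terms of the $\lambda_0(H_q(4^n))$ appearing in $g(k)$, and splitting the sum defining $g(k)$ at the threshold $2^{k_0(\varphi)}$ from \cref{Corollary:GS_in_Limit}, one sees that the $\kappa^{(m)}$-sum combines with the $-4^k\alpha_2(\varphi)$ counter-term scaling, while the accumulated $\gamma^{(m)}$-sum (descending to 1-local contributions each step from the 2-local $\1_{ceq}\otimes\1_{ceq}$ identities in $h_u^{row}$) produces the $-2^k$. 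The net residual constant outside $\Rk(h_u^{(1)})$ is therefore $g(k)-4^k\alpha_2(\varphi)-2^k$. Computability of every term follows from the computability of $V^0$, $V^u$, and $V^d$ already verified in \cref{Lemma:Classical_RG_Hamiltonian,Lemma:RG_GI_Properties} and closure of the computable functions under finite composition.

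The main obstacle will be the constant-matching in the 1-local term: showing precisely how the scaled $\alpha_2$ counter-term, the accumulated $\Pi_{gs}$-trace constants $\sum_m 4^m\kappa^{(m)}$, and the descent of the 2-local $\gamma^{(m)}$-constants into 1-local contributions conspire to give exactly $g(k)-4^k\alpha_2(\varphi)-2^k$. This relies on the telescoping in which $\alpha_2(\varphi)$ absorbs precisely the small-scale tail $4^n+7\le|\varphi|$ of $g(k)$, so that the identification is clean only in the regime $k\ge k_0(\varphi)$; care must be taken at lower $k$ where (as noted after \cref{Corollary:GS_in_Limit}) the integrated-out energy differs from $\lambda_0(H_q(\cdot))$ because short Gottesman-Irani chains fail to encode the full computation.
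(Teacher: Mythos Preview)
Your approach is essentially the same as the paper's: both exploit the block-diagonal structure of $V^r = V^0 \oplus (V^u \otimes V^d)$ to renormalise each sector independently, obtain the $2^k$ coefficient on the cross term by counting the two parallel boundary edges per RG step, and defer the 1-local coefficient bookkeeping to the identity-term accounting from \cref{Remark:Local_Projectors} (the paper simply points to \cref{Lemma:put-promise-together} for this).

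One minor correction: your closing concern about the identification being ``clean only in the regime $k\ge k_0(\varphi)$'' is misplaced for this lemma. The formula $g(k)-4^k\alpha_2(\varphi)-2^k$ is a purely algebraic rearrangement of the accumulated identity coefficients and holds for \emph{all} $k$; no telescoping against $\alpha_2$ is needed here. The $k_0(\varphi)$ threshold (and the subtlety about short Gottesman--Irani chains not encoding the full computation) enters only later, in \cref{Corollary:GS_in_Limit} and \cref{Theorem:tau_2_Divergence}, where one analyses the \emph{sign} and limiting behaviour of this coefficient rather than its form.
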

\begin{proof}
  Note that the RG isometry acts block-diagonally with respect to the subspaces spanned by $\ket{0}^{\ox (2\times 2)}$ and those spanned by states in $(\Rk(\HS_u)\otimes \HS_d)^{\ox(2\times 2)}$.
  Furthermore, any state are not in one of the two subspaces is projected out.
  The $h_u(\varphi)$, $h_d$ and 1-local terms transform as they would in the absence of the $\ket{0}$ state, thus giving the terms seen above.
  The explicit coefficients are calculated in \cref{Lemma:put-promise-together} in the appendix.
  The term $g(k)$ is computable for any $k$ by calculating the $\lambda_0(H_q)(4^n+1)$ for all $n\leq 2k+1$.
  Since this is a finite dimensional matrix for any finite $n$, this is a computable quantity.

  The form of the overall renormalisation isometry means the $\ketbra{0}^{(i)}\ox \Pi_{ud}^{(j)}$ term must be preserved in form, however, we note that because all states of $2\times 2$ blocks in different subspaces in the previous RG step must be in $\ket{0}^{\ox (2\times 2)}$ or  $(\Rk(\HS_u)\otimes \Rk(\HS)_d)^{\ox(2\times 2)}$, then two neighbouring blocks must pick up an energy penalty of $\times 2$ of the previous local terms.
\end{proof}

\begin{corollary}\label{Corollary:Uncomputable_Parameters} \label{Corollary:Family}
  The local terms of the initial Hamiltonian $h(\varphi)$ and all further renormalised local terms belong to a family of Hamiltonians\linebreak $\mathcal{F}(\varphi, \tau_1,\tau_2,\{\alpha_i\}_i,\{\beta_i\}_i)$, which all take the form
  \begin{align}
    \Rk (h(\varphi))^{(i,j)} =&\tau_1(\ketbra{0}^{(i)}\otimes \Pi_{ud}^{(j)}+ \Pi_{ud}^{(i)} \otimes \ketbra{0}^{(j)})\\
                                   &+ \Rk(h_u(\varphi,\{\beta_t\}_t))^{(i,j)} \otimes \1_d^{(i,j)} + \1_u^{(i,j)} \otimes \Rk(h_d)^{(i,j)} \\
    \Rk(h(\varphi))^{(1)} =& \tau_2\Pi_{ud} + \Rk(h_u(\varphi, \{\alpha_t\}_t))^{(1)}, \label{Eq:tau_2_Parameter}
  \end{align}
  where the sets $\{\alpha_t\}_t$, $\{\beta_i\}$ characterises the parameters of the renormalised Gottesman-Irani Hamiltonian.
  Furthermore, for any $k\in \mathbb{N}$, the coefficients $\tau_1(k)$, $\tau_2(k), \{\alpha_t(k)\}_t$ and $\{\beta_t(k)\}_t$ are computable.
\end{corollary}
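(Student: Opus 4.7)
The plan is to read off the parameter identifications directly from \cref{Lemma:Fully_Renormalised_Hamiltonian}, which has already produced the renormalised local terms in precisely the form stated in the corollary. First I would set $\tau_1(k) := 2^k$ and $\tau_2(k) := g(k) - 4^k\alpha_2(\varphi) - 2^k$, and then identify the tuples $\{\alpha_t(k)\}_t$ and $\{\beta_t(k)\}_t$ as the collections of scalar coefficients appearing in $\Rk(h_u^{(1)})$ and $\Rk(h_u)$ respectively. Concretely, these are provided by \cref{Def:h_u_RG_Mapping} together with \cref{Remark:Local_Projectors}: they consist of the constants $\kappa^{(m)}$, $\gamma^{(m)}$ for $1 \leq m \leq k$ produced by the projection step $\Pi_{gs}(m)$, together with the coefficients $\beta$, $\alpha(n)$, $e^{i\pi\varphi}$, $e^{i\pi 2^{-|\varphi|}}$ inherited from \cref{Definition:CPW15_Hamiltonian} and carried through unchanged inside the blocks produced by the isometries $V^u[k]$. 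The initial Hamiltonian $h(\varphi)$ itself corresponds to the $k=0$ case of the family.

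Next I would establish computability of all parameters at any finite $k$. The value $\tau_1(k) = 2^k$ is trivially computable. For $\tau_2(k)$, note that $g(k)$ is a finite sum indexed by $n$ with $4^n + 1 < 2^k$ (so at most roughly $k/2$ terms), each of the form $4^{k-2n-1}\lambda_0(H_q(4^n))$; each $\lambda_0(H_q(4^n))$ is the minimum eigenvalue of a fixed finite-dimensional Hermitian matrix and is therefore algorithmically computable to arbitrary precision, and similarly $\alpha_2(\varphi)$ is a finite sum indexed by $4^n + 7 > |\varphi|$. Hence $\tau_2(k)$ is computable. For the families $\{\alpha_t(k)\}, \{\beta_t(k)\}$, the relevant quantities are the traces $\kappa^{(k)} = \Tr(\Pi_{gs}(k)h_{q_2}^{\prime(I,J)})$ and $\gamma^{(k)} = \Tr(h_q^{\prime(I,I+1)}\Pi_{gs}(k)^{(I)}\otimes\Pi_{gs}(k)^{(I+1)})$ from \cref{Remark:Local_Projectors}; these reduce to finite-dimensional linear algebra once $\Pi_{gs}(k)$ has been constructed, and the states $\ket{\psi_{hist}(4^n+1)}$, $\ket{\psi_{halt}(4^n+1)}$ appearing in $\Pi_{gs}(k)$ can be obtained by explicit diagonalisation of the finite-dimensional operator $H_q(4^n+1)$ using \cref{Lemma:Ground_State_GI_Form}.

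The only genuinely delicate point is that the definition of $\Pi_{gs}(k)$ in \cref{Def:T_u_Mapping} branches on whether the QTM encoded in $H_q(4^n+1)$ halts, so computability of the $\alpha_t$, $\beta_t$ ultimately hinges on deciding halting at the relevant lengths. The resolution is that, for any \emph{fixed} finite $k$, only finitely many lengths $4^n+1$ with $2^{k-1} < 4^n+1 < 2^k$ are relevant, and by \cref{QTM_in_local_Hamiltonian} the Gottesman-Irani Hamiltonian at length $L = 4^n+1$ encodes only $T(L) = \poly(L)\,2^L < \infty$ time-steps of the QTM; halting within this explicit time bound is decidable by finite simulation. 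Therefore each $\Pi_{gs}(k)$, and hence every coefficient in $\mathcal{F}$, is computable step-by-step --- even though, as \cref{Theorem:tau_2_Divergence_Main_Results} shows, the trajectory $k \mapsto \tau_2(k)$ in the limit $k \to \infty$ is not. This is precisely the tension between each-step computability and uncomputability of the flow that the corollary is designed to expose, and it is also the main conceptual hurdle: one must argue that unbounded-time halting is irrelevant at every finite stage, so that the RG map remains effectively computable while the infinite iteration encodes the halting problem.
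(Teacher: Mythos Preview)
Your proposal is correct and follows exactly the paper's approach: the paper's proof is the single line ``Follows immediately from \cref{Lemma:Fully_Renormalised_Hamiltonian}'', and you have simply unpacked that reference, reading off $\tau_1(k)=2^k$ and $\tau_2(k)=g(k)-4^k\alpha_2(\varphi)-2^k$ and invoking finite-dimensional diagonalisation for computability. Your extra paragraph on the halting branch in $\Pi_{gs}(k)$ is a useful elaboration that the paper leaves implicit; one small slip is that the indexing condition you quote for $\alpha_2(\varphi)$ would give an infinite sum as written --- the intended sum (cf.\ the proof of \cref{Corollary:GS_in_Limit}) is over the finitely many $n$ with $4^n+1<2^{k_0(\varphi)}$, which is what makes it computable.
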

\begin{proof}
  Follows immediately from \cref{Lemma:Fully_Renormalised_Hamiltonian}.
\end{proof}

\begin{lemma}\label{Lemma:Thermodynamic_Limit_Properties}
  Let $\Rk(h(\varphi))^{(i,j)},\Rk(h(\varphi))^{(1)}$ be the local terms defined by the RG mapping in \cref{Def:Full_RG_Mapping} for any $k>k_0(|\varphi|)$.
  The Hamiltonian $\Rk(H)$ defined by these terms then has the following properties:
  \begin{enumerate}
  \item If the unrenormalised Hamiltonian $H(\varphi)$ has a zero energy ground state with a spectral gap of 1/2, then $\Rk(H)$ also has a zero energy ground state with zero correlations functions, and has a spectral gap of $\geq2^{k}$.
  \item If the unrenormalised Hamiltonian $H(\varphi)$ has a ground state energy $-\infty$ with a dense spectrum above this, then $\Rk(H)$ also a ground state energy of $-\infty$ with a dense spectrum, and has algebraically decaying correlation functions.
  \end{enumerate}
\end{lemma}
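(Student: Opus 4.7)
The plan is to reduce the proof to the block-decomposition of $\Rk(H)$ induced by the direct-sum structure of the renormalisation isometry, $V^{r}_{(i,i+1),(j,j+1)} = V^{0}\oplus(V^{u}\otimes V^{d})$. Because each blocking unitary maps $\ket{0}^{\otimes 4}\mapsto\ket{0}$, maps $(\HS_u\otimes\HS_d)^{\otimes 4}$ into $\Rk(\HS_u)\otimes\Rk(\HS_d)$, and annihilates everything else, the renormalised Hilbert space on the whole lattice splits into a pure $\ket{0}^{\Lambda}$ sector, a pure $(\Rk(\HS_u)\otimes\Rk(\HS_d))^{\Lambda}$ sector, and mixed sectors. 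By \cref{Lemma:Fully_Renormalised_Hamiltonian} any nearest-neighbour bond that mismatches the two sectors carries the penalty $2^{k}(\ketbra{0}\otimes\Pi_{ud}+\Pi_{ud}\otimes\ketbra{0})$, so every mixed state has energy at least $2^{k}$. Combined with the $1$-local shift $(g(k)-4^{k}\alpha_{2}(\varphi)-2^{k})\Pi_{ud}$, the low-energy spectrum therefore satisfies
\begin{align}
  \spec \Rk(H) \;=\; \{0\} \;\cup\; \bigl(\spec \Rk(H_u) + \spec \Rk(H_d)\bigr) \;\cup\; S'',
\end{align}
with $S''\subseteq[2^{k},\infty)$, exactly mirroring the decomposition of the unrenormalised spectrum in \cref{Eq:Spectrum_of_H}.

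For Part 1, assume $H(\varphi)$ is gapped, so $\lim_{L\to\infty}\lambda_{0}(H_u^{\Lambda(L)})=+\infty$. By \cref{Corollary:GS_in_Limit} and the hypothesis $k\geq k_{0}(|\varphi|)$, the lower bound in \cref{Lemma:Full_Renormalization} gives $\lim_{L\to\infty}\lambda_{0}(\Rk(H_u)^{\Lambda(L)})=+\infty$; the contribution $g(k)-4^{k}\alpha_{2}(\varphi)$ is non-negative past $k_{0}$, so the shift $C(k)L^{2}$ cannot overcome the diverging ground-state energy of $\Rk(H_u)$. Hence the $(\Rk(\HS_u)\otimes\Rk(\HS_d))^{\Lambda}$ block lies at energy $+\Omega(L^{2})$, and the unique ground state of $\Rk(H)$ is $\ket{0}^{\Lambda}$ with eigenvalue $0$. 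Correlation functions of a pure product state vanish identically. The first excited state sits either in a mixed sector (energy $\geq 2^{k}$) or in the $(\Rk(\HS_u)\otimes\Rk(\HS_d))^{\Lambda}$ block (energy $\to+\infty$), so $\Delta(\Rk(H)^{\Lambda(L)})\geq 2^{k}$ for all sufficiently large $L$.

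For Part 2, assume $H(\varphi)$ is gapless, so $\lim_{L\to\infty}\lambda_{0}(H_u^{\Lambda(L)})=-\infty$. By \cref{Corollary:GS_in_Limit} the same divergence holds for $\Rk(H_u)$, so the ground state of $\Rk(H)$ lies in the $(\Rk(\HS_u)\otimes\Rk(\HS_d))^{\Lambda}$ block. By \cref{Lemma:Full_Renormalization}~\ref{H_e_Renorm:Separable_Eigenstates}--\ref{H_e_Renorm:GS_Structure}, eigenstates of $\Rk(H_u)$ are separable across the tiling and quantum layers, and combine with $\Rk(H_d)$ eigenstates as a product. The map $V^{d}$ is designed precisely so that $\Rk(h_d)$ reproduces the critical XY interaction on the coarse-grained lattice, hence $\Rk(H_d)$ retains a dense spectrum and algebraically decaying ground-state correlations in the thermodynamic limit. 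Taking sums of spectra preserves density above the ground energy, and correlation functions of the product ground state decay at the slower of the two rates, which is the algebraic decay inherited from $\Rk(H_d)$.

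The main technical obstacle is the careful accounting of the constant shifts. In the gapped case one must show that the $-4^{k}\alpha_{2}(\varphi)L^{2}$ shift does not depress the energy of the entangled sector below $0$; this requires the cancellation embedded in \cref{Corollary:GS_in_Limit}, where the summation $g(k)$ is split into the part $\leq k_{0}$ (which exactly cancels $4^{k}\alpha_{2}(\varphi)$) and the strictly non-negative tail contributed by renormalised Gottesman--Irani ground states on Robinson squares of size $>2^{k_{0}}$. A secondary subtlety is that the $\Pi_{gs}(k)$ projector in \cref{Def:T_u_Mapping} only selects the \emph{expected} ground state on each square, so in the halting case one must verify that discarding other locally low-energy states does not open an artificial gap in the $\Rk(H_u)$ sector; this follows from the fact that these discarded states are distinguishable locally and sit above the halting-state energy $1-\cos(\pi/2T)$, which is strictly smaller than the $2^{k}$ gap separating the two sectors.
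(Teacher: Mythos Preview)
Your proposal is correct and follows essentially the same route as the paper: both arguments split $\Rk(H)$ into the $\ket{0}^{\Lambda}$ sector, the pure $(\Rk(\HS_u)\otimes\Rk(\HS_d))^{\Lambda}$ sector, and mixed sectors carrying energy $\geq 2^{k}$, then invoke \cref{Corollary:GS_in_Limit} to locate the ground state and the critical XY model to supply density and algebraic correlations in the gapless case. The paper phrases the sector decomposition via the mutual commutation of $\Rk(H_0)$, $\Rk(\tilde H_u)$, $\Rk(\tilde H_d)$ together with $\spec\Rk(H_0)\subset 2^{k}\Z_{\geq 0}$, whereas you phrase it via the direct-sum structure of $V^{r}$; these are equivalent here, and your closing remarks on the shift cancellation and the role of $\Pi_{gs}(k)$ are more explicit than the paper's proof, which simply defers those points to the cited lemmas.
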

\begin{proof}
	First examine the spectrum of the renormalised Hamiltonian from \cref{Lemma:Fully_Renormalised_Hamiltonian}: for convenience let
	\begin{equation}
		\Rk(h_0)^{(i,j)}:=2^k(\ketbra{0}^{(i)}\otimes \Pi_{ud}^{(j)}+ \ketbra{0}^{(i)}\otimes \Pi_{ud}^{(j)}).
	\end{equation}
	  Further let
	  \begin{align}
	  \Rk(H_0^{\Lambda(L)})&:=\sum_{\langle i,j\rangle}\Rk(h_0)^{(i,j)}, \\ \Rk(\tilde{H_u})^{\Lambda(L)}&:=\sum_{\langle i,j\rangle} \1^{(i,j)}_d \ox \Rk(h_u)^{(i,j)} \\ \Rk(\tilde{H_d})^{\Lambda(L)}&:=\sum_{\langle i,j\rangle} \1^{(i,j)}_u \ox \Rk(h_d)^{(i,j)}
	  \end{align}
	We note $\Rk(H_0)^{\Lambda}$,$ \Rk(\tilde{H_d})^{\Lambda}$, $ \Rk(\tilde{H_u})^{\Lambda}$ all commute.
	Further note that
	\begin{equation}
		\spec\Rk(H_0)^{\Lambda}\subset 2^k\mathbb{Z}_{\geq 0}.
	\end{equation}

	If $\lambda_0(H(\varphi))=0$, then it implies $\lambda_0(H_{u}(\varphi))\rightarrow +\Omega(L^2)$ (see \cref{Section:Properties_Of_Spec_Gap}).
	By \cref{Corollary:GS_in_Limit}, this implies $\lambda_0(\Rk(H_u(\varphi)))\rightarrow +\Omega(L^2)$ too.
	Hence the ground state is the zero-energy $\ket{0}^{\Lambda(L)}$ state.
	Since $\spec\Rk(H_0)^{\Lambda}\subset 2^k\mathbb{Z}_{\geq 0}$, then
	the first excited state (provided $L$ is sufficiently larger) has energy at least $2^k$.
	Finally, the state $\ket{0}^{\Lambda(L)}$ has zero correlations.

	If  $\lambda_0(H(\varphi))=-\Omega(L)$, then $\lambda_0(H_{u}(\varphi))\rightarrow-\Omega(L)$ (see \cref{Section:Properties_Of_Spec_Gap}).
	By \cref{Corollary:GS_in_Limit}, this implies  $\lambda_0(\Rk(H))\rightarrow-\Omega(L)$.
	Since $\spec(\Rk(H_0))\subset 2^k\mathbb{Z}_{\geq 0}$, then the ground state is the ground state of $\Rk(\tilde{H_d})^{\Lambda(L)}+\Rk(\tilde{H_u})^{\Lambda(L)}$.
	Since $\spec(\Rk(\tilde{H_d})^{\Lambda(L)})$ becomes dense in the thermodynamic limit, we see that the Hamiltonian has a dense spectrum in the thermodynamic limit.
	Let $\ket{\psi}_u$ and $\ket{\phi}_d$ be the ground states of $\Rk(H_u)^{\Lambda(L)}$ and $\Rk(H_d)^{\Lambda(L)}$ respectively, then the ground state of $\Rk(\tilde{H_d})^{\Lambda(L)}+\Rk(\tilde{H_u})^{\Lambda(L)}$ is $\ket{\psi}_u\ket{\phi}_d$
	Since $\Rk(H_d)^{\Lambda(L)}$ is just the critical XY-model and its ground state has algebraically decaying correlations \cite{Lieb_Schultz_Mattis_1961}, hence the overall ground state has algebraically decaying correlations.
\end{proof}

\subsection{Order Parameter Renormalisation}

In \cref{Sec:Order_Parameter} we saw that the observable $O_{A/B}(r)$ functioned as an order parameter which distinguished the two phases.
Defining $V_r\coloneqq V^{0}_{(i,i+1),(j,j+1)}\oplus \Big(V^u_{(i,i+1),(j,j+1)}\ox V^d_{(i,i+1),(j,j+1)}\Big)$, and $V_r[k]$ as the corresponding isometry for the $k^{th}$ step of the RG process, then define
\begin{align}
\Rk(O_{A/B})(r) := V^r[k] O_{A/B}(2^kr)V^{r\dagger}[k].
\end{align}
The following lemma then holds:
\begin{lemma} \label{Lemma:Order_Parameter_Renormalized}
Let $\ket{\psi_{gs}}$ be the ground state of $H_u$.
The expectation value of the order parameter satisfies:
\begin{equation}
	\bra{\psi_{gs}} \Rk(O_{A/B})(r)\ket{\psi_{gs}}
	=
	\begin{cases}
 		1 & \text{if} \ \lambda_0(\Rk(H))=0 \\
 		0 & \text{if} \ \lambda_0(\Rk(H))=\Omega(L).
	\end{cases}
\end{equation}
\end{lemma}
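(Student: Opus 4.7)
The plan is to analyse each of the two cases by using the preceding structural lemmas (in particular \cref{Lemma:Thermodynamic_Limit_Properties} and the block-diagonal direct-sum form of $V^r$) to characterise both the ground state $\ket{\psi_{gs}}$ and the action of the inverse isometry $V^{r\dagger}[k]$ on it. The key observation is that $V^r_{(i,i+1),(j,j+1)} = V^0 \oplus (V^u \ox V^d)$ respects the splitting $\HS = \ket{0}\oplus (\HS_u\ox\HS_d)$ site-wise at every stage of the RG, and that $V^0$ acts as $\ket{0}^{\ox(2\times 2)} \mapsto \ket{0}$ in each block. Consequently $V^{r\dagger}[k]$ maps the renormalised $\ket{0}$-subspace onto the $\ket{0}$-subspace of the original lattice of side $2^k L$, and similarly the $(\Rk(\HS_u) \ox \Rk(\HS_d))$-subspace onto a subspace that is orthogonal to $\ket{0}$ at every site.

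In the first case, $\lambda_0(\Rk(H))=0$ together with \cref{Lemma:Thermodynamic_Limit_Properties} identifies $\ket{\psi_{gs}}=\ket{0}^{\Lambda(L)}$. Iterating $V^{0\dagger}$ gives $V^{r\dagger}[k]\ket{0}^{\Lambda(L)}=\ket{0}^{\Lambda(2^k L)}$, on which $O_{A/B}(2^k r)$ acts as the identity (since each local projector $\ketbra{0}^{(i)}$ yields $1$); applying $V^r[k]$ back then returns $\ket{0}^{\Lambda(L)}$, so the expectation value is $1$. In the second case, $\lambda_0(\Rk(H))=-\Omega(L)$ and \cref{Lemma:Thermodynamic_Limit_Properties} give $\ket{\psi_{gs}}=\ket{\psi}_u\ket{\phi}_d$, which has site-local support entirely in $\Rk(\HS_u)\ox\Rk(\HS_d)$. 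By the direct-sum structure of $V^r$, the pre-image $V^{r\dagger}[k]\ket{\psi_{gs}}$ lies in $(\HS_u\ox\HS_d)^{\ox\Lambda(2^k L)}$, which is orthogonal to $\ket{0}$ on every site; hence each term $\ketbra{0}^{(i)}$ in $O_{A/B}(2^k r)$ annihilates it, giving expectation value $0$.

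The only mildly delicate step is verifying that the site-wise orthogonality of the two subspaces is indeed preserved by iteration of $V^r$, and that no cross terms arise when computing $V^r[k]\, O_{A/B}(2^k r)\, V^{r\dagger}[k]$. This follows from the explicit direct-sum form in \cref{Def:Full_RG_Mapping}: any $2\times 2$ block mixing the two subspaces is projected out, so the inductively renormalised Hilbert space still decomposes as $\ket{0}\oplus (\Rk(\HS_u)\ox\Rk(\HS_d))$ at each site, and a pure-$\ket{0}$ (resp.\ pure non-$\ket{0}$) state at the renormalised level has pure-$\ket{0}$ (resp.\ pure non-$\ket{0}$) pre-image at every site of the original lattice. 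With this in hand the two case computations above are immediate.
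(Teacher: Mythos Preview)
Your proof is correct and follows essentially the same reasoning as the paper's: both arguments identify the ground state in the two cases (via \cref{Lemma:Thermodynamic_Limit_Properties}) and then use the block-diagonal structure $V^r = V^0 \oplus (V^u\ox V^d)$ together with $\ket{0}^{\ox(2^k\times 2^k)}\mapsto \ket{0}$ to compute the expectation value. The paper's proof is a three-line sketch, whereas you have spelled out explicitly why $V^{r\dagger}[k]$ preserves the site-wise $\ket{0}$ versus non-$\ket{0}$ dichotomy and why no cross terms arise---details the paper leaves implicit.
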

\begin{proof}
	If $\lambda_0(\Rk(H)) \rightarrow -\Omega(L)$, then the ground state is that of $H_u^{(\Lambda(L))}$, and hence the state $\ket{0}$ does not appear anywhere in the ground state.~\newline
	If $\lambda_0(\Rk(H))=0$, the ground state is $\ket{0}^{\Lambda(L)}$.
	Since, under $V_r[k]$, $\ket{0}^{\ox 2^k\times 2^k}\mapsto \ket{0}$, the lemma follows.
\end{proof}
Thus the renormalised order parameter still acts as an order parameter for the renormalised Hamiltonian.
In particular, it still undergoes a non-analytic change when moving between phases.

\subsection{Uncomputability of RG flows}
We finally have all the ingredients for the proof of our two main results.
\begin{theorem}[Exact RG flow for undecidable Hamiltonian]\label{Theorem:Undecidability_of_RG_Flows_Formal}
  Let $H$ be the Hamiltonian defined in \cite{Cubitt_PG_Wolf_Undecidability}.
  The renormalisation group procedure, defined in \cref{Def:Full_RG_Mapping}, has the following properties:
  \begin{enumerate}
  \item \label{RG_Condition_1_2} $\sR(h)$ is computable.
  \item \label{RG_Condition_2_2} If $H(\varphi)$ is gapless, then $\Rk(H(\varphi))$ is gapless, and if $H(\varphi)$ is gapped, then $\Rk(H(\varphi))$ is gapped.
  \item \label{RG_Condition_3_2} For the order parameter of the form $O_{A/B}(r)$ which distinguished the phases of $H^{\Lambda(L)}$, there exists a renormalised observable $\Rk(O_{A/B})(r)$ which distinguishes the phases of $\Rk(H)^{\Lambda(L)}$ and is non-analytic at phase transitions. 
  \item \label{RG_Condition_4_2} For $k$ iterations, the renormalised local interactions of $\Rk(H)$ are computable and belong to the family $\mathcal{F}(\varphi, \tau_1, \tau_2, \{\beta_i\})$, as defined  in \cref{Corollary:Family}.
  \item \label{RG_Condition_5_2} If $H(\varphi)$ initially has algebraically decaying correlations, then $\Rk(H(\varphi))$ also has algebraically decaying correlations.
    If $H(\varphi)$ initially has zero correlations, then $\Rk(H(\varphi))$ also has zero correlations.
  \end{enumerate}
\end{theorem}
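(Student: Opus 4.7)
The plan is to assemble the theorem as a direct consequence of the lemmas established in \cref{sec:all_together}, since essentially all of the technical work has already been done in the construction and analysis of the individual RG maps for $h_T$, $h_q$, $h_u$, $h_d$ and $\ket{0}$. The overall RG scheme is the one specified in \cref{Def:Full_RG_Mapping}, and the proof is largely a matter of checking each of the five claims against results we have already proved.

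First I would dispatch claim~\ref{RG_Condition_1_2}: the isometry $V^{r}_{(i,i+1),(j,j+1)}$ is given explicitly as a direct sum of $V^0$ (which sends $\ket{0}^{\otimes 2\times 2}\mapsto\ket{0}$) and $V^u\otimes V^d$, each of which is built from finite, explicitly listed matrix elements acting on a finite-dimensional local Hilbert space. Applying $V^{r\dagger}\,h\,V^{r}$ is therefore a finite matrix computation, and so $\sR$ is computable. Iterating the same argument with $\Pi_{gs}(k)$ (see \cref{Def:T_u_Mapping}) gives claim~\ref{RG_Condition_4_2}: the exact form of $\Rk(h(\varphi))$ is the one written down in \cref{Lemma:Fully_Renormalised_Hamiltonian}, and \cref{Corollary:Uncomputable_Parameters} already records that all coefficients $\tau_1(k),\tau_2(k),\{\alpha_t(k)\}_t,\{\beta_t(k)\}_t$ are computable for any fixed $k$, so $\Rk(h(\varphi))\in\mathcal F(\varphi,\tau_1,\tau_2,\{\alpha_i\},\{\beta_i\})$.

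Claim~\ref{RG_Condition_2_2} is the one that requires the most care, since we must translate the ground state energy statements of \cref{Corollary:GS_in_Limit} and \cref{Lemma:Thermodynamic_Limit_Properties} into the sharp notions of \cref{Def:gapped,Def:gapless}. In the gapped case, $\lambda_0(H_u^{\Lambda(L)})=+\Omega(L^2)$ forces the renormalised ground state (for $k\ge k_0(|\varphi|)$) to be $\ket{0}^{\Lambda}$, while the block-diagonal structure $[\Rk(H_0),\Rk(\tilde H_u)]=[\Rk(H_0),\Rk(\tilde H_d)]=0$ and $\spec(\Rk(H_0)^{\Lambda})\subset 2^k\Z_{\ge 0}$ proved in \cref{Lemma:Thermodynamic_Limit_Properties} gives a spectral gap of at least $2^k\ge 1$ above a unique ground state, as required by \cref{Def:gapped}. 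In the gapless case, \cref{Lemma:Thermodynamic_Limit_Properties} already shows the ground state factorises as $\ket{\psi}_u\ket{\phi}_d$ with $\phi$ the critical XY ground state, so $\spec\Rk(H)^{\Lambda(L)}$ inherits the dense spectrum of $H_d$ above $\lambda_0$, and I would only need to check the $(\epsilon,L_0)$-quantifiers of \cref{Def:gapless} from the known dense-spectrum properties of the XY model. For $k<k_0(|\varphi|)$, the finitely many additional iterations affect only the coefficients $\tau_1,\tau_2$ through $g(k)-4^k\alpha_2(\varphi)$, so the equivalence of the gapped/gapless dichotomy for $H(\varphi)$ and $\Rk(H(\varphi))$ follows by combining these cases.

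Claim~\ref{RG_Condition_3_2} is immediate from \cref{Lemma:Order_Parameter_Renormalized}: the renormalised operator $\Rk(O_{A/B})(r):=V^r[k]\,O_{A/B}(2^k r)\,V^{r\dagger}[k]$ takes expectation value $1$ in the $\ket{0}^{\Lambda}$ phase and $0$ in the dense-spectrum phase, so it is a valid local order parameter whose expectation value jumps non-analytically between the two phases. Finally, claim~\ref{RG_Condition_5_2} reads off from the explicit form of the renormalised ground state established in \cref{Lemma:Thermodynamic_Limit_Properties}: in the gapped case the ground state is the product state $\ket{0}^{\Lambda(L)}$ and has strictly zero connected correlations, while in the gapless case the ground state is $\ket{\psi}_u\ket{\phi}_d$ with $\ket{\phi}_d$ the XY ground state, whose algebraic correlations \cite{Lieb_Schultz_Mattis_1961} are inherited by $\Rk(H(\varphi))$. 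The hard part of the whole argument is really the gapped/gapless case of claim~\ref{RG_Condition_2_2}, not because the spectral information is missing but because matching it to the precise quantifiers of \cref{Def:gapped,Def:gapless} requires separately handling $k<k_0(|\varphi|)$ and $k\ge k_0(|\varphi|)$, and tracking the fact that the $\alpha_2$ counterterm only exactly compensates $g(k)$ once $2^k>|\varphi|+7$.
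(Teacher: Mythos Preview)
Your proposal is correct and follows essentially the same route as the paper: each claim is dispatched by citing the corresponding lemma (\cref{Def:Full_RG_Mapping} and \cref{Lemma:Fully_Renormalised_Hamiltonian} for claim~\ref{RG_Condition_1_2}, \cref{Lemma:Thermodynamic_Limit_Properties} for claims~\ref{RG_Condition_2_2} and~\ref{RG_Condition_5_2}, \cref{Lemma:Order_Parameter_Renormalized} for claim~\ref{RG_Condition_3_2}, and \cref{Corollary:Uncomputable_Parameters} for claim~\ref{RG_Condition_4_2}). You are in fact more careful than the paper on one point: \cref{Lemma:Thermodynamic_Limit_Properties} is stated only for $k>k_0(|\varphi|)$, and the paper's own proof simply invokes it without separately addressing the finitely many iterations $k\le k_0(|\varphi|)$, whereas you flag this and sketch how to handle it via the explicit coefficients $g(k)-4^k\alpha_2(\varphi)$.
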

\begin{proof}
  Claim \ref{RG_Condition_1_2} follows from \cref{Def:Full_RG_Mapping}, where the renormalisation isometries and subspace restrictions are explicitly written down and are manifestly computable, and hence for any $k$ the coefficients in \cref{Lemma:Fully_Renormalised_Hamiltonian} are computable.
  Claim \ref{RG_Condition_2_2} follows from \cref{Lemma:Thermodynamic_Limit_Properties}: we see that, for all $k>k_0$ the spectrum below energy $2^{k-1}$ is either dense with a ground state with energy at $-\infty$, or is empty except for a single zero energy state, corresponding to the gapped and gapless cases of $H(\varphi)$.
  Claim \ref{RG_Condition_3_2} follows from \cref{Lemma:Order_Parameter_Renormalized}.
  Claim \ref{RG_Condition_4_2} follows from \cref{Corollary:Uncomputable_Parameters}.
  Claim \ref{RG_Condition_5_2} follows from the properties of the ground states in the cases $\lambda_0(H_u^{\Lambda(L)})\rightarrow \pm \infty$ and by \cref{Lemma:Thermodynamic_Limit_Properties}.

\end{proof}

\begin{theorem}[Uncomputability of RG flow]\label{Theorem:tau_2_Divergence}
  Let $h(\varphi)$, $\varphi\in \mathbb{Q}$, be the full local interaction of the Hamiltonian from \cite{Cubitt_PG_Wolf_Undecidability}.
  Consider $k$ iterations of the RG map from \cref{Def:Full_RG_Mapping} acting on $H(\varphi)$, such that the renormalised local terms are given by  $\Rk(h(\varphi))$, which can be parameterised as per \cref{Corollary:Uncomputable_Parameters}.

  If the UTM is non-halting on input $\varphi$, then for all $k>k_0(\varphi)$ we have that $\tau_2(k)=-2^{k}$, for some computable $k_0(\varphi)$.
  If the UTM halts on input $\varphi$, then there exists an uncomputable $k_h(\varphi)$ such that for $k_0(\varphi)<k< k_h(\varphi)$ we have $\tau_2(k)=-2^{k}$, and for all $k>k_h(\varphi)$ then $\tau_2(k)=-2^{k}+\Omega(4^{k-k_h(\varphi)})$.
\end{theorem}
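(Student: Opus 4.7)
The plan is to unfold the explicit formula for $\tau_2(k)$ established in \cref{Lemma:Fully_Renormalised_Hamiltonian} and \cref{Corollary:Uncomputable_Parameters}, namely
\[
  \tau_2(k) \;=\; g(k)\,-\,4^k\alpha_2(\varphi)\,-\,2^k,
  \qquad
  g(k)\;=\;4^k\!\!\sum_{4^n+1<2^k}\!\!4^{-2n-1}\,\lambda_0(H_q(4^n)),
\]
and then to split into the halting and non-halting cases. As in the proof of \cref{Corollary:GS_in_Limit}, the constant $\alpha_2(\varphi)$ is exactly the partial sum of $g(k)/4^k$ over those ``short-chain'' indices $4^n+1<2^{k_0}$ on which the Gottesman--Irani chain does not yet faithfully encode the UTM on input $\varphi$, so with $k_0(\varphi)$ the smallest integer such that $2^{k_0}>|\varphi|+7$ (manifestly computable), one obtains the clean identity
\[
  \tau_2(k) \;=\; -2^k \;+\; 4^k\!\!\sum_{2^{k_0}\,\le\,4^n+1\,<\,2^k}\!\!4^{-2n-1}\,\lambda_0(H_q(4^n))
  \qquad (k>k_0).
\]
Every summand on the right is nonnegative, and vanishes if and only if the UTM fails to halt within the $T(4^n+1)$ time-steps encoded by the length-$(4^n+1)$ chain, by \cref{Lemma:Ground_State_GI_Form}.

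For the non-halting case the conclusion is immediate: every $4^n+1\ge 2^{k_0}$ is long enough for the encoding to be faithful, so by \cref{Lemma:Ground_State_GI_Form} each $\lambda_0(H_q(4^n))$ in the displayed tail vanishes and $\tau_2(k)=-2^k$ for all $k>k_0$.

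For the halting case I would introduce $n^{\ast}(\varphi)$, the smallest $n$ with $4^n+1\ge 2^{k_0}$ for which the UTM halts within $T(4^n+1)$ steps; by \cref{Lemma:Ground_State_GI_Form} this is equivalently the smallest $n$ with $\lambda_0(H_q(4^n))>0$. Set $k_h(\varphi)$ to be the smallest integer such that $4^{n^{\ast}}+1<2^{k_h}$. For $k_0<k\le k_h$ the summation range excludes $n^{\ast}$ and all smaller indices contribute zero, so $\tau_2(k)=-2^k$. For $k>k_h$ the $n=n^{\ast}$ term alone contributes, using $4^{n^{\ast}}<2^{k_h}$ hence $4^{-2n^{\ast}-1}>4^{-k_h-1}$,
\[
  4^{k}\cdot 4^{-2n^{\ast}-1}\lambda_0(H_q(4^{n^{\ast}}))
  \;\ge\;
  \tfrac{1}{4}\,\lambda_0(H_q(4^{n^{\ast}}))\cdot 4^{k-k_h(\varphi)},
\]
while all higher-$n$ summands are nonnegative; hence $\tau_2(k)=-2^k+\Omega(4^{k-k_h(\varphi)})$ with a strictly positive $\varphi$-dependent leading constant.

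The hard part is verifying that $k_h(\varphi)$ is genuinely uncomputable rather than merely defined in an uncomputable-looking way. The cleanest argument is a reduction to the halting problem: if $\varphi\mapsto k_h(\varphi)$ were total computable, then one could read off the bound $n^{\ast}<k_h/2$, obtain the explicit halting-time bound $T(4^{n^{\ast}}+1)$, and simulate the UTM for that many steps to decide whether it halts on $\varphi$ --- contradicting the undecidability underlying~\cite{Cubitt_PG_Wolf_Undecidability}. The only delicate point in the proof itself is the cancellation identity in the first paragraph; once the short-chain ``garbage'' is verified to be absorbed exactly into $4^k\alpha_2(\varphi)$ (which is immediate from the definition of $\alpha_2$ used in the proof of \cref{Corollary:GS_in_Limit}), the rest reduces to an elementary geometric-sum estimate.
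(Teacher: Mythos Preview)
Your proof is correct and follows essentially the same approach as the paper: unfold the explicit formula $\tau_2(k)=g(k)-4^k\alpha_2(\varphi)-2^k$, cancel the short-chain contributions against $\alpha_2(\varphi)$ to obtain the tail identity, then split into halting/non-halting and invoke \cref{Lemma:Ground_State_GI_Form}. Your version is in fact more explicit than the paper's on two points---you write out the geometric lower bound justifying the $\Omega(4^{k-k_h})$ claim, and you spell out the reduction showing $k_h$ is uncomputable---whereas the paper simply asserts both. One small slip: you write ``for $k_0<k\le k_h$ the summation range excludes $n^{\ast}$'', but at $k=k_h$ your own definition gives $4^{n^{\ast}}+1<2^{k_h}$, so $n^{\ast}$ \emph{is} in the range; the theorem statement correctly uses the strict inequality $k<k_h$, and you should too.
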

\begin{proof}
  Consider the expression for $\tau_2$ from \cref{Lemma:Fully_Renormalised_Hamiltonian}:
  \begin{align}
    \tau_2(k) &= 4^k\sum_{4^n+1<2^k}4^{-2n-1}\lambda_0(H_q(4^n))+  4^k\alpha_2(\varphi) -  2^k \\
    \tau_2(k) &= 4^k\sum_{4^n+1<2^k}4^{-2n-1}\lambda_0(H_q(4^n))+  4^k\alpha_2(\varphi) -  2^k.
  \end{align}
  From the definition of $\alpha_2(\varphi)$, we see that there is a $k_0(\varphi)$ such that $g(k_0(\varphi))=\alpha_2(\varphi)$, and hence we get
  \begin{align}
    \tau_2(k) &= -  2^k + 4^k\sum_{2^{k_0(\varphi)}<4^n+1<2^k}4^{-2n-1}\lambda_0(H_q(4^n)).
  \end{align}
  If the encoded QTM never halts, then by \cref{Lemma:Ground_State_GI_Form} $\lambda_0(H_q(4^n))=0$ for all $n$ such that $4^n+1>2^{k_0(\varphi)}$.
  If the encoded UTM halts then by \cref{Lemma:Ground_State_GI_Form} there exists an $n_0$ such that $\lambda_0(H_q(4^n))>0$ for all $n>n_0$.
  Then $k_h(\varphi)$ is defined as the minimum $k$ such that  $4^{n_0}+1<2^{k_h(\varphi)}$.
  Thus determining $k_h(\varphi)$ is at least as hard as computing the halting time and thus is an uncomputable number.

\end{proof}

\begin{figure}
  \centering
  \includegraphics[width=0.5\textwidth]{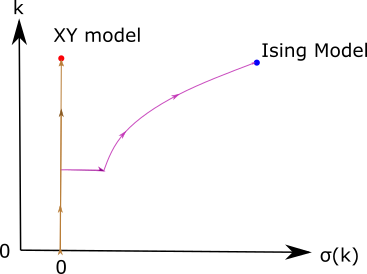}
  \caption{A schematic picture of the flow of Hamiltonians in parameter space.
    $\sigma(k)$ is defined in
    Orange represents some value of $\varphi=\varphi_0$ for which the QTM does not halt on input, while purple represents $\varphi=\varphi_0+\epsilon$ for any algebraic number $\epsilon$ for which the QTM halts.
    For small $k$, the orange and purple lines coincide.
    Then at a particular value of $k$, $\sigma(k)$ becomes non-zero and then increases exponentially.}
\end{figure}

\section{Fixed points of the RG flow} \label{Sec:Fixed_Points}
 \Cref{Theorem:Undecidability_of_RG_Flows_Formal} shows that our RG scheme satisfies the expected properties.
 We now qualitatively examine the Hamiltonian for large values of $k$.

 \subsection{Fixed Point for Gapped Instances}
	Here we show that for gapped instances the Hamiltonian becomes ``Ising-like'', for appropriately small energy scales.
	From \cref{Corollary:Family} the renormalised Hamiltonian is
	\begin{align}
	\Rk (h^{row}(\varphi))^{(i,j)} =&2^k(\ketbra{0}^{(i)}\otimes \Pi_{ud}^{(j)}+ \ketbra{0}^{(i)}\otimes \Pi_{ud}^{(j)}) \label{Eq:Ising_Terms}\\
	+& \Rk(h_u^{row}(\varphi)')^{(i,j)} \otimes \1_d^{(i,j)} + \1_u^{(i,j)} \otimes \Rk(h_d)^{(i,j)}  \label{Eq:Non-Ising_Terms}\\
	+&2^k \Pi_{ud}^{(i)}\ox \Pi_{ud}^{(j)} \\
	\Rk (h^{col}(\varphi))^{(i,j)} =&2^k(\ketbra{0}^{(i)}\otimes \Pi_{ud}^{(j)}+ \ketbra{0}^{(i)}\otimes \Pi_{ud}^{(j)}) \label{Eq:Ising_Terms_2}\\
	+& \Rk(h_u^{col}(\varphi)')^{(i,j)} \otimes \1_d^{(i,j)}   \\
	\Rk(h(\varphi))^{(1)} =&(g(k)  -4^k\alpha_2(\varphi)  - 2^k)\Pi_{ud} + \Rk(h_u^{(1)}(\varphi)),
	\end{align}
	where here we have explicitly separated out $\Pi_{ud}^{(i)}\ox \Pi_{ud}^{(j)}$ from the term\linebreak $\Rk(h^{row}_u(\varphi))^{(i,j)}=\Rk(h^{row}_u(\varphi)')^{(i,j)}+\Pi_{ud}^{(i)} \ox \Pi_{ud}^{(j)}$.

	Define the Ising-like Hamiltonian with local terms:
	\begin{align}
	h^{'row}_{Ising}(k)^{(i,j)} &:= 2^k\left(\ketbra{0}^{(i)}\otimes \Pi_{ud}^{(j)}+  \Pi_{ud}^{(j)}\otimes\ketbra{0}^{(i)} + \Pi_{ud}^{(i)}\ox\Pi_{ud}^{(j)}\right)\\
	h^{'col}_{Ising}(k)^{(i,j)} &:= 2^k\left(\ketbra{0}^{(i)}\otimes \Pi_{ud}^{(j)}+  \Pi_{ud}^{(j)}\otimes\ketbra{0}^{(i)}  \right) \\
	h'_{Ising}(k)^{(1)} &:= B(k)\Pi_{ud}.
	\end{align}
	This is reminiscent of the Ising interaction with both an ferromagnetic \linebreak $\dyad{0}^{(i)} \ox \dyad{1}^{(j)} + \dyad{1}^{(i)} \dyad{0}^{(j)}$ along the rows and columns and an anti-ferromagnetic $\dyad{1}^{(i)} \ox \dyad{1}^{(j)}$ term along just the rows, with local field $B(k)=(g(k) -4^k\alpha_2(\varphi) - 2^k)\dyad{1}$,
    but with the orthogonal projector $\Pi_{ud}$ playing the role of the projector onto the $\dyad{1}$ state.
    However, note that $\Pi_{ud}$ projects onto a larger dimensional subspace than $\dyad{1}$, so e.g.\ the partition function of this Ising-like Hamiltonian is not identical to that of an Ising model.

        We now show the following:
	\begin{proposition}
        Let $E$ be a fixed energy cut-off and $H'_{Ising}(k)=\sum_{\langle i,j\rangle }h'_{Ising}(k)^{(i,j)}$.
        Then
	\begin{align}
	\norm{\Rk (H(\varphi))|_{\leq E} - H'_{Ising}(k)|_{\leq E}}_{op} &\leq \left(\frac{E}{2^k}\right)^2.
	\end{align}
	\end{proposition}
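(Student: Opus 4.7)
The plan is to compute the difference operator $D:=\Rk(H(\varphi))-H'_{Ising}(k)$ explicitly, observe its support structure, and then bound its matrix elements on the low-energy subspace using the Ising-like energy penalties of scale $2^k$.

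First, I would subtract term-by-term using the expressions from \cref{Lemma:Fully_Renormalised_Hamiltonian} and the definition of $H'_{Ising}(k)$. The $2^k$ ferromagnetic terms along both rows and columns cancel, as do the $2^k\Pi_{ud}^{(i)}\otimes\Pi_{ud}^{(j)}$ row terms and the $B(k)\Pi_{ud}$ 1-local Ising field (which is chosen precisely to cancel $g(k)-4^k\alpha_2(\varphi)-2^k$ coming from $\Rk(h(\varphi))^{(1)}$). Thus the surviving difference is exactly the ``non-Ising'' piece: $D=\sum_{\langle i,j\rangle}\Rk(h_u(\varphi)')^{(i,j)}\otimes\1_d^{(i,j)}+\sum_{\langle i,j\rangle\text{ row}}\1_u^{(i,j)}\otimes\Rk(h_d)^{(i,j)}+\sum_i\Rk(h_u^{(1)})^{(i)}$.

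Second, I would record the key support property: every local piece $D^{(e)}$ of $D$ vanishes whenever any of its sites has weight on $\ket{0}_e$, because $\Rk(h_u')$ and $\Rk(h_d)$ both act trivially on the $\ket{0}$ filler state, which was mapped to itself by $V^0$ (see \cref{Eq:V^0_Definition}). Hence for a 2-local edge term, $D^{(e)}=(\Pi_{ud}\otimes\Pi_{ud})D^{(e)}(\Pi_{ud}\otimes\Pi_{ud})$, and for the 1-local term $D^{(1)}_i=\Pi_{ud}^{(i)}D^{(1)}_i\Pi_{ud}^{(i)}$. Moreover, each $\|D^{(e)}\|$ is bounded by a constant $c$ independent of $k$: by \cref{Remark:Local_Projectors} the only $k$-dependent blow-up of $\Rk(h_u)$ resides in the absorbed projector shifts $\sum_{m}4^m\kappa^{(m)}\1$ and $\sum_{m}2^m\gamma^{(m)}\1$, while the remaining (truly operator-valued) part is the iterated isometric image of the original $h_u$, whose operator norm is preserved.

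Third, I would use the Ising-like energy scales to control the low-energy subspace $P_{\leq E}$. For any $\ket{\psi}\in P_{\leq E}$, the $2^k\Pi_{ud}^{(i)}\otimes\Pi_{ud}^{(j)}$ row-pair penalty combined with the $2^k$ domain-wall terms give
\begin{equation*}
2^k\sum_{\langle i,j\rangle}\langle\psi|\Pi_{ud}^{(i)}\Pi_{ud}^{(j)}|\psi\rangle\;\leq\;\langle\psi|H'_{Ising}|\psi\rangle+O\!\bigl(\|D\bigr\|\bigr)\;\leq\;E+(\text{self-consistency correction}),
\end{equation*}
so $\sum_{e}\langle\psi|\Pi_{ud}^{(e)}|\psi\rangle\leq E/2^k$, and an analogous single-site bound $\sum_i\langle\psi|\Pi_{ud}^{(i)}|\psi\rangle\lesssim E/2^k$ via isolated-excitation domain walls. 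Combining with the support property and Cauchy–Schwarz on edges,
\begin{equation*}
\bigl|\langle\psi|D|\phi\rangle\bigr|\;\leq\;c\sum_{e}\|\Pi_{ud}^{(e)}\psi\|\,\|\Pi_{ud}^{(e)}\phi\|\;\leq\;c\sqrt{\sum_e\|\Pi_{ud}^{(e)}\psi\|^2}\sqrt{\sum_e\|\Pi_{ud}^{(e)}\phi\|^2}\;\leq\;c\,(E/2^k).
\end{equation*}
The final quadratic improvement to $(E/2^k)^2$ comes from exploiting that the local non-Ising operator, after its constant projector shifts are removed into $H'_{Ising}$, carries an additional $E/2^k$ suppression from the truncation $\Pi_{gs}$ in \cref{Def:T_u_Mapping}: states in $P_{\leq E}$ have overlap $O(E/2^k)$ with the full $\Pi_{ud}$-on configuration that $D^{(e)}$ requires on \emph{both} endpoints, and so one can replace one $\|\Pi_{ud}^{(e)}\psi\|$ by a tighter $\sqrt{E/2^k}\cdot\|\Pi_{ud}^{(e)}\psi\|$ estimate before applying Cauchy–Schwarz, producing the extra factor of $E/2^k$.

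The main obstacle is precisely this last step: the ``naive'' Cauchy–Schwarz pairing gives only $O(E/2^k)$, and the quadratic sharpening requires either a second-order / Schrieffer–Wolff-style argument viewing $D$ as a perturbation on top of the gapped $H'_{Ising}$ (whose spectral gap of order $2^k$ yields $\|D\|^2/\Delta\sim(E/2^k)^2$ corrections), or a careful bookkeeping showing that the matrix elements $\langle\psi|D^{(e)}|\phi\rangle$ between low-energy states pick up an additional $E/2^k$ factor from the off-block structure of the renormalised Gottesman–Irani transitions. The cleanest route is likely the former: decompose $P_{\leq E}=P_0+P_{\geq 2^k}\cdot\chi_{\leq E}$, note $P_0 D P_0=0$ (since $\ket{0}^\Lambda$ is annihilated by $D$), and bound the cross and off-block terms by a single application of the $2^k$ gap.
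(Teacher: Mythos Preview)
Your computation of the difference operator and its support properties is correct, and your linear $O(E/2^k)$ bound via Cauchy--Schwarz is fine as far as it goes. The problem is the final step: none of your proposed mechanisms for upgrading to $(E/2^k)^2$ actually works. The Schrieffer--Wolff suggestion bounds eigenvalue \emph{shifts} of order $\|D\|^2/\Delta$, not the operator norm of $D$ restricted to the low-energy subspace; and in any case the relevant $\|D\|$ there is the full operator norm (which scales with the lattice), not $E/2^k$. The ``additional $E/2^k$ from $\Pi_{gs}$'' and ``off-block structure'' remarks are not arguments.

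The paper's route is completely different and much more direct. The key observation you are missing is that $h_0=\ketbra{0}\otimes\Pi_{ud}+\Pi_{ud}\otimes\ketbra{0}$ \emph{commutes} with every other term, so low-energy eigenstates have sharply defined domains $D\subset\Lambda$ on which they are supported in $\HS_{ud}$. Each boundary edge of $D$ carries an energy penalty $\geq 2^k$, so $|\partial D|\leq E/2^k=:m$. The quadratic improvement then comes from the \emph{isoperimetric inequality} on the lattice: $|\partial D|\leq m$ forces $|D|\leq m^2/16$. Since the difference operator acts only inside $D$ with local norm $O(1)$, its restriction to the low-energy subspace has norm at most $c\,|D|\leq (E/2^k)^2$. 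No Cauchy--Schwarz or perturbation theory is needed; the squaring is purely geometric.
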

\begin{proof}
	Consider the local interaction term $h_0 = \ketbra{0}\ox \Pi_{ud}+ \Pi_{ud}\ox \ketbra{0}$.
	This commutes with all other terms in both the $\Rk (H(\varphi))$ Hamiltonian and the Ising-like Hamiltonian, and hence the eigenstates of both of the overall Hamiltonians are also eigenstates of $\ketbra{0}\ox \Pi_{ud}+ \Pi_{ud}\ox \ketbra{0}$.
	As a result, for each eigenstate, a given site $p\in \Lambda$ either has support only on $\ket{0}_p$ or only on $\Rk(\HS_{ud})$.
	Therefore, an eigenstate defines regions (domains) of the lattice where all points in the domain are in $\HS_{ud}$.

	For a given eigenstate $\ket{\psi}$, let $D\coloneqq\left\{i\in\Z^2|\tr(\ket{0}\bra{0}^{(i)} \ket{\psi}\bra{\psi}) = 0 \right\}$ denote the region of the lattice where the state is supported on $\Rk(\HS_{ud})$, and $\partial D$ be the set of sites on the boundary of $D$.
	Then we see that the terms in \cref{Eq:Non-Ising_Terms} act non-trivially only within $D$, and that the boundaries of $D$ receive an energy penalty of $2^k|\partial D|$ from terms in \cref{Eq:Ising_Terms} and \cref{Eq:Ising_Terms_2}.

	Note that $\norm{\Rk(h_d)^{(i,j)}}_{op}, \norm{\Rk(h_u(\varphi)')^{(i,j)}}_{op}, \norm{\Rk(h_u^{(1)}(\varphi))}_{op}\leq 2$.
	For $\norm{\Rk(h_d)^{(i,j)}}_{op}$ this is straightforward to see.
        For $\norm{\Rk(h_u(\varphi)')^{(i,j)}}_{op}$, any states which pick up non-zero energy, other than those which receive a penalty due to halting, are removed from the local Hilbert space (as per \cref{Sec:Quantum_RG}).
    
    Let $m\in \N$ be a cut-off such that $|\partial D| \leq m$, hence $|D|\leq m^2/16$.
	Since for each boundary term we get an energy penalty of at least $2^k$ from $h_0$, we can relate $m$ to the energy cut-off $E$ to $m$ as $E \coloneqq 2^km$.    
	If we consider the Hamiltonians restricted to a subspace with energy $\leq E \coloneqq 2^km$, then
	\begin{align}
          &\norm{\Rk (H(\varphi))|_{\leq E} - H'_{Ising}(k)|_{\leq E}}_{op}\\
          &=  \norm{\sum_{\langle i,j\rangle}\left(\Rk(h_u(\varphi)')^{(i,j)} \otimes \1_d^{(i,j)} + \1_u^{(i,j)} \otimes \Rk(h_d)^{(i,j)}\right)\bigg|_{\leq E}}_{op} \label{Eq:Norm_Diff_1} \\
          \begin{split}
            &\leq \frac{m^2}{16}\bigg(\norm{\Rk(h_u(\varphi)')^{(i,j)} }_{op}+\norm{\Rk(h_d)^{(i,j)}}_{op}\\
            &\mspace{100mu}+\norm{\Rk(h_u^{(1)}(\varphi))}_{op}\bigg)\label{Eq:Norm_Diff_2}
          \end{split}\\
          &\leq \frac{m^2}{2} \\ 
            &< \left(\frac{E}{2^k}\right)^2.\label{Eq:Norm_Diff_3}
        \end{align}
        Going from \cref{Eq:Norm_Diff_1} to \cref{Eq:Norm_Diff_2} we have used the fact that the terms in the sum are only non-zero within domains, and $|D|\leq m^2/16$.
        Going from \cref{Eq:Norm_Diff_2} to \cref{Eq:Norm_Diff_3} we have used the bound on the individual norms of the local terms.
      \end{proof}

	Thus, for appropriately small energies, we expect only small deviations from the "Ising-like" Hamiltonian.
	And these deviations vanish as the RG process is iterated.
	In particular, the spectrum will look like \cref{Fig:Ising-Like_Energy_Diagram}.

	 \begin{figure}[h!]
		\begin{center}
			\includegraphics[width=0.6\textwidth]{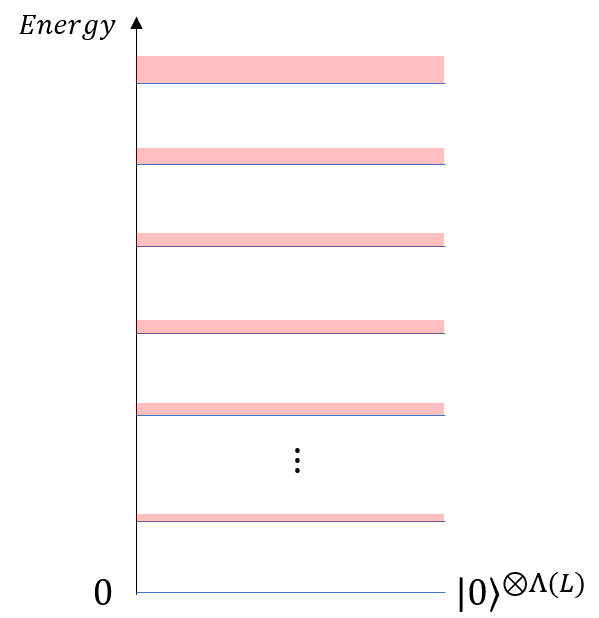}
		\end{center}
		 \caption{The energy level diagram of $\Rk(H)$.
		 The blue levels represent excitations of the $2^k(\ketbra{0}^{(i)}\otimes \Pi_{ud}^{(j)} + \Pi_{ud}^{(i)} \otimes \ketbra{0}^{(j)})$ term, while the red area represents the excited states of $\Rk(h_u(\varphi)')^{(i,j)}$, $\Rk(h_d)^{(i,j)}$,  and $\Rk(h_u^{(1)}(\varphi))$.
	 	The size of the red region increases as the domains get larger, and hence there are more high energy states.
	 	The ground state has no associated red region due to the presence of the spectral gap.
	 	The blue lines have an energy spacing of integer multiples of $2^k$.}\label{Fig:Ising-Like_Energy_Diagram}
	\end{figure}

 \subsection{Fixed Point for Gapless Instances}
 For a $\varphi$ for which $H(\varphi)$ is gapless, $\Rk(H(\varphi))$ is also gapless and we see that the ground state is that of $\Rk(H_u(\varphi))$.
 If we restrict to a low energy subspace, one can see that excited states are either the excited states of the Gottesman-Irani Hamiltonians or the excited states of the critical XY-model.
 Indeed, let $E(k)$ be the subspace of states with energy less than $2^k$, then for sufficiently large $k$ we see that
 \begin{align}
 \Rk(H)^{\Lambda}|_{E(k)} = \Rk(H_u(\varphi))^{\Lambda}|_{E(k)} \ox \1^{\Lambda}_d + \1^{\Lambda} \ox  \Rk(H_d)^{\Lambda}|_{E(k)}.
 \end{align}
 Since $\Rk(H_d)^{\Lambda}|_{E(k)}$ has the same spectrum as $H_d$, the spectrum of $\Rk(H)^{\Lambda}|_{E(k)}$ is also dense in the thermodynamic limit.
 Furthermore, $\Rk(H)^{\Lambda}|_{E(k)}$ has algebraically decaying correlations since $\Rk(H_d)^{\Lambda}|_{E(k)}$ also has algebraically decaying correlations \cite{Lieb_Schultz_Mattis_1961}.

\section{Conclusions} \label{Sec:Conclusions}

We have seen under the renormalisation group procedure constructed here, the Hamiltonian flows towards either an Ising-like Hamiltonian or an XY-like Hamiltonian.
Which case occurs depends on the parameter $\tau_2$ in \cref{Eq:tau_2_Parameter}.
Let $k$ be the number of iterations of the RG procedure, then from \cref{Theorem:tau_2_Divergence} we see that there are two cases: $\tau_2 = -2^k$ always, or $\tau_2 = -2^k$ initially, and once a sufficiently large value of $k$ is reached it begins to diverge as $\tau_2>-2^k+\Omega(4^k)$.
Determining which case occurs is undecidable.
Moreover, the value of $k$ at which we go from the first case to the second is uncomputable.
Thus, determining the trajectory of the system for an arbitrary value of $\varphi$ is uncomputable.
Even if $\varphi$ were known exactly, we see that the Hamiltonian's path in parameter space would be unpredictable.

Contrast this with chaotic behaviour:
for chaotic systems, a tiny difference in the initial system parameters can lead to large diverges in trajectories later.
Here the difficulty in predicting behaviour arises as it is usually difficult to determine the initial system parameters exactly.
However, if the system parameters are known exactly, it should theoretically be possible to ascertain the long-time system.
RG flows which undergo chaotic behaviour have been demonstrated before \cite{McKay_Berker_Kirkpatrick_1982, Svrakic_1982, Derrida_Eckmann_Erzan_1999, Damgaard_Thorleifsson_1991, Morozov_Niemi_2003}.

The behaviour of the RG trajectory shown here is stronger than this in that even if the initial parameters characterising the microscopic interactions are known \emph{exactly}, determining which fixed point the system may flow to is not possible to determine.
We compare this to a similar uncomputability result in \cite{Moore90} which showed that computing the trajectory of a particle in a potential is uncomputable.

The Hamiltonian discussed in this work is highly artificial and the RG scheme reflects this.
Indeed, this Hamiltonian has an enormous local Hilbert space dimension and its matrix elements are functions of both $\varphi$ and the binary length of $\varphi $, $|\varphi|$.
Both of these factors are unlikely to be present in naturally occurring Hamiltonians.
Thus an obvious route for further work is to consider RG schemes for more natural Hamiltonians which display undecidable behaviour.

Furthermore, although the RG scheme is essentially a simple BRG scheme, the details of its construction and analysis rely on knowledge of the structure of the ground states.
Due to the behaviour of this undecidable model, any BRG scheme will have to exhibit similar behaviour to the one we have analysed rigorously here.
But it would be nice to find a simpler RG scheme for this Hamiltonian (or other Hamiltonians with undecidable properties) which is able to truncate the local Hilbert space to a greater degree, without using explicit a priori knowledge of the ground state, for which it is still possible to prove this rigorously.

The Hamiltonian and RG scheme constructed here could also be used to prove rigorous results for chaotic (but still computable) RG flows.
Indeed, if we modify the Hamiltonian $H(\varphi)$ so that instead of running a universal Turing Machine on input $\varphi$, it carries out a computation of a (classical) chaotic process (e.g.\ repeated application of the logistical map), then two inputs which are initially very close may diverge to completely different outputs after some time.
By penalising this output qubit appropriately, the Hamiltonian will still flow to either the gapped or gapless fixed point depending on the outcome of the chaotic process under our RG map, but the RG flow will exhibit chaotic rather than uncomputable dynamics.

Given the RG scheme here, it is also relevant to ask is whether we can apply a similar scheme to the Hamiltonians designed in \cite{Bausch_1D_Undecidable, Bausch_Cubitt_Watson}.
Although we do not prove it here, we expect to be able to apply the modified BRG developed in this work to these Hamiltonians in an analgous way.
The only additional consideration is the so-called "Marker Hamiltonian" component of both of these constructions which would need additional care in a rigorous proof.
Since the Marker Hamiltonian has a similar ground state structure to the circuit-to-Hamiltonian mapping --- consisting of superpositions of a particle propagating along a line --- we expect a similar RG process to suffice.
As a result, we do not expect an fundamentally different behaviour in the RG flow from the Hamiltonian analysed here.

\phantomsection
\addcontentsline{toc}{section}{Acknowledgements}
\section*{Acknowledgements}

E.O.\ and T.S.C.\ are supported by the Royal Society.
J.D.W.\ is supported by the EPSRC Centre for Doctoral Training in Delivering Quantum Technologies (grant EP/L015242/1).
This work has been supported in part by the EPSRC Prosperity Partnership in Quantum Software for Simulation and Modelling (grant EP/S005021/1), and by the UK Hub in Quantum Computing and Simulation, part of the UK National Quantum Technologies Programme with funding from UKRI EPSRC (grant EP/T001062/1).

\phantomsection
\addcontentsline{toc}{section}{References}
\printbibliography

@article{Bernstein1997,
	author = {Bernstein, Ethan and Vazirani, Umesh},
	doi = {10.1137/S0097539796300921},
	file = {:C$\backslash$:/Users/Johannes/AppData/Local/Mendeley Ltd./Mendeley Desktop/Downloaded/Bernstein, Vazirani - 1997 - Quantum Complexity Theory.pdf:pdf},
	issn = {0097-5397},
	journal = {SIAM Journal on Computing},
	mendeley-groups = {hamilton},
	month = {oct},
	number = {5},
	pages = {1411--1473},
	title = {{Quantum Complexity Theory}},
	url = {http://citeseerx.ist.psu.edu/viewdoc/summary?doi=10.1.1.144.7852 http://epubs.siam.org/doi/abs/10.1137/S0097539796300921},
	volume = {26},
	year = {1997}
}

@article{robinson1971undecidability,
	author = {Robinson, Raphael M.},
	doi = {10.1007/BF01418780},
	file = {::},
	issn = {0020-9910},
	journal = {Inventiones mathematicae},
	mendeley-groups = {hamilton},
	month = {sep},
	number = {3},
	pages = {177--209},
	publisher = {Springer},
	title = {{Undecidability and nonperiodicity for tilings of the plane}},
	url = {http://link.springer.com/10.1007/BF01418780},
	volume = {12},
	year = {1971}
}

@article{Cubitt_PG_Wolf_Undecidability,
        Author = {T. S. Cubitt and D. Perez-Garcia and M. M. Wolf},
	Title = {Undecidability of the spectral gap},
	Year = {2015},
        archivePrefix = {arXiv},
	primaryClass = {quant-ph},
	eprint = {1502.04573}}

@article{Bausch_1D_Undecidable,
  title = {Undecidability of the Spectral Gap in One Dimension},
  author = {Bausch, Johannes and Cubitt, Toby S. and Lucia, Angelo and Perez-Garcia, David},
  journal = {Phys. Rev. X},
  volume = {10},
  issue = {3},
  pages = {031038},
  numpages = {24},
  year = {2020},
  month = {Aug},
  publisher = {American Physical Society},
  doi = {10.1103/PhysRevX.10.031038},
  url = {https://link.aps.org/doi/10.1103/PhysRevX.10.031038}
}

@article{Cubitt_PG_Wolf_Nature,
	abstract = {We show that the spectral gap problem is undecidable. Specifically, we construct families of translationally-invariant, nearest-neighbour Hamiltonians on a 2D square lattice of {\$}d{\$}-level quantum systems ({\$}d{\$} constant), for which determining whether the system is gapped or gapless is an undecidable problem. This is true even with the promise that each Hamiltonian is either gapped or gapless in the strongest sense: it is promised to either have continuous spectrum above the ground state in the thermodynamic limit, or its spectral gap is lower-bounded by a constant in the thermodynamic limit. Moreover, this constant can be taken equal to the local interaction strength of the Hamiltonian. This implies that it is logically impossible to say in general whether a quantum many-body model is gapped or gapless. Our results imply that for any consistent, recursive axiomatisation of mathematics, there exist specific Hamiltonians for which the presence or absence of a spectral gap is independent of the axioms. These results have a number of important implications for condensed matter and many-body quantum theory.},
	archivePrefix = {arXiv},
	arxivId = {1502.04573},
	author = {Cubitt, Toby S. and Perez-Garcia, David and Wolf, Michael M.},
	doi = {10.1038/nature16059},
	eprint = {1502.04573},
	issn = {0028-0836},
	journal = {Nature},
	mendeley-groups = {undecidability,hamilton},
	month = {dec},
	number = {7581},
	pages = {207--211},
	title = {{Undecidability of the spectral gap}},
	url = {http://arxiv.org/abs/1502.04573 http://dx.doi.org/10.1038/nature16059 http://www.nature.com/doifinder/10.1038/nature16059 http://www.nature.com/articles/nature16059},
	volume = {528},
	year = {2015}
}

@inproceedings{Gottesman-Irani,
	author = {Gottesman, Daniel and Irani, Sandy},
	booktitle = {Foundations of Computer Science, 2009. FOCS'09. 50th Annual IEEE Symposium on},
	organization = {IEEE},
	pages = {95--104},
	title = {{The quantum and classical complexity of translationally invariant tiling and Hamiltonian problems}},
	year = {2009}
}

@article{Watson_Hamiltonian_Analysis,
  title = {Detailed Analysis of Circuit-to-Hamiltonian Mappings and 1D Quantum Walks},
  author = {James D. Watson},
  howpublished = {accompanying technical paper},
  year = {2019}}

@article{Bausch_Cubitt_Watson, title={Uncomputability of phase diagrams}, volume={12}, DOI={10.1038/s41467-020-20504-6}, number={1}, journal={Nature Communications}, author={Bausch, Johannes and Cubitt, Toby S. and Watson, James D.}, year={2021}}

@article{Wilson74,
	title = "The renormalization group and the $\varepsilon$ expansion",
	journal = "Physics Reports",
	volume = "12",
	number = "2",
	pages = "75 - 199",
	year = "1974",
	issn = "0370-1573",
	doi = "https://doi.org/10.1016/0370-1573(74)90023-4",
	url = "http://www.sciencedirect.com/science/article/pii/0370157374900234",
	author = "Kenneth G. Wilson and J. Kogut",
	abstract = "The modern formulation of the renormalization group is explained for both critical phenomena in classical statistical mechanics and quantum field theory. The expansion in ϵ = 4−d is explained [d is the dimension of space (statistical mechanics) or space-time (quantum field theory)]. The emphasis is on principles, not particular applications. Sections 1–8 provide a self-contained introduction at a fairly elementary level to the statistical mechanical theory. No background is required except for some prior experience with diagrams. In particular, a diagrammatic approximation to an exact renormalization group equation is presented in sections 4 and 5; sections 6–8 include the approximate renormalization group recursion formula and the Feyman graph method for calculating exponents. Sections 10–13 go deeper into renormalization group theory (section 9 presents a calculation of anomalous dimensions). The equivalence of quantum field theory and classical statistical mechanics near the critical point is established in section 10; sections 11–13 concern problems common to both subjects. Specific field theoretic references assume some background in quantum field theory. An exact renormalization group equation is presented in section 11; sections 12 and 13 concern fundamental topological questions."
}

@article{Wilson71,
	title = {Renormalization Group and Critical Phenomena. I. Renormalization Group and the Kadanoff Scaling Picture},
	author = {Wilson, Kenneth G.},
	journal = {Phys. Rev. B},
	volume = {4},
	issue = {9},
	pages = {3174--3183},
	numpages = {0},
	year = {1971},
	month = {Nov},
	publisher = {American Physical Society},
	doi = {10.1103/PhysRevB.4.3174},
	url = {https://link.aps.org/doi/10.1103/PhysRevB.4.3174}
}

@article{GellMannLow,
	title = {Quantum Electrodynamics at Small Distances},
	author = {Gell-Mann, M. and Low, F. E.},
	journal = {Phys. Rev.},
	volume = {95},
	issue = {5},
	pages = {1300--1312},
	numpages = {0},
	year = {1954},
	month = {Sep},
	publisher = {American Physical Society},
	doi = {10.1103/PhysRev.95.1300},
	url = {https://link.aps.org/doi/10.1103/PhysRev.95.1300}
}

@article{SBPM53,
	author         = "Stueckelberg de Breidenbach, Ernst Carl Gerlach and
	Petermann, Andreas",
	title          = "{La normalisation des constantes dans la théorie des
	quantaNormalization of constants in the quanta theory}",
	journal        = "Helv. Phys. Acta",
	volume         = "26",
	year           = "1953",
	pages          = "499-520",
	doi            = "10.5169/seals-112426",
}

@article{Kadanoff66,
	title = {Scaling laws for ising models near ${T}_{c}$},
	author = {Kadanoff, Leo P.},
	journal = {Physics Physique Fizika},
	volume = {2},
	issue = {6},
	pages = {263--272},
	numpages = {10},
	year = {1966},
	month = {Jun},
	publisher = {American Physical Society},
	doi = {10.1103/PhysicsPhysiqueFizika.2.263},
	url = {https://link.aps.org/doi/10.1103/PhysicsPhysiqueFizika.2.263}
}

@article{Moore90,
	title = {Unpredictability and undecidability in dynamical systems},
	author = {Moore, Cristopher},
	journal = {Phys. Rev. Lett.},
	volume = {64},
	issue = {20},
	pages = {2354--2357},
	numpages = {0},
	year = {1990},
	month = {May},
	publisher = {American Physical Society},
	doi = {10.1103/PhysRevLett.64.2354},
	url = {https://link.aps.org/doi/10.1103/PhysRevLett.64.2354}
}

@article{Vidal_2008, title={Class of Quantum Many-Body States That Can Be Efficiently Simulated}, volume={101}, DOI={10.1103/physrevlett.101.110501}, number={11}, journal={Physical Review Letters}, author={Vidal, G.}, year={2008}}

@book{Cardy_1996, place={Cambridge}, title={Scaling and renormalization in statistical physics}, publisher={Cambridge University Press}, author={Cardy, John}, year={1996}}

@article{Jullien_Pfeuty_Fields_1978,
  title = {Zero-temperature renormalization method for quantum systems. I. Ising model in a transverse field in one dimension},
  author = {Jullien, R. and Pfeuty, P. and Fields, J. N. and Doniach, S.},
  journal = {Phys. Rev. B},
  volume = {18},
  issue = {7},
  pages = {3568--3578},
  numpages = {0},
  year = {1978},
  month = {Oct},
  publisher = {American Physical Society},
  doi = {10.1103/PhysRevB.18.3568},
  url = {https://link.aps.org/doi/10.1103/PhysRevB.18.3568}
}

@article{Jullien_Pfeuty_1979,
  title = {Zero-temperature renormalization-group method for quantum systems. II. Isotropic $X\ensuremath{-}Y$ model in a transverse field in one dimension},
  author = {Jullien, R. and Pfeuty, P.},
  journal = {Phys. Rev. B},
  volume = {19},
  issue = {9},
  pages = {4646--4652},
  numpages = {0},
  year = {1979},
  month = {May},
  publisher = {American Physical Society},
  doi = {10.1103/PhysRevB.19.4646},
  url = {https://link.aps.org/doi/10.1103/PhysRevB.19.4646}
}

@article{Martin-Delgado_et_al_1996,
  title = {Real Space Renormalization Group Methods and Quantum Groups},
  author = {Mart\'{\i}n-Delgado, Miguel A. and Sierra, Germ\'an},
  journal = {Phys. Rev. Lett.},
  volume = {76},
  issue = {7},
  pages = {1146--1149},
  numpages = {0},
  year = {1996},
  month = {Feb},
  publisher = {American Physical Society},
  doi = {10.1103/PhysRevLett.76.1146},
  url = {https://link.aps.org/doi/10.1103/PhysRevLett.76.1146}
}

@article{Bhattacharyya_Sil_1999,
	doi = {10.1088/0953-8984/11/17/309},
	url = {https://doi.org/10.1088%2F0953-8984%2F11%2F17%2F309},
	year = 1999,
	month = {jan},
	publisher = {{IOP} Publishing},
	volume = {11},
	number = {17},
	pages = {3513--3523},
	author = {Bibhas Bhattacharyya and Shreekantha Sil},
	title = {The Hubbard model with bond-charge interaction on a triangular lattice: a renormalization group study},
	journal = {Journal of Physics: Condensed Matter},
	abstract = {We have studied the Hubbard model with bond-charge interaction on a triangular lattice for a half-filled band. At the point of particle-hole symmetry the model could be analysed in detail in two opposite regimes of the parameter space. Using a real-space renormalization group (RG) we calculate the ground-state energy and the local moment over the whole parameter space. The RG results obey the exact results in the respective limits. In the intermediate region of the parameter space, the RG results clearly show the effects of the non-bipartite geometry of the lattice as well as the absence of symmetry in the reversal of the sign of the hopping matrix element.}
}

@article{Penson_Jullien_Pfeuty_1982,
  title = {Zero-temperature renormalization-group method for quantum systems. V. Frustration in two dimensions},
  author = {Penson, K. A. and Jullien, R. and Pfeuty, P.},
  journal = {Phys. Rev. B},
  volume = {25},
  issue = {3},
  pages = {1837--1847},
  numpages = {0},
  year = {1982},
  month = {Feb},
  publisher = {American Physical Society},
  doi = {10.1103/PhysRevB.25.1837},
  url = {https://link.aps.org/doi/10.1103/PhysRevB.25.1837}
}

@article{Wang_Kais_Levine_2002, title={Real-space renormalization group study of the Hubbard model on a non-bipartite lattice}, volume={3}, DOI={10.3390/i3010004}, number={1}, journal={International Journal of Molecular Sciences}, author={Wang, J. and Kais, Sabre and Levine, R.}, year={2002}, pages={4–16}}

@article{Damgaard_Thorleifsson_1991,
  title = {Chaotic renormalization-group trajectories},
  author = {Damgaard, P. H. and Thorleifsson, G.},
  journal = {Phys. Rev. A},
  volume = {44},
  issue = {4},
  pages = {2738--2741},
  numpages = {0},
  year = {1991},
  month = {Aug},
  publisher = {American Physical Society},
  doi = {10.1103/PhysRevA.44.2738},
  url = {https://link.aps.org/doi/10.1103/PhysRevA.44.2738}
}

@article{Morozov_Niemi_2003, title={Can renormalization group flow end in a Big Mess?}, volume={666}, DOI={10.1016/s0550-3213(03)00544-3}, number={3}, journal={Nuclear Physics B}, author={Morozov, Alexei and Niemi, Antti J.}, year={2003}, pages={311–336}}

@article{McKay_Berker_Kirkpatrick_1982,
  title = {Spin-Glass Behavior in Frustrated Ising Models with Chaotic Renormalization-Group Trajectories},
  author = {McKay, Susan R. and Berker, A. Nihat and Kirkpatrick, Scott},
  journal = {Phys. Rev. Lett.},
  volume = {48},
  issue = {11},
  pages = {767--770},
  numpages = {0},
  year = {1982},
  month = {Mar},
  publisher = {American Physical Society},
  doi = {10.1103/PhysRevLett.48.767},
  url = {https://link.aps.org/doi/10.1103/PhysRevLett.48.767}
}

@article{Svrakic_1982,
	doi = {10.1088/0305-4470/15/8/009},
	url = {https://doi.org/10.1088%2F0305-4470%2F15%2F8%2F009},
	year = 1982,
	month = {aug},
	publisher = {{IOP} Publishing},
	volume = {15},
	number = {8},
	pages = {L427--L432},
	author = {N M Svrakic and J Kertesz and W Selke},
	title = {Hierarchical lattice with competing interactions: an example of a nonlinear map},
	journal = {Journal of Physics A: Mathematical and General},
	abstract = {An Ising model with competing nearest and more-than-nearest neighbour interactions on a hierarchical lattice is solved by decimation. In the context of nonlinear mappings the renormalisation group trajectories are analysed as the competition parameter is varied. Commensurate and incommensurate phases are identified.}
}

@article{Derrida_Eckmann_Erzan_1999,
author = {Derrida, Bernard and Eckmann, Jean-Pierre and Erzan, Ayse},
year = {1999},
month = {01},
pages = {893},
title = {Renormalisation groups with periodic and aperiodic orbits},
volume = {16},
journal = {Journal of Physics A: Mathematical and General},
doi = {10.1088/0305-4470/16/5/009}
}

@article{Lieb_Schultz_Mattis_1961, author = { E.H. Lieb and T.H. Schultz and D.C. Mattis}, title={Two soluble models of an antiferromagnetic chain}, volume={15}, DOI={10.1016/0003-4916(61)90208-1}, number={3}, journal={Annals of Physics}, year={1961}, pages={472–473}}

@article{Bennett_1990, title={Undecidable Dynamics}, number={346}, journal={Nature}, author={Bennett, C H}, year={1990}, month={Aug}, pages={606–607}}

@article{Moore_Long_1991,
	doi = {10.1088/0951-7715/4/2/002},
	url = {https://doi.org/10.1088/0951-7715/4/2/002},
	year = 1991,
	month = {may},
	publisher = {{IOP} Publishing},
	volume = {4},
	number = {2},
	pages = {199--230},
	author = {C Moore},
	title = {Generalized shifts: unpredictability and undecidability in dynamical systems},
	journal = {Nonlinearity},
	abstract = {A class of shift-like dynamical systems is presented that displays a wide variety of behaviours. Three examples are presented along with some general definitions and results. A correspondence with Turing machines allows one to discuss issues of predictability and complexity. These systems possess a type of unpredictability qualitatively stronger than that which has been previously discussed in the study of low-dimensional chaos, and many simple questions about their dynamics are undecidable. The author discusses the complexity of various sets they generate, including periodic points, basins of attraction, and time series. Finally, he shows that they can be embedded in smooth maps in R2, or smooth flows in R3.}
}

@article{Eckmann_Ruelle_1985, title={Ergodic theory of chaos and strange attractors}, DOI={10.1007/978-0-387-21830-4_17}, journal={The Theory of Chaotic Attractors}, author={Eckmann, J.-P. and Ruelle, D.}, year={1985}, pages={273–312}}

@article{Grassberger_Procaccia_1983, title={Measuring the strangeness of strange attractors}, volume={9}, DOI={10.1016/0167-2789(83)90298-1}, number={1-2}, journal={Physica D: Nonlinear Phenomena}, author={Grassberger, Peter and Procaccia, Itamar}, year={1983}, pages={189–208}}

@article{Shenker_Stanford_2014, title={Black holes and the butterfly effect}, volume={2014}, DOI={10.1007/jhep03(2014)067}, number={3}, journal={Journal of High Energy Physics}, author={Shenker, Stephen H. and Stanford, Douglas}, year={2014}}

\newpage

\appendix

\begin{center}
  {\LARGE \textbf{Appendix}}
\end{center}

\section{Recontructing Robinson pattern of 2D plane} \label{appendix:pattern}

A first interesting fact is that $2\times 2$ supertiles having a parity cross on the bottom left pointing up-right must have the following structure.
A parity left tile on top-left corner, a parity down tile in the bottom-right corner, and consequently there must be a free cross on the top-right of the supertile.
The orientation of the free cross will uniquely determine the type of left tile and down tile in the same supertile.
Thus, there are only 4 supertiles with a parity cross pointing up-right:

\begin{figure}[h!]
\begin{center}
  \includegraphics[width=0.9\textwidth]{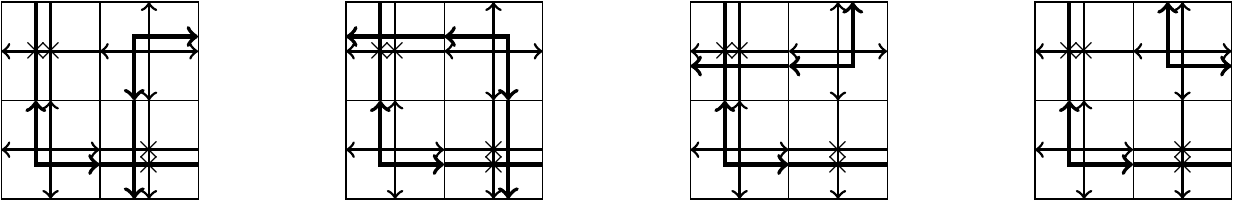}
\end{center}
\end{figure}
We make a first educated guess: each of these four supertiles corresponds to the $1\times 1$ parity cross having the same orientation as the free cross contained in the top-right of the supertile.

\medskip

If we want to tile the plane according to the Robinson pattern, these supertiles must then appear in alternate positions in alternate rows.
To this aim, we will assign parity rules to the supertiles according to the orientation of the $1\times 1$ parity cross on the bottom-left of each supertile (please note: this is a parity associated to the supertile as a whole and it is different to its inner parity structure). That is,

\begin{figure}[h!]
\begin{center}
	\includegraphics[width=0.85\textwidth]{parity_renormalization}
\end{center}
\end{figure}

Thus, $2\times 2$ supertiles with a up-right $1\times 1$ parity cross in the bottom left must be interleaved in the vertical direction with a supertile  with a $1\times 1$ parity cross pointing bottom-right and in the horizontal direction with a supertile having a $1\times 1$ up-left parity cross.
Finally, supertiles with a down-left cross will alternate on the diagonal with the supertiles having a up-right cross.

\medskip

Using these parity rules and the usual arrow heads/tails constraints, we shall obtain the adjacency relations for the supertiles which have to be obeyed.
We make a point here: the only constraints that we will use in the tiling of the 2D plane are the ones set by these adjacency rules.

\medskip

We shall now reconstruct the basic 3-square in Robinson's argument, this time using $2\times 2$ supertiles.
At the corners of these 3-squares there must be the four supertiles that we have identified as parity crosses.
By strictly following the adjacency rules, we will end up with exactly four possible 3-squares, that we will relate to the 3-squares with $1\times 1$ tiles.
No other configuration of a 3-square is allowed!

\vspace{1cm}

\begin{center}
  \includegraphics[width=0.9\textwidth]{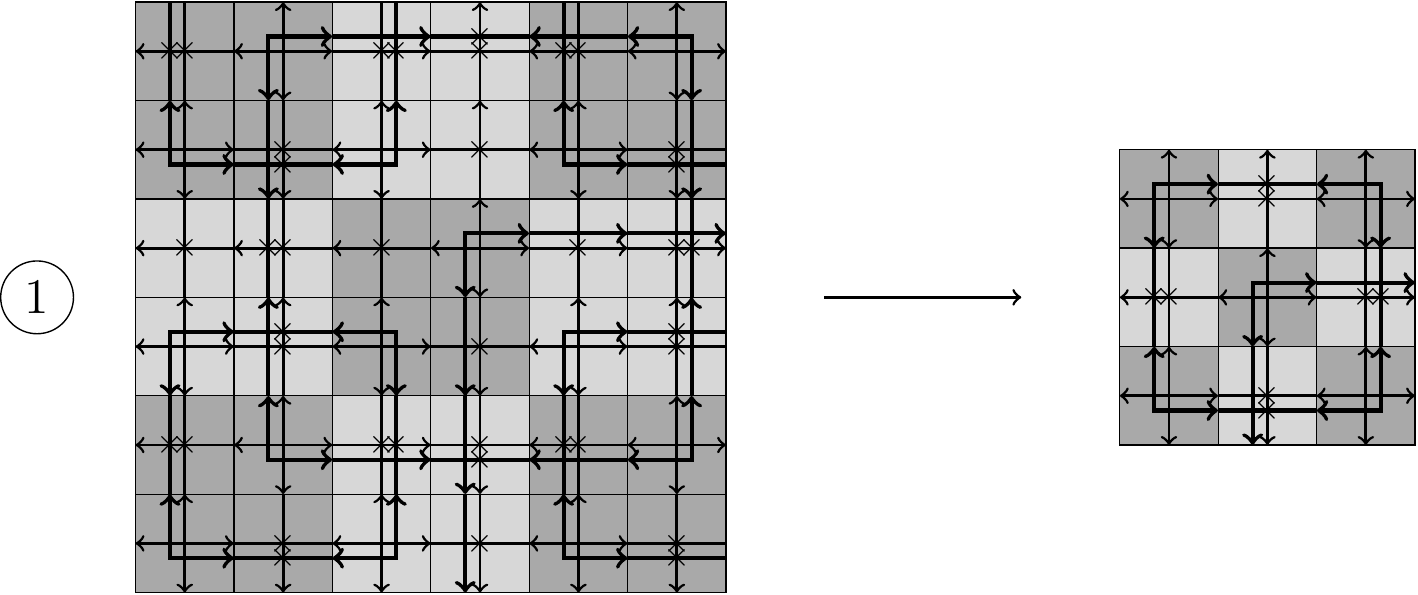}

  \vspace{2cm}

  \includegraphics[width=0.9\textwidth]{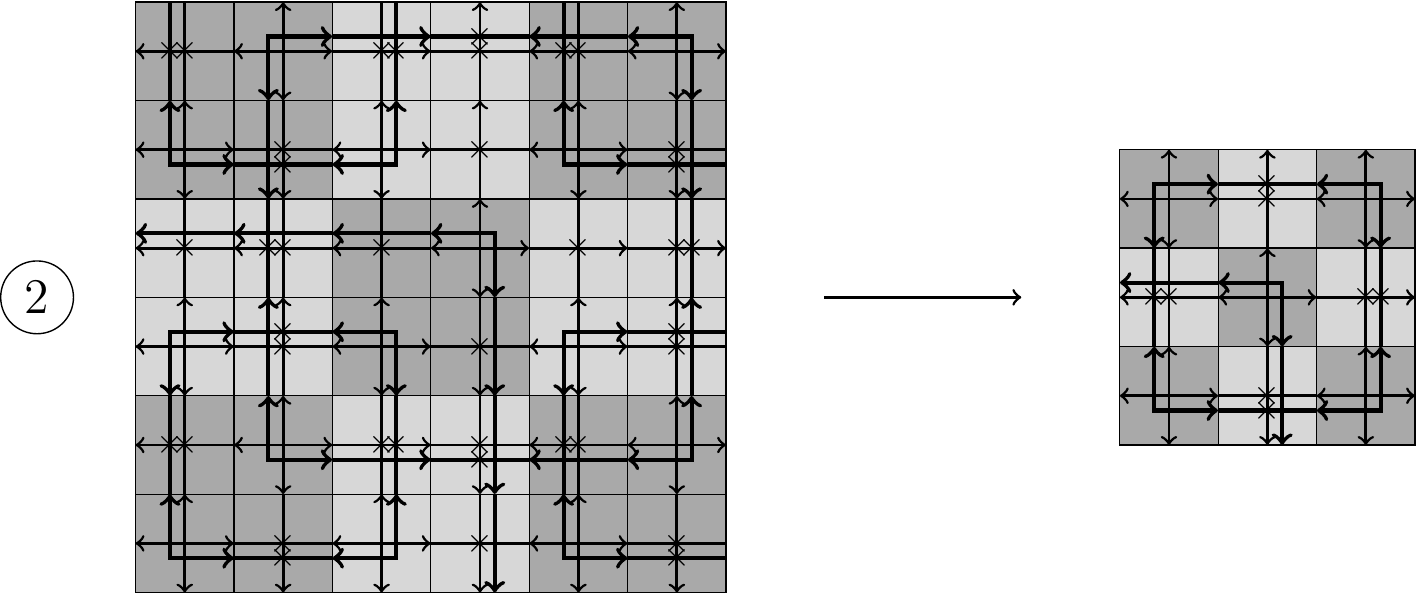}

  \vspace{2cm}

  \includegraphics[width=0.9\textwidth]{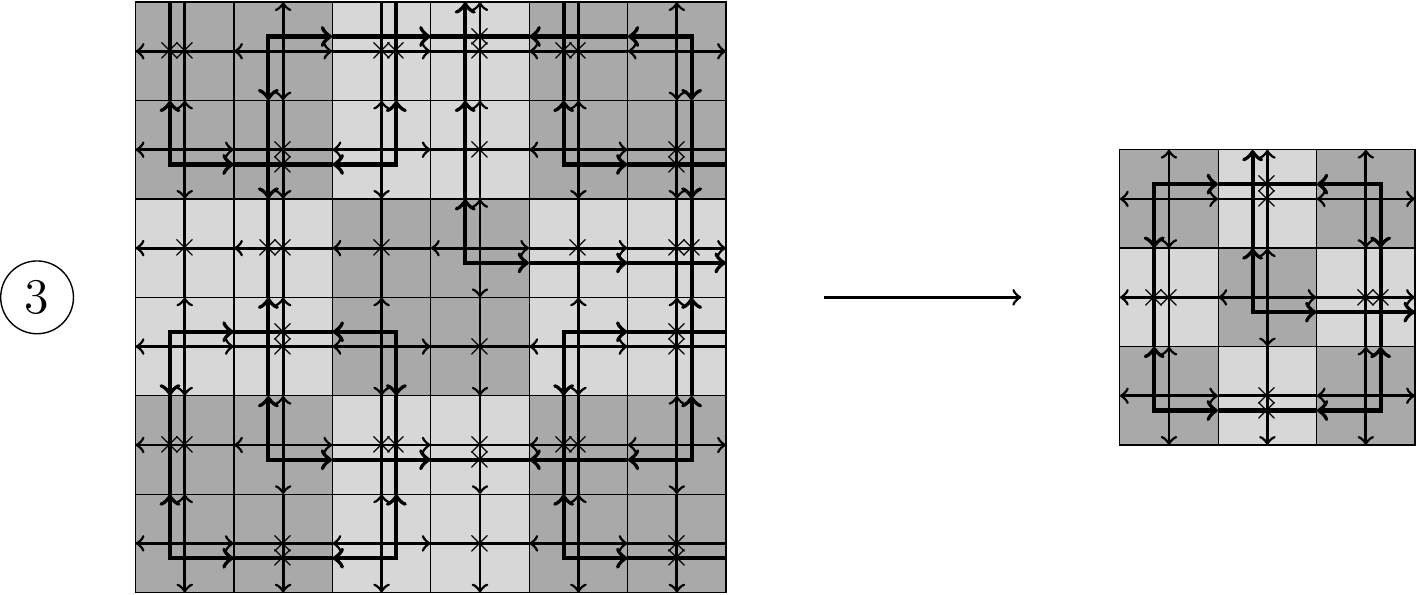}

  \vspace{2cm}

  \includegraphics[width=0.9\textwidth]{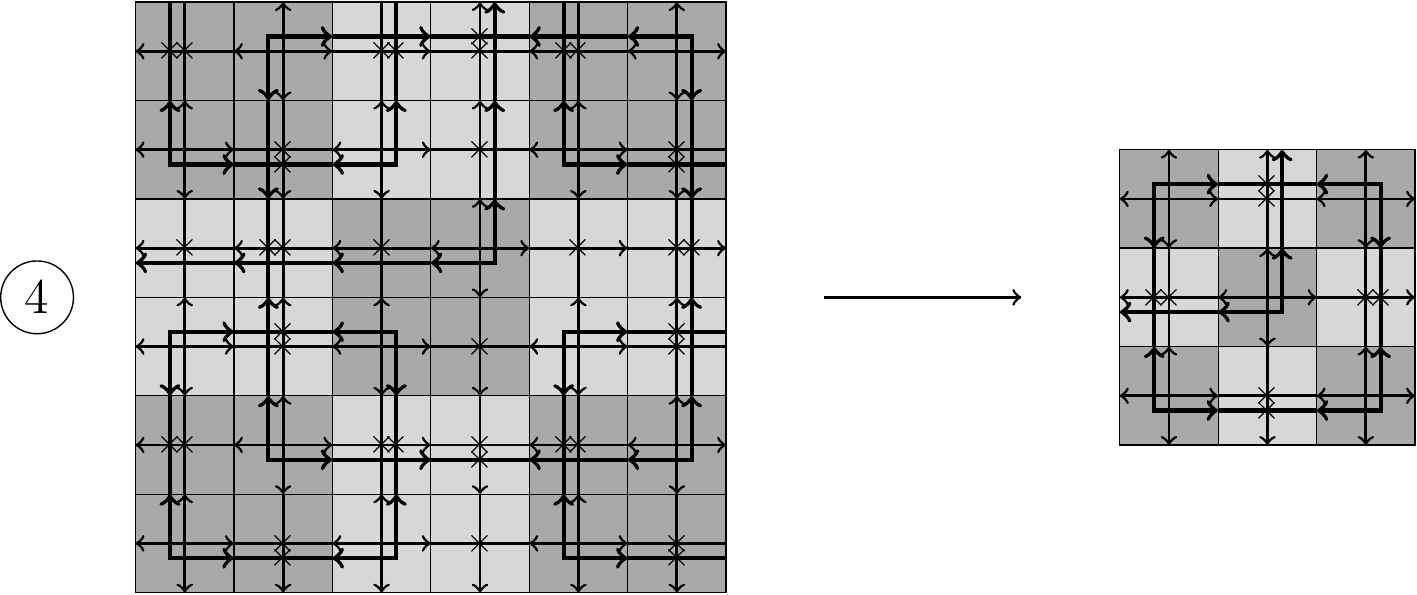}

\end{center}

\vspace{1cm}

We remark that the position of the $2\times 2$ parity crosses at the corner of the 3-square are fixed, and that the central $2\times 2$ supertile of these 3-squares -- corresponding to one of the the $1\times 1$ free crosses -- uniquely determines the remaining ones.

\medskip

At this time we have recognised the first 20 tiles.
Among these are the parity cross and free cross tiles.
The remaining ones will be determined by those supertiles placed between the 3-squares, again in analogy to the Robinson pattern.

\medskip

We consider the 3-squares in the illustration above, labelled from 1 to 4. When we pick the first 3-square, we note that only the second 3-square can be placed at its right, and they must be interleaved with a string made of three supertiles put in vertical order: only three configurations for these strings are allowed. Below we illustrate their arrow markings as well as their renormalisation onto Robinson tiles.
Note that the central Robinson tile in each renormalised string has free parity and is hence different from the parity vertical tile having the same arrow markings appearing in the 3-squares.
Note also that the tile at the top of each string is the identical to the one at its bottom (both in markings and parity).

\vspace{1.2cm}

\begin{center}
  \includegraphics[width=0.75\linewidth]{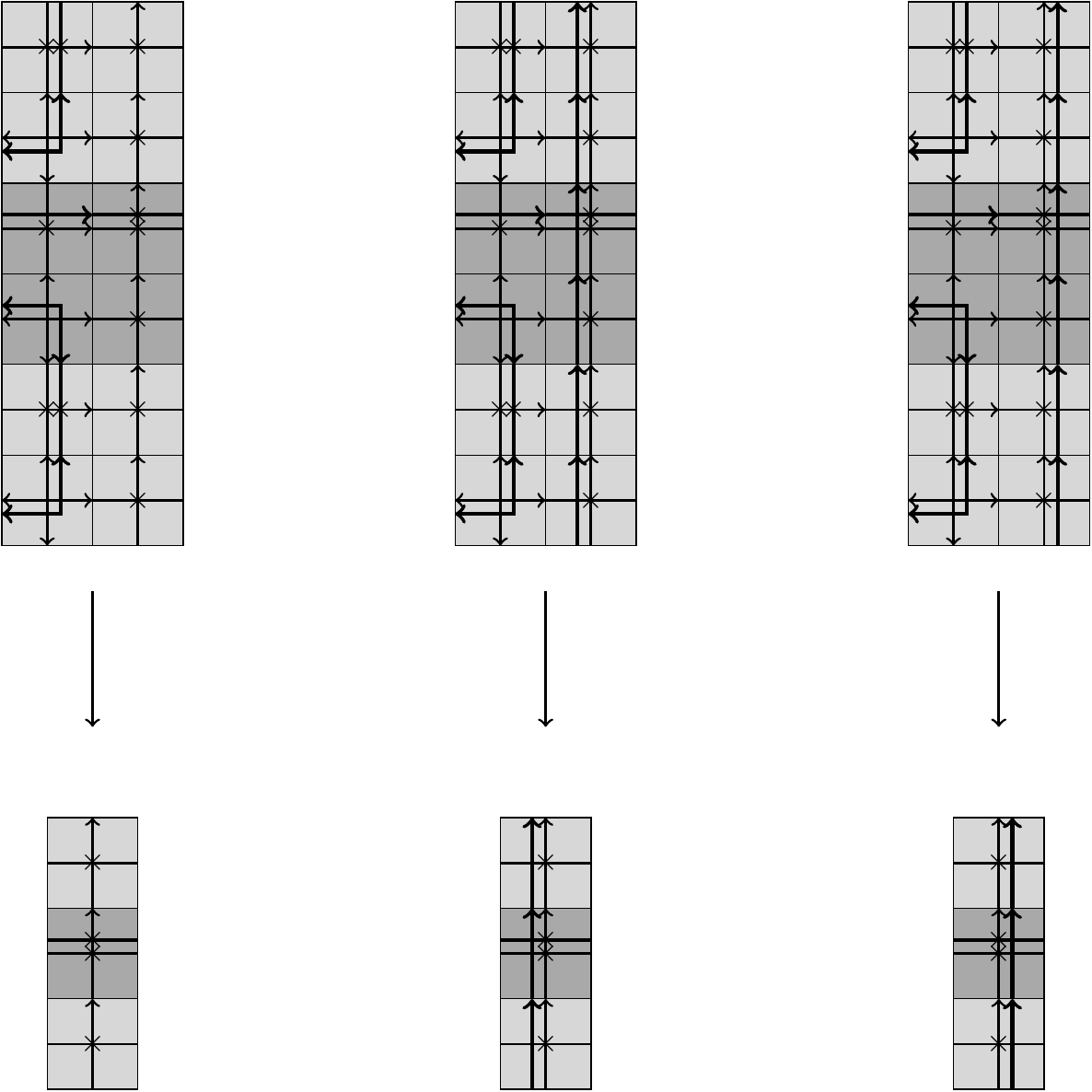}
\end{center}

\vspace{.3cm}

Analogously, starting from the third 3-square only the forth one is allowed to be placed at its right, and between them there must be one of the following strings of supertiles, that we renormalise as illustrated below.

\vspace{.3cm}
\begin{center}
  \includegraphics[width=0.75\linewidth]{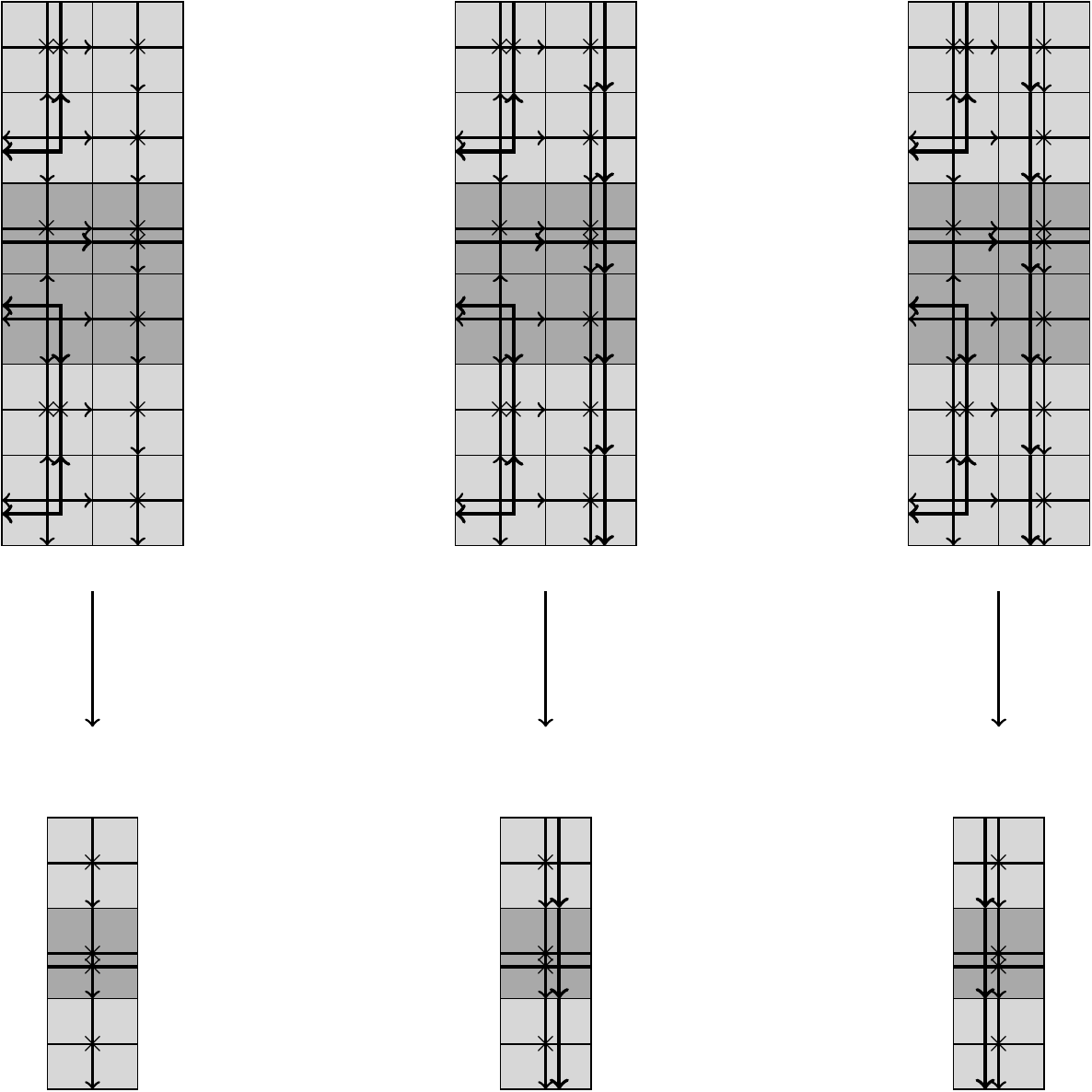}
\end{center}

\vspace{.6cm}

Vertical Robinson tiles with free parity and a single horizontal line correspond to supertiles between 3-squares whose free crosses do not face each others.
More precisely, the mapping is given by

\vspace{.3cm}

\begin{center}
  \includegraphics[width=0.9\linewidth]{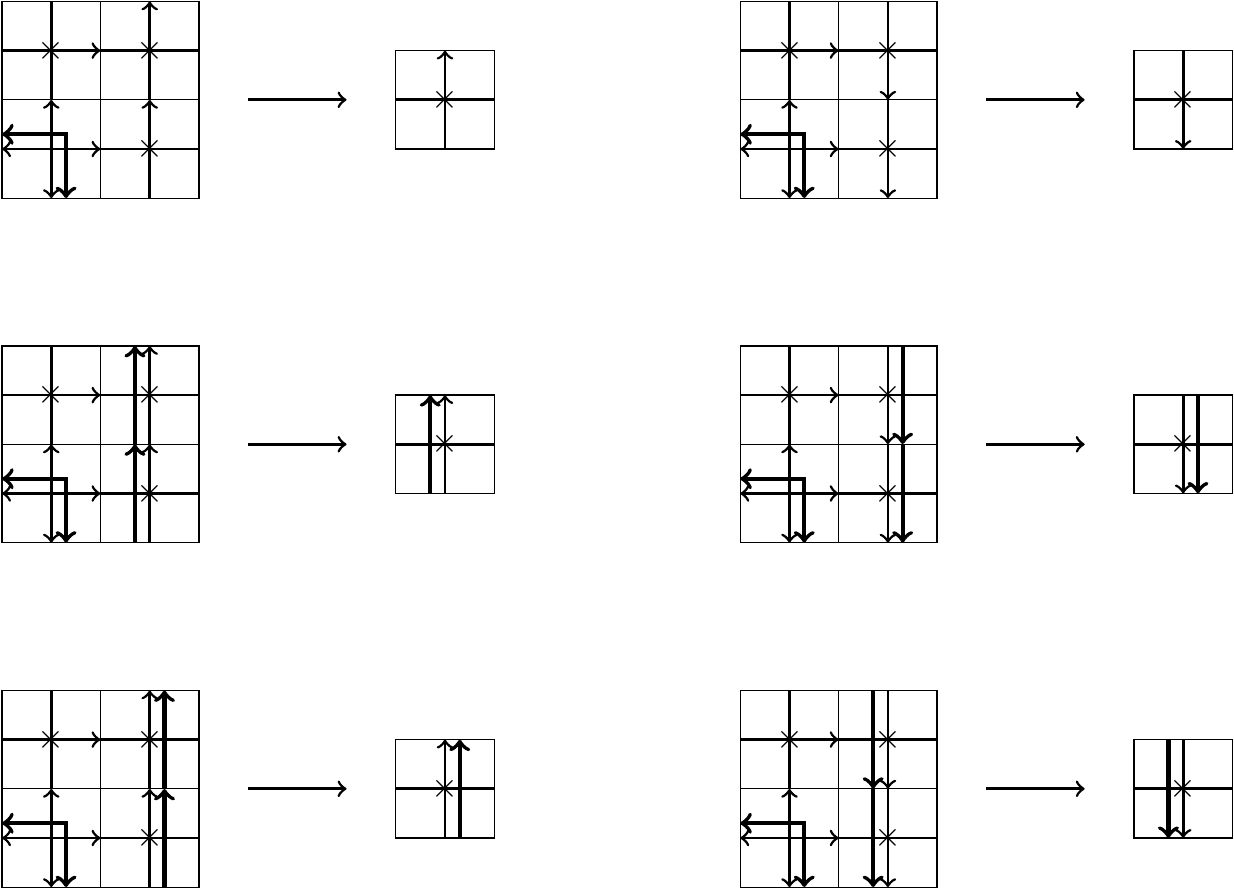}
\end{center}

\vspace{0.3cm}

We have at this point a correspondence between 38 tiles and 38 supertiles; in addition to parity crosses and free crosses, now all vertical arms associated with both parities have been identified. The remaining 18 tiles are horizontal arms.
To find them, we proceed in analogous way.

\medskip

Below the first 3-square we can place only the third 3-square, interleaved with one of the following strings of three supertiles put in horizontal order. Again, we note that the left and right supertiles of each string coincide and are thus mapped to the same Robinson tile and that the central tile has free parity.
\vspace{.3cm}
\begin{center}
  \includegraphics[width=0.8\linewidth]{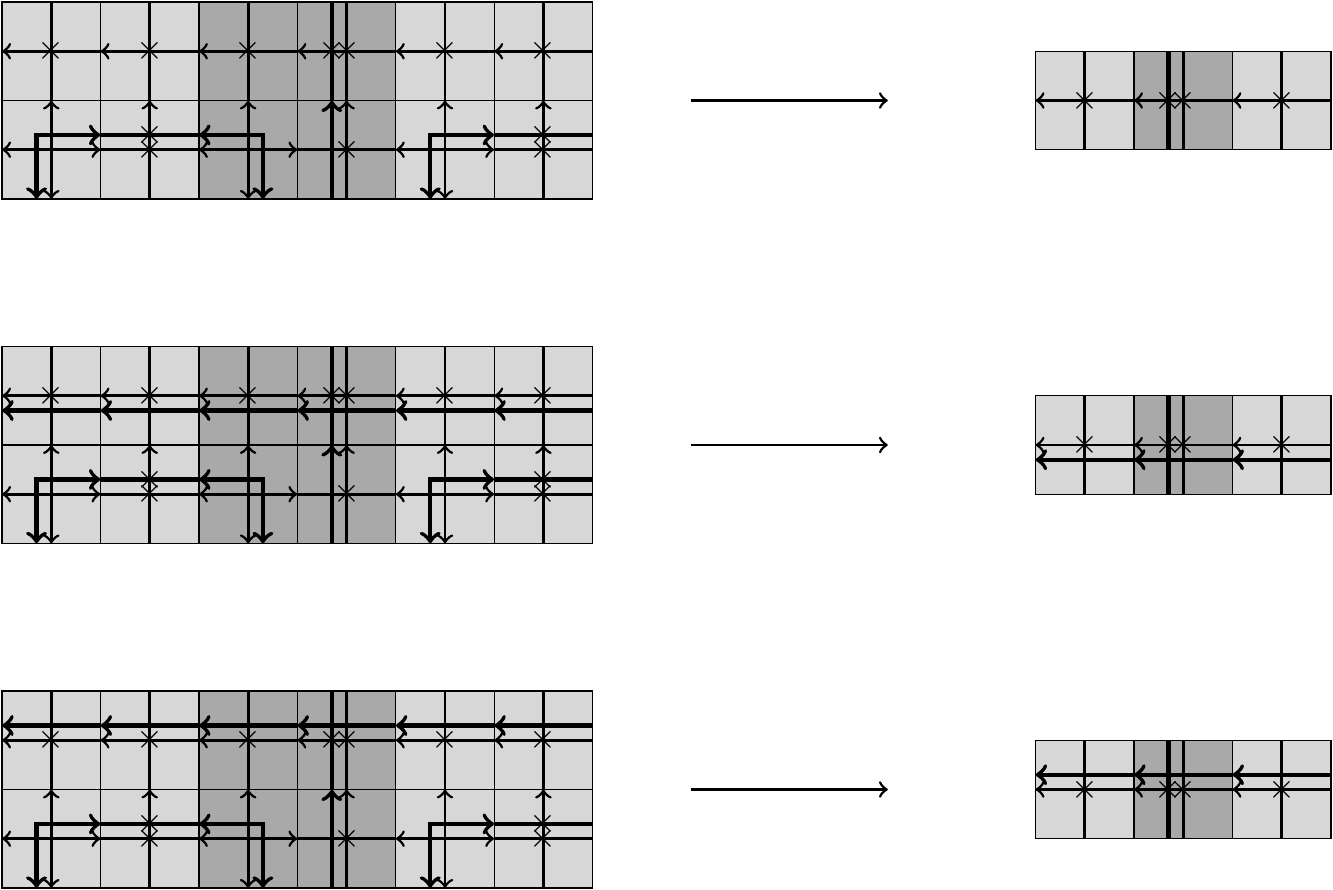}
\end{center}
\vspace{.6cm}

We have other three strings of supertiles that are allowed to stay between the second three square placed above the fourth 3-squares.
\vspace{.3cm}
\begin{center}
	  \includegraphics[width=0.8\linewidth]{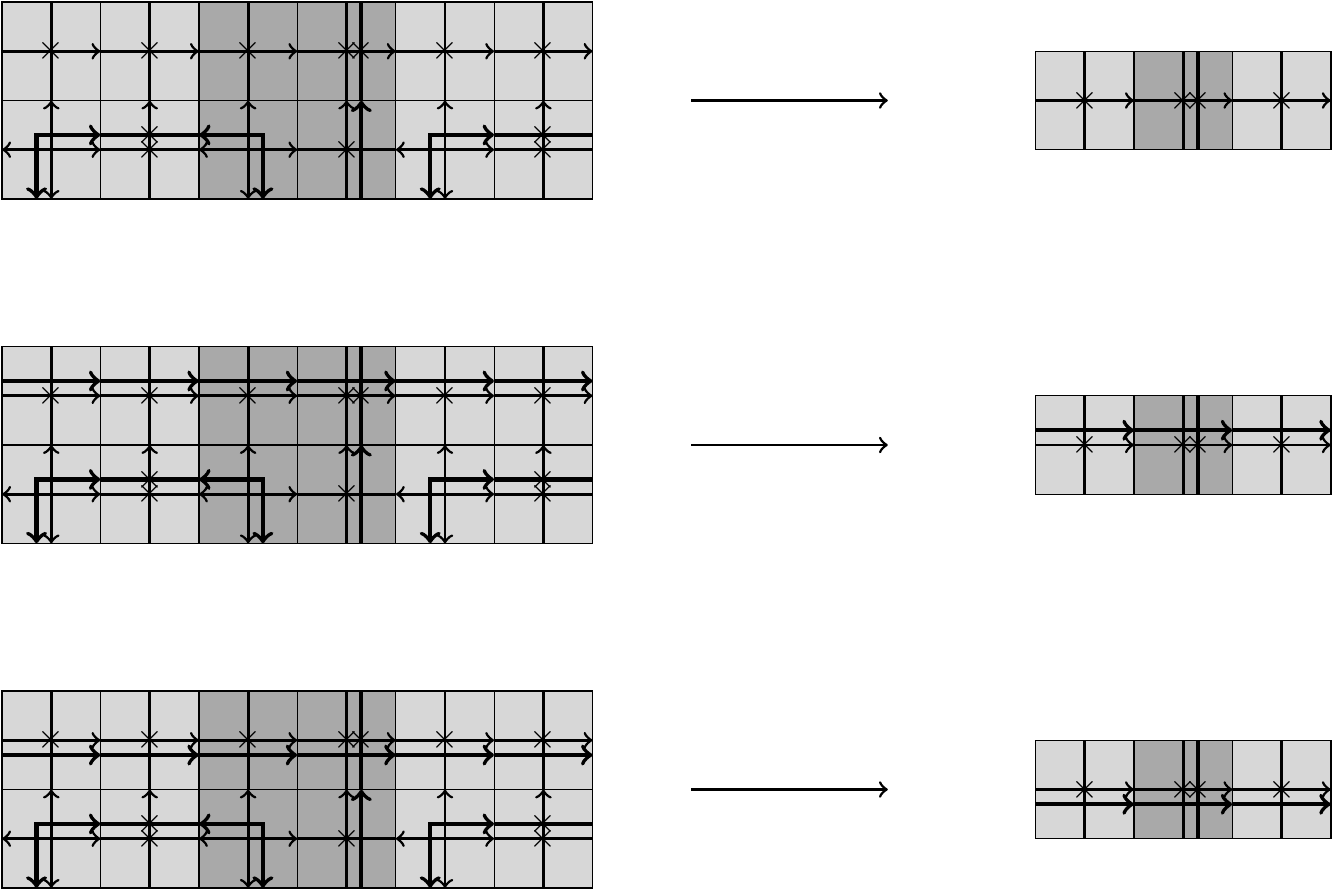}
\end{center}
\vspace{.6cm}

It remains to associate the last 6 free horizontal Robinson tiles to the supertiles that are still unmatched. These must be placed between 3-squares whose free crosses do not face each other.
\vspace{.6cm}
\begin{center}
	  \includegraphics[width=0.9\linewidth]{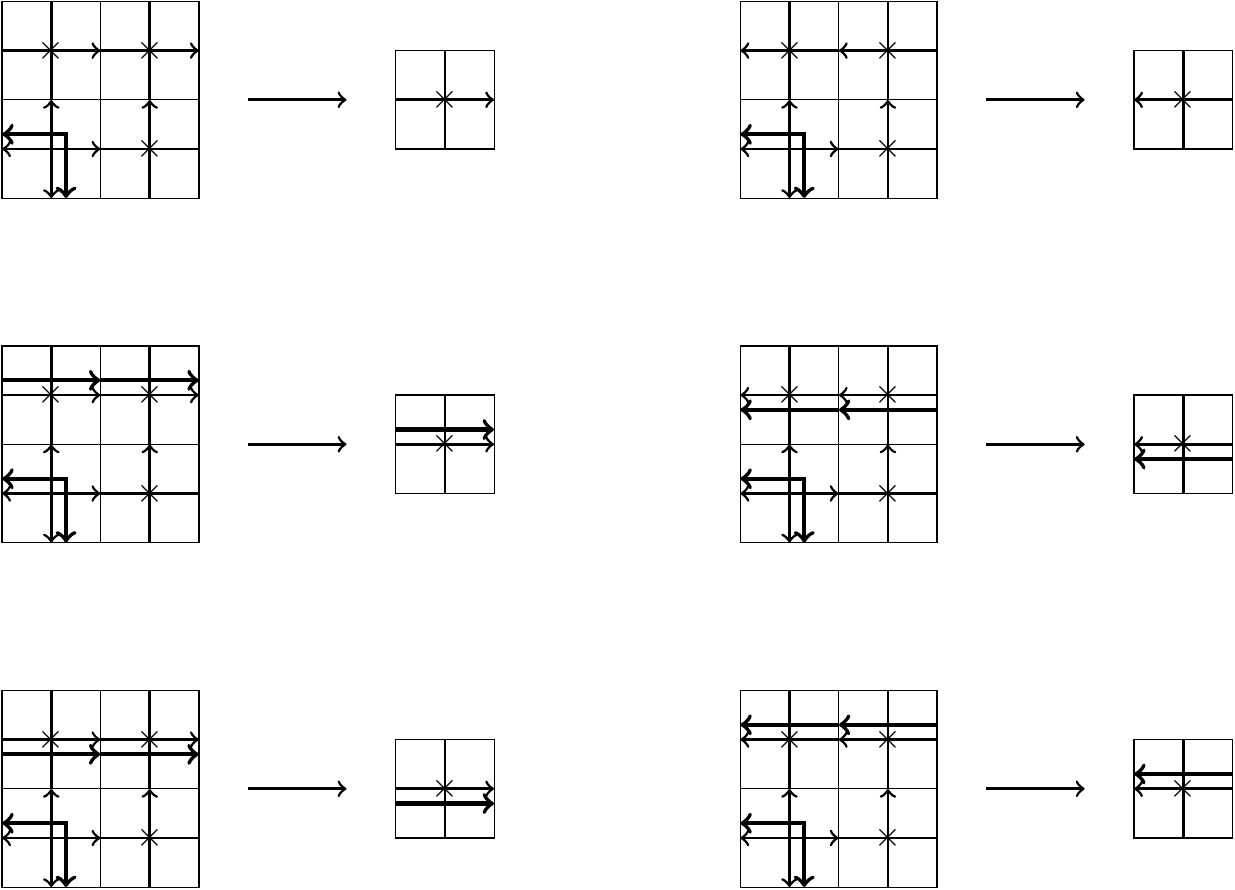}
\end{center}
\vspace{.6cm}

All 56 Robinson tiles have finally been identified with a subset of the allowed $2\times 2$ supertiles. With this renormalisation, one can verify that it is possible to reproduce the Robinson pattern of the plane with $2\times 2$ supertiles.
More importantly, one can ascertain that the adjacency rules for the $2\times 2$ supertiles, under this projection, correspond exactly to the rules for the Robinson tiles (cf.~\cref{appendix:mathematica}).
Stated in another way, the directed graphs representing respectively the adjacency rules of the 56 supertiles and the rules for the Robinson tiles are isomorphic.
Thus, we have achieved a complete renormalisation under which \cref{thm:tiles_graph_isomorphism} and \cref{cor:Robinson_pattern} hold.

\vspace{.5cm}

\section{Mathematica notebook}\label{appendix:mathematica}

Available in the arXiv submission folder is a \textit{Mathematica notebook} for the explicit construction of the renormalisation map in \cref{def:renormalization_map}, in the case when the parity cross occupies the bottom-left position of the grid.

\medskip

We begin inserting by hand the adjacency rules for the Robinson tiles, where the tiles are numbered according to the order given in~\cref{fig:Robinson_tiles}, from left to right, top to bottom. For each tile we list what are the ones that can stay above (variable \textit{adjup1x1} in the notebook) or on its right (variable \textit{adjright1x1}), respectively. Using these rules, we then construct all allowed supertiles with a parity cross in the bottom-left position; the total number of those new elements is 68. In the notebook, each supertile is represented by a $2\times 2$ matrix whose entries are numbers from 1 to 56 corresponding to the Robinson tiles which are composing it. We then construct adjacency rules for these supertiles by obeying arrow markings and parity constraints on the shared edge.

The renormalisation map is represented by the variable \emph{labelrenormalization}: the number at the position $j$ corresponds to the Robinson tile associated to the supertile $j$. The supertiles that are not appearing in the Robinson pattern discussed in~\cref{sec:not_appearing} are mapped to numbers from 57 to 68; these are not connected to any Robinson tile, and subsequently removed from the bijection. Finally, we re-write the adjacency rules for supertiles under this bijection and confirm that the graph is isomorphic to the one of the Robinson tiles.


\section{Proof of \cref{Lemma:Full_Renormalization}} \label{Appendix:Total_RG_Proof}

For convenience we state Lemma 51 of \cite{Cubitt_PG_Wolf_Undecidability}.
\begin{lemma}[Tiling + quantum layers, Lemma 51 of \cite{Cubitt_PG_Wolf_Undecidability}]\label{Lemma:tiling+quantum}
  Let $h_c^{\mathrm{row}},h_c^{\mathrm{col}}\in\B(\C^C\ox\C^C)$ be the local interactions of a 2D tiling Hamiltonian $H_c$, with two distinguished states (tiles) $\ket{L},\ket{R}\in\C^C$. Let $h_q\in\B(\C^Q\ox\C^Q)$ be the local interaction of a Gottesman-Irani Hamiltonian $H_q(r)$, as in  \cref{Sec:Quantum_RG}.
  Then there is a Hamiltonian on a 2D square lattice with nearest-neighbour interactions $h^\mathrm{row},h^\mathrm{col}\in\B(\C^{C+Q+1}\ox\C^{C+Q+1})$ with the following properties: For any region of the lattice, the restriction of the Hamiltonian to that region has an eigenbasis of the form $\ket{T}_c\ox\ket{\psi}_q$, where $\ket{T}_c$ is a \emph{product} state representing a classical configuration of tiles. Furthermore, for any given $\ket{T}_c$, the lowest energy choice for $\ket{\psi}_q$ consists of ground states of $H_q(r)$ on segments between sites in which $\ket{T}_q$ contains an $\ket{L}$ and an $\ket{R}$, a 0-energy eigenstate on segments between an $\ket{L}$ or $\ket{R}$ and the boundary of the region, and $\ket{e}$'s everywhere else.
\end{lemma}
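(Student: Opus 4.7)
The plan is to construct $h^{\mathrm{row}}$ and $h^{\mathrm{col}}$ as a sum of three classes of terms: a purely classical tiling part, a purely quantum part, and coupling terms that marry the two layers. Concretely, take the local Hilbert space to be $\C^{C}\otimes(\C^{Q}\oplus\ket{e})$ (i.e.\ at each site either a tile together with $\ket{e}$, or a tile together with a quantum state), and let
\begin{align*}
h^{\mathrm{row}} &= h_c^{\mathrm{row}}\otimes \1_{eq}\otimes\1_{eq} \;+\; \1_c\otimes\1_c\otimes h_q^{\mathrm{row}} \;+\; h^{\mathrm{row}}_{\mathrm{couple}},\\
h^{\mathrm{col}} &= h_c^{\mathrm{col}}\otimes \1_{eq}\otimes\1_{eq},
\end{align*}
with the coupling terms taken verbatim from \cref{TQ:<L}--\cref{TQ:10-b}, i.e.\ rank-one projectors (on the classical layer) times rank-one projectors (on the quantum layer). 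The key design feature is that \emph{every} term is diagonal on the classical layer: tiling terms trivially, $h_q$ because the quantum part only acts nontrivially on $\HS_{eq}$, and the coupling terms by inspection (they are of the form $\ketbra{t}_c\otimes P_{eq}$ for some fixed classical tile state $\ket{t}$ and some projector $P_{eq}$).

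Given this, the first claim is immediate. Since every local term is of the form $P^{c}\otimes Q^{eq}$ with $P^{c}$ diagonal in the computational basis on $\C^C$, the full Hamiltonian $H$ restricted to any region commutes with the projectors onto classical product configurations $\ketbra{T}_c$. Consequently $H$ decomposes as a direct sum $H=\bigoplus_{T} \ketbra{T}_c\otimes H_q[T]$ where $H_q[T]$ is an operator on $\HS_{eq}^{\otimes|\Lambda|}$ depending only on the classical configuration $T$. Diagonalizing each $H_q[T]$ then yields an eigenbasis of the claimed product form $\ket{T}_c\otimes\ket{\psi}_q$.

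The second claim is the content of the analysis of $H_q[T]$ for a fixed tiling $T$. Reading off the couplings \cref{TQ:<L}--\cref{TQ:R>}, a site where $T$ has tile $\ket{L}$ forces $\ket{\leftend}$ in the quantum layer (otherwise an energy penalty of $1$); similarly $\ket{R}\Rightarrow\ket{\rightend}$; and terms \cref{TQ:1R}--\cref{TQ:10-b} force that whenever a neighbor is \emph{not} $\ket{L}$ (resp.\ $\ket{R}$), the quantum layer cannot transition from $\ket{e}$ to a non-$\ket{e}$ state. Together these enforce that, on the zero-energy subspace of the couplings, the quantum layer is partitioned into maximal runs of non-$\ket{e}$ sites, each of which is either of the form $\ket{\leftend}\cdot(\cdot)\cdot\ket{\rightend}$ (a segment bracketed by a matching $L/R$ pair on a row of $T$) or a segment abutting the boundary, with $\ket{e}$ everywhere else. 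Conditional on this structure, the quantum term $h_q^{\mathrm{row}}$ acts as a standalone Gottesman–Irani Hamiltonian $H_q(r)$ on each $L$-to-$R$ segment of length $r$, acts as $H_q$ \emph{without} boundary brackets (whose kernel is nontrivial and contains $0$-energy states by frustration-freeness, \cref{QTM_in_local_Hamiltonian:FF}) on boundary segments, and acts trivially on $\ket{e}$ sites. Minimizing over $\ket{\psi}_q$ in each block independently then gives exactly the three types of local factors in the statement.

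The main obstacle is the last step: one must verify that violating the coupling-term constraints is strictly suboptimal, i.e.\ that paying $\ge 1$ unit of coupling energy is never compensated by a lower quantum energy elsewhere. Since the Gottesman–Irani ground state energies satisfy $\lambda_0(H_q(r))\le 1-\cos(\pi/2T(r))\ll 1$ by \cref{Lemma:Ground_State_GI_Form}, and zero in the non-halting case, any state that violates an $L\leftrightarrow\leftend$ or $R\leftrightarrow\rightend$ constraint incurs energy $\ge 1$ which strictly exceeds any saving; hence the quantum optimum is indeed achieved on configurations obeying all couplings, and among those the quantum energy is precisely a sum of $\lambda_0(H_q(r_i))$ over $L$–$R$ segments plus zero contributions from boundary and $\ket{e}$-regions. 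This completes the argument.
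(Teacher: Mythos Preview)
Your proposal is correct and follows the standard construction. Note, however, that the present paper does not itself prove this lemma: it is quoted verbatim from \cite{Cubitt_PG_Wolf_Undecidability} (Lemma~51 there) and used as a black box, with the paper explicitly writing ``we refer the reader to this proof for brevity'' when invoking it. Your argument matches the original proof in \cite{Cubitt_PG_Wolf_Undecidability}: the diagonality of every local term on $\HS_c$ gives the block decomposition $H=\bigoplus_T\ketbra{T}_c\otimes H_q[T]$, and the coupling-term analysis pins down the structure of the quantum optimum in each block. One small point worth tightening in your write-up is the block decomposition of $H_q[T]$ itself across segments: you implicitly use that $h_q$ is block-diagonal with respect to the positions of $\ket{\leftend},\ket{\rightend}$ and $\ket{e}$ (so that distinct segments decouple), which is a property of the Gottesman--Irani construction but should be stated explicitly rather than absorbed into ``acts as a standalone $H_q(r)$ on each segment''.
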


For the rest of this section we denote $\ket{\Rk(L)}$ and $\ket{\Rk(R)}$ to be the states in the set of $k$-time renormalised Robinson tiles with a down-left and down-right red cross marking on them, respectively.
For simplicity we break down \cref{Lemma:Full_Renormalization} into two separate parts: the first deals with the first two claims and the last deals with the third claim.


\begin{lemma}[Restatement of Claim 1 and 2 in~\cref{Lemma:Full_Renormalization}]
  \label{Lemma:Full_Renormalization_2}
  Let $H_u^{\Lambda(L)}=\sum h_u^{row(j,j+1)} + \sum h_u^{col(i,i+1)}$, where
  \begin{subequations}\label{TQ:overall_H_2}
    \begin{align}
      h^{\mathrm{col}}_{j,j+1} =
      \label{TQ:cols_2} &h_c^{\mathrm{col}}\ox \1_{eq}^{(j)} \ox \1_{eq}^{(j+1)}\\
      h^{\mathrm{row}}_{i,i+1} =
      \label{TQ:Hc_2}   &h_c^{\mathrm{row}}\ox\1_{eq}^{(i)}\ox\1_{eq}^{(i+1)}\\
      \label{TQ:Hq_2}   &+\1_c^{(i)}\ox\1_c^{(i+1)}\ox h_q\\
      \label{TQ:<L_2}   &+\ketbra{L}^{(i)}_c \ox (\1_{eq}-\ketbra{\leftend})^{(i)}
                          \ox \1_{ceq}^{(i+1)}\\
      \label{TQ:L<_2}   &+(\1_c- \ketbra{L}_c)^{(i)} \ox \ketbra{\leftend}^{(i)}
                          \ox \1_{ceq}^{(i+1)}\\
      \label{TQ:>R_2}   &+\1_{ceq}^{(i)} \ox \ketbra{R}^{(i+1)}_c
                          \ox (\1_{eq} - \ketbra{\rightend})^{(i+1)}\\
      \label{TQ:R>_2}   &+\1_{ceq}^{(i)}
                          \ox (\1_c-\ketbra{R})^{(i+1)}_c
                          \ox\ketbra{\rightend}^{(i+1)}\\
      \label{TQ:1R_2}   &+\1_c^{(i)} \ox \ketbra{e}^{(i)}_e
                          \ox \ketbra{R}^{(i+1)}_c \ox \1_{eq}^{(i+1)}\\
      \label[term]{TQ:L1_2}   &+\ketbra{L}^{(i)}_c \ox \1_{eq}^{(i)}
                                \ox\1_c^{(i+1)} \ox \ketbra{e}^{(i+1)}_e\\
      \label{TQ:10-a_2} &+\1_c^{(i)} \ox \ketbra{e}^{(i)}_e
                          \ox (\1_c-\ketbra{L})^{(i+1)}_c
                          \ox (\1_{eq}-\ketbra{e})^{(i+1)}_e\\
      \label{TQ:10-b_2} &+(\1_c-\ketbra{R})^{(i)}_c
                          \ox (\1_{eq}-\ketbra{e})^{(i)}_e
                          \ox\1_c^{(i+1)} \ox \ketbra{e}^{(i+1)}_e \\
                        &+ \1_{ceq}^{(i)} \ox \1_{ceq}^{(i+1)} \label{TQ:2-Local-Constant_2}\\
      h^{(1)}_i = &-(1+\alpha_2(\varphi) )\1_{ceq}^{(i)}, \label{TQ:1-Local-Constant_2}
    \end{align}
  \end{subequations}
  for a constant $\alpha_2(\varphi)$.

  Then the $k$ times renormalised Hamiltonian under the RG mapping of \cref{Def:V^u_Isometry}, $\Rk(H_u)^{\Lambda(L\times H)}$, has the following properties:
  \begin{enumerate}
  \item For any finite region of the lattice, the restriction of the
    Hamiltonian to that region has an eigenbasis of the form  $\ket{T}_c\otimes\ket{\psi_i}$ where $\ket{T}_c\in \Rk(\HS_c)^{\Lambda(L\times H)}$ is a classical tiling state, $\ket{\psi_i}\in \Rk(\HS_{eq})^{\Lambda(L\times H)}$. \label{H_e_Renorm:Separable_Eigenstates_2}

  \item Furthermore, for any given $\ket{T}_c$, the lowest energy choice for $\ket{\psi}_q$ consists of ground states of $\Rk(H_q)(r)$ on segments between sites in which $\ket{T}_c$ contains an $\ket{\Rk(L)}$ and an $\ket{\Rk(R)}$, a 0-energy eigenstate on segments between an $\ket{\Rk(L)}$ or $\ket{\Rk(R)}$ and the boundary of the region, and $\ket{e^{\times2^k}}$’s everywhere else.
    Any eigenstate which is not an eigenstate of $\Rk(H_q)(r)$ on segments between sites in which $\ket{T}_c$ contains an $\ket{\Rk(L)}$ and an $\ket{\Rk(R)}$ has an energy $>1$. \label{H_e_Renorm:GS_Structure_2}

  \end{enumerate}
\end{lemma}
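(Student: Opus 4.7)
The plan is to proceed by induction on $k$. The base case $k=0$ is exactly \cref{Lemma:tiling+quantum} (Lemma 51 of \cite{Cubitt_PG_Wolf_Undecidability}), and the inductive step amounts to analysing a single application of the mapping $\sR$ from \cref{Def:h_u_RG_Mapping} to a Hamiltonian which, by the inductive hypothesis, already has the tiling-plus-quantum-layer structure asserted in Claims~\ref{H_e_Renorm:Separable_Eigenstates_2} and \ref{H_e_Renorm:GS_Structure_2}. Throughout the inductive step, the distinguished tile states $\ket{\Rk(L)}, \ket{\Rk(R)}$ will be identified with the renormalised down-right and down-left red-cross supertiles in the image of the bijection $T_2'\to T_1$ of \cref{thm:tiles_graph_isomorphism}.

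For Claim~\ref{H_e_Renorm:Separable_Eigenstates_2}, I would observe that the blocking isometry of \cref{Def:V^b_Isometry} factorises as $V^b = V^C \otimes V^{eq}$ across the partition $\HS_c \otimes \HS_{eq}$, and that the truncation projector $\Pi_{gs}(k)$ from \cref{Def:T_u_Mapping} acts only on the $\HS_{eq}$ factor. Hence the full RG isometry $V^u$ of \cref{Def:V^u_Isometry} has exactly the tensor-product form $Z_c \otimes Z_{eq}$ required by \cref{Lemma:Separable_Eigenstates}, so the product eigenbasis $\ket{T}_c \otimes \ket{\psi}_{eq}$ lifts to the renormalised Hilbert space and the inductive step for Claim~\ref{H_e_Renorm:Separable_Eigenstates_2} is immediate.

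For Claim~\ref{H_e_Renorm:GS_Structure_2}, I would treat the three families of local terms in \cref{TQ:overall_H_2} separately. The purely classical terms \cref{TQ:cols_2,TQ:Hc_2} are handled by \cref{Lemma:Classical_RG_Hamiltonian}: under the bijection of \cref{thm:tiles_graph_isomorphism} the tiling Hamiltonian renormalises back to itself, preserving the Robinson-square pattern and the distinguished positions of the $\ket{\Rk(L)},\ket{\Rk(R)}$ tiles. The purely quantum term \cref{TQ:Hq_2}, restricted to a segment bracketed by $\ket{\Rk(L)}$ and $\ket{\Rk(R)}$, is handled by \cref{Lemma:RG_GI_Properties}: the Gottesman--Irani Hamiltonian renormalises to one on half the chain with the same ground-state energy, and the truncation $\Pi_{gs}(k)$ was designed precisely to preserve either the history state $\ket{\psi_{hist}(4^n+1)}$ (or its halting analogue) when the block spans a full Robinson-square top, or the filler $\ket{e^{\times 2^k}}$ otherwise. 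On segments between an $\ket{\Rk(L)}$ or $\ket{\Rk(R)}$ and the boundary of the region, \cref{QTM_in_local_Hamiltonian:out-of-tape} of \cref{QTM_in_local_Hamiltonian} shows that a zero-energy eigenstate persists, and this survives the isometric mapping.

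The hard part will be the bookkeeping of the coupling terms \cref{TQ:<L_2,TQ:L<_2,TQ:>R_2,TQ:R>_2,TQ:1R_2,TQ:L1_2,TQ:10-a_2,TQ:10-b_2} under the combined action of $V^b$ and $\Pi_{gs}(k)$. These terms jointly enforce that $\ket{\leftend}$ appears only above a classical $\ket{L}$ tile (and similarly for $\ket{R}/\ket{\rightend}$), while $\ket{e}$ fills everywhere else. I would verify, by a direct case-by-case computation on the $2\times 2$ blocks, that $V^u h V^{u\dagger}$ for each such coupling term $h$ collapses on the subspace preserved by $\Pi_{gs}$ to the analogous projector between $\ket{\Rk(L)}/\ket{\Rk(R)}$ and the renormalised boundary-marker states in $\frk{C}^{(k+1)}$, plus a (harmless) contribution supported on the orthogonal complement of the image of $\Pi_{gs}$. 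Because $\Pi_{gs}(k)$ is rank one, any violation of this correspondence remains detected by a projector of unit norm, so the $\geq 1$ energy penalty of the original coupling terms is carried across the renormalisation step. Combined with the preservation of the illegal-subspace bound $\lambda_0(H_q|_{\calS_1})\geq 1$ furnished by claim~\ref{RG-GI:Penalty_Terms} of \cref{Lemma:RG_GI_Properties}, this will deliver the required $>1$ energy gap between any non-Gottesman--Irani eigenstate on the bracketed segments and the correct lowest-energy choice, completing the induction.
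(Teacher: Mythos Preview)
Your proposal is correct and follows essentially the same approach as the paper: Claim~\ref{H_e_Renorm:Separable_Eigenstates_2} via the tensor-product form of $V^u$ and \cref{Lemma:Separable_Eigenstates}, Claim~\ref{H_e_Renorm:GS_Structure_2} via term-by-term analysis using \cref{Lemma:Classical_RG_Hamiltonian} for the tiling terms, \cref{Lemma:RG_GI_Properties} for $h_q$, and explicit tracking of the coupling projectors. The paper carries out exactly the case-by-case computation you outline for \cref{TQ:<L_2}--\cref{TQ:10-b_2}, writing down the renormalised projectors after $k$ steps, and then observes that since every term maps to an analogous term the proof of Lemma~51 of \cite{Cubitt_PG_Wolf_Undecidability} applies verbatim to the renormalised Hamiltonian; this is the same content as your inductive step.
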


\newpage

\begin{proof}
  \phantom{a}
  \paragraph{Claim \ref{H_e_Renorm:Separable_Eigenstates_2}} ~\newline
  The fact the eigenstates of the unrenormalised Hamiltonian are a product state across $\HS_c$ and $\HS_{eq}$, $\ket{T_c}\ket{\psi}_{eq'}$ is from \cref{Lemma:tiling+quantum} (Lemma 51 of \cite{Cubitt_PG_Wolf_Undecidability}).
  The structure of the eigenstates of the renormalised Hamiltonian is then preserved as per \cref{Lemma:Separable_Eigenstates}.

  \paragraph{Claim \ref{H_e_Renorm:GS_Structure_2}} ~\newline
  Start by considering what each of the local terms looks like after  applying the renormalisation isometries.
  We treat each term in the above lemma in succession.
  Start with local interactions encoding the classical tiling, terms \ref{TQ:cols_2} and \ref{TQ:Hc_2}.
  The isometry decomposes as $V^u_{(i,i+1)(j,j+1)}=(\\1 \ox \Pi_{gs})V^c_{(i,i+1)(j,j+1)}\ox V^{eq}_{(i,i+1)(j,j+1)}$, hence the classical Hamiltonian terms transform as per~\cref{Lemma:Classical_RG_Hamiltonian}.

  We next consider the renormalisation of the Gottesman-Irani Hamiltonian $h_q$. 
  All of these states are mapped by $V^{eq}_{(i,i+1)(j,j+1)}$ to a $2\times 1$ chain, which $V^q_{(i,i+1)}$ acts on as per \cref{Lemma:RG_GI_Properties}.
  Thus $h_q$ transforms as per \cref{Lemma:RG_GI_Properties}.

  \paragraph{Coupling Terms}	~\newline
  We first note that given a $2\times 2 $ block, we wil get two sets of coupling terms: one between $c$ and $eq_1$ and another set between $c$ and $eq_2$.
  Thus the terms will have the structure $h_a^{i,i+1}\otimes h_{eq_1}^{i,i+1}\otimes h_{eq_2}^{i,i+1}$, where $h_{eq_1}^{i,i+1}$ and  $h_{eq_2}^{i,i+1}$ are identical except they act on different parts of the local Hilbert space.

  We will then ``integrate out'' $eq_2$ in the next stage of the renormalisation procedure leaving us with only a single set.
  Thus for the purposes of the RG procedure, we need only consider how the coupling terms transform for a particular $(i,i+1; j)$ set (as we will integrate out the other set anyways).
  ~\newline

  We now consider the terms coupling the classical and quantum parts of the Hilbert space.
  Consider term \ref{TQ:<L_2}.
  In any $2\times 2$ block in the restricted subspace, at most one \textbf{free} $\ket{L}$ or $\ket{R}$ may appear (i.e. not parity cross), and under the classical renormalisation mapping, we see that a $2\times 2$ block with a free cross is mapped to a cross supertile of the same colour and with relevant orientation.
  Any parity cross is removed in the renormalisation step, as per \cref{Sec:Tiling_Renormalization}.
  Then we realise that the $2\times 2$ block only receives the penalty iff $\ket{L}$ is not combined with $\ket{\leftend}$.
  Since under the RG operations $\ket{L}\rightarrow \ket{\r(L)}$, and $\ket{\leftend}\ket{x}\rightarrow \ket{\r(\leftend, x)}$
  we see that the new term must penalise states which do not satisfy these states being paired.
  The parity $\ket{L}$ tiles will be integrated out, however, these are associated with history states that will be integrated out in the same step, and hence can be ignored.
  Thus term \ref{TQ:<L_2} becomes
  \begin{align}
    \ketbra{\r(L)}^{(i)}\ox \left(\1_{eq'} - \ketbra{\leftend x} - \ketbra{e\leftend} \right) \ox \1_{ceq'}^{(i+1)},
  \end{align}
  where $\ket{x}\in \frk{B}$ are single site states of the original Hamiltonian.
  By similar reasoning, after $k$ applications of the RG mapping, we get
  \begin{adjustwidth}{-2cm}{-2cm}
  \begin{align}
    \ketbra{\Rk(L)}^{(i)}\ox \left(\1_{eq'} - \sum_m\sum_{x_t\in \frk{B}} \ketbra{e^{\times m}\leftend \{x_t\}^{\times 2^k-m-1}}  \right) \ox \1_{ceq'}^{(i+1)}.
  \end{align}
  \end{adjustwidth}
  The term \ref{TQ:>R_2} transforms analogously.

  Now consider term \ref{TQ:L<_2}.
  Again, $2\times 2$ blocks in the restricted subspace with the free tile being $\ket{L}$ get renormalised to $\ket{\r(L)}$.
  We see that this term penalises anything but $\ket{\leftend}$ being combined with it, and hence we see it is mapped to
  \begin{align}
    (\1_c- \ketbra{\r(L)}_c)^{(i)} \ox \left(\ketbra{\leftend x}^{(i)} + \ketbra{e \leftend}^{(i)}\right) \ox \1_{ceq}^{(i+1)}.
  \end{align}
  By similar reasoning, after $k$ iterations we get
  \begin{adjustwidth}{-2cm}{-2cm}
  \begin{align}
    (&\1_c- \ketbra{\Rk(L)}_c)^{(i)} \ox \left( \sum_m\sum_{x_t\in \frk{B}} \ketbra{e^{\times m}\leftend \{x_t\}^{\times 2^k-m}} \right)^{(i)} \ox \1_{ceq}^{(i+1)}.
  \end{align}
  \end{adjustwidth}
  The \ref{TQ:R>_2} transforms analogously.

  We now consider term \ref{TQ:L1_2}.
  If we consider the term acting between $2\times 2$ blocks, then this is only violated if there is a $\ket{L}_c$ at site $(i,j)$ and at the neighbouring site $(i+1,j)$ is in state $\ket{e}_e$.
  The renormalised basis states which get penalised by this are then:
  \begin{align}
    \ketbra{\r(L)}^{(i)}_c \ox \1_{eq}^{(i)}
    \ox\1_c^{(i+1)} \ox \left(\ketbra{ee}_e+\sum_{\ket{x}\in \frk{B}}\ketbra{ex}_{q'} \right)^{(i+1)}.
  \end{align}
  After $k$ iterations this becomes
  \begin{adjustwidth}{-2cm}{-2cm}
  \begin{align}
    \ketbra{\Rk(L)}^{(i)}_c \ox \1_{eq}^{(i)}
    \ox\1_c^{(i+1)} \ox \left( \sum_m\sum_{x_t\in \frk{B}} \ketbra{e^{\times m}\{x\}^{\times 2^k-m}} \right)^{(i+1)}.
  \end{align}
  \end{adjustwidth}
  Term \ref{TQ:1R_2} transforms analogously.

  \medskip

  We now consider term \ref{TQ:10-a_2}.
  This term forces a non-$\ket{e}_e$ to the left of any other non-blank in the $q$-layer, except when a non-blank coincides
  with an $\ket{L}$ in the c-layer.
  Again, we see that this penalty term is zero within any $2\times 2$ blocks in the restricted subspace $\kappa_{i,j}$, so we need only consider the interactions between such states.
  If there is a $\ket{e}_e$ state next to a $\ket{x}$ state in the blocks, then we see that the quantum part of this tile must get mapped to $\ket{e}_e$ or $\ket{\r(x)}$.
  The new term in the Hamiltonian becomes
  \begin{adjustwidth}{-2cm}{-2cm}
  \begin{align}
    \1_c^{(i)} \ox \left( \ketbra{ee}_e + \sum_{\ket{y}\in \frk{B}}\ketbra{ye}_{q} \right)^{(i)} \ox (\1_c-\ketbra{\r(L)})^{(i+1)}_c \ox (\1_{eq}-\ketbra{ee})^{(i+1)}_e.
  \end{align}
  \end{adjustwidth}
  After $k$ iterations of the RG map the term becomes
  \begin{adjustwidth}{-2cm}{-2cm}
  \begin{align}
    \1_c^{(i)} &\ox \left( \sum_{m=1}\sum_{x_t\in \frk{B}}\ketbra{\{x_t\}^{\times 2^k-m}, e^{\times m}} \right)^{(i)} \ox (\1_c-\ketbra{\r(L)})^{(i+1)}_c \\ &\ox (\1_{eq}-\sum_{m=1}\sum_{x_t\in \frk{B}} \ketbra{e^{\times m}, \leftend ,\{x_t\}^{\times 2^k-m-1}} )^{(i+1)}_e.
  \end{align}
  \end{adjustwidth}
  Term \ref{TQ:10-b_2} transforms analogously.

  \medskip

  \paragraph{Identity Terms}~\newline
  Finally we need to consider how terms of the form $\1_{ceq}^{(i)}$ and $\1_{ceq}^{(i)}\ox \1_{ceq}^{(i+1)}$ transform; as per \cref{Remark:Local_Projectors}
  these terms appear as the Hamiltonian is iterated.
  consider the two local terms $\1^{(i,j)}\ox \1^{(i+1,j)}$:
  \begin{align}
    \left( \1^{(i,j)}\ox \1^{(i+1,j)} + \1^{(i,j+1)}\ox \1^{(i+1,j+1)}\right)
    &\rightarrow 2\1^{(i/2,j/2)}.
  \end{align}
  Similarly, consider
  \begin{align}
    \left(\1^{(i+1,j)}\ox \1^{(i+2,j)} + \1^{(i+1,j+1)}\ox \1^{(i+2,j+1)} \right) &\rightarrow 2\1^{(i/2,j/2)}\ox \1^{(i/2+1,j/2)}.
  \end{align}
  Consider the $\1^{(i,j)}$ terms, then
  \begin{align} \label{Eq:Identity_Term_Transform}
    \1^{(i,j)} + \1^{(i+1,j)}+ \1^{(i,j+1)} +\1^{(i+1,j+1)}  &\rightarrow  	4\1^{(i/2,j/2)}.
  \end{align}
  Combining these terms, we see that these create new 1-local terms which, after $k$ iterations have coefficients:
  \begin{align}
    (-4^k+\sum_{m=0}^k (4^m\times 2^{m-k}))&\1^{(i/2+1,j/2)} =  -2^{-k}\1^{(i/2+1,j/2)},
  \end{align}
  and 2-local terms of the form:
  \begin{align}
    2^k&\1^{(i/2,j/2)}\ox \1^{(i/2+1,j/2)}.
  \end{align}
  Note that these 2-local terms only occur in the row interactions, and remain zero for the column interactions.

    ~\newline
  So far we have shown that all terms in the Hamiltonian transform to an analogous term to one in the original Hamiltonian.
  Now note the fact the Hamiltonian can be block-decomposed into subspaces with respect to states containing $\leftend$ and $\rightend$, and into a classical and quantum part.
  Then realise that the local quantum Hilbert space can be decomposed as $\Rk(\HS_e)\oplus \Rk(\HS_{q})$.
  These properties allow the proof from Lemma 51 of \cite{Cubitt_PG_Wolf_Undecidability} to be applied (we refer the reader to this proof for brevity) which also shows that states which are not $\Rk(H_q)$ eigenstates between $\ket{\Rk(L)}$ and $\ket{\Rk(R)}$ markers have energy at least 1.

\end{proof}

With this, we now wish to prove claim 3 of \cref{Lemma:Full_Renormalization} and hence need to find the ground state energy for the renormalised Hamiltonian.
To do so we need the concept of \emph{tiling defects}:
\begin{definition}[Tiling Defect]\label{Def:Tiling_Defect}
A pair $\ket{t_a}_{i,j},\ket{t_b}_{i+1,j}\in \HS_c$ form a tiling defect if they violate the local term between them: $\bra{t_a}\bra{t_b}h_c^{i,i+1}\ket{t_a}\ket{t_b}=1$.
Similarly, $\ket{t_a}_{i,j},\ket{t_b}_{i+1,j}\in \Rk(\HS_c)$ form a tiling defect if they violate the renormalised local term between them: $\bra{t_a}\bra{t_b}\Rk(h_c)^{(i,i+1)}\ket{t_a}\ket{t_b}=1$.
\end{definition}

In the following lemma we show the ground state is a state with no tiling defects, and as a result the only energy contribution comes from ground states of the Gottesman-Irani Hamiltonians.

\begin{lemma}[Restatement of Claim 3 in~\cref{Lemma:Full_Renormalization}]
  \label{Lemma:put-promise-together} \hfill\newline
  Let $h_c^{\mathrm{row}},h_c^{\mathrm{col}}\in\cB(\C^C\ox\C^C)$ be the local interactions of the tiling Hamiltonian associated with the modified Robinson tiles, let $\Rk(h_c^{row})^{i,i+1},\Rk(h_c^{col})^{j,j+1}$ be the local interactions after $k$ RG iterations, and let $h^\mathrm{row},h^\mathrm{col}\in\cB(\C^{C+Q+1}\ox\C^{C+Q+1})$ be the local interactions defined in \cref{Lemma:Full_Renormalization_2}.
  For a given ground state configuration (tiling) of $\Rk(H_c)$, let $\mathcal{L}$ denote the set of all horizontal line segments of the lattice that lie between down/right-facing and down/left-facing red crosses (inclusive) in the Robinson tiling after $k$ RG mappings.

  Then the renormalised Hamiltonian on a 2D square lattice of width $L$ and height $H$ with nearest-neighbour interactions $\Rk(h^\mathrm{row}),\Rk(h^\mathrm{col})$ has a ground state energy $\lambda_0(\Rk(H)^{\Lambda(L\times H)})$ contained in the interval
      \begin{align}
  \bigg[ &(g(k) - 4^k\alpha_2(\varphi)) LH - 2^{-k}H
  + \sum_{n=1}^{\lfloor\log_4(L/2)\rfloor}
  \bigg(
  \left\lfloor\frac{H}{2^{2n+1 (k \ mod 2)}}\right\rfloor \\
  &\times\left(\left\lfloor\frac{L}{2^{2n+1 - (k \ mod 2)}}\right\rfloor -1\right)
  \bigg)
  \lambda_0(\Rk(H_{q})(4^{n- \lfloor (k \ mod 2)/2\rfloor})) , \\
  &(g(k) -  4^k\alpha_2(\varphi)) LH - 2^{-k}H
  +  \sum_{n=1}^{\lfloor\log_4(L/2)\rfloor}
  \bigg(\left(
  \left\lfloor\frac{H}{2^{2n+1- (k \ mod 2)}}\right\rfloor +1\right) \\
  &\times \left\lfloor\frac{L}{2^{2n+1 - (k \ mod 2)}}\right\rfloor
  \bigg)
  \lambda_0(\Rk(H_{q})(4^{n - \lfloor (k \ mod 2)/2\rfloor})) \bigg]
  \end{align}
  where
  \begin{align} \label{Eq:g(k)_Definition}
    g(k) = 4^k\sum_{4^n+1<2^k}4^{-2n-1}\lambda_0(H_q(4^n)).
  \end{align}
\end{lemma}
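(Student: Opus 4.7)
The plan is to combine the structural results from Claims~\ref{H_e_Renorm:Separable_Eigenstates_2} and~\ref{H_e_Renorm:GS_Structure_2} of \cref{Lemma:Full_Renormalization_2} with an explicit accounting of the three types of energy contributions that appear after $k$ RG iterations: (i) the bulk 1-local shift, (ii) the 2-local identity shift, and (iii) the Gottesman--Irani ground-state energies sitting on top edges of Robinson squares. By Claim~\ref{H_e_Renorm:Separable_Eigenstates_2} every eigenstate factorises as $\ket{T}_c\otimes\ket{\psi}_{eq}$, and by Claim~\ref{H_e_Renorm:GS_Structure_2} any tiling defect or any segment between $\ket{\Rk(L)}$ and $\ket{\Rk(R)}$ markers that is not a $\Rk(H_q)$ eigenstate costs at least $1$ in energy. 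Since the total energy will be shown to grow linearly in $LH$ with coefficient at most $O(1)$ (dominated by the bulk constants), the minimiser is necessarily a perfect renormalised Robinson tiling with a $\Rk(H_q)$ ground state on each red-square top edge and $\ket{e^{\times 2^k}}$ elsewhere, so the evaluation of $\lambda_0$ reduces to computing the energy of such a configuration.

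Next I would count red Robinson squares of each size in the renormalised $L\times H$ lattice. \cref{cor:Robinson_pattern} guarantees that the tiling pattern is scale-invariant, so after $k$ RG steps the nested red-square structure persists but with an overall re-scaling: squares of original side $4^n+1$ become squares of side $4^{n-\lfloor(k\bmod 2)/2\rfloor}$ on the new lattice, with density exactly $2^{-(2n+1-(k\bmod 2))}$ in each coordinate direction. Standard counting then gives a number of size-$4^n+1$ red squares between $\lfloor H/2^{2n+1-(k\bmod 2)}\rfloor(\lfloor L/2^{2n+1-(k\bmod 2)}\rfloor-1)$ and $(\lfloor H/2^{2n+1-(k\bmod 2)}\rfloor+1)\lfloor L/2^{2n+1-(k\bmod 2)}\rfloor$, with the discrepancy coming from boundary squares that may or may not fit. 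Each such square contributes its top-edge Gottesman--Irani ground-state energy $\lambda_0(\Rk(H_q)(4^{n-\lfloor(k\bmod 2)/2\rfloor}))$, yielding the summation in the stated interval.

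The bulk coefficient is accounted for by tracking the accumulation of 1- and 2-local identity projectors through the $\T_u\circ\B_u$ mapping. By \cref{Remark:Local_Projectors} and \cref{Def:T_u_Mapping}, each step appends a term $4^m\kappa^{(m)}\1$ to the 1-local part, where $\kappa^{(m)} = \Tr(\Pi_{gs}(m) h_{q_2}'^{\prime})$ picks out exactly $4^{-2n-1}\lambda_0(H_q(4^n))$ whenever the scale $2^m$ crosses a new Robinson length $4^n+1$; summing over $m\le k$ reproduces $g(k)$ as defined in \cref{Eq:g(k)_Definition}. The initial $-(1+\alpha_2(\varphi))\1_{ceq}$ term becomes $-4^k(1+\alpha_2(\varphi))\1$ per renormalised site by the identity-scaling relation in \cref{Eq:Identity_Term_Transform}, while the 2-local $\1\ox\1$ term from \cref{TQ:2-Local-Constant_2} contributes $\sum_{m=0}^{k}4^m\cdot 2^{m-k} = O(2^k)$ per edge after the bookkeeping of \cref{Eq:Identity_Term_Transform}. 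The net per-site bulk constant is therefore $g(k)-4^k\alpha_2(\varphi)$, and the per-row residual from the 2-local identity accumulation integrates to the $-2^{-k}H$ boundary term.

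The main obstacle is precisely the careful bookkeeping of the $\Pi_{gs}$ energy compensation: at each odd step $m$, $\Pi_{gs}(m)$ projects onto a Gottesman--Irani ground state of length $4^n+1$ for some $n$ with $2^{m-1}<4^n+1<2^m$, so the integrated energy is $\lambda_0(H_q(4^n))$ at scale $m$, which translates through the renormalisation cascade into a contribution $4^{-2n-1}\lambda_0(H_q(4^n))$ per site of the final lattice. One must verify that these contributions correctly match the energies appearing in the summation term, which uses $\lambda_0(\Rk(H_q)(\cdot))$ rather than $\lambda_0(H_q(\cdot))$; by \cref{Lemma:Renormalised_GI_Energies} and the discussion following \cref{Corollary:GS_in_Limit}, the two agree precisely once $k$ is large enough (equivalently $2^{k_0}>|\varphi|+7$), which is why the final interval is stated using the renormalised energies. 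Combining the bulk, boundary, and summation contributions and using the two-sided square-count bounds yields the claimed interval for $\lambda_0(\Rk(H)^{\Lambda(L\times H)})$.
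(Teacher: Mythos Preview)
Your overall strategy matches the paper's proof: invoke the product structure from Claims~\ref{H_e_Renorm:Separable_Eigenstates_2}--\ref{H_e_Renorm:GS_Structure_2}, argue the minimiser is a defect-free Robinson tiling, count segments via the Lemma~48/49 bounds from \cite{Cubitt_PG_Wolf_Undecidability}, and track the identity offsets to obtain the bulk constant $(g(k)-4^k\alpha_2)LH$ and the boundary term. The segment-count and the use of \cref{cor:Robinson_pattern} are exactly what the paper does.

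There is, however, a genuine gap in your argument that the ground state has no tiling defects. You write that ``the total energy will be shown to grow linearly in $LH$ with coefficient at most $O(1)$, [so] the minimiser is necessarily a perfect renormalised Robinson tiling.'' This is a non-sequitur: the global scaling of the energy says nothing about whether a single defect is locally favourable. The danger is that a defect, while costing $+1$ from the tiling penalty, might \emph{destroy} several Gottesman--Irani segments and thereby save more than $1$ in $\lambda_0(\Rk(H_q))$ contributions. The paper's proof closes this by invoking Lemma~49 of \cite{Cubitt_PG_Wolf_Undecidability} (which still applies because the renormalised tiling rules are identical, \cref{Lemma:Classical_RG_Hamiltonian}): $d$ defects remove at most $2d$ segments of each size, so the saving is at most $2d\sum_n \lambda_0(H_q(4^n+1))<d$ by the bound $\sum_n\lambda_0(H_q(4^n+1))<1/2$ from Definition~50 of \cite{Cubitt_PG_Wolf_Undecidability}. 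Without this marginal cost/benefit comparison your argument does not rule out defective ground states.

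A secondary issue: your identity-term bookkeeping is imprecise. The expression $\sum_{m=0}^k 4^m\cdot 2^{m-k}$ is not $O(2^k)$ (it is $\Theta(8^k/2^k)=\Theta(4^k)$), and in any case is not the right combinatorics. The paper obtains the 1-local coefficient as $b_1=g(k)+4^k(1-\alpha_2(\varphi))-4^k\sum_{m=1}^k 2^{-m}$, where the last sum arises because at each RG step part of the 2-local identity is absorbed into a 1-local term; combining with $b_2\,(L-1)H$ then yields $(g(k)-4^k\alpha_2(\varphi))LH$ plus the boundary correction. Your final paragraph also conflates two distinct contributions: $g(k)$ (built from \emph{unrenormalised} $\lambda_0(H_q(4^n))$ for squares already integrated out) and the summation (built from $\lambda_0(\Rk(H_q))$ for squares still present) do not need to ``match''---they account for disjoint sets of squares.
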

\begin{proof}
  We identify the red down-left and down-right cross tiles from the $k$-times renormalised tile set with the $\ket{\Rk(L)}$ and $\ket{\Rk(R)}$ state respectively.
  For convenience, assume $k\in 2\N$ (we will deal with the other case separately $k\in 2\N+1$).
  From \cref{Lemma:Full_Renormalization_2} the ground state of the Hamiltonian is a product state $\ket{T}_c\ox \ket{\psi_0}_{eq}$ has a $\ket{e^{\times 2^k}}$ state combined with every tile except those between $\ket{\Rk(L)}$ and $\ket{\Rk(R)}$, where instead there is a ground state of a $\Rk(H_{q})$ Hamiltonian between the two markers.
  For such states, the terms \ref{TQ:<L_2}-\ref{TQ:10-b_2} give zero energy contribution and we need only consider the terms \ref{TQ:cols_2}, \ref{TQ:Hc_2}, and \ref{TQ:Hq_2}.
  The terms \ref{TQ:2-Local-Constant_2} and \ref{TQ:1-Local-Constant_2} are constant offsets, and so we will ignore them initially and consider them at the end.

  We now consider the energy of the tiling + quantum; from lemma 48 of \cite{Cubitt_PG_Wolf_Undecidability} the number of segments is lower bounded by $\geq \lfloor H2^{-2n-1} \rfloor(\lfloor L2^{-2n-1} -1 \rfloor)$ and upper bounded by $\leq \lfloor H2^{-2n-1} +1\rfloor(\lfloor L2^{-2n-1}  \rfloor)$.

  In the case we have $d$ defects in the tiling, the energy is at least
  \begin{align}
    E(d \text{ defects})
    &= d + LH(g(k)  - 4^k\alpha_2(\varphi) )   + \sum_{\ell\in\mathcal{L}} \lambda_0(\Rk(H_{q})(\abs{\ell})) \\
    &\geq d + LH(g(k)  - 4^k\alpha_2(\varphi))   \\
    &+ \sum_{n=1}^{\lfloor\log_4(L/2)\rfloor}
      \Biggl(
      \left\lfloor\frac{H}{2^{2n+1}}\right\rfloor
      \left(\left\lfloor\frac{L}{2^{2n+1}}\right\rfloor -1\right)
      - 2d
      \Biggr) \lambda_0(\Rk(H_{q})(4^n)),
  \end{align}
  where in the second line we have used the result from lemma 49 of \cite{Cubitt_PG_Wolf_Undecidability} to bound the number of segments of size $2^{2n}$ is at least $\left\lfloor\frac{H}{2^{2n+1}}\right\rfloor\left(\left\lfloor\frac{L}{2^{2n+1}}\right\rfloor -1\right) - 2d$.
  Note, that  lemma 49 of \cite{Cubitt_PG_Wolf_Undecidability} still applies to the renormalised Hamiltonian terms as the tiling rules for the renormalised tile set are identical to the original tile set, as per \cref{Lemma:Classical_RG_Hamiltonian}.

  It can be shown from definition 50 of \cite{Cubitt_PG_Wolf_Undecidability} that $\sum_{n=1}^\infty\lambda_0(H_q(4^n+1)) < 1/2$, and since each defect carries an energy penalty of at least $1$ we see the ground state is always achieved in the case where there are no defects and hence the Robinson tiling is correct.
  Thus we see that the ground state is given by
  \begin{align}
    E = LH  (g(k) -   4^k\alpha_2(\varphi))  + \sum_{\ell\in\mathcal{L}} \lambda_0(\Rk(H_{q})(\abs{\ell})).
  \end{align}
  Again we use the bound on the number of segments allowed from lemma 48 of \cite{Cubitt_PG_Wolf_Undecidability} to show that the ground state energy lies in the bounds
  \begin{adjustwidth}{-2cm}{-2cm}
	\begin{align}
		\sum_{\ell\in\mathcal{L}} \lambda_0(\Rk(H_{q})(\abs{\ell})) \in  \bigg[ &\sum_{n=1}^{\lfloor\log_4(L/2)\rfloor}
		\left(
		\left\lfloor\frac{H}{2^{2n+1}}\right\rfloor
		\left(\left\lfloor\frac{L}{2^{2n+1}}\right\rfloor -1\right)
		\right)
		\lambda_0(\r^{(k)}(H_{q})(4^n)) , \\
		&\sum_{n=1}^{\lfloor\log_4(L/2)\rfloor}
		\left(\left(
		\left\lfloor\frac{H}{2^{2n+1}}\right\rfloor +1\right)
		\left\lfloor\frac{L}{2^{2n+1}}\right\rfloor
		\right)
		\lambda_0(\r^{(k)}(H_{q})(4^n)) \bigg]
	\end{align}
	\end{adjustwidth}
  Finally consider the constant energy offset from the terms \ref{TQ:2-Local-Constant_2} and \ref{TQ:1-Local-Constant_2}.
  After $k$ iterations of the RG mapping, from the definition of $g(k)$ in \cref{Eq:g(k)_Definition}, the coefficient of the $\1^{(i)}$ term is
  \begin{align}
    b_1 &:= 4^k\sum_{4^n+1<2^k}4^{-2n-1}\lambda_0(H_q(4^n)) + 4^k(1-\alpha_2(\varphi)) - 4^{k}\sum_{m=1}^{k}2^{-m}  \\
        &= 4^k\sum_{4^n+1<2^k}4^{-2n-1}\lambda_0(H_q(4^n)) +  4^k(1-\alpha_2(\varphi)) - 4^{k}(1-2^{-k}),
  \end{align}
  where the $- 4^{k}\sum_{m=1}^{k}2^{-m}$ term arises due to part of the 2-local terms being integrated into the 1-local terms.
  The coefficient in front of the 2-local term $ \1^{(i)}\ox\1^{(i+1)}$ is then $b_2:= -2^k$.
  The energy contribution from these term is
  \begin{adjustwidth}{-2.cm}{-2.5cm}
  \begin{align}
    &b_1 LH + b_2 (L-1)H = (b_1  + b_2) LH - b_2H \\
                        &=  \left(4^k\sum_{4^n+1<2^k}4^{-2n-1}\lambda_0(H_q(4^n))+ 4^k(1-\alpha_2(\varphi)) - 4^{k}(1-2^{-k}) -2^{-k} \right)LH
                        + b_2H \\
                        &=  \left(4^k\sum_{4^n+1<2^k}4^{-2n-1}\lambda_0(H_q(4^n)) -4^{k}\alpha_2(\varphi) \right)LH - 2^{k}H \\
                        &= (g(k)-4^{k}\alpha_2(\varphi))LH - 2^{k}H,
  \end{align}
  \end{adjustwidth}
  where $g(k)$ is defined in the lemma statement.
  Adding this to the energy contribution from the renormalised Gottesman-Irani segments gives the value in the lemma statement.

  For $k\in 2\N+1$ all of the above goes through with
  \begin{align}
  L/2^{2n+1}&\rightarrow L/2^{2n+1 - (k\ mod 2)},\\
  H/2^{2n+1}&\rightarrow H/2^{2n+1 - (k\ mod 2)},\\
  \lambda_0(H_q(4^n)) &\rightarrow \lambda_0(H_q(4^{n - \lfloor(k\ mod 2)/2\rfloor})).
  \end{align}
  This accounts for distances being reduced by a factor of two in alternate RG steps.

\end{proof}

\end{document}